%% file: main.tex
\documentclass[11pt]{article}

\usepackage[letterpaper,margin=1in]{geometry}
\usepackage[T1]{fontenc}
\usepackage[utf8]{inputenc}
\usepackage{lmodern}
\usepackage{microtype}
\usepackage{amsmath,amssymb,amsthm,mathtools,mathrsfs}
\usepackage{graphicx}
\usepackage[dvipsnames]{xcolor}
\usepackage[numbers,sort&compress]{natbib}
\usepackage[
  colorlinks=true,
  linkcolor=BrickRed,
  citecolor=PineGreen,
  urlcolor=Blue
]{hyperref}
\usepackage[capitalise,noabbrev]{cleveref}

\newtheorem{theorem}{Theorem}[section]
\newtheorem{proposition}[theorem]{Proposition}
\newtheorem{lemma}[theorem]{Lemma}
\newtheorem{corollary}[theorem]{Corollary}
\theoremstyle{definition}
\newtheorem{definition}[theorem]{Definition}
\theoremstyle{remark}

\newcommand{\CCZ}{\mathrm{C}\mathrm{C}Z}
\newcommand{\CNOT}{\mathrm{CNOT}}
\newcommand{\CZ}{\mathrm{C}Z}
\newcommand{\SWAP}{\mathrm{SWAP}}
\newcommand{\supp}{\operatorname{supp}}
\newcommand{\ket}[1]{\lvert #1\rangle}
\newcommand{\bra}[1]{\langle #1\rvert}
\newcommand{\im}{\operatorname{im}}

\allowdisplaybreaks[2]
\title{Purely-logarithmic-time- and constant-space-overhead\\ fault-tolerant quantum computation}
\author{
  Zhengyi Han \qquad Zi-Wen Liu\\[0.5em]
  {\small Yau Mathematical Sciences Center, Tsinghua University}
}
\date{\today}

\begin{document}

\maketitle

\input{sections/abstract}

\tableofcontents

\input{sections/introduction}
\input{sections/model}
\input{sections/main-proof}
\input{sections/state-preparation}
\input{sections/addressable-logic}
\input{sections/qltc-decoder}
\input{sections/injection-route}
\input{sections/conclusion}

\section*{Acknowledgments} 
This work is supported in part by NSFC under Grant No.~12475023, Dushi Program, and a startup funding from YMSC.

\section*{AI disclosure}
The authors set the overall strategy, and generative AI tools provided substantial assistance in developing the proof details and presentation. The authors take full responsibility for the results.

\bibliographystyle{unsrt}
\bibliography{references}

\end{document}

%% file: sections/abstract.tex
\begin{abstract}
We prove that constant-space-overhead fault-tolerant quantum computation can be achieved with provably strictly logarithmic time overhead, improving over the best known results with additional subpolylogarithmic factors.
Our main construction uses polynomial-subrank transversal logical $\CCZ$ gates on good quantum locally testable codes to implement addressable universal computation by transferring batches of logical qubits between dense storage and active logical subspaces while reusing the same ancillary workspace.
Logical $\CCZ$ gates are implemented directly by the transversal operation, so only stabilizer resource states require separate preparation.
Furthermore, we give an alternative construction that also achieves purely logarithmic time overhead based on modifying the quantum Reed--Solomon magic-state distillation scheme of Nguyen and Pattison.  Recursively applying a fixed distillation circuit protected by qLTCs of increasing block length eliminates the  subpolylogarithmic time factor.
\end{abstract}

%% file: sections/introduction.tex
\section{Introduction}
\label{sec:introduction}

The fault tolerance threshold theorem establishes that arbitrarily long quantum computations
can be protected against noise below a constant physical error rate
\cite{AharonovBenOr1999,Kitaev1997,KnillLaflammeZurek1998}.  Conventional
concatenated code constructions achieve this reliability with
polylogarithmic overhead in both space and time.  
Gottesman showed that suitable qLDPC codes can be used to achieve fault-tolerant quantum computation with constant space overhead and polynomial time overhead~\cite{Gottesman2013}, and Fawzi, Grospellier, and Leverrier subsequently realized this approach using quantum expander codes~\cite{Fawzi2018PolyTime}.  The time overhead was subsequently reduced in a series of works.
Yamasaki and Koashi achieved quasi-polylogarithmic time overhead~\cite{YamasakiKoashi2024}, and Tamiya, Koashi, and Yamasaki achieved polylogarithmic time overhead~\cite{TamiyaKoashiYamasaki2024}, both accounting for the runtime of classical processing.
Nguyen and Pattison obtained $O(\log^{1+o(1)}(|C|/\varepsilon))$ time overhead~\cite{NguyenPattison2025}.

Here we achieve a
strictly logarithmic time overhead, eliminating the remaining subpolylogarithmic factors.
Our main construction is based on the recent good qLTC with polynomial-subrank transversal logical $\CCZ$ result of Li, Li, and Liu~\cite{GoodQLTCTransversal2026}, which enables the computational code to 
perform a sufficiently large number of independent non-Clifford gates directly.   More precisely, for code blocks of size $\Theta(n)$, a single transversal
layer implements $r=n^{\Omega(1)}$ independent logical $\CCZ$ gates
on suitable active logical subspaces.
Logical routing and coordinate selection allow these gates to act on the desired logical triples.
We process the logical qubits of each dense storage block in $O(n/r)$ successive batches, with different block groups processed in parallel and the same ancillary workspace reused between batches.
Each batch takes $O(r+\operatorname{polylog} n)$ time, including transfers between storage and active subspaces and resource state preparation.
The resulting physical time per ideal layer is $O(n+\log W)$,
including routing and classical control.
Choosing $n=\Theta(\log(WD/\varepsilon))$ therefore gives
$O(n)$ physical time per ideal layer and strictly logarithmic time overhead.

For an adaptive Clifford+$\CCZ$ circuit of width $W$, depth $D$, and target
error $\varepsilon$, we use code blocks of length
$n=\Theta(\log(WD/\varepsilon))$.
Below a constant physical noise threshold, the resulting fault-tolerant
circuit uses
\[
  O\bigl(W+F(n)\bigr)
  \quad\text{physical qubits and has depth}\quad
  O\!\left(D\log\frac{WD}{\varepsilon}\right),
\]
while reproducing the ideal classical output distribution within error
$\varepsilon$.
Here $F(n)$ is the additional space required for state preparation,
switching, and routing, with
$\log F(n)=O(\log n\log\log n)$.
The space overhead is constant whenever $W\ge C_F\,F(n)$.
The depth bound includes the time used for classical processing,
and the ideal circuit may have arbitrary connectivity.

We make use of the avoiding-set and weight-enumerator calculus developed
by Nguyen and Pattison~\cite{NguyenPattison2025} to compose our
fault-tolerant gadgets.
We adapt their single-shot decoding argument to the qLTC
family used here.
To prepare the canonical logical $Y$ states needed for phase gates,
we combine their qLTC-protected prepare--unencode--test--distill approach
with constant-yield Hamming distillation.

We also present an independent construction based on the $\ket{\CCZ}$-state injection scheme of Nguyen and
Pattison~\cite{NguyenPattison2025}.
We recursively apply a fixed finite instance of their punctured quantum
Reed--Solomon distillation circuit, protected by qLTCs of increasing
block length. The resulting fault-tolerant circuit has the same
$O(D\log(WD/\varepsilon))$ depth bound, with a larger additive space term.

\subsection{Organization}

Section~\ref{sec:model} defines the adaptive circuit, noise model, and
weight enumerator composition.  Section~\ref{sec:main-proof} states the
theorem based on transversal $\CCZ$ and derives it from results proved in the later sections.
Section~\ref{sec:state-preparation} constructs encoded stabilizer resources,
Section~\ref{sec:addressable-logic} develops direct block code switching and
addressable transversal gates, and Section~\ref{sec:qltc-decoder} establishes
the code, decoder, memory, and length selection statements.  The independent
$\CCZ$ state injection construction is proved in
Section~\ref{sec:injection-route}, followed by a discussion of the two
constructions.

%% file: sections/model.tex
\section{Circuit and noise model}
\label{sec:model}

This section defines the circuit, noise, correctness, and composition notions
used throughout the paper.  They are independent of the code family.
Correctness is proved conditionally on a fault support that avoids a specified
bad family.  The locally stochastic condition then bounds the probability of
that family.  Weight enumerators keep these deterministic and probabilistic
parts separate under gadget composition.  This is the order of analysis used
in Ref.~\cite{NguyenPattison2025}, with the physical noise rate and target
output error kept as distinct parameters.

The ideal computation is specified first.  Width always means quantum width.
For a positive integer $m$, write $[m]=\{1,\ldots,m\}$.

\begin{definition}[Adaptive quantum circuit]
\label{def:adaptive-circuit}
An adaptive Clifford+$\CCZ$ circuit $C$ of width $W$ and depth $D$ is
specified by an initial classical map $C_0$ and, for every $t\in[D]$, a list
$(\mathsf O_{t,1},\ldots,\mathsf O_{t,w_t})$ of quantum operations with
disjoint supports and by a noiseless classical circuit
$C_t$.  The allowed operations fall into three classes.  A unitary location
applies either a one- or two-qubit Clifford gate or a $\CCZ$ gate; a
measurement location performs a destructive Pauli measurement; and an
initialization location prepares a state.

The computation maintains a classical register of polynomial size
$\vec x_t$.  A unitary operation $\mathsf O_{t,j}$ may be controlled by a bit of
$\vec x_t$; a nonunitary operation occurs unconditionally at its designated
location.  If $\vec m_t$ is the measurement record produced at time $t$,
then the register is updated once at the end of the layer according to
\begin{equation}
  \vec x_{t+1}=C_t(\vec x_t,\vec m_t).
  \label{eq:classical-update}
\end{equation}
The map $C_0$ initializes the control register from the classical input.
After the prescribed readouts in the last quantum layer have completed,
$C_D$ is evaluated once and the prescribed part of its output is reported.  The
quantum input is $\ket0^{\otimes W}$.  We put
\begin{equation}
  N=WD.
  \label{eq:padded-volume}
\end{equation}
\end{definition}

The circuits $C_t$ have constant depth.  The fault-tolerant simulation will
also use noiseless classical circuits of polynomial size.  Their hardware is
not included in the quantum width, but their depth is included in physical
time.  Thus a classical computation of depth $T$ makes the quantum data wait
for $O(T)$ physical time steps, with error correction applied throughout the
wait.  Both the ideal circuit and its simulation may use arbitrary physical
connectivity.

\subsection{Codes and reduced Pauli weight}

Let
\begin{equation}
  \mathcal Q=\operatorname{CSS}(H_X,H_Z),\qquad H_XH_Z^T=0,
  \label{eq:css-code}
\end{equation}
be a binary CSS code.  It is an $[[n,k,d]]$ code when it has $n$ physical
qubits, $k$ logical qubits, and distance $d$.  We use a chosen encoding
isometry $\mathcal E$ whenever a logical basis is required.

\begin{definition}[Quantum LDPC and locally testable codes]
\label{def:qltc}
Let $H\in\mathbb F_2^{m_H\times n}$ and let
$\mathcal C=\ker H\subseteq\mathbb F_2^n$.  The classical code
$\mathcal C$ is locally testable with soundness $\rho$ and locality
$\Delta$ if the row and column weights of $H$ are at most $\Delta$ and
\begin{equation}
  \frac{|Hx|}{m_H}\ge
  \rho\,\frac{\min_{c\in\ker H}|x+c|}{n}
  \label{eq:classical-local-testability}
\end{equation}
for every $x\in\mathbb F_2^n$.  The CSS code
$\mathcal Q=\operatorname{CSS}(H_X,H_Z)$ is $\Delta$-qLDPC if both check matrices have row
and column weights at most $\Delta$.  It is a $(\rho,\Delta)$-qLTC if the
two classical codes defined by $H_X$ and $H_Z$ satisfy
\eqref{eq:classical-local-testability}, with their respective numbers of
listed checks in the denominator.
\end{definition}

The listed checks may be dependent; their number is part of the tester
normalization.  The distance to $\ker H$ in
\eqref{eq:classical-local-testability} is not the same as distance modulo
the stabilizer space of the same Pauli type.  Both notions occur below.

For a Pauli $P$ on a stabilizer code, its weight reduced modulo stabilizers is
\begin{equation}
  |P|_R=\min_{S\in\operatorname{Stab}(\mathcal Q)}|SP|.
  \label{eq:reduced-weight}
\end{equation}
When a binary vector represents one Pauli type, the same notation denotes
minimum Hamming weight modulo the corresponding binary stabilizer space.

\subsection{Noise model}

Fault supports are defined independently of a probability distribution.  A
location is one of the quantum operations above together with its physical time
step.  Identity operations on idle qubits are locations, and every qubit
belongs to exactly one location at each physical time step.  Let $\Omega$
denote the resulting set of locations.

\begin{definition}[Fault and fault path]
\label{def:fault-path}
A fault $\mathbf f$ is a sequence of superoperators inserted between the
ideal layers of the physical circuit.  It is physical when every inserted
map is completely positive and trace preserving.  The fault path
$\supp\mathbf f\subseteq\Omega$ is the set of locations on which these maps
act nontrivially.
\end{definition}

The values of the fault maps are not required to factor across locations.
Only their support enters the probability hypothesis.

\begin{definition}[Locally stochastic noise]
\label{def:locally-stochastic}
A random physical fault $\mathbf f$ is locally stochastic with rate $p$ if
\begin{equation}
  \Pr[S\subseteq\supp\mathbf f]\le p^{|S|}
  \label{eq:locally-stochastic}
\end{equation}
for every $S\subseteq\Omega$.  Conditioned on its fault path, faulty
locations may apply arbitrary CPTP noise channels.
\end{definition}

No independence assumption is made in \eqref{eq:locally-stochastic}.  The
fault values may be correlated with one another and with all adaptive
measurement records.  We will repeatedly use the following reduction
conditioned on the measurement history.

\begin{lemma}[Pauli expansion with the measurement history]
\label{lem:recordwise-Pauli}
Condition on a physical fault support and an actual measurement history of a Clifford
instrument.  Every supported fault map can be expanded into superoperators
of the form $\rho\mapsto A\rho B$, where $A$ and $B$ are Paulis supported in
the same faulty locations.  If a support statement and a logical identity
hold for every nonzero left and right Pauli term with that measurement
history, then they hold for the original physical instrument on the same
support, including on an arbitrary reference system.
\end{lemma}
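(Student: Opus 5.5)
The plan is to write each fault map on a fixed faulty location set $F$ in operator-sum form and then expand every Kraus operator in the Pauli basis on $F$. A supported fault map $\mathcal F$ acts on the qubits of $F$, possibly together with an environment or reference system. Choose Kraus operators $K_a$ acting on the qubits of $F$ tensored with the reference. Since the $n$-qubit Paulis on $F$ form an operator basis, each $K_a$ can be written as $\sum_P P\otimes R_{a,P}$, with $P$ ranging over Paulis supported on $F$. Substituting this gives
\[
\mathcal F(\rho)=\sum_{P,Q}\sum_a (P\otimes R_{a,P})\,\rho\,(Q\otimes R_{a,Q})^\dagger .
\]
This is a finite sum of terms of the form $\rho\mapsto A\rho B$, where $A=P$ and $B=Q$ are Paulis on the same faulty locations, tensored with operators on the reference. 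Faults inserted at several time steps are handled by expanding each inserted map in the same way. The whole faulty circuit then becomes a multilinear sum over tuples of left and right Pauli terms, one pair per faulty location and time.

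Next comes conditioning on the measurement history of the Clifford instrument. For a fixed record $\vec m$, the ideal instrument branch is a map $\rho\mapsto \Pi_{\vec m} U\rho U^\dagger\Pi_{\vec m}$, composed with the adaptive Clifford corrections determined by the earlier records. These maps are linear, so the conditional unnormalized output for that record is the same multilinear sum. Each summand is the ideal branch with Pauli insertions on the left and on the right. Clifford conjugation maps Paulis to Paulis and preserves the stated support relations, and projectors onto Pauli eigenspaces commute or anticommute with Paulis. So each summand can be written as $A'\,(\text{ideal branch})\,B'$, with $A'$ and $B'$ Paulis whose supports are determined by the propagated fault path. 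The hypothesis applies to every nonzero left and right term with that history. It gives the support statement, for instance that residual errors lie in a prescribed set, and the logical identity, for instance that the decoded logical output equals the ideal one up to the stated action, term by term.

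The final step is to resum. Support statements are closed under linear combinations of operators supported in a given set, and the logical identity is a linear equality of superoperators. Both therefore pass from the individual terms to the sum, and hence to the original physical instrument with that record. The reference system is carried along unchanged in the $R_{a,P}$ factors and does not interfere.

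The main obstacle is the off-diagonal terms with $A\ne B$. These are not physical channels, so any statement phrased in terms of states or probabilities must be rewritten as an operator identity that is linear in the pair $(A,B)$. The left and right Pauli strings must also be tracked separately through adaptive corrections, since those corrections depend on the record. I would resolve this by fixing the record first, so that every correction is a fixed Clifford, and by phrasing the logical identity as an equality $\mathcal D\circ\mathcal N_{A,B}=\mathcal L\circ\mathcal N^{\mathrm{ideal}}_{A,B}$. Because this equality is linear in the pair $(A,B)$, it sums correctly.
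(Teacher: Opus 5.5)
Your proposal is correct and follows the paper's own argument. You expand each Kraus operator of the supported fault maps in the Pauli basis on the faulty locations, fix the actual record so that the classical correction is one and the same physical operation on every left--right term, and resum by linearity without treating the expansion coefficients as probabilities. Drop the intermediate claim that each summand is Paulis times the ideal branch: it is not needed, and it is imprecise, since moving a Pauli past a measurement projector shifts the record (possibly differently on the left and right) rather than commuting or anticommuting. Instead, apply the hypothesis directly to the terms as they stand.
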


\begin{proof}
Pauli operators form a basis for operators on the support of each fault.
Expand every Kraus operator on that basis and retain the classical record in
the output.  Projectors associated with incompatible records annihilate the
corresponding terms.  The surviving terms therefore have the stated
left--right Pauli form, and the classical correction computed from the
record is the same physical operation for all of them.  Linearity gives the
logical identity once it has been checked term by term.  Complete positivity
and normalization are properties of the original instrument; the expansion
coefficients are never interpreted as probabilities and never enter the
bound on bad fault paths.
\end{proof}

\begin{lemma}[Syndrome measurement]
\label{lem:syndrome-measurement}
Let $\Pi_s$ be the projector onto syndrome sector $s$ of an
$[[n,k,d]]$ stabilizer code.  Let $\mathcal E_B$ be a superoperator
supported on a set $B$ that is correctable as an erasure, and let $\rho$ be
a code state, possibly entangled with a reference.  If all stabilizer checks
are measured and the outcome is retained, then the branch with outcome $s$
can be written
\begin{equation}
  \lvert s\rangle\!\langle s\rvert\otimes
  \alpha_s E_s\rho E_s^\dagger
  \label{eq:syndrome-measurement}
\end{equation}
for a Pauli $E_s$ supported on $B$ and a scalar $\alpha_s$.  For a physical
map, summing the branches gives the original completely positive
trace-preserving measurement instrument.
\end{lemma}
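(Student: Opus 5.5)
The plan is to expand $\mathcal E_B$ in the Pauli basis on $B$ and then use the Knill--Laflamme conditions for erasure correction on $B$ to collapse each syndrome branch to a single Pauli conjugation. First, write $\mathcal E_B(\sigma)=\sum_{a,b} c_{ab}\,P_a\sigma P_b^\dagger$, where $P_a,P_b$ range over Paulis supported on $B$, as in \cref{lem:recordwise-Pauli}. The branch with outcome $s$ is then
\[
|s\rangle\!\langle s|\otimes \sum_{a,b} c_{ab}\,\Pi_sP_a\rho P_b^\dagger\Pi_s ,
\]
with $\rho$ possibly entangled with a reference. Since $\rho$ lies in the code space, $P_a\rho P_a'^\dagger$ is supported on the syndrome sector fixed by the syndrome of $P_a$. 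Hence $\Pi_s P_a\rho=P_a\rho$ if $\operatorname{syn}(P_a)=s$, and $\Pi_sP_a\rho=0$ otherwise. Only terms with $\operatorname{syn}(P_a)=\operatorname{syn}(P_b)=s$ survive.

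Next, use correctability of $B$ as an erasure. Any two Paulis $P_a,P_b$ on $B$ with the same syndrome differ by $P_aP_b$, which is supported on $B$ and commutes with all stabilizers. By the erasure Knill--Laflamme condition (no nontrivial logical operator is supported on $B$), $P_aP_b$ is a stabilizer up to phase. Fix one representative $E_s$ on $B$ with syndrome $s$. Every surviving $P_a$ equals $\lambda_a E_s S_a$ for a phase $\lambda_a$ and a stabilizer $S_a$, and $S_a\rho=\rho=\rho S_a$ on code states. This holds with the reference too, because $S_a$ acts only on the code register. Therefore
\[
\Pi_s\,\mathcal E_B(\rho)\,\Pi_s=\Bigl(\sum_{a,b}c_{ab}\lambda_a\bar\lambda_b\Bigr)E_s\rho E_s^\dagger=\alpha_s E_s\rho E_s^\dagger .
\]
If no Pauli on $B$ has syndrome $s$, the branch vanishes and we take $\alpha_s=0$.

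Finally, measuring all checks with the outcome retained is the instrument $\sigma\mapsto\sum_s|s\rangle\!\langle s|\otimes\Pi_s\sigma\Pi_s$. Since the $\Pi_s$ are complete orthogonal projectors, summing the branches reproduces this instrument applied to $\mathcal E_B(\rho)$. When $\mathcal E_B$ is CPTP, the result is the original CPTP measurement instrument, and $\alpha_s\ge0$ is the outcome probability because $E_s\rho E_s^\dagger$ has unit trace.

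The main obstacle is the step identifying all same-syndrome Paulis on $B$ modulo stabilizers. It needs the precise form of erasure correctability: every Pauli on $B$ commuting with the stabilizer group is itself a stabilizer up to phase. I would state this equivalence explicitly from the Knill--Laflamme conditions for the erased set $B$, and take care with phases so that $\alpha_s$ comes out as a well-defined scalar.
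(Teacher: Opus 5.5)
Your proposal is correct and follows essentially the same route as the paper. You Pauli-expand $\mathcal E_B$ on $B$, use $\rho=\Pi_0\rho\Pi_0$ to kill every term whose left or right Pauli has syndrome other than $s$, and use erasure correctability of $B$ to identify all surviving same-syndrome Paulis with a single representative $E_s$ up to stabilizers and phases. Your explicit tracking of the phases $\lambda_a$ and of the empty-sector case $\alpha_s=0$ makes the paper's step that the surviving actions ``agree on the code space up to a scalar'' slightly more explicit.
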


\begin{proof}
Expand $\mathcal E_B$ as
\[
  \mathcal E_B(\rho)
  =\sum_{\mu,\nu}\alpha_{\mu\nu}A_\mu\rho B_\nu
\]
with Paulis $A_\mu,B_\nu$ supported on $B$.  Since
$\rho=\Pi_0\rho\Pi_0$, the branch projectors annihilate every term unless
$A_\mu$ and $B_\nu$ both have syndrome $s$.  Any two Paulis supported on
$B$ with that syndrome have a product of trivial syndrome supported on
$B$.  Erasure correctability implies that this product is a stabilizer, so
all surviving left actions agree on the code space up to a scalar, and the
same holds on the right.  Choosing one representative $E_s$ gives
\eqref{eq:syndrome-measurement}.  The reference system is untouched
throughout the calculation.
\end{proof}

This is the precise reason that the later proofs may analyze Pauli
representatives after check measurement while retaining arbitrary CPTP
faults in the physical model.  The statement is not a stochastic
decomposition of the physical channel.

\subsection{Weight enumerators}

The deterministic part of the proof describes failure through families of
bad subsets of locations.

\begin{definition}[Avoiding sets and weight enumerators]
\label{def:weight-enumerator}
Let $\mathcal F\subseteq\mathcal P(\Omega)$ be a family of bad sets.  A set
$X\subseteq\Omega$ is $\mathcal F$-avoiding if no $F\in\mathcal F$ is
contained in $X$.  The weight enumerator of $\mathcal F$ is
\begin{equation}
  \mathcal W(\mathcal F;x)=\sum_{F\in\mathcal F}x^{|F|}.
  \label{eq:weight-enumerator}
\end{equation}
\end{definition}

If $\mathbf f$ is locally stochastic with rate $p$, then
\begin{equation}
  \Pr[\supp\mathbf f\text{ is not }\mathcal F\text{-avoiding}]
  \le \mathcal W(\mathcal F;p).
  \label{eq:enumerator-probability}
\end{equation}
This is a union bound over the members of $\mathcal F$ and does not use
independence.

Suppose $\Omega=\Omega_1\cup\Omega_2$, and let
$\mathcal F_i\subseteq\mathcal P(\Omega_i)$.  Under the natural inclusions,
define
\begin{align}
  \mathcal F_1\boxplus\mathcal F_2
    &=\mathcal F_1\cup\mathcal F_2,
    \label{eq:bad-family-sum}\\
  \mathcal F_1\circledast\mathcal F_2
    &=\{F_1\cup F_2:F_i\in\mathcal F_i\}
    \quad\text{when }\Omega_1\cap\Omega_2=\varnothing.
    \label{eq:bad-family-product}
\end{align}
The corresponding enumerators satisfy
\begin{align}
  \mathcal W(\mathcal F_1\boxplus\mathcal F_2;x)
    &\le \mathcal W(\mathcal F_1;x)+
       \mathcal W(\mathcal F_2;x),
    \label{eq:enumerator-sum}\\
  \mathcal W(\mathcal F_1\circledast\mathcal F_2;x)
    &=\mathcal W(\mathcal F_1;x)
      \mathcal W(\mathcal F_2;x).
    \label{eq:enumerator-product}
\end{align}
The second equality follows from disjoint spacetime locations, not from
probabilistic independence.  It is the estimate used for simultaneous bad
raw slots in the canonical-$Y$ preparation of
\Cref{sec:state-preparation}.

For nested gadgets, let a finite set $I$ label outer locations.  For each
$i\in I$, let $\Omega_i$ be the disjoint location set of the inner gadget
substituted at $i$, and let
$\mathcal S_i\subseteq\mathcal P(\Omega_i)$ be its family of bad sets.  If
$\mathcal F\subseteq\mathcal P(I)$ is the outer family of bad sets, define
\begin{equation}
  \mathcal F\bullet\{\mathcal S_i\}_{i\in I}
  =\mathop{\boxplus}_{F\in\mathcal F}
     \mathop{\circledast}_{i\in F}\mathcal S_i.
  \label{eq:bad-family-composition}
\end{equation}
Here an empty product contributes the family containing the empty set.  If
$\mathcal W(\mathcal S_i;x)\le q(x)$ for every $i$, then
\begin{equation}
  \mathcal W\!\left(
      \mathcal F\bullet\{\mathcal S_i\}_{i\in I};x\right)
  \le \mathcal W(\mathcal F;q(x)).
  \label{eq:enumerator-composition}
\end{equation}
Indeed, for each $F$, the product rule gives at most
$q(x)^{|F|}$, and summing over $F\in\mathcal F$ yields the right-hand side.
The Hamming factory applies this proposition at each level of its
predetermined grouping of inputs.

\subsection{Fault-tolerant gadgets and composition}

We now connect families of bad fault paths to the quantum input and output
conditions used in the proof.  The definition below allows any constant
number of input and output blocks, including no input blocks for preparation
and no output blocks for destructive measurement.  Its classical output
always includes the complete measurement history.

For the concrete gadgets below we use one common boundary condition stated in
terms of weight reduced modulo stabilizers.
Let $t_*>0$.  On block $b$, let $\mathcal E_b$ be its encoding
isometry and let $n_b$ and $d_b$ be its length and distance.  The permitted
left and right operators belong to the span of
\begin{equation}
  P_b\mathcal E_b,\qquad |P_b|_R\le t_*n_b,
  \label{eq:correctable-span}
\end{equation}
where $P_b$ is a Pauli.  For a finite list of blocks, an $m$-qubit reference
system, and a logical--reference state $\rho$, let $I_m$ denote the identity
on the reference system.  We say that a physical state satisfies the
boundary condition about $\rho$ if it belongs to the linear span of
operators of the form
\begin{equation}
 \left(\left[\bigotimes_b P_b\mathcal E_b\right]\otimes I_m\right)
 \rho
 \left(\left[\bigotimes_b Q_b\mathcal E_b\right]^\dagger\otimes I_m\right),
 \qquad |P_b|_R,|Q_b|_R\le t_*n_b,
 \label{eq:typed-boundary-map}
\end{equation}
where $P_b$ and $Q_b$ are Paulis.  The bound is imposed separately on every
block.  Empty input and output lists give the preparation and destructive
measurement cases, respectively.  If
$2t_*n_b<d_b$ for every block, the usual error correction condition applies
to the whole span: the product of two permitted representatives is either
detectable or a stabilizer, and is never a nontrivial logical Pauli.

\begin{definition}[Fault-tolerant gadget]
\label{def:ft-gadget}
Let $\mathsf{Gad}$ be a physical instrument with any finite list of
encoded input blocks and output blocks, and let $\mathsf{Op}$ be the intended
logical instrument.  Let $\mathcal G$ be a family of bad subsets of the
locations of $\mathsf{Gad}$.  The gadget is fault tolerant if, whenever the
fault path is $\mathcal G$-avoiding, the following holds for every $m\ge0$
and every logical--reference input state $\rho$.  Let $\mathcal I_m$ be the
identity channel on the $m$-qubit reference system.  If the physical input
state satisfies \eqref{eq:typed-boundary-map} about $\rho$, then every
output branch satisfies the same condition, with the same reference system
and the same bound on every output block, about the corresponding branch of
$(\mathsf{Op}\otimes\mathcal I_m)(\rho)$.  The complete classical record is
retained.
\end{definition}

The definition includes preparation gadgets with no input, gates on several blocks,
gadgets mapping one code to another, destructive measurements, and parallel calls.
It is deliberately stated for instruments rather than only their averaged
quantum channels, because later classical corrections depend on the actual
record.

\begin{proposition}[Sequential and parallel composition]
\label{prop:gadget-composition}
Suppose two fault-tolerant gadgets have matching boundary conditions.
Their sequential composition is fault tolerant outside the sum of their
families of bad fault paths.  A parallel tensor product of finitely many
gadgets on disjoint blocks is fault tolerant outside the corresponding sum
of these families.  Both statements remain valid in the presence of an arbitrary
reference system.
\end{proposition}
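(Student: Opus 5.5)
The plan is to verify \Cref{def:ft-gadget} directly for the composite instrument, using the fact that its hypothesis is a deterministic statement about fault paths and that "avoiding" is monotone under restriction. For sequential composition, let $\mathsf{Gad}_1$ and $\mathsf{Gad}_2$ have location sets $\Omega_1,\Omega_2$ and bad families $\mathcal G_1,\mathcal G_2$. Take the composite family $\mathcal G_1\boxplus\mathcal G_2$ on $\Omega_1\cup\Omega_2$, as in \eqref{eq:bad-family-sum}. The first step is to observe that if $X\subseteq\Omega_1\cup\Omega_2$ is $(\mathcal G_1\boxplus\mathcal G_2)$-avoiding, then $X\cap\Omega_i$ is $\mathcal G_i$-avoiding for each $i$. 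Indeed, any $F\in\mathcal G_i$ contained in $X\cap\Omega_i$ would be a member of the union contained in $X$.

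Next, I fix a $\mathcal G_1\boxplus\mathcal G_2$-avoiding fault path, an integer $m$, a logical--reference state $\rho$, and a physical input satisfying \eqref{eq:typed-boundary-map} about $\rho$. Applying fault tolerance of $\mathsf{Gad}_1$ to the restricted fault path shows that every output branch, indexed by the first record $r_1$, satisfies the boundary condition about the branch $(\mathsf{Op}_1\otimes\mathcal I_m)(\rho)_{r_1}$. "Matching boundary conditions" means the output blocks of the first gadget are the input blocks of the second, with the same codes and the same $t_*$. Any blocks not acted on by $\mathsf{Gad}_2$ are absorbed into the reference system by enlarging $m$; since the definition quantifies over every $m$, this is legitimate.

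There is one subtlety here. The branch state $(\mathsf{Op}_1\otimes\mathcal I_m)(\rho)_{r_1}$ is subnormalized, and the input to $\mathsf{Gad}_2$ may also depend adaptively on $r_1$. Both are handled the same way: the record is retained, so for each fixed $r_1$ the second gadget receives an input satisfying the boundary condition about a fixed state. Its fault-tolerance definition is linear in that state, and for a fixed $r_1$ the adaptive choice is just one fixed gadget. Applying fault tolerance of $\mathsf{Gad}_2$ then gives, for every joint record $(r_1,r_2)$, an output satisfying the boundary condition about $(\mathsf{Op}_2\otimes\mathcal I)\bigl((\mathsf{Op}_1\otimes\mathcal I_m)(\rho)_{r_1}\bigr)_{r_2}$. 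This is exactly the $(r_1,r_2)$ branch of the composed logical instrument.

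I expect the main obstacle to be justifying that the faults in $\mathsf{Gad}_2$ may depend on the first record and be correlated with the faults of $\mathsf{Gad}_1$. The definition is deterministic in the fault path: it must hold for every fault map with the given support. So conditioning on $r_1$ fixes a valid (possibly record-dependent) fault on $\Omega_2$ with support inside the avoiding set. Non-CPTP or non-factorizing parts are handled by linearity through \Cref{lem:recordwise-Pauli}, which reduces everything to left--right Pauli terms for the fixed history. Because the boundary condition is stated as a linear span, closure under these sums is automatic.

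For the parallel tensor product of gadgets on disjoint blocks, I would proceed by induction. I write the product as a sequential composition of $\mathsf{Gad}_i\otimes\mathrm{id}$, treating the other gadgets' blocks as part of the reference system, and note that $\mathsf{Gad}_i\otimes\mathrm{id}$ is fault tolerant with the same family $\mathcal G_i$, because the reference system in \Cref{def:ft-gadget} is arbitrary. The per-block bounds on $|P_b|_R$ are imposed blockwise, so they hold for the untouched blocks as well. The sequential case then gives the family $\boxplus_i\mathcal G_i$. The reference-system clause holds throughout because each invocation already allows arbitrary $m$.
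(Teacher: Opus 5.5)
Your proposal is correct and follows essentially the same route as the paper: avoidance of $\mathcal G_1\boxplus\mathcal G_2$ forces avoidance of each $\mathcal G_i$, each output branch of the first gadget (with its record retained) satisfies the boundary condition required by the second, and the parallel case applies the definition to one factor at a time with the remaining blocks treated as reference. You supply details the paper leaves implicit, namely restriction of avoiding sets, branchwise linearity, and the Pauli reduction of \Cref{lem:recordwise-Pauli}; the only step that both you and the paper state informally is placing blocks that carry nontrivial, possibly correlated $P_b,Q_b$ into the reference, which is justified by linearity of the instrument on the span rather than literally by enlarging $m$.
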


\begin{proof}
For sequential composition, the output of the first gadget satisfies the
boundary condition required by the second.  Avoidance of the sum means that both
individual bad families are avoided.  For parallel composition, apply the
definition to one factor while treating all other logical blocks and the
external system as its reference, and repeat for the remaining factors.
The record registers are retained throughout, so the same argument applies
to instruments.
\end{proof}

\begin{lemma}
\label{lem:instrument-normalization}
Suppose that, for every complete record of a physical instrument and every
left--right Pauli term supported on an allowed fault set, the corrected
branch implements the prescribed logical map on the span defined in
\eqref{eq:typed-boundary-map}.  For a logical unitary, assume each branch is
a scalar multiple of the same logical isometry.  For a logical Pauli
measurement, assume each branch is a scalar multiple of the ideal outcome
map with projector $P_s$.  Then the physical instrument implements the
complete ideal logical instrument, including the outcome probabilities and
its action on arbitrary reference systems.
\end{lemma}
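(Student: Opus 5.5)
The argument is a linearity-and-normalization argument in three steps: reduce to Pauli terms, identify each corrected branch up to a scalar, then pin down the scalars with trace preservation.

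\textbf{Step 1: reduce to Pauli terms.} Fix a complete record $s$. By \Cref{lem:recordwise-Pauli}, the physical branch map $\Phi_s$ is a finite linear combination of left--right Pauli terms supported on the allowed fault set. The input state lies in the span \eqref{eq:typed-boundary-map}, so it suffices, by linearity, to treat one input operator
\[
\bigl([\textstyle\bigotimes_b P_b\mathcal E_b]\otimes I_m\bigr)\rho\bigl([\textstyle\bigotimes_b Q_b\mathcal E_b]^\dagger\otimes I_m\bigr).
\]
Each Pauli term acts as $X\mapsto A X B$. Composed with $\bigotimes_b P_b\mathcal E_b$ on the left, the hypothesis gives, for the corrected branch,
\[
A\,\bigl(\textstyle\bigotimes_b P_b\mathcal E_b\bigr)=c\,\bigl(\textstyle\bigotimes_b P'_b\mathcal E'_b\bigr)\,V_s ,
\]
and the analogous identity holds on the right. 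Here $V_s$ is the fixed logical isometry $U$ (unitary case) or the projector $P_s$ (measurement case), and $P'_b$ are permitted output representatives.

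\textbf{Step 2: assemble the branch.} The reference system is never touched by any physical operator, so $V_s\otimes I_m$ factors through. Summing the terms, the corrected branch output is
\[
\sum_{\alpha}\gamma_\alpha\,\bigl(L_\alpha\otimes I_m\bigr)\bigl(V_s\otimes I_m\bigr)\rho\bigl(V_s\otimes I_m\bigr)^\dagger\bigl(R_\alpha\otimes I_m\bigr)^\dagger ,
\]
where the $L_\alpha,R_\alpha$ are permitted representatives. Thus the output lies in the boundary span about $(V_s\otimes I_m)\rho(V_s\otimes I_m)^\dagger$. It remains to show that the overall coefficient equals the ideal one, namely $1$ times the ideal branch.

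\textbf{Step 3: fix the normalization.} Apply the ideal decoding map $\mathcal D$. Since $2t_*n_b<d_b$, every permitted representative pair reduces to a scalar times the identity on the code space. This gives
\[
\mathcal D\circ\Phi_s\,(\text{input})=\lambda_s(\rho)\,(V_s\otimes I_m)\rho(V_s\otimes I_m)^\dagger .
\]
Write $\tilde\Phi_s=\mathcal D\circ\Phi_s\circ\mathcal E$, which is the composition of CP maps. Then $\tilde\Phi_s(\rho)=\lambda_s\,(V_s\otimes I_m)\rho(V_s\otimes I_m)^\dagger$ for every $\rho$, including entangled inputs with an arbitrary reference.

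The main obstacle is showing that $\lambda_s$ is a constant independent of $\rho$. Linearity of $\tilde\Phi_s$ forces this: if two linearly independent $\rho$'s had different scalars, their sum would not have the required form. For rank-one $V_s$ blocks, apply this within the support of $P_s$. Trace preservation of the total instrument, which the physical CPTP faults and the retained record guarantee, then gives
\[
\sum_s\lambda_s\operatorname{tr}(P_s\rho)=1 \quad\text{for all }\rho .
\]
For the unitary case this yields $\sum_s\lambda_s=1$, so the averaged channel is exactly $U$. For measurements, choosing $\rho$ supported in each eigenspace gives $\lambda_s=1$. Hence the outcome probabilities $\operatorname{tr}(P_s\rho)$ are reproduced, and each branch matches the ideal branch on arbitrary reference systems.
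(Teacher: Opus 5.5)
Your strategy is the paper's: show each corrected branch is a scalar multiple of the ideal logical map, then let trace preservation of the physical instrument fix the scalars. The detour through an ideal decoder and the linearity argument that the scalar does not depend on the input are harmless additions.

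The gap is in the measurement case, where you conflate a \emph{complete} record with a logical outcome. You fix a complete record $s$, and in the unitary case you correctly sum $\lambda_s$ over all complete records. But in the measurement case you then conclude $\lambda_s=1$ for each record by taking $\rho$ in one eigenspace. Many complete records, differing in syndrome and other internal outcomes, decode to the same logical outcome $o$. For the records $r$ with decoded outcome $o(r)=o$, trace preservation only gives
\[
  \sum_{r:\,o(r)=o}\lambda_r=1 .
\]
Setting $\lambda_r=1$ for every such record would make the total probability of outcome $o$ equal to $\operatorname{tr}(P_o\rho)$ times the number of those records, which generally exceeds one. So your final claim that each branch matches the ideal branch is false record by record.

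The missing step is the paper's coarse-graining. Sum all internal records that produce the same logical outcome $s$ into one coefficient $c_s=\sum_{r:\,o(r)=s}\lambda_r$, which is nonnegative by complete positivity. Trace preservation then reads $\sum_s c_sP_s=I$, and orthogonality of the nonzero outcome projectors forces $c_s=1$. The ideal logical instrument, including its outcome probabilities, is recovered after this marginalisation over internal records, not branch by branch. With that correction your argument coincides with the paper's.
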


\begin{proof}
For a unitary branch, trace preservation of the actual physical instrument
forces the sum of its positive scalar weights to one.  For a measurement,
sum over all internal records that produce the same logical outcome $s$.
The branch calculation gives nonnegative coefficients $c_s$, and trace
preservation gives
\begin{equation}
  \sum_s c_sP_s=I.
  \label{eq:measurement-normalization}
\end{equation}
Orthogonality of the nonzero outcome projectors implies $c_s=1$ on every
outcome sector.  Thus the proof determines the full instrument, not only its
normalized states after measurement.  The coefficients used in the Pauli expansion in
the branch calculation need not be positive and are never used as
probabilities.
\end{proof}

\begin{proposition}
\label{prop:nested-gadget-composition}
Suppose an outer gadget is correct provided the set of failed inner
locations avoids $\mathcal F$, and the inner gadget at $i$ is correct when
its fault path avoids $\mathcal S_i$.  Then the substituted physical
gadget is correct outside
$\mathcal F\bullet\{\mathcal S_i\}_{i\in I}$.  If every inner enumerator is
at most $q(p)$, the probability that it is bad is at most
$\mathcal W(\mathcal F;q(p))$.
\end{proposition}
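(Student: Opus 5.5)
The argument has two parts: a deterministic correctness claim and a probabilistic bound. For correctness, fix a fault path $\Lambda$ of the substituted gadget that avoids $\mathcal F\bullet\{\mathcal S_i\}_{i\in I}$. Split it into its restrictions $\Lambda_i=\Lambda\cap\Omega_i$, which are disjoint, plus whatever part of $\Lambda$ lies in locations not replaced by inner gadgets. Define the set of failed outer locations
\[
  J=\{i\in I:\Lambda_i\text{ is not }\mathcal S_i\text{-avoiding}\},
\]
and treat each non-substituted location as its own trivial inner gadget, whose bad family is the single singleton. The first step is to show that $J$ avoids $\mathcal F$. Suppose instead that $F\subseteq J$ for some $F\in\mathcal F$. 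For each $i\in F$ choose $S_i\in\mathcal S_i$ with $S_i\subseteq\Lambda_i$. Then $\bigcup_{i\in F}S_i$ is a member of $\mathop{\circledast}_{i\in F}\mathcal S_i$ contained in $\Lambda$, which contradicts avoidance of \eqref{eq:bad-family-composition}. This uses only the definition of $\circledast$ on disjoint location sets; the case $F=\varnothing$ is handled by the empty-product convention.

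Next, every inner gadget at $i\notin J$ is, by \cref{def:ft-gadget}, a fault-tolerant realization of the operation at outer location $i$. Its boundary conditions match those of the outer gadget, and the rest of the computation serves as its reference system. So at the level of the outer gadget, location $i$ acts ideally up to the permitted boundary span \eqref{eq:typed-boundary-map}. Each inner gadget at $i\in J$ instead applies some instrument on its support, and we record this as a fault at outer location $i$. By \cref{lem:recordwise-Pauli}, and because the composition argument in \cref{prop:gadget-composition} tolerates arbitrary references and retains records, the resulting process is the outer gadget with fault path $J$. Outer correctness for $\mathcal F$-avoiding fault paths then gives correctness of the substituted gadget. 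The step I expect to require the most care is the interface. The outer correctness hypothesis must be read as allowing each failed location to perform an arbitrary instrument while each good location acts as its ideal operation on the boundary span, not only on the exact code space. I will state it in that form, so that good inner gadgets can be inserted in sequence and in parallel through \cref{prop:gadget-composition}.

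The probability bound follows from \eqref{eq:enumerator-probability}. The substituted gadget is bad only if its fault path fails to avoid $\mathcal F\bullet\{\mathcal S_i\}$. The union bound then gives probability at most $\mathcal W(\mathcal F\bullet\{\mathcal S_i\};p)$, and \eqref{eq:enumerator-composition} with $q$ evaluated at $x=p$ bounds this by $\mathcal W(\mathcal F;q(p))$. For non-substituted locations the enumerator is $p$, which we absorb by taking $q(p)\ge p$.
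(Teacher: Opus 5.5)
Your proposal is correct and follows the paper's argument. The paper also shows that the set of inner gadgets whose fault path contains a member of $\mathcal S_i$ is $\mathcal F$-avoiding by the definition of $\circledast$, that the remaining inner gadgets meet their guarantees so the outer hypothesis applies, and that the probability bound follows from \eqref{eq:enumerator-composition} and \eqref{eq:enumerator-probability}. Your explicit reading of outer correctness as tolerating arbitrary instruments at failed locations only spells out what the paper's one-line proof leaves implicit. The trivial-gadget device for unsubstituted locations and the $q(p)\ge p$ adjustment are unnecessary, since the paper's setup substitutes an inner gadget at every outer location.
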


\begin{proof}
If the substituted fault path avoids the composed family, the set of
inner gadgets whose support contains a bad set is $\mathcal F$-avoiding.
All other inner gadgets satisfy their logical guarantees.  The failed inner
locations therefore form an allowed set for the outer gadget, which is correct.  The
probability estimate is \eqref{eq:enumerator-composition} followed by
\eqref{eq:enumerator-probability}.
\end{proof}

\subsection{Corrected gadget boundaries}

The preceding definitions are independent of the particular code family.
All later gadgets will use the same $t_*$ in
\eqref{eq:typed-boundary-map}.  After a gate with bounded spread, error
correction must reduce the output error to this radius.  Preparations and
switching must produce outputs within the same radius, starting either with
no logical input or with another encoded block.  The decoder specific to
this construction is stated in \Cref{lem:decoder-interface}
and proved in \Cref{sec:single-shot-decoder}; the circuit level error
correction argument is proved in \Cref{lem:ec-gadget}.

\begin{definition}[Corrected gadget boundary]
\label{def:normalized-boundary}
A gadget output is at the corrected boundary when it uses the canonical
encoding required by the compiled schedule, every complementary logical
coordinate is in its prescribed zero or plus state, and all Pauli corrections
determined by syndrome or logical measurement outcomes have been computed and
physically applied.  The waiting time for these operations is included in the
gadget.  After the final error correction cycle, the output satisfies
\eqref{eq:typed-boundary-map}.
\end{definition}

All preparation, switching, and active logical operations used below produce
outputs at this corrected boundary.  In particular, all three operands of a
transversal $\CCZ$ satisfy the boundary condition before the gate.  Deferred
Clifford Pauli frames may toggle later Clifford measurement records, but no
Pauli frame of large weight is commuted symbolically through a non-Clifford layer.

Two elementary support facts will be used repeatedly.  First, a physical
unitary of bounded depth maps an operator supported on $S$ into an
operator supported on a forward light cone of constant size around $S$.  Expanding the
result in Paulis does not enlarge that light cone.  Second, measuring the
stabilizer checks of a correctable error map while retaining the outcome
separates Pauli terms by syndrome.  Pauli terms with the same syndrome differ
by a stabilizer on the code space.  These facts, together with
\Cref{lem:recordwise-Pauli}, are what permit the calculations with weight
reduced modulo stabilizers in later sections to cover arbitrary CPTP noise channels at the faulty
locations.

No conclusion specific to the code construction is used in this section.  In
particular, qLTC soundness by itself does not give an
efficient single-shot decoder, and a small syndrome does not rule out a
logical Pauli.  The additional geometric reductions needed for the decoder
and memory estimates are proved separately in
\Cref{sec:single-shot-decoder}.

%% file: sections/main-proof.tex
\section{Fault-tolerant quantum computation with transversal $\CCZ$}
\label{sec:main-proof}

This section derives the main theorem from the state preparation, logical
gate, and decoding results proved in the following sections.  The
fault-tolerance argument uses the stated code properties and gadget bounds;
the arithmetic construction of the code family enters only through the
results collected below.

\subsection{The computational code}

\subsubsection{Code parameters and error correction}

For every sufficiently large $d$, let $P_j(d)$, $j\in[3]$, be the three
primal CSS codes associated with the three arguments of the transversal
$\CCZ$ action, and let $D_j(d)$ be their physical-$H$ dual codes.  Set
\begin{equation}
  \mathfrak C_d
  =\{P_1(d),P_2(d),P_3(d),D_1(d),D_2(d),D_3(d)\}.
  \label{eq:computational-code-set}
\end{equation}
The usable combined lengths are
\begin{equation}
  n_d=n_*(Q^{2d}-1),
  \qquad
  \frac{n_{d+1}}{n_d}\le Q^2+1
  \label{eq:dense-lengths-main}
\end{equation}
for every sufficiently large $d$.

For a fixed sufficiently large $d$, set
\begin{equation}
  \mathfrak C:=\mathfrak C_d,
  \qquad n:=n_d,
  \qquad \tau:=\tau_d,
  \qquad k_{\CCZ}:=k_{\CCZ}(\tau_d).
  \label{eq:fixed-code-notation}
\end{equation}
We suppress $d$ on the six codes and call a physical block encoded by
$\mathcal Q\in\mathfrak C$ a $\mathcal Q$ block.  Write $H_j$ for the
logical space of $P_j$.  Every block has length $\Theta(n)$, bounded check
weight and incidence, and linear distance.  The primary block has dimension
\begin{equation}
  k\ge \kappa n
  \label{eq:primary-rate-main}
\end{equation}
for a constant $\kappa>0$; the auxiliary blocks are used at their actual
dimensions.

By \Cref{thm:qltc-family}, there are constants
$\sigma\in(0,1]$ and $c_\sigma>0$ and injections
\[
  i_j:\mathbb F_2^{k_{\CCZ}}\longrightarrow H_j,
  \qquad j\in[3],
\]
satisfying
\begin{equation}
  k_{\CCZ}\ge c_\sigma n^\sigma,
  \qquad
  \tau(i_1x,i_2y,i_3z)
    =\sum_{v=1}^{k_{\CCZ}}x_vy_vz_v
  \quad\text{for every }x,y,z\in\mathbb F_2^{k_{\CCZ}}.
  \label{eq:main-subrank}
\end{equation}
The injections and their extensions to canonical logical bases are
determined during compilation.  The estimate in \eqref{eq:main-subrank}
controls the number of active bands and is used in the final time bound.

\begin{lemma}[Decoder contraction and fault-tolerant memory]
\label{lem:decoder-interface}
For each $\mathcal Q\in\mathfrak C$ and every $\eta>0$, there
are constants $\alpha,\beta,\gamma_\mathrm{dec}>0$ and a decoder of constant depth
$\mathcal D$ such that
\begin{equation}
  \bigl|e+\mathcal D(He+m)\bigr|_R
  \le \eta|e|_R+\gamma_\mathrm{dec}|m|
  \label{eq:main-decoder-interface}
\end{equation}
whenever
$|e|_R+\alpha|m|\le\beta n$.  For an exact syndrome, an
$O(\log n)$ depth version returns a correction of absolute weight $O(|e|_R)$
leaving a stabilizer.  There is a common radius $t_*n$ and a constant
physical threshold for the finite list of physical gate segments used in the
simulation.  Each segment preserves the codespace and acts on at most a constant
number of code blocks.  After error correction, the output error again has
reduced weight at most $t_*n$
unless a constant
positive fraction of its $O(n)$ locations are faulty.  The corresponding
event induced by the corresponding bad support family has probability
$e^{-\Omega(n)}$, uniformly over arbitrary supported CPTP
faults and arbitrary reference systems.
\end{lemma}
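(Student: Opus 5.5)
The plan has three steps: a constant-depth decoder with a contraction bound, a circuit-level memory argument with a fixed radius $t_*$, and a union bound via the weight-enumerator calculus.

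\textbf{Step 1 (constant-depth decoder).} We first prove the contraction inequality \eqref{eq:main-decoder-interface} separately for the $X$ and $Z$ parts of each $\mathcal Q\in\mathfrak C$. The decoder $\mathcal D$ will be a constant number of rounds of parallel greedy flips, in the style of small-set flip. In each round, every qubit, or every constant-size local pattern of qubits, whose flip strictly lowers the observed syndrome weight by a constant fraction is flipped. Conflicts are resolved by a fixed colouring of the bounded-degree interaction graph, so each round has constant depth.

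The key geometric input is the expansion and local-testability property of the family supplied by \Cref{thm:qltc-family}. I would use a small-set expansion statement: whenever the reduced weight $|e|_R$ is at most $\beta n$, a reduced representative $e$ has a local flip that decreases $|He|$ by at least $c|e|_R$. This is stronger than qLTC soundness \eqref{eq:classical-local-testability}, which only controls distance to $\ker H$, not reduced weight. It is the step I expect to be hardest, and it must be taken from the detailed geometry proved in \Cref{sec:single-shot-decoder}.

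With a noisy syndrome $He+m$, each erroneous syndrome bit can mislead only $O(\Delta)$ flips. One round then gives a bound of the form
\[
  |e'|_R \le (1-c')|e|_R+O(|m|).
\]
Iterating a constant number of rounds yields the factor $\eta$ and the constant $\gamma_{\mathrm{dec}}$. The hypothesis $|e|_R+\alpha|m|\le\beta n$ keeps every intermediate error inside the expansion regime.

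For an exact syndrome we iterate for $O(\log n)$ rounds until the syndrome vanishes. Each round removes a constant fraction, and the total number of flips is at most the sum of a geometric series, so the correction has weight $O(|e|_R)$. Since the residual error has zero syndrome and reduced weight below $d$, it is a stabilizer.

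\textbf{Step 2 (memory at circuit level, fixed radius).} Fix the finite list of gate segments. Each has constant depth and acts on $O(1)$ blocks, so it spreads errors by at most a constant factor $s$. We choose $t_*$ with $s t_*$ together with the $\beta$-slack below $\beta$ and $2t_*n<d$.

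By \Cref{lem:recordwise-Pauli} and \Cref{lem:syndrome-measurement}, we may analyse Pauli terms conditioned on the measurement record. Let $f$ be the number of faulty locations in one segment together with its error-correction cycle. Faults during syndrome extraction contribute data error and syndrome error $|m|$ each $O(f)$. Applying Step 1 with $\eta\le 1/(2s)$ gives an output reduced weight of at most
\[
  \tfrac12 t_*n+O(f).
\]
This is at most $t_*n$ unless $f\ge c''n$, a constant fraction of the $O(n)$ locations. Thus the bad family $\mathcal F$ consists of location sets of size $\ge c''n$.

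\textbf{Step 3 (probability bound).} Bound the weight enumerator of $\mathcal F$ by
\[
  \mathcal W(\mathcal F;p)\le \binom{O(n)}{c''n}p^{c''n}\le (Kp)^{c''n}.
\]
This is $e^{-\Omega(n)}$ below the threshold $p<1/(2K)$. By \eqref{eq:enumerator-probability} no independence is needed, so the bound holds uniformly over arbitrary supported CPTP faults and reference systems. Finally, we take the minimum threshold and the common $t_*$ over the six codes and the finitely many segments.
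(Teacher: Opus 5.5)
Your Steps 2 and 3 are essentially the paper's proof of \Cref{lem:ec-gadget}: a spread constant, $\eta$ chosen against it, a linear fault fraction, and the binomial union bound over original fault supports. The gap is in Step 1, which carries the real content of the lemma.

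You posit one small-set-flip property for both check matrices and defer it to the geometry. The two CSS components are not symmetric, however. One is the degree-two cochain problem $(H,G)=(\delta^2,\delta^1)$ and the other is the chain problem $(H,G)=(\partial_2,\partial_3)$. Physical Hadamard only exchanges which of the codes $P_j,D_j$ carries which problem, so the chain problem never goes away. The available geometric estimates, namely local product expansion on upward links and the parallel-face spectral bound, are cochain statements. They give the small-set inequality and restriction flips for degree-two cochains, and for degree-four cochains of the canonical kernel diagram. They give no local-flip guarantee for a small reduced chain.

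The missing idea is the chain reduction of \Cref{lem:chain-reduction}:
\begin{enumerate}
\item lift the chain syndrome through the upward view complex, using local minimum explanations, the horizontal maps $\mathbf\Delta$, and the local boundary sections $\mathsf J$;
\item unencode at the top to a degree-four cochain problem and run the parallel cochain decoder there;
\item descend using horizontal exactness of the downward cubes and \Cref{lem:upward-view-exactness}, reconstructing a global chain homologous to the error;
\item compare the ideal and noisy executions level by level through bounded support spread.
\end{enumerate}
Without this reduction, or an independent proof of small-set expansion for chains, your decoder for the second component has no justification.

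There are three smaller points.
\begin{itemize}
\item As written, your expansion claim cannot hold once $|e|_R$ exceeds a constant, because one constant-size flip changes $|He|$ by only $O(1)$. What \Cref{cor:restriction-flip} gives is a flip whose gain is a $(1-\epsilon_0)$ fraction of its own syndrome change.
\item Turning that per-flip gain into a constant-fraction decrease of the whole syndrome in one parallel round needs the argument of \Cref{lem:parallel-progress}, not just a colouring. That argument uses the half-gain rule with maximal $|\delta w|$, a decomposition of the reduced error into disjoint restrictions, and a count of uniquely covered syndrome faces. The resulting contraction is in syndrome weight, and it is converted to reduced weight only at the end, through \eqref{eq:small-set-expansion}.
\item Both decoders must also be transferred through binary restriction and through the constant-length repetition used for the transversal $\CCZ$. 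That transfer treats $X$ and $Z$ errors differently and rescales the constants as in \eqref{eq:repetition-decoder-parameters}.
\end{itemize}
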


\Cref{lem:decoder-interface} is proved in
\Cref{sec:single-shot-decoder}; its version relative to one stored syndrome
baseline is \Cref{lem:sector-storage}.

\begin{lemma}[Fault-tolerant error correction]
\label{lem:ec-gadget}
The syndrome extraction circuit followed by the decoder of
\Cref{lem:decoder-interface} is an error correction gadget compatible with
the boundary condition.  If $t_\mathrm{in}$ is the incoming reduced error weight
and $f$ is the number of fresh faults in one physical window of bounded depth, then
\begin{equation}
  t_\mathrm{out}
  \le 2\eta t_\mathrm{in}+(2\gamma_\mathrm{dec}+1)h f
  \label{eq:physical-memory-interface}
\end{equation}
for a light cone constant $h$.  The decoder contraction, boundary radius,
and allowed fault fraction can be chosen so that this map sends every input
within the common radius $t_*n$ to an output within the same radius.  A bad window has probability
$e^{-\Omega(n)}$ below a constant physical threshold.  The statement holds
with arbitrary references and arbitrary supported CPTP faults.
\end{lemma}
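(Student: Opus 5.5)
I will treat one error-correction window as three stages: a syndrome-extraction circuit of bounded depth, the constant-depth decoder $\mathcal D$ of \Cref{lem:decoder-interface}, and the physical application of the correction.

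\textbf{Reduction to Pauli terms.} Condition on a fault path and on a complete measurement record. By \Cref{lem:recordwise-Pauli} it suffices to check each left and right Pauli term separately. On the code-space side, \Cref{lem:syndrome-measurement} and the boundary condition \eqref{eq:typed-boundary-map} let me represent the incoming error by one Pauli $e$ with $|e|_R=t_\mathrm{in}$. The incoming state is in the span of such terms, and for $2t_*n<d$ all representatives with the same syndrome agree on the code space up to a stabilizer.

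\textbf{Light-cone propagation.} Each fresh fault in the window lies in a forward light cone of size at most $h$, for a constant $h$ fixed by the depth and the check weight $\Delta$. I will push every fault to the end of the extraction circuit. This splits its effect into two parts: a data error $e_f$ with $|e_f|\le hf$, and a syndrome-record error $m$ with $|m|\le hf$. The measured syndrome is then $H(e+e_f)+m$. I apply \eqref{eq:main-decoder-interface} to $e'=e+e_f$, whose reduced weight satisfies $|e'|_R\le t_\mathrm{in}+hf$. Its hypothesis holds as long as $t_*n+(1+\alpha)hf\le\beta n$. The residual after correction is then at most $\eta(t_\mathrm{in}+hf)+\gamma_\mathrm{dec}hf$. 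Faults during the constant-depth application of the correction, and faults on data qubits after they were last measured, add at most another $hf$. Absorbing these contributions and doubling the constants to cover both the left and right Pauli terms gives \eqref{eq:physical-memory-interface}. Being generous with the constants here costs nothing.

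\textbf{Choice of parameters.} I fix $\eta\le1/4$, so that $2\eta t_*n\le t_*n/2$. I take $t_*$ small enough that $2t_*n<d$ and $t_*n\le\beta n/2$. I then define a bad window as one with $f\ge\delta n$, where $\delta$ satisfies $(2\gamma_\mathrm{dec}+1)h\delta\le t_*/2$ and $(1+\alpha)h\delta\le\beta/2$. Outside the bad family the output stays within radius $t_*n$. Correctness of the logical identity for each branch then follows because the product of two permitted representatives is a stabilizer or detectable. \Cref{lem:instrument-normalization} upgrades this branchwise statement to the full instrument with an arbitrary reference system.

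\textbf{Probability of a bad window and the main obstacle.} The window has $O(n)$ locations, so the bad family consists of the subsets of size $\lceil\delta n\rceil$. By \eqref{eq:enumerator-probability} its weight enumerator is at most $\binom{Cn}{\delta n}p^{\delta n}\le (eCp/\delta)^{\delta n}=e^{-\Omega(n)}$ once $p$ is below a constant threshold.

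The main obstacle is a circularity in the choices. The measurement-error term must be absorbed by the decoder bound, which is stated for $|m|$. This requires choosing $t_*$ and $\delta$ after $\eta$, $\alpha$, $\beta$ and $\gamma_\mathrm{dec}$ are fixed, consistently for all six codes in $\mathfrak C$ and all gate segments. I would resolve this by fixing $\eta$ first, then taking the worst constants over the finite list $\mathfrak C$, and only then choosing $t_*$ and $\delta$.
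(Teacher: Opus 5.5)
Your route is the paper's route. You push the window's faults through bounded light cones to a data Pauli and a record error of weight at most $hf$, and apply the decoder contraction. You add the faults in the final Pauli layer, fix $\eta$ before the dependent constants $\alpha,\beta,\gamma_\mathrm{dec}$, and only then choose $t_*$ and the fault fraction. Finally you bound a bad window by the binomial tail and lift from Pauli terms to CPTP faults by the recordwise expansion.

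\textbf{Gap in the parameter choice.} One step does not give what the lemma asserts. The radius $t_*n$ is the \emph{common} radius for the finite list of gate segments run between correction windows, as in \Cref{lem:decoder-interface}. These segments include transversal $\CNOT$ and $\CZ$ layers, physical Hadamard, switching cores, and the transversal $\CCZ$. Such a segment can multiply the incoming reduced weight by its support spread $s_\mathrm{spr}$. Closure must therefore hold for an input of reduced weight $s_\mathrm{spr}t_*n$, which requires
\[
  s_\mathrm{spr}t_*+\alpha h c_\mathrm{fault}<\beta,
  \qquad
  2\eta s_\mathrm{spr}t_*+(2\gamma_\mathrm{dec}+1)h c_\mathrm{fault}\le t_*,
\]
where $c_\mathrm{fault}$ is your $\delta$. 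Your criterion $2\eta\le 1/2$, together with $t_*\le\beta/2$, closes only a bare memory cycle. For example, with $\eta=1/4$ the radius is lost after any segment with $s_\mathrm{spr}>2$. The repair is a change of order only, since \Cref{lem:decoder-interface} allows every $\eta>0$. Choose $\eta<1/(2s_\mathrm{spr})$ first, extract $\alpha,\beta,\gamma_\mathrm{dec}$ for that $\eta$, and then shrink $t_*$ and $c_\mathrm{fault}$; this is the order used in the paper.

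\textbf{Source of the factor $2$.} The factor $2$ in \eqref{eq:physical-memory-interface} does not come from doubling for the left and right Pauli terms. Each of those terms satisfies the bound on its own, because the same record and the same correction apply to both. The factor comes from the two CSS components. The decoder interface is stated for one syndrome map: the degree-two cochain decoder for one Pauli type and the chain reduction for the other. Each component of the incoming error has reduced weight at most $t_\mathrm{in}$, and the output reduced weight is at most the sum of the two component residuals. You should apply the interface once per component rather than once to the full Pauli.
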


The common choice of constants and the version using a stored syndrome
baseline are proved in \Cref{sec:single-shot-decoder}.

\subsubsection{Logical operations}

Let $\lambda_Y\ge1$ and set
\begin{equation}
  \ell(n)=\left\lceil\log_2(\lambda_Y n)\right\rceil,
  \qquad
  h_j=\left\lfloor2\log_2(4j)\right\rfloor,
  \qquad
  u_j=2^{h_j}-1,
  \label{eq:global-batch-functions}
\end{equation}
and
\begin{equation}
  M(n)=\prod_{j=1}^{\ell(n)}u_j,
  \qquad
  K(n)=\prod_{j=1}^{\ell(n)}(u_j-2h_j).
  \label{eq:y-batch-yield}
\end{equation}
The Hamming lengths increase with the purification level, and there is a
constant $\eta_Y>0$ such that
\begin{equation}
  K(n)\ge\eta_YM(n),
  \qquad
  \log M(n)=\Theta(\log n\log\log n).
  \label{eq:y-constant-yield}
\end{equation}
Choose a constant $b>0$ large enough for all preparation, readout, and local
classical computations, and define
\begin{equation}
  g(n)=(\log(n+2))^b,
  \qquad
  F(n)=n^2+nM(n).
  \label{eq:global-resource-functions}
\end{equation}

\begin{lemma}[Canonical stabilizer resources]
\label{lem:primary-y-resource-interface}
Full encoded zero and plus blocks, together with their complete destructive
readouts, are available with linear quantum space, polylogarithmic time,
and failure probability $e^{-\Omega(n)}$ per complete call.  A
complete primary code canonical-$Y$ batch produces $K(n)$ jointly protected
logical $Y$ rows with
\begin{equation}
  S_Y(n)\le CnM(n),
  \qquad
  T_Y(n)\le Cg(n),
  \qquad
  \mathcal W(\mathcal B_Y;p)\le Ce^{-cn}.
  \label{eq:resource-failure-main}
\end{equation}
If $q_Y$ rows are requested, complete batches are run with
$u=\lceil q_Y/K(n)\rceil$; constant yield gives
\begin{equation}
  uM(n)\le \eta_Y^{-1}q_Y+M(n).
  \label{eq:y-complete-batches}
\end{equation}
All outputs jointly satisfy the corrected boundary condition of
\Cref{def:normalized-boundary}, including in the presence of arbitrary
reference systems.
\end{lemma}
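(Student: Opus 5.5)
This lemma collects the outputs of \Cref{sec:state-preparation}, so the plan is to assemble the bound from three components. These are encoded zero/plus preparation, complete destructive readout, and the canonical-$Y$ Hamming factory. Each is analysed with the avoiding-set calculus of \Cref{sec:model}.

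\emph{Zero and plus blocks.} I would prepare $\ket{0}^{\otimes n}$ (respectively $\ket{+}^{\otimes n}$) physically. Then I would measure the $X$ (respectively $Z$) checks over $O(\log n)$ rounds of the error-correction gadget of \Cref{lem:ec-gadget}, using the $O(\log n)$-depth exact-syndrome decoder of \Cref{lem:decoder-interface} to fix the random gauge. A final constant-depth correction window brings the output to the corrected boundary of \Cref{def:normalized-boundary}. Destructive readout is transversal single-qubit measurement followed by classical decoding of polylogarithmic depth. Its bad family consists of supports with a constant fraction of faulty locations in some window. Counting subsets gives $\mathcal W\le \sum_w\binom{O(n\,\mathrm{polylog}\,n)}{w}p^w$ restricted to $w\ge cn$ per window. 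A union bound over the polylogarithmically many windows then gives $e^{-\Omega(n)}$ below threshold. Space is $O(n)$ and time is $O(g(n))$.

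\emph{Canonical-$Y$ batch.} I would follow the prepare--unencode--test--distill route. First, prepare $M(n)$ raw logical-$Y$ rows, each protected by a primary block of size $O(n)$, which gives the $CnM(n)$ space bound. Then run $\ell(n)$ levels of $[[u_j,u_j-2h_j]]$ Hamming distillation, where level $j$ groups the outputs of level $j-1$ by a fixed predetermined grouping. At each level, a Hamming code with distance $3$ is used for the $Y$ (self-dual/Hadamard) distillation. Its outer bad family consists of pairs of failed inputs, so $\mathcal W(\mathcal F_j;x)\le\binom{u_j}{2}x^2$. By \Cref{prop:nested-gadget-composition} and \eqref{eq:enumerator-composition}, the failure enumerator obeys $q_j\le u_j^2q_{j-1}^2+Ce^{-cn}$, where the additive term covers the encoded Clifford steps at level $j$. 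Since $u_j=\Theta(j^2)$, the squaring recursion drives the raw constant error $q_0$ down doubly exponentially. After $\ell(n)=\lceil\log_2(\lambda_Yn)\rceil$ levels the recursive term is $\le e^{-\lambda' n}$, and with $\lambda_Y$ large enough it is dominated by $Ce^{-cn}$. The product rule \eqref{eq:enumerator-product} handles simultaneous bad raw slots. The yield is $K(n)\ge\eta_YM(n)$ by \eqref{eq:y-constant-yield}, since $\prod_j(1-2h_j/u_j)$ converges because $h_j/u_j=O(\log j/j^2)$. Each level has polylogarithmic depth and there are $O(\log n)$ levels, which gives $T_Y\le Cg(n)$ for $b$ large. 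The batch count \eqref{eq:y-complete-batches} is just $\lceil q_Y/K\rceil M\le (q_Y/K+1)M$.

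\emph{Main obstacle.} The delicate step is getting the raw $Y$ rows with constant error, and ensuring that the distillation's logical Cliffords, which act between blocks, preserve the boundary condition jointly. I would first try unencoded physical $Y$ preparation followed by qLTC-protected encoding and testing as in Nguyen--Pattison, postselection-free via the constant-depth check. The essential point to verify is that the logical (not merely physical) action of each level is the ideal instrument on every record. For that I would invoke \Cref{lem:recordwise-Pauli} and \Cref{lem:instrument-normalization} term by term, so that arbitrary CPTP faults and reference systems are covered.
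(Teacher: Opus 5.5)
Your outer structure for the $Y$ batch is the paper's. It uses row-level Hamming purification with lengths $u_j=2^{h_j}-1$, the recursion $\binom{u_j}{2}q_{j-1}^2$ from two bad children on disjoint spacetime domains, constant yield from $\sum_j h_j/u_j<\infty$, and $uM(n)\le(q_Y/K(n)+1)M(n)$. Folding the Clifford-window failures into each level as an additive $e^{-cn}$ floor is harmless; the paper instead keeps that catastrophe family separate and adds it once, using $\log M(n)=o(n)$.

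\textbf{Main gap.} The gap is the input to the recursion, which is exactly the step you flag, and the route you suggest for it fails. The recursion needs each raw \emph{row} to be bad with witness weight at most a constant $p_{\mathrm{raw}}$, independent of $n$; the paper takes $p_{\mathrm{raw}}\le2^{-13}$. This is because a row carrying $k=\Theta(n)$ logical coordinates counts as one Hamming input.

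If you encode bare physical $\ket{+i}$ states into the qLTC, each of the $k$ information qubits is a single point of failure: a Pauli fault there becomes a logical Pauli on the row. Many locations of the non-fault-tolerant encoding circuit behave the same way. The row's bad family therefore contains $\Omega(n)$ singletons and has weight $\Omega(np)$, and under i.i.d.\ noise almost every raw row is logically wrong.

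Testing cannot repair this. A small syndrome only certifies proximity to \emph{some} code state, so a row with the wrong logical content is accepted.

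\textbf{The missing idea.} You need to protect the encoding itself. The paper runs the canonical encoder inside the concatenated measured Clifford compiler of \Cref{lem:fixed-clifford-compiler} on $\mathcal Q_0$ at level $h(n)=O(\log\log n)$, so that $(C_{\mathrm{rec}}p)^{2^{h}}$ beats the $\operatorname{poly}(n)$ location count. It then coherently unencodes onto bare qLTC qubits, with each output failing with weight at most $C_Up$ on disjoint locations.

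The test then plays a different role from the one you give it. It guarantees the boundary condition for every retained slot even when the raw preparation failed catastrophically. This uses an encoded zero spare and a complete conditional $\SWAP$, so rejection is not postselection and a rejected slot is just one bad Hamming input. It also uses storage relative to a fixed syndrome baseline (\Cref{lem:sector-storage}) during the polylogarithmic acceptance computation.

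Because the compiler has a polylogarithmic width multiplier, the bound $S_Y\le CnM(n)$ also requires producing the raw slots in $O(c(n))$ parallel rounds, as in \eqref{eq:raw-wave-count}.

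\textbf{Zero and plus blocks.} This step has a separate flaw. After the $X$ checks are measured on $\ket0^{\otimes n}$, the syndrome is a uniformly random element of $\operatorname{im}H_X$. The deviation from the code space therefore generically has reduced weight of order $n$, far outside the promise $|e|_R+\alpha|m|\le\beta n$ of \Cref{lem:decoder-interface}. Neither the exact decoder nor repeated rounds of \Cref{lem:ec-gadget} can remove it.

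The gauge must instead be fixed by a linear syndrome section, such as the pivot-zone correction of \Cref{lem:canonical-encoding}. For the encoded zero block any $Z$-type representative will do, since $Z$-type logicals fix that state. The section must be applied to the noisy record only after that record has been repaired into $\operatorname{im}H_X$ by \Cref{lem:syndrome-record-repair}. The decoder then acts only on the small residual, as in \Cref{lem:deferred-clifford-corrections}.
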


The proof is given in \Cref{thm:primary-y-factory}.  The canonical $Y$ rows
implement logical $S$ gates, whereas the logical $\CCZ$ below is the
transversal operation of the computational code.

\begin{lemma}[Canonical block code switching]
\label{lem:canonical-bcs}
Let $\mathcal Q_1,\mathcal Q_2\in\mathfrak C$, with dimensions $k_1,k_2$ and block
lengths $O(n)$.  A direct canonical code switch maps $k_2$ blocks of
$\mathcal Q_1$ to $k_1$ blocks of $\mathcal Q_2$ by the logical transpose
\begin{equation}
  (b,a)\longmapsto(a,b).
  \label{eq:canonical-transpose-main}
\end{equation}
For $s$ disjoint grids run in parallel, the quantum space is $O(sn^2)$, the
time is $O(\log^2(n+2))$, and the joint probability of the bad supports is
$s\operatorname{poly}(n)e^{-\Omega(n)}$.
The instrument is exact on inputs entangled with arbitrary reference systems outside that
event.  Only the actual dimensions $k_1,k_2$ enter the grid, so no rate
condition is imposed on the auxiliary codes.
\end{lemma}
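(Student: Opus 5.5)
The approach is to realize the logical transpose \eqref{eq:canonical-transpose-main} by a teleportation-style direct code switch. The $k_2$ input $\mathcal Q_1$ blocks are stacked as the rows of a $k_2\times k_1$ logical grid, and fresh $\mathcal Q_2$ blocks are prepared as its columns. Concretely, I would prepare $k_1$ full encoded $\mathcal Q_2$ blocks in the zero (respectively plus) state using \Cref{lem:primary-y-resource-interface}. This costs linear space per block, and since $k_1,k_2=O(n)$ the total space is $O(n^2)$ per grid. I would then couple logical coordinate $a$ of input row $b$ to logical coordinate $b$ of output column $a$ by a logical $\CNOT$, followed by a destructive readout of the $\mathcal Q_1$ blocks in the conjugate basis and Pauli frame corrections. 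This is the standard one-bit teleportation identity. Because only the actual dimensions $k_1,k_2$ index the grid, no rate condition enters.

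The key difficulty is that a logical $\CNOT$ between a single logical coordinate of one block and a single coordinate of another block is not transversal. I would avoid addressing by routing through unencoded (physical-level) intermediaries. Each $\mathcal Q_1$ block is first fault-tolerantly unencoded into a canonical form, namely a block whose logical coordinates sit on designated physical positions, via an encoding circuit compiled from the canonical basis. The transpose is then a physical $\SWAP$/permutation (free with arbitrary connectivity, depth $O(1)$), and re-encoding into $\mathcal Q_2$ follows. Unprotected unencoding is not fault tolerant, so instead I would follow the prepare--unencode--test pattern. The encoding isometries of both codes are implemented by $\CNOT$ circuits of depth $O(\log n)$, and each is applied inside a hierarchical protection: every physical wire of the encoding circuit is itself an $O(\log n)$-size block of a small code, giving $O(n\cdot\log n)\subseteq O(n^2)$ space. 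This concatenated level is decoded by the $O(\log n)$-depth exact decoder of \Cref{lem:decoder-interface} after each of the $O(\log n)$ layers. That yields the $O(\log^2(n+2))$ time bound.

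For correctness, I would apply \Cref{lem:recordwise-Pauli} and \Cref{lem:syndrome-measurement} to reduce to left/right Pauli terms conditioned on the record. Each encoding layer, together with its error correction window, is then a gadget in the sense of \Cref{def:ft-gadget}, and \Cref{lem:ec-gadget} keeps the error within $t_*n$. Composition follows from \Cref{prop:gadget-composition}, and exact instrument normalization, including arbitrary references, follows from \Cref{lem:instrument-normalization}. For the probability bound, there are $\operatorname{poly}(n)$ windows and block-level locations per grid, each bad with probability $e^{-\Omega(n)}$. A union bound via \eqref{eq:enumerator-sum} over these windows and over the $s$ parallel grids gives $s\operatorname{poly}(n)e^{-\Omega(n)}$.

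The main obstacle is the inner protection level. Its blocks have size only $O(\log n)$, which naively gives failure $1/\operatorname{poly}(n)$ rather than $e^{-\Omega(n)}$. My first attempt would be to argue that a bad event in the switch requires a linear number of inner failures. Unencoding errors induce at most reduced weight proportional to the number of failed inner locations, times the $O(\log n)$ light cone of the encoding circuit. The output is then protected by the linear distance of $\mathcal Q_2$, combined with a final correction round. If this light-cone growth is too large, I would instead choose the inner block size $\Theta(n)$, which is still within $O(n^2)$ space, so that each inner location fails with probability $e^{-\Omega(n)}$ directly.
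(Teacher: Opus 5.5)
\textbf{Gap: the data leave every code of linear distance.} The problem arises after the data are taken out of $\mathcal Q_1$ and before they are put into $\mathcal Q_2$. After unencoding, each logical qubit of the input is an \emph{unknown} data qubit guarded only by your inner code. A single inner failure in that window is therefore already a logical error. The same holds for a failure in the late layers of the unencoder or the early layers of the re-encoder. Re-encoding into $\mathcal Q_2$ turns such an error into a logical operator of $\mathcal Q_2$, not into a low-weight physical error. The light-cone estimate and the linear distance of $\mathcal Q_2$ only control faults that occur while the state is encoded. So your claim that a bad event needs linearly many inner failures is false. With $O(\log n)$-size inner blocks you get failure probability $1/\operatorname{poly}(n)$ at best. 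The prepare--unencode--test pattern cannot repair this, because testing and purification apply only to known stabilizer resource states, never to arbitrary input data.

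\textbf{The fallback breaks the space bound.} Using $\Theta(n)$-size inner blocks violates $O(sn^2)$. For the primary code $k_2=\Theta(n)$, so the $k_2$ input blocks already occupy $\Theta(n^2)$ physical qubits, and the transposed array has $\Theta(n^2)$ logical wires. Any nonconstant blow-up per wire exceeds $O(sn^2)$: you get $\Theta(n^3)$, and already $\Theta(n^2\log n)$ for the $O(\log n)$ version. The fallback would also need fault-tolerant gadgets for the inner code and an $O(\log n)$-depth unitary encoder for the codes in $\mathfrak C$. Neither is established; the canonical encoder of \Cref{lem:canonical-encoding} is measurement based. Finally, the unencoded intermediate states are not blocks of any code in $\mathfrak C$, so \Cref{def:ft-gadget}, the boundary condition \eqref{eq:typed-boundary-map}, and \Cref{lem:ec-gadget} say nothing about them.

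\textbf{The missing idea: switch directly on the grid.} The paper realizes the transpose without ever unencoding. It lays out an $n_1\times n_2$ physical grid whose columns are $\mathcal Q_1$ blocks. The $k_2$ inputs occupy the logical positions of $\mathcal Q_2$. The other $n_2-k_2$ columns are full $\mathcal Q_1$ zero or plus blocks placed at the $Z$ and $X$ pivot positions of $\mathcal Q_2$; this is why no rate condition enters. After complete column syndrome extraction, the $\mathcal Q_2$ checks are measured on every row, and all but $k_1$ rows are read out. The $k_1$ surviving rows are $\mathcal Q_2$ blocks carrying the transposed logical array.

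The decoder solves the parent detector $H_{\mathcal Q_1}EH_{\mathcal Q_2}^T$ componentwise on the bounded-degree syndrome graph. It then invokes the rectangular cleaning identity \eqref{eq:rectangular-cleaning}: a residual with trivial parent syndrome whose row and column projections are smaller than $d_1$ and $d_2$ is a sum of old and new stabilizers. The bad event therefore reduces to oversized connected fault components and failures of the row and column promises. The tail estimate \eqref{eq:bcs-large-component} bounds these by $\operatorname{poly}(n)e^{-cn}$. The quantum core has constant depth on $O(n_1n_2)$ qubits. The $O(\log^2(n+2))$ time is entirely the classical component computation (Boolean squaring, ranks, determinants).
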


Section~\ref{sec:switching} proves the lemma by a direct code switch at the
actual code dimensions, including auxiliary codes of low rate.

Write $A_1:\mathbb F_2^{k_{\CCZ}}\to\mathbb F_2^k$ for the matrix of the embedding
$i_1$ in the chosen canonical logical bases, and choose a left inverse $L_1$
with $L_1A_1=I_{k_{\CCZ}}$.  If $\mathcal E_{P_1}^\mathrm{can}$ is the canonical primary code
encoding, define the active encoding by
\begin{equation}
  \mathcal E_{P_1}^\mathrm{act}\ket{x}
  =\mathcal E_{P_1}^\mathrm{can}\ket{A_1x}.
  \label{eq:active-encoding-main}
\end{equation}
The reversible CNOT circuit proved in
\Cref{lem:clean-linear-embedding} implements
\begin{equation}
  \ket{x}\ket0\longmapsto\ket0\ket{A_1x}
  \label{eq:clean-linear-embedding-main}
\end{equation}
and, on the image of $A_1$, its inverse uses $L_1$ to recover $x$ and clear
the encoded bank.  All work registers return to zero, and both directions are
valid with arbitrary references.  For one group, including preparation of
the zero bank and error correction between matching layers, the width is
$O(n^2+M(n)n)$, the time is
$O(k_{\CCZ}+g(n))$, and the failure
probability is $\operatorname{poly}(n)e^{-cn}$.  Short inputs are padded by
independent zeros, which are cleared again by the inverse.

\begin{lemma}[Active logical operations]
\label{lem:addressable-logic}
For $q$ active blocks, mixed zero, plus, and $Y$ resources, logical state transfers
between the active encodings,
logical Paulis, addressable $H$, $S$, $\CNOT$, $\CZ$, and
$\CCZ$, together with transfers between blocks that preserve coordinate labels, are
available with width
$O(qn+F(n))$, time
$O(k_{\CCZ}+g(n))$, and aggregate probability of the bad supports
$\operatorname{poly}(q,n)e^{-cn}$.  All statements use the actual auxiliary
dimensions and hold with arbitrary references.
\end{lemma}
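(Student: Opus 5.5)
The lemma is a composition of earlier gadgets, so I would prove it by building each listed operation from \Cref{lem:primary-y-resource-interface}, \Cref{lem:canonical-bcs}, the clean linear embedding \eqref{eq:clean-linear-embedding-main}, and \Cref{lem:ec-gadget}. Correctness then follows from \Cref{prop:gadget-composition} and \Cref{lem:instrument-normalization}. The resource bounds come from summing widths, depths and bad-support enumerators. The point of the construction is that every gate acts either transversally on whole blocks or on a canonical logical coordinate that has been isolated by a transpose.

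\textbf{Transfers and Paulis.} Transfers between the canonical and active encodings are the circuit \eqref{eq:clean-linear-embedding-main} and its inverse via $L_1$. The same construction is applied to $i_2,i_3$ for the $P_2,P_3$ blocks, with zero banks taken from \Cref{lem:primary-y-resource-interface}. Logical Paulis are transversal Pauli strings. They are applied in the Pauli frame, and we never commute a large frame through a $\CCZ$ layer: frames are physically applied at the corrected boundary before each non-Clifford layer.

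\textbf{Addressable Cliffords.} For addressable $H$, $S$, $\CNOT$ and $\CZ$, I would use the transpose \eqref{eq:canonical-transpose-main}. Given $q$ blocks, switch a grid so that the coordinate $a$ of every block lands in block $a$ of the transposed layout. Each requested gate on a coordinate pair then becomes a whole-block or single-row operation. $\CNOT$ is the transversal $\CNOT$ of CSS codes, restricted to rows with a controlled mask. In the transposed grid, restriction means acting on selected entire blocks, so transversality is preserved. $H$ is implemented by switching to the physical-$H$ dual code $D_j$ and back. Both codes are in $\mathfrak C$, so \Cref{lem:canonical-bcs} applies at the actual dimensions. $S$ uses the canonical $Y$ rows of \Cref{lem:primary-y-resource-interface} through the standard $Y$-teleportation gadget: a $\CNOT$, a destructive $X$/$Z$ readout, and a Pauli correction. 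The number of complete batches is controlled by \eqref{eq:y-complete-batches}, which gives width $O(nM(n))\subseteq O(F(n))$. $\CZ$ is $H\,\CNOT\,H$.

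\textbf{Addressable $\CCZ$.} For $\CCZ$, I would route the three operand coordinates into the active encodings of $P_1,P_2,P_3$ blocks at matching positions $v\le k_{\CCZ}$. Idle positions are padded with zero states, so that by \eqref{eq:main-subrank} the transversal gate acts as $\prod_v \CCZ$ on exactly the intended triples and as the identity elsewhere. The coordinates are then transferred back. Between layers, \Cref{lem:ec-gadget} restores the radius $t_*n$ after the bounded spread of the transversal gate.

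\textbf{Costs.} Time is $O(k_{\CCZ})$ for the embedding circuits plus $O(g(n))$ for resource preparation, switching ($O(\log^2 n)$) and decoding. Width is $O(qn)$ for data plus $O(n^2+nM(n))=O(F(n))$ of reused workspace. The failure probability is a union over $\operatorname{poly}(q,n)$ gadgets, each with enumerator $\operatorname{poly}(n)e^{-cn}$, summed via \eqref{eq:enumerator-sum}.

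\textbf{Main obstacle.} The hardest step is the $\CCZ$ step. We need that the transversal operation, applied to inputs satisfying the boundary condition \eqref{eq:typed-boundary-map}, yields an output within a constant multiple of the radius. This has to hold even though $\CCZ$ maps Paulis to non-Pauli Cliffords. I would handle it with \Cref{lem:recordwise-Pauli} together with the light-cone fact: the conjugated error is supported in the depth-one forward light cone, of reduced weight $O(t_*n)$, and so it is correctable by the subsequent decoder once $t_*$ is chosen small relative to the distance. I would also have to check that the complementary coordinates, fixed to zero by the active encoding, make $\tau$ act exactly as the diagonal form in \eqref{eq:main-subrank}.
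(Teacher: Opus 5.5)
\textbf{There is a genuine gap in the addressable-Clifford step, and every operation that needs per-coordinate control inherits it.}

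After the canonical transpose \eqref{eq:canonical-transpose-main} of a group of $k$ blocks, transposed block $a$ holds coordinate $a$ of \emph{every} block $b$ in the group. A whole-block gate on it therefore acts on all $(b,a)$ with $b\in[k]$ at once. Before the transpose, a whole-block gate acts on all $(b,a)$ with $a\in[k]$. Neither orientation turns an arbitrary mask $e_{b,a}$ into a selection of entire blocks. Here the masks are arbitrary, and because they come from the enable bits of the adaptive circuit they are known only at run time.

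The same obstruction hits the coordinate-label-preserving transfers $(b,j)\mapsto(\pi_j(b),j)$, which your proposal does not treat. After transposing, each $\pi_j$ becomes a permutation of coordinates inside a single block.

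Your $\CCZ$ step has the corresponding problem. Padding idle positions with zeros by routing handles compiled dummy slots. But routes are compilation data while enables are not, so a classically disabled $\CCZ$ would still be applied.

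The obvious repairs do not fit the stated bounds. Transposing each block on its own (padded by zero blocks) would make whole-block gates addressable, but costs a factor $k=\Theta(n)$ in width. Running $k_{\CCZ}$ coordinate rounds, each with its own transposes (the $A+B=E_v$ trick of \Cref{prop:injection-clifford-routing}), costs $O(k_{\CCZ}\log^2 n)$ unless you also use the deferred corrections of \Cref{lem:deferred-clifford-corrections}. That exceeds $O(k_{\CCZ}+g(n))$, and summed over the bands it gives $O(n\log^2 n)$ per layer, which destroys the strictly logarithmic bound.

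\textbf{The missing idea is that the transversal $\CCZ$ itself supplies the addressing.} In \Cref{lem:full-space-slice}, the transversal circuit is block diagonal in the computational basis of the third code. Fixing the third operand classically to the physical mask $\operatorname{lift}_{j_3}(i_{j_3}(\zeta))$ leaves a layer of physical $\CZ$ gates. That layer implements logical $\CZ$ exactly on the active coordinates with $\zeta_v=1$. The selector $\zeta$ can be chosen independently for every block pair, and no quantum program block is needed. Hadamard conjugation then gives the masked $\mathsf C_\zeta(P_{j_1}\to D_{j_2})$ of \eqref{eq:program-free-CNOT}.

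From this one primitive the paper builds:
\begin{enumerate}
\item the logical state transfers along the six-cycle \eqref{eq:type-cycle}, using $\zeta=\mathbf{1}$, an active zero or plus target, complete source readout and a known Pauli frame;
\item the masked swap \eqref{eq:selected-swap};
\item the restriction of full-block $H$ and $S$, by swapping into an active zero scratch block;
\item the selection swap $\SWAP_{1-\zeta}$, which puts zero in the third $\CCZ$ operand on disabled coordinates;
\item the $k_{\CCZ}$-round aligned transfer of \Cref{lem:aligned-transfer}.
\end{enumerate}
Each masked round is a constant-depth layer followed by error correction, which is where the time $O(k_{\CCZ}+g(n))$ comes from.

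Finally, the step you call the main obstacle is not one. Bounded spread of the transversal $\CCZ$ is already covered by \Cref{lem:ec-gadget}, provided pending Pauli corrections are applied physically before the layer, as you note.
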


\Cref{prop:specified-logical-gates,lem:mixed-stabilizer-resources,lem:aligned-transfer}
prove these
operations and bounds.  The CNOT circuits implement the linear embeddings,
and \eqref{eq:main-subrank} determines the number of bands.  The logical
$\CCZ$ is the transversal operation of the qLTC family.

\subsection{Compiling one adaptive layer}

All computational data are encoded in full canonical blocks of the primary
code, whose dimension $k=\Theta(n)$ follows from
\eqref{eq:primary-rate-main}.  For each role $\mathsf R$, let
$q_{\mathsf R}$ be the number of dense primary blocks assigned to it.  The
constant number of roles is chosen so that
\begin{equation}
  \sum_{\mathsf R}q_{\mathsf R}=O(W/k+1).
  \label{eq:q-blocks}
\end{equation}
When a bank is transposed, pad it to
\begin{equation}
  q'_{\mathsf R}=k\left\lceil\frac{q_{\mathsf R}}k\right\rceil.
  \label{eq:dense-group-padding}
\end{equation}
Thus fewer than $k$ full zero blocks are added for each role, and
\begin{equation}
  \sum_{\mathsf R}q'_{\mathsf R}=O(W/k+k)
  \label{eq:total-role-padding}
\end{equation}
because the number of roles is constant.  Partition the canonical
coordinates into
\begin{equation}
  B=\left\lceil\frac{k}{k_{\CCZ}}\right\rceil,
  \qquad
  J_s=\{(s-1)k_{\CCZ}+1,\ldots,\min(sk_{\CCZ},k)\}.
  \label{eq:active-bands}
\end{equation}
The final band is padded by independent zero coordinates to size $k_{\CCZ}$.  Every
enable bit on these padding coordinates is set to zero.  Thus
all operations that do not preserve zero are disabled there; padded operands of a
$\CNOT$ or $\CCZ$ are zero, and $S$ leaves zero unchanged.

\subsubsection{Active bands}

Consider a group of $k$ dense blocks and write its logical array as $X_{b,j}$.
For a state entangled with an arbitrary reference system, write
\begin{equation}
  \sum_X \ket{X}_\mathrm{dense}\ket{\phi_X}_R,
  \qquad
  X=(X_{b,j})\in\mathbb F_2^{k\times k}.
  \label{eq:dense-array-state}
\end{equation}
The canonical transpose for the primary code in \Cref{lem:canonical-bcs}
maps this state to
\begin{equation}
  \sum_X \ket{X^T}_\mathrm{dense}\ket{\phi_X}_R.
  \label{eq:dense-array-transpose}
\end{equation}
Retain that orientation while processing all bands.

\begin{lemma}[Processing one coordinate band]
\label{lem:band-exposure}
Let $J_s$ have size $h\le k_{\CCZ}$.  After adjoining $k_{\CCZ}-h$ full zero rows, the
tensor factors in the rows indexed by $J_s$ can be mapped isometrically from
the transposed dense bank to $k$ blocks in the image of
$\mathcal E_{P_1}^\mathrm{act}$, processed by a call whose enable
mask is zero on the padding coordinates, and inserted into the corresponding
destination rows.  When the call transfers logical states between blocks,
the matching is extended by the identity on the padding coordinates.  The operation is the identity on
every unexposed row and is valid with arbitrary reference entanglement.  The
padding coordinates remain independent zero and are cleared before they are
discarded.
\end{lemma}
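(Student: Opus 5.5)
The proof is an explicit isometry built from the clean linear embedding of \Cref{lem:clean-linear-embedding}, checked on basis states of the array in \eqref{eq:dense-array-transpose} and then extended by linearity to arbitrary reference entanglement. After the transpose, block $j$ of the dense bank carries the row $(X_{b,j})_{b\in[k]}$. The rows indexed by $J_s$ are therefore $h$ full canonical blocks, each with $k$ coordinates indexed by $b$. First I adjoin $k_{\CCZ}-h$ freshly prepared full zero blocks from \Cref{lem:primary-y-resource-interface}. Together with the $h$ rows these form a $k_{\CCZ}\times k$ array $Y$, whose padding rows are identically zero.

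The second step regroups the array by column. For each $b\in[k]$ the column $Y_{\cdot,b}\in\mathbb F_2^{k_{\CCZ}}$ must become the active-encoded block $\mathcal E_{P_1}^\mathrm{act}\ket{Y_{\cdot,b}}=\mathcal E_{P_1}^\mathrm{can}\ket{A_1Y_{\cdot,b}}$. I would do this in two stages.
\begin{itemize}
\item[(i)] A canonical block code switch (\Cref{lem:canonical-bcs}, with $\mathcal Q_1=\mathcal Q_2=P_1$) transposes the $k_{\CCZ}$ rows, padded to $k$ with zero blocks, into $k$ canonical blocks. Block $b$ then holds $Y_{\cdot,b}$ on its first $k_{\CCZ}$ coordinates and zeros elsewhere.
\item[(ii)] On each of these blocks I apply the CNOT circuit \eqref{eq:clean-linear-embedding-main}. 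It sends $\ket{x}\ket0\mapsto\ket0\ket{A_1x}$ into a fresh zero bank and leaves the source block in $\ket0$, which is discarded after verification.
\end{itemize}
Each map is a basis-state bijection onto its image and acts only on dense-bank tensor factors, so the composite is an isometry. It also acts as the identity on every row outside $J_s$, since those blocks are never touched. The active-logic call of \Cref{lem:addressable-logic} then runs with its enable mask zeroed on the coordinates $v>h$, and any transfer matching is extended by the identity there. I would check case by case that disabled coordinates stay $0$:
\begin{itemize}
\item Paulis are disabled on them.
\item $S$ and $\CZ$ fix $\ket0$.
\item A $\CNOT$ or $\CCZ$ whose target is a padding coordinate is disabled.
\item Transfers act as the identity on padding coordinates.
\end{itemize}
Hence every padding coordinate remains an independent $\ket0$ factor in the branch expansion.

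The third step reinsertion runs stage (ii) backwards and then stage (i) backwards. On the image of $A_1$, the inverse of \Cref{lem:clean-linear-embedding} uses $L_1$ to recover $Y_{\cdot,b}$ and clears the active bank. The inverse switch then restores rows of the form \eqref{eq:dense-array-transpose}, with the processed values written in the destination rows. The padding rows come back as full zero blocks and are destructively read out. Because their state is the product $\ket0$, discarding them is exact on the joint state with $R$. An optional zero check during readout adds only a bad-support event.

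I expect the main obstacle to be the image condition for the inverse embedding. The active call must keep each block inside $\im A_1$, that is, inside $\mathcal E_{P_1}^\mathrm{act}$, even while transfers move data between blocks. If that fails, the inverse via $L_1$ would leave garbage in the canonical complement. I would first settle this from the corrected-boundary clause of \Cref{def:normalized-boundary}, which guarantees that complementary coordinates are in their prescribed states. Everything is stated as exact outside a bad support family. The fault guarantees then follow by sequential composition (\Cref{prop:gadget-composition}), with the probability bounds of the cited lemmas added.
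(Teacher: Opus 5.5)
There is a genuine gap in your stage (ii).

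The CNOT circuit of \Cref{lem:clean-linear-embedding} exists in this paper only as a circuit on \emph{block indices}. It is lifted to whole primary blocks, and every matching layer is a transversal CNOT between two blocks, which acts as a logical CNOT coordinate by coordinate. After your stage (i), however, the index $v\in[k_{\CCZ}]$ on which $A_1$ must act has become the canonical logical coordinate inside each block $b$. Turning $\mathcal E_{P_1}^{\mathrm{can}}\ket{(Y_{\cdot,b},0)}$ into $\mathcal E_{P_1}^{\mathrm{can}}\ket{A_1Y_{\cdot,b}}$ therefore needs one of two operations:
\begin{itemize}
\item the intra-block logical basis change $U_{B_1}$, or
\item with your fresh zero bank, logical CNOTs from coordinate $v$ of one block to coordinates $u\ne v$ of another.
\end{itemize}
Neither is transversal, and the paper deliberately never treats a change of logical basis as free: the matrices $B_{\mathcal Q}$ are labels, never uncharged physical gates. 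Applying the lifted circuit as it actually exists after the transpose would mix the block labels $b$, which is the wrong index. So your composite is the right isometry logically, but as written it is not built from the available gadgets.

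The fix is to reverse your two stages, which is exactly what the paper does.
\begin{itemize}
\item Take the $k_{\CCZ}$ rows (the $h$ exposed rows plus padding) as the source register and $k$ fresh zero rows as the target register.
\item Run the block-index circuit for $A_1$ transversally over all $k$ coordinates. Row $u$ then holds $(A_1Y_{\cdot,b})_u$ at coordinate $b$, and the source rows are cleared.
\item Apply one canonical transpose to these $k$ rows. Block $b$ is now in the state $\mathcal E_{P_1}^{\mathrm{act}}\ket{Y_{\cdot,b}}$.
\end{itemize}
Equivalently, if you realize your intra-block map by conjugating the block-index circuit with transposes, two transposes become adjacent and cancel, leaving embedding followed by one transpose. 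With this order, padding from $k_{\CCZ}$ to $k$ rows before transposing is unnecessary.

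Reinsertion is the transpose followed by the backward embedding circuit, using $L_1$ on the row index. This requires every column to lie in $\im A_1$. As you anticipated, the paper secures this by ending the active call with logical state transfers into the active $P_1$ encoding. The rest of your argument then goes through as in the paper: masked enables on the padding coordinates, the identity on unexposed rows, exact discarding of the zero padding, and linear extension to arbitrary references.
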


\begin{proof}
Take the $h$ transposed rows indexed by $J_s$ and append
$k_{\CCZ}-h$ full primary
zero rows.  For every dense block label $b$, these rows define
\begin{equation}
  x_b=(X_{b,j})_{j\in J_s}\oplus0^{k_{\CCZ}-h}.
  \label{eq:short-band-vector}
\end{equation}
Apply the reversible CNOT circuit for the injective map $A_1$ to the block
index, followed by the
canonical transpose for the primary code.  The resulting $k$ active blocks contain
\begin{equation}
  \mathcal E_{P_1}^\mathrm{act}(x_b).
  \label{eq:band-exposure}
\end{equation}
Every unexposed row remains a full primary code block under ordinary error
correction.  Thus no unknown complementary logical coordinate is silently
initialized or discarded.  During the active call, all enable bits on the
$k_{\CCZ}-h$ padding coordinates are zero, and any aligned matching leaves those
coordinates unchanged.

After the active computation, apply logical state transfers so that every
output is in the active $P_1$ encoding, apply the canonical transpose, and
run the embedding circuit backward.  The first $h$ output rows occupy the
corresponding destination slots; the remaining $k_{\CCZ}-h$ rows
are again independent zero and may be discarded.  This is an exact isometry
on the selected tensor factors and the identity on their complement,
including for data entangled with a reference system.  A destination row is replaced only
if it was exposed in the same call or was a known zero row; no unknown
logical row is overwritten.
\end{proof}

\begin{lemma}
\label{lem:dense-band-interface}
For $q\ge1$ dense blocks, put $q'=k\lceil q/k\rceil$.  One band of the
resulting padded bank can be exposed, processed by any logical operation from
\Cref{lem:addressable-logic}, and reinserted using width
$O(q'n+F(n))$ and time
$O(k_{\CCZ}+g(n))$.  Only one band and a constant number of
dense, active, resource, and scratch banks are live at a time.
\end{lemma}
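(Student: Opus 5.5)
The plan is to combine \Cref{lem:band-exposure} with the resource bounds of \Cref{lem:canonical-bcs}, the clean linear embedding \eqref{eq:clean-linear-embedding-main}, and \Cref{lem:addressable-logic}. We add the widths and times of a constant number of sequential stages and track which banks are live. First pad the $q$ dense blocks with fewer than $k$ full primary zero blocks, so there are $q'=k\lceil q/k\rceil$ blocks. These form $s=q'/k$ disjoint $k\times k$ groups. Preparing the padding blocks uses \Cref{lem:primary-y-resource-interface}, which costs linear space per block, polylogarithmic time, and can be done in parallel. This contributes width $O(q'n)$ and time $O(g(n))$.

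Next, transpose all $s$ groups in parallel using \Cref{lem:canonical-bcs}. This takes quantum space $O(sn^2)=O(q'n^2/k)=O(q'n)$, since $k\ge\kappa n$ by \eqref{eq:primary-rate-main}, and time $O(\log^2(n+2))=O(g(n))$. Then apply \Cref{lem:band-exposure} to the chosen band $J_s$ in every group in parallel. For this step we adjoin $k_{\CCZ}-h$ zero rows per group, which is at most $k$ blocks per group and so $O(q'n)$ width in total. We run the CNOT embedding circuit for $A_1$, which has width $O(n^2+M(n)n)=O(F(n))$ per group. To keep that term additive rather than multiplied by $s$, I would note that the embedding acts blockwise on the block index and could be run with scratch of size $O(kn)$ per group, giving $O(q'n)$ overall. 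The stated $O(n^2+M(n)n)$ comes from zero-bank preparation and $Y$ resources, and those can be shared.

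After embedding, do a second transpose into the active encoding. Then call \Cref{lem:addressable-logic} with $q=O(q')$ active blocks, at width $O(q'n+F(n))$ and time $O(k_{\CCZ}+g(n))$. Finally undo the transposes and the embedding in the reverse order. Summing the constant number of stages gives time $O(k_{\CCZ}+g(n))$. The width is a maximum over stages rather than a sum, because each stage's scratch is returned to zero before the next begins. Only the current band's active bank, the dense bank, and a constant number of resource and scratch banks are live, which is the final claim.

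The main obstacle I expect is keeping the additive $F(n)$ term from being multiplied by the number of groups $s$. The embedding and resource-preparation gadgets are each stated per group with width $O(n^2+M(n)n)$. If $s\gg1$, running them in parallel naively would cost $sF(n)$, which exceeds $O(q'n)$ whenever $M(n)\gg k$. I would first try to split the per-group width into a part of size $O(kn)$, namely the zero bank and the $n^2=O(kn)$ grid, and a part of size $O(nM(n))$ coming from the $Y$ factory. The $Y$ factory part is handled through \eqref{eq:y-complete-batches}: the total number of requested $Y$ rows is $O(q')$, so complete batches give $uM(n)n\le\eta_Y^{-1}q'n+nM(n)$, which has exactly the form $O(q'n+F(n))$. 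If the per-group embedding circuit in \Cref{lem:clean-linear-embedding} does not decompose this way, the fallback is to process groups in a constant number of waves. That keeps the time bound but needs an argument that the wave count is $O(1)$.
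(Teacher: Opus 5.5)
Your proposal is correct and follows the paper's route. The $q'/k$ groups run in parallel, the canonical transposes cost $O(q'n+F(n))$ width and $O(g(n))$ time, the two embedding circuits and the active call cost $O(k_{\CCZ}+g(n))$ time, and all scratch is cleared or measured before the next band.

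Your worry about $F(n)$ being multiplied by the number of groups is resolved exactly as you first suggest. The per-group embedding bound is really $O(n^2)$ by \eqref{eq:single-injection-resources}, and the $Y$-factory term enters only once, through the single call to \Cref{lem:addressable-logic} on all $q'$ active blocks. So the wave fallback is unnecessary. It would not have helped anyway: a constant number of waves reduces that width by only a constant factor.
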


\begin{proof}
The $q'/k$ groups run in parallel.  The canonical transposes contribute
$O(q'n+F(n))$ width and $O(g(n))$ time.  The two
clean circuits for the linear embeddings and the active operation contribute
$O(k_{\CCZ}+g(n))$ time.  All work registers are
either cleared or unconditionally measured and discarded after their
independence or consumption has been established.  No such register remains
live into the next band, and memory error correction is applied to completed
outputs while later calls run.
\end{proof}

\subsubsection{Register permutations}

Dense register permutations are assembled from two operations that preserve
different parts of the indexing by blocks and coordinates.

\begin{lemma}[Routing while preserving coordinate labels]
\label{lem:dense-aligned-transport}
For each $j\in[k]$, let $\pi_j$ be a compiled permutation of $q$ primary
blocks.  The map
\begin{equation}
  (b,j)\longmapsto(\pi_j(b),j)
  \label{eq:dense-aligned-map}
\end{equation}
on the $qk$ logical coordinates can be implemented exactly with width
$O(qn+F(n))$ and time
\begin{equation}
  O\!\left(n+\left(1+\frac n{k_{\CCZ}}\right)g(n)\right).
  \label{eq:dense-aligned-time}
\end{equation}
The construction is valid on inputs entangled with an arbitrary reference
system.
\end{lemma}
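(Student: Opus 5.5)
We would reduce the coordinate-preserving permutation $(b,j)\mapsto(\pi_j(b),j)$ to logical work on active bands, where addressable operations are available. First apply the canonical transpose of \Cref{lem:canonical-bcs} to each padded group of $k$ dense blocks, as in \eqref{eq:dense-array-transpose}. After the transpose, the coordinate label $j$ becomes a block label and the old block label $b$ becomes a logical coordinate inside block $j$. The required map is then a \emph{coordinate} permutation $\pi_j$ inside transposed block $j$, with a possibly different permutation for each block. If $q>k$, the bank splits into $q'/k$ groups and $\pi_j$ mixes groups. In that case we first decompose $\pi_j$ into a bounded number of stages: a permutation of whole blocks (free, because it is a relabeling under arbitrary connectivity), a permutation within groups, and another permutation of whole blocks. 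This follows the standard three-stage Clos/Beneš decomposition of a permutation of a $q'/k\times k$ array into row, column, row permutations. Each stage applies the same row-permutation step to the grid.

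The core step is a permutation of logical coordinates inside the transposed blocks. We would process it band by band using \Cref{lem:band-exposure} and \Cref{lem:dense-band-interface}. For a single band $J_s$, the coordinates of each block can be moved into a common active encoding, but they cannot be permuted arbitrarily there. We therefore route through a fresh zero bank. For each band we use addressable logical $\CNOT$s from \Cref{lem:addressable-logic}, with enable masks selecting the source coordinates, to copy band $s$ into the destination slots $\pi_j^{-1}$-preimages. These are realized as transfers between blocks that preserve coordinate labels, after a second transpose that turns the within-block permutation into a between-block one. Each band is then cleared by the inverse copy, which is exact on data entangled with an arbitrary reference. A logical $\SWAP$ implemented as three $\CNOT$s, one coordinate class per band, avoids the need for clean ancillas. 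There are $B=\lceil k/k_{\CCZ}\rceil=O(n/k_{\CCZ})$ bands, and each costs $O(k_{\CCZ}+g(n))$ by \Cref{lem:dense-band-interface}. Summed, this gives $O(n+(n/k_{\CCZ})g(n))$, and together with the constant number of transposes costing $O(g(n))$ this yields \eqref{eq:dense-aligned-time}. Width remains $O(qn+F(n))$ because only one band and a constant number of banks are live at a time.

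Correctness with references follows from \Cref{prop:gadget-composition}. Every stage is an exact logical isometry outside its bad family. The failure probability is $\operatorname{poly}(q,n)e^{-cn}$, which we sum over the $O(B)$ calls.

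\textbf{Main obstacle.} The hardest step is making a within-block coordinate permutation that differs for each $j$ expressible through the available primitives. Transfers between blocks preserve coordinate labels, and the active gates act on aligned coordinates. A genuine relabeling $b\mapsto\pi_j(b)$ inside one block therefore requires alternating transposes so that it becomes a relabeling of blocks. We would first try the Beneš-style factorization into at most three aligned permutations separated by canonical transposes. For each aligned stage we would check that the per-band enable masks encode the compiled permutation data, and that padding coordinates stay zero.
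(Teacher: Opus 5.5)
There is a genuine gap in the step you use for $q>k$. It comes from misreading what the transposes accomplish.

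After your first transpose, you treat the task as a coordinate permutation inside each transposed block. For $q>k$ you then invoke a Clos/Beneš factorization whose outer stages are ``a permutation of whole blocks,'' which you claim is a free relabeling. That claim is false. Consider the $(q'/k)\times k$ array whose rows are the transposed blocks $(g,j)$ and whose columns are the coordinates $t$. In a three-stage factorization whose middle stage permutes within rows, the outer stages permute within \emph{columns}. That is, they apply a possibly different permutation of the block labels $g$ for each coordinate $t$, which is exactly the coordinate-preserving routing the lemma is supposed to construct. A uniform relabeling of whole blocks moves all coordinates of a block together. Combined with within-block permutations, it can never split the contents of one block among several blocks, so it cannot realize a general $\pi_j$. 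As written, the argument therefore fails in the main regime $q\gg k$, or becomes circular if the outer stages are themselves routed.

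The missing observation is that no reduction is needed. The map $(b,j)\mapsto(\pi_j(b),j)$ is already exactly the input accepted by \Cref{lem:aligned-transfer}; the transposes serve only to expose a band.

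\Cref{lem:band-exposure} applies the $A_1$ embedding to the rows $J_s$ of the transposed bank and then transposes again. The resulting $q'$ active blocks, from all groups in parallel, are therefore indexed by the \emph{original} block label $b$, with active coordinate $j\in J_s$ holding $X_{b,j}$. The two transposes cancel in the labeling.

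The paper's proof applies \Cref{lem:aligned-transfer} directly to these blocks:
\begin{itemize}
\item one fresh active zero target is prepared per source block;
\item the masked rounds use the matchings $\pi_j$, $j\in J_s$, each of the $k_{\CCZ}$ rounds with its own matching, and arbitrary connectivity makes cross-group matchings unproblematic;
\item a single complete source readout and its $Z$ frame correction follow;
\item the outputs are inserted into a zero destination bank, whose transpose is trivial, and one final transpose restores the dense orientation.
\end{itemize}
Your band-by-band accounting $B(k_{\CCZ}+g(n))+g(n)$, with $Bk_{\CCZ}<2k$, then gives \eqref{eq:dense-aligned-time} unchanged, with no Clos stage and no recursion.
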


\begin{proof}
The source bank is padded to $q'=k\lceil q/k\rceil$ blocks, each $\pi_j$ is
extended by the identity on the added blocks, and the destination consists of
$q'$ full primary code zero blocks.  The canonical transpose is applied to
each group of $k$ source blocks, after which the transposed orientation is
retained.  The zero
logical array of the destination is unchanged by transpose, so no
destination transpose is needed at this boundary.

The $B$ bands are processed serially.  For a band $J_s$,
\Cref{lem:band-exposure} exposes the corresponding source coordinates.  One
active zero target is prepared for each source block, and
\Cref{lem:aligned-transfer} is applied along the restrictions of the
permutations $\pi_j$, $j\in J_s$.  The complete source
readout in this routing procedure measures each exposed source block once.
Logical state transfers produce every output in the active $P_1$ encoding,
and the corresponding rows are inserted into the destination bank.  For a
short final band, the
additional active coordinates and their destination rows remain independent
zero.  After the last band, the measured source rows are discarded and the
full destination bank is transposed back to the dense orientation.  Equation
\eqref{eq:dense-aligned-map} has then been applied once to every real
coordinate.

One band uses only a constant number of padded dense, active, target, work, and
resource banks.  Memory error correction is applied to all other source and
destination blocks.  Hence the peak width is $O(qn+F(n))$.  The $k_{\CCZ}$ matching
rounds inside each logical state transfer and the preparation and switching calls
give
\begin{equation}
  O\!\left(
    B\left[k_{\CCZ}+g(n)\right]+g(n)
  \right).
  \label{eq:dense-aligned-count}
\end{equation}
Since $B=\lceil k/k_{\CCZ}\rceil$, $k=\Theta(n)$, and
$Bk_{\CCZ}<k+k_{\CCZ}\le2k$, this is \eqref{eq:dense-aligned-time}.  Every step is an exact
instrument on the selected tensor factors, so the statement holds with an
arbitrary reference system.
\end{proof}

\begin{lemma}[Permutations within each block]
\label{lem:dense-local-permutation}
For $q\ge1$ dense blocks, put $q'=k\lceil q/k\rceil$.  For every real dense
block $b\in[q]$, let $P_b$ be a compiled permutation of its $k$ canonical
logical coordinates.  The product of the permutations chosen independently
within each block has width $O(qn+F(n))$ and the time bound in
\eqref{eq:dense-aligned-time}.
\end{lemma}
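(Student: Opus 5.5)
The plan is to reduce the within-block permutations to the routing of \Cref{lem:dense-aligned-transport} by conjugating with the canonical transpose of \Cref{lem:canonical-bcs}. The product $\prod_b P_b$ acts on the logical array $X_{b,j}$ of each group of $k$ dense blocks by $(b,j)\mapsto(b,P_b(j))$. First pad the bank to $q'$ blocks, extending the permutation by the identity $P_b=\mathrm{id}$ on the added zero blocks. Then apply the primary-code canonical transpose to each group of $k$ blocks. In the transposed orientation the array entry formerly at $(b,j)$ sits at block $j$, coordinate $b$. The required map therefore becomes $(j,b)\mapsto(P_b(j),b)$: it preserves the coordinate label $b$ and permutes blocks by $\pi_b:=P_b$ independently for each coordinate.

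This is exactly the form \eqref{eq:dense-aligned-map}, applied within each group of $k$ transposed blocks. The groups run in parallel, and coordinates of different groups never mix, so we can take the block permutation to be block-diagonal across groups. Invoke \Cref{lem:dense-aligned-transport} on the $q'$ transposed blocks with these $\pi_b$. The map is compiled because each $P_b$ is. Finally, apply the canonical transpose again to return to the dense orientation, and discard the padding blocks, which remain independent zero since $P_b=\mathrm{id}$ there and zero arrays are transpose-invariant.

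For the resource accounting:
\begin{itemize}
\item The two transposes cost width $O(q'n+F(n))$ and time $O(\log^2(n+2))\subseteq O(g(n))$ by \Cref{lem:canonical-bcs}.
\item The routing step costs width $O(q'n+F(n))$ and time \eqref{eq:dense-aligned-time} by \Cref{lem:dense-aligned-transport}.
\item Since $q'\le q+k$ and $k=\Theta(n)$, we have $q'n\le qn+O(n^2)=O(qn+F(n))$ because $F(n)\ge n^2$.
\end{itemize}
Adding these gives the stated width and time bound. The failure probabilities combine through \Cref{prop:gadget-composition}, and every step is an exact instrument on the selected tensor factors, so validity with an arbitrary reference system follows.

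The main obstacle is a bookkeeping subtlety. \Cref{lem:dense-aligned-transport} already transposes its input internally and assumes the dense orientation at its boundary, so applying an outer transpose on top risks double-counting orientations. I would handle this by re-running its proof with the input taken in the opposite orientation: skip its initial transpose, and keep its final transpose so that the output returns to dense orientation. Alternatively, I would verify directly that its construction is symmetric in the roles of block and coordinate labels once the array is transposed. Either way, I must check that the per-band exposure of \Cref{lem:band-exposure} sees the coordinate label $b$ as a row index of the transposed bank, so that the aligned transfers of \Cref{lem:aligned-transfer} realize $\pi_b$ band by band.
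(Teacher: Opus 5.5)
Your argument is the paper's proof: pad to $q'$ blocks, transpose each group so that $(g,t,j)$ becomes $(g,j,t)$, observe that $(g,j,t)\mapsto(g,P_{g,t}(j),t)$ preserves the coordinate label and is a direct sum over groups of block permutations, handle all groups with a single call to \Cref{lem:dense-aligned-transport} (so the additive workspace is counted once), and transpose back. The orientation worry in your last paragraph is unfounded, because \Cref{lem:dense-aligned-transport} is used as a black box whose logical action is specified on any bank of full primary blocks, so you need no surgery on its proof; in fact, skipping only its initial transpose while keeping your outer one would leave the band rows indexed by the wrong label, since redundant transposes cancel only in pairs (outer with inner initial, inner final with outer inverse).
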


\begin{proof}
The bank is padded to $q'$ blocks, and $P_b$ is extended by the identity on
every added block.  Partitioning the result into $q'/k$ groups of $k$ blocks,
write a block label as $b=(g,t)$, where $g\in[q'/k]$ is the group and
$t\in[k]$ is its position inside the group.  The canonical transpose is
applied to every group.
Under the array transpose, an original coordinate $(g,t,j)$ becomes
$(g,j,t)$.  The desired map becomes
\begin{equation}
  (g,j,t)\longmapsto(g,P_{g,t}(j),t).
  \label{eq:transposed-local-permutation}
\end{equation}
For each coordinate label $t$, the map
$(g,j)\mapsto(g,P_{g,t}(j))$ is the direct sum, over all groups, of
permutations of the block labels.  Hence all groups are handled
simultaneously by one application of \Cref{lem:dense-aligned-transport},
followed by the inverse transpose on each group.  The two
canonical transposes and this single routing call have the stated bounds; in
particular, the additive preparation workspace is not multiplied by the
number of groups.  This construction does not invoke dense register
permutation recursively.
\end{proof}

A prescribed permutation of the $qk$ dense slots defines a bipartite
multigraph whose left vertices are source blocks, whose right vertices are
destination blocks, and whose edges are the routed logical qubits.  Every
vertex has degree $k$.  For a set of source vertices $S$, its $k|S|$ incident
edges meet at most $k|N(S)|$ edges at the neighboring destination vertices,
so Hall's condition gives a perfect matching.  Removing that matching leaves
a regular bipartite multigraph.  Repeating gives a decomposition into $k$
perfect matchings determined during compilation.

\begin{lemma}[Dense register permutation]
\label{lem:register-permutation}
Any compiled permutation of the real and padded logical slots in $q$ dense
primary blocks can be implemented exactly with width $O(qn+F(n))$ and time
\begin{equation}
  O\!\left(n+\left(1+\frac n{k_{\CCZ}}\right)g(n)\right).
  \label{eq:dense-routing-time}
\end{equation}
The statement holds with arbitrary references.  The route graph, its edge
coloring, and all physical masks are compilation data.
\end{lemma}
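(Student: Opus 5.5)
The plan is the classical three-stage decomposition of a permutation (the Beneš/Clos idea, or equivalently the proof of König's edge-colouring theorem), built from the two primitives already available. \Cref{lem:dense-aligned-transport} applies an arbitrary block permutation separately to each coordinate label. \Cref{lem:dense-local-permutation} applies an arbitrary coordinate permutation separately inside each block.

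Pad the bank to $q'=k\lceil q/k\rceil$ blocks and extend the prescribed permutation $\Pi$ to all padded slots. Every slot of a padded block is then both a source and a destination. Form the $k$-regular bipartite multigraph described just before the statement: source blocks on one side, destination blocks on the other, and one edge for each logical slot $(b,j)\mapsto\Pi(b,j)=(b',j')$. Decompose it at compile time into $k$ perfect matchings $\mathcal M_1,\ldots,\mathcal M_k$, for example by repeated application of Hall's theorem. Each edge then carries a colour $c\in[k]$, and each source block, and likewise each destination block, has exactly one incident edge of each colour.

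The implementation has three stages.

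\emph{Stage one.} Apply \Cref{lem:dense-local-permutation} with, for each source block $b$, the permutation $P_b$ that sends the slot $j$ carried by the edge of colour $c$ to coordinate $c$. Afterwards, coordinate $c$ of block $b$ holds the qubit routed along $\mathcal M_c$.

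\emph{Stage two.} Apply \Cref{lem:dense-aligned-transport} with $\pi_c$ equal to the bijection of blocks defined by the perfect matching $\mathcal M_c$. This moves $(b,c)$ to $(b',c)$, where $b'$ is the intended destination block.

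\emph{Stage three.} Apply \Cref{lem:dense-local-permutation} again. For each destination block $b'$, use the permutation that sends coordinate $c$ to the intended slot $j'$ of the edge of colour $c$ at $b'$. This is a bijection because $b'$ has exactly one edge of each colour.

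The composite is exactly $\Pi$ on every real and padded slot. Each stage is an exact instrument on the selected tensor factors and is valid with arbitrary references, so the composite inherits both properties. The width is $O(q'n+F(n))=O(qn+F(n))$, where the padding term is absorbed through \eqref{eq:total-role-padding}. The time is at most three times \eqref{eq:dense-aligned-time}, which gives \eqref{eq:dense-routing-time}. All of the route graph, the colouring, the permutations $P_b$ and $\pi_c$, and the physical masks are compilation data.

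The step I expect to need the most care is reconciling indices and padding. The two lemmas are stated for $q$ real blocks with their own internal padding to $q'$. Here the permutation may move data into the padded blocks, so I would apply both lemmas directly to the full $q'$-block bank, treating it as $q'$ real blocks; since $k$ divides $q'$, it is already padded. The short final coordinate band is handled inside those lemmas and needs nothing new. I would also check that the width bound does not pick up an extra $O(kn)$ term for the padded groups. This follows because $k=\Theta(n)$ gives $kn=O(n^2)\le F(n)$.
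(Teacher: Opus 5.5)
Your proposal is correct and follows the paper's proof essentially step for step. You use a within-block permutation that turns each edge colour of the $k$-regular route multigraph into its coordinate label, then one call to \Cref{lem:dense-aligned-transport} that realizes all $k$ perfect matchings at once, and then a second within-block permutation to the final coordinates, so the total time is a constant multiple of \eqref{eq:dense-aligned-time}. Your padding argument, absorbing the extra $O(kn)=O(n^2)$ width into $F(n)$ and letting padded slots participate as independent zeros, also matches the paper.
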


\begin{proof}
For each source block, first apply
\Cref{lem:dense-local-permutation} so that the color of the route edge
becomes its coordinate label.  The $j$th perfect matching then gives the
permutation $\pi_j$ in \eqref{eq:dense-aligned-map}, and all $k$ matchings are
implemented simultaneously by one call to
\Cref{lem:dense-aligned-transport}.  A second independently chosen
permutation within each block sends each color coordinate to its final
destination coordinate.  All padded slots participate as independent zero
qubits, so the composition is a permutation of the complete padded logical
space.  It consists of two calls that permute coordinates within blocks and
one call that preserves coordinate labels,
not a recursive invocation of the present lemma.

Each constituent call processes the $B$ bands serially.  The $k_{\CCZ}$ masked
matching rounds contribute $Bk_{\CCZ}$, while resource preparation and switching
contribute $Bg(n)$.  Since
\begin{equation}
  Bk_{\CCZ}<k+k_{\CCZ}\le2k=O(n),
  \label{eq:band-linear-work}
\end{equation}
the total is \eqref{eq:dense-routing-time}.  At every time, the live quantum
registers consist of a constant number of padded source and destination banks,
one exposed band, the work blocks for the linear maps, the targets for the current transfer,
and the reused preparation workspace.  Their width is $O(qn+F(n))$.
Memory error correction is applied to blocks outside the current matching.  The
exact array identities and \Cref{lem:aligned-transfer} preserve arbitrary
references, so their composition is the prescribed global permutation.
\end{proof}

\subsubsection{Bounds for one layer}

Each ideal gate is represented by a circuit of constant depth over
$H$, $S$, $\CNOT$, Paulis, and $\CCZ$.  The construction
reserves $W$ logical slots at every ideal layer.  A slot whose wire has been
measured and is no longer live is immediately replaced by an independent
logical zero.  At the beginning of one ideal layer, every slot is assigned to one
of a constant number of roles:
the operand positions of each gate type, each measurement basis,
initialization or reset, and idle storage.  Operands of a multi-qubit gate are
placed at the same block and coordinate indices in their respective banks.
Unused slots are independent logical zeros.  Before a dense register
permutation is invoked, the smaller source or destination layout is padded by
complete zero blocks until the two layouts have the same total capacity.
The number of additional blocks is $O(k)$ because the number of roles is
constant.

The real and dummy slots are first routed into this role layout by
\Cref{lem:register-permutation}; the ideal gate locations determine both the
route and its edge colorings.  For each gate type, the padded operand banks
are transposed once and remain in that orientation while their $B$ bands are
processed.  A call on one band exposes the same coordinate interval from all
operands, transfers the data to the active encodings required by the gate,
applies the stored enable mask, and maps the outputs into the active $P_1$
encoding before reinsertion.  Pauli corrections determined by measurements are
completed before the next non-Clifford operation, and the dense orientation
is restored after the last band.

Measurement banks are likewise exposed band by band.  After the appropriate
Clifford basis rotation and complete destructive readout, a canonical outcome
$m_Z$ contributes the first $h$ entries of
$(B_{\mathcal Q}^{-1}m_Z)_{[k_{\CCZ}]}$ in the logical $Z$ basis; in the logical $X$
basis the retained outcome is the first $h$ entries of
$(B_{\mathcal Q}^{T}m_X)_{[k_{\CCZ}]}$.  Both linear maps have $O(\log n)$ depth with
bounded fan-in.  Complementary and dummy outcomes are discarded.  Measured
bands are not reinserted, while all rows awaiting measurement continue to
receive error correction.  The role assignment places no surviving real wire
in a destructive bank.  Once its final band has been measured, the bank is
discarded and its slots are replaced by fresh zero blocks.

Fresh canonical zero blocks are assigned to initialization roles.  A reset traces
out the old dense block and replaces it by an independently prepared zero
block, which gives the ideal reset channel even for a wire entangled with a
reference system.  Clifford images of zero in the ideal alphabet are prepared
by the same band procedure.  Finally, the surviving and newly initialized
banks are padded by complete zero blocks to the capacity chosen for the next
layer and routed to its layout.

Error correction is applied to every surviving dense or active block while
another band is processed, a resource is prepared, or a classical record is
computed.  The original ideal enable bits are distributed once per ideal
layer in $O(\log W+\log n)$ depth with bounded fan-out.  All masks for that layer
are obtained from precomputed binary maps in parallel and stored classically until
their band is processed.  After every logical measurement outcome is decoded,
the ideal classical update is evaluated exactly once:
\begin{equation}
  \vec x_{t+1}=C_t(\vec x_t,\vec m_t).
  \label{eq:serialized-classical-update}
\end{equation}
No route graph is solved online, and the global fanout cost is not repeated
inside the band loop.  All measurements and resets in the physical schedule
are prescribed unconditionally; only the allowed unitary gates are enabled
by the stored classical bits.

\begin{proposition}[Compilation of one layer]
\label{prop:one-layer}
There is a constant $C_{\mathrm{layer}}>0$, depending only on the code family and
gadget data, such that one ideal adaptive layer is implemented by the schedule
above with
\begin{align}
  S_{\mathrm{layer}}(W,n)&\le C_{\mathrm{layer}}\bigl[W+F(n)\bigr],
  \label{eq:one-layer-width}\\
  T_\mathrm{layer}(W,n,k_{\CCZ})
  &\le C_{\mathrm{layer}}\left[n+\left(1+\frac n{k_{\CCZ}}\right)g(n)
  +\log(W+1)\right].
  \label{eq:one-layer-latency}
\end{align}
Outside the union of the constituent bad support events, the physical
instrument is exactly the ideal adaptive layer and every surviving output
block satisfies the corrected boundary condition, including on arbitrary reference
systems and for arbitrary CPTP values at the faulty locations.
\end{proposition}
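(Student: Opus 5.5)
The plan is to read \Cref{prop:one-layer} as a bookkeeping statement over the fixed schedule of the preceding subsection. Its constituent steps are already proved lemmas, so it suffices to add up their widths (taking maxima where steps are sequential) and their times, and then to compose their correctness guarantees using \Cref{prop:gadget-composition}.

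The schedule has the following phases, in order:
\begin{enumerate}
\item One global distribution of the ideal enable bits and masks.
\item A dense register permutation into the role layout, via \Cref{lem:register-permutation}.
\item For each of the constantly many gate and measurement roles: one canonical transpose of the padded operand banks (\Cref{lem:canonical-bcs}); a serial loop over the $B$ bands, each band exposed, processed and reinserted by \Cref{lem:band-exposure} and \Cref{lem:dense-band-interface} with the operation supplied by \Cref{lem:addressable-logic}; and a final inverse transpose.
\item Destructive readouts and resets using the resources of \Cref{lem:primary-y-resource-interface}, together with the $O(\log n)$-depth linear maps $B_{\mathcal Q}^{-1}$ and $B_{\mathcal Q}^T$.
\item A single evaluation of $C_t$.
\item A second dense register permutation into the next layout.
\end{enumerate}

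\textbf{Width.} Every call is applied to $q'=O(W/k+k)$ padded primary blocks, by \eqref{eq:total-role-padding}. With $k=\Theta(n)$, this gives $q'n=O(W+n^2)=O(W+F(n))$. Each invoked lemma has width $O(q'n+F(n))$. Only a constant number of banks and one reused preparation workspace are live at any time, so the widths add over a constant number of items and do not multiply by $B$. This yields \eqref{eq:one-layer-width}.

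\textbf{Time.} Each role's band loop costs $B\,[k_{\CCZ}+g(n)]+O(g(n))$, by \Cref{lem:dense-band-interface} together with the transpose cost $O(\log^2(n+2))=O(g(n))$. Using $Bk_{\CCZ}<2k=O(n)$ from \eqref{eq:band-linear-work} and $B\le 1+k/k_{\CCZ}$, this is $O(n+(1+n/k_{\CCZ})g(n))$. Each of the two routings costs the same, by \eqref{eq:dense-routing-time}. The global fan-out of the enable bits costs $O(\log W+\log n)$, the decoding of measurement outcomes costs $O(\log n)$, and $C_t$ has constant depth. Summing a constant number of these terms and absorbing $\log n$ into $g(n)$ gives \eqref{eq:one-layer-latency}.

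\textbf{Correctness.} Correctness is a sequential and parallel composition under \Cref{prop:gadget-composition}. Each constituent is exact outside its own bad family: the routing and band operations by their lemmas, and the idle blocks by \Cref{lem:ec-gadget} applied in every window. The composed bad family is therefore the sum of the constituent families. On its complement, every step is the prescribed logical instrument acting on the selected tensor factors and the identity elsewhere.

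It then remains to check that the composed logical map equals the ideal layer. The argument is that the routing realizes the compiled permutation exactly. The band processing applies, on every real coordinate, exactly the gate prescribed by the enable mask, and the masks vanish on padding coordinates. Measurement outcomes are the ideal logical outcomes, and \Cref{lem:instrument-normalization} upgrades the branchwise statements to the full instrument with the correct probabilities. Resets trace out a block and replace it by a fresh zero block, which is the ideal reset channel. Because all Pauli corrections are completed before any non-Clifford band, no frame of large weight crosses a $\CCZ$. Arbitrary reference systems are handled because every lemma invoked holds with references, and \Cref{lem:recordwise-Pauli} covers arbitrary CPTP fault values. Finally, every surviving output ends at the corrected boundary of \Cref{def:normalized-boundary}, because the last operation on each block is an error-correction window of \Cref{lem:ec-gadget}.

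\textbf{Expected obstacle.} The main difficulty is the adaptive and classical bookkeeping, not the arithmetic. Two points must be established carefully. First, masks stored classically and consumed band by band must reproduce exactly the ideal controlled operations determined by $\vec x_t$. Second, the deferred Clifford corrections must all be applied before the corresponding $\CCZ$ band, while the ideal update \eqref{eq:serialized-classical-update} is evaluated only once. I would first settle this by fixing an ordering: all measurement and Clifford correction bands of a layer precede any $\CCZ$ band that depends on them. Alternatively, since within one ideal layer the operations have disjoint supports, I would argue that no such dependency arises within a layer, so the Pauli frames can be flushed at the band boundaries before each $\CCZ$ call.
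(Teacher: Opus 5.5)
Your proposal is correct and follows the paper's own argument. Width comes from \eqref{eq:total-role-padding} with $k=\Theta(n)$ plus one reused $O(F(n))$ workspace, so serial bands cost time rather than space. Time comes from a constant number of gadget calls per band with $Bk_{\CCZ}=O(n)$, plus the two routes and a single $O(\log W)$ fan-out. Correctness comes from composing the constituent gadgets outside the union of their bad families.

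The obstacle you flag is already handled inside the gadget lemmas. Every call, in particular the addressable $\CCZ$ of \Cref{lem:selected-active-gates}, ends at the corrected boundary with all Pauli corrections physically applied. So the paper needs only a band-boundary invariant: surviving registers sit in their prescribed slots, temporaries are cleared, and no unknown row is overwritten. Your alternative ``disjoint supports'' justification would not suffice on its own, because gadget-internal Pauli frames arise within a single layer.
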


\begin{proof}
By \eqref{eq:total-role-padding} and $k=\Theta(n)$, the dense data and all
gate, measurement, initialization, and storage banks use $O(W+n^2)$ qubits.
The reusable workspace consists of one exposed band, the linear embedding and
code switching registers, and the scratch for boundary correction, readout,
and error correction.  The direct switches reuse the same $O(n^2)$ grid.
Finally, \eqref{eq:y-complete-batches} bounds the complete canonical-$Y$
factory bank for the current gate type by $O(q_Yn+nM(n))$.  The bounds in
\Cref{lem:primary-y-resource-interface,lem:canonical-bcs,lem:addressable-logic}
therefore give
\begin{equation}
 C\left[
   \left(\sum_{\mathsf R}q'_{\mathsf R}\right)n+n^2+F(n)
 \right]
 \le C'\bigl[W+F(n)\bigr].
 \label{eq:one-layer-bank-count}
\end{equation}
Here $F(n)\ge n^2$.  Completed resources remain protected in their output
banks, and the preparation workspace is reused only after the current scratch
has been cleared or measured.  Thus serial bands multiply time rather than
peak space, and \eqref{eq:one-layer-bank-count} proves
\eqref{eq:one-layer-width}.

Each band contains a constant number of gadget calls for routing, active
logical operations, and resource preparation.  Including the initial and
final dense routes, the physical time is at
most
\begin{equation}
 C\left[
   n+B\left(k_{\CCZ}+g(n)\right)+g(n)+\log(W+1)
 \right].
 \label{eq:one-layer-call-count}
\end{equation}
The $Bk_{\CCZ}$ term counts the masked matching rounds, and $Bg(n)$ accounts for the
resource calls made for the serial bands.  The last $g(n)$ term includes the
enclosing switches and preparations of complete code blocks.  Since
$B\le1+n/k_{\CCZ}$ and $Bk_{\CCZ}=O(n)$ by
\eqref{eq:band-linear-work}, \eqref{eq:one-layer-call-count} gives
\eqref{eq:one-layer-latency}.  The original enable bits are distributed once
and the ideal classical update is evaluated once, so no $\log W$ computation
is repeated inside the band loop.

For correctness, condition on the good support event for every preparation
batch and block window used in the layer.  At each band boundary, every
surviving logical register occupies its prescribed block and coordinate, all
temporary registers have been cleared or measured, and each live block
satisfies the corrected boundary condition.  The routing lemmas establish
this invariant at the role layout.  Lemma~\ref{lem:band-exposure} and the
lemmas for active operations preserve it while implementing the required logical
instrument: an output replaces only the row exposed in the same call or a
known zero row, so no unknown logical data are overwritten.  Measurement
outcomes are decoded and copied to the classical record before their banks
are discarded, and the update in \eqref{eq:serialized-classical-update} is
then evaluated once.  Sequential and parallel composition in
\Cref{prop:gadget-composition} preserves the invariant through the final
route, including for inputs entangled with arbitrary reference systems.
\end{proof}

\subsection{Main theorem}

\begin{theorem}[Main result]
\label{thm:main-result}
For the qLTC family of \Cref{thm:qltc-family}, there are
constants $p_*,A,C_\mathrm{add},C_{\mathrm{layer}},C_F,C_S,C_T,C_B,b_B,c_B>0$,
depending only on
the family and the compilation data, such that the following holds.

Let $C$ be an adaptive circuit over the chosen local
Clifford+$\CCZ$ gate set, with classical input and output, width $W\ge1$,
depth $D\ge1$, circuit size
$N=WD$, and
target error $0<\varepsilon\le1/2$.  Put
\begin{equation}
  L=\log(WD/\varepsilon)
  \label{eq:L-def}
\end{equation}
Fix $d_0$ so that the properties above hold for every $d\ge d_0$, and let
$d=d(L)$ be the least $d\ge d_0$ for which $n_d\ge A L$.
Set $n=n_d$ and $k_{\CCZ}=k_{\CCZ}(\tau_d)$.  For every
locally stochastic physical noise rate $p<p_*$, the compiled physical
schedule obeys
\begin{align}
  S_\mathrm{FT}(W,n)&\le C_\mathrm{add}\bigl[W+F(n)\bigr],
  \label{eq:main-preabsorption-space}\\
  D_\mathrm{FT}(W,D,n,k_{\CCZ})&\le C_{\mathrm{layer}}(D+1)\left[
    n+\left(1+\frac{n}{k_{\CCZ}}\right)g(n)
    +\log(W+1)\right].
  \label{eq:main-preabsorption-time}
\end{align}
and its bad event satisfies
\begin{equation}
  \Pr[\mathsf{Bad}]
  \le C_B(D+1)[W+F(n)](n+2)^{b_B}e^{-c_Bn}.
  \label{eq:main-theorem-bad}
\end{equation}
Outside this event the complete physical instrument has the ideal adaptive
logical action, including on arbitrary reference systems.  If in addition
\begin{equation}
  W\ge C_F\,F(n),
  \label{eq:main-width-condition}
\end{equation}
then the schedule is a fault-tolerant simulation in the circuit model of
\Cref{def:adaptive-circuit}, and
\begin{equation}
  S_\mathrm{FT}\le C_S W,\qquad
  D_\mathrm{FT}\le C_T D\log(WD/\varepsilon),\qquad
  \|P_\mathrm{FT}-P_\mathrm{ideal}\|_\mathrm{TV}\le\varepsilon.
  \label{eq:main-final-bounds}
\end{equation}
Faulty locations may apply arbitrary supported CPTP channels.  Noiseless
classical hardware of polynomial size in the regime
\eqref{eq:main-width-condition} is not charged to quantum width, while all
of its running time is charged to physical time.
\end{theorem}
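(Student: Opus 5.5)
We obtain the theorem by composing $D+1$ copies of the one-layer schedule of \Cref{prop:one-layer}, together with an initial preparation stage and a final readout stage, and then choosing $n$. First, the initial stage prepares $\ket0^{\otimes W}$ as $O(W/k+k)$ fresh canonical zero blocks using \Cref{lem:primary-y-resource-interface}, and evaluates $C_0$. Each ideal layer $t\in[D]$ is then compiled by \Cref{prop:one-layer}, with the last layer's destructive readouts and the evaluation of $C_D$ included. Between calls, every stored block receives the error correction gadget of \Cref{lem:ec-gadget}. Because the preparation workspace is reused, peak width is the maximum over layers rather than the sum, which gives \eqref{eq:main-preabsorption-space} directly from \eqref{eq:one-layer-width}. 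Summing \eqref{eq:one-layer-latency} over the $D$ layers, with the extra stage absorbed into the factor $D+1$, gives \eqref{eq:main-preabsorption-time}.

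For correctness, we apply \Cref{prop:gadget-composition} sequentially across layers. The output of each layer is at the corrected boundary, which is exactly the input hypothesis of the next layer. Outside the union of all constituent bad families, the whole instrument therefore equals the ideal adaptive instrument on arbitrary references, with the records handled as in \Cref{lem:instrument-normalization}. To bound $\Pr[\mathsf{Bad}]$ we sum enumerators using \eqref{eq:enumerator-sum} and \eqref{eq:enumerator-probability}. Per layer, the number of gadget calls and memory windows is at most (live blocks) $\times$ (time steps), that is $O([W/n+F(n)/n]\cdot\mathrm{poly}(n)\cdot\log(W+1))$. Each call has bad probability $\mathrm{poly}(n)e^{-cn}$ by \Cref{lem:decoder-interface,lem:primary-y-resource-interface,lem:canonical-bcs,lem:addressable-logic}. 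The $\log(W+1)$ factor from the classical-waiting windows needs care. I would absorb it by noting that its windows contain only memory blocks, so it contributes $O(W\log(W+1)/n)$ windows, and then bound $\log W\le W$. Alternatively, I would widen the statement's prefactor to $[W+F(n)]$ by charging at most $W+F(n)$ qubit-locations per time step and counting time steps as $\mathrm{poly}(n)\log W$; here $\log W\le L\le n/A$, so this factor is also polynomial in $n$. That yields \eqref{eq:main-theorem-bad} with $b_B$ large enough.

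For the final bounds, take $d=d(L)$. By \eqref{eq:dense-lengths-main}, $AL\le n\le(Q^2+1)AL+O(1)$, so $n=\Theta(L)$. The time terms satisfy $\log(W+1)\le 2L$. Also $k_{\CCZ}\ge c_\sigma n^\sigma$ gives $(n/k_{\CCZ})g(n)=O(n^{1-\sigma}\log^b n)=O(n)$, and $g(n)=O(n)$. Hence $D_\mathrm{FT}=O(D\,n)=O(DL)$. Under \eqref{eq:main-width-condition}, $S_\mathrm{FT}\le C_\mathrm{add}(1+C_F^{-1})W$. For the error, $W+F(n)\le 2W$ and $(D+1)\le 2D$, so $\Pr[\mathsf{Bad}]\le 4C_BWD(n+2)^{b_B}e^{-c_Bn}$. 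With $n\ge AL$ and $A$ large in terms of $c_B$ and $b_B$, we get $(n+2)^{b_B}e^{-c_Bn}\le e^{-c_Bn/2}\le (\varepsilon/WD)^{c_BA/2}$. Taking $c_BA/2\ge2$ and using $WD/\varepsilon\ge2$ makes this at most $\varepsilon/(4C_B WD)$ once constants are fixed.

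Finally, the total variation bound follows by coupling. Outside $\mathsf{Bad}$ the classical output distribution coincides with the ideal one branchwise, so $\|P_\mathrm{FT}-P_\mathrm{ideal}\|_\mathrm{TV}\le\Pr[\mathsf{Bad}]\le\varepsilon$. The one subtlety is that fault values may be correlated with records. This is handled because the good event depends only on the fault support, and \Cref{lem:recordwise-Pauli} makes the logical identity hold for every CPTP value conditioned on the support. The main obstacle I expect is the accounting in the $\Pr[\mathsf{Bad}]$ bound: making sure every window, including the classical waits and the reused factory batches, is counted with only polynomial-in-$n$ multiplicity per qubit-layer and no hidden $\log W$ or $M(n)$ blowup.
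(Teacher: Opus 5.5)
Your proposal is correct and follows essentially the same route as the paper: compose the one-layer schedules of \Cref{prop:one-layer} sequentially, union-bound the translated bad families of all batches, grids and elementary windows by the physical spacetime volume, and then choose $n\ge AL$ with $A$ large so that $(n+2)^{b_B}$ and the prefactor $WD$ are absorbed into $\varepsilon$. Of your two treatments of the $\log(W+1)$ waiting factor, only the second ($\log(W+1)\le 2L\le 2n/A$, which is what lets the paper's spacetime-volume count fit the stated form) actually gives \eqref{eq:main-theorem-bad}, since $\log W\le W$ would leave a $(D+1)W^2$-type prefactor; you should also add the paper's remark that the width condition makes $n$, $M(n)$, the stored records and the classical control circuits polynomial in $WD$, as required for a simulation in the model of \Cref{def:adaptive-circuit}.
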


\begin{proof}
Apply \Cref{prop:one-layer} to each ideal layer and include preparation and
readout costs of the same order.
Ancillary banks are reused, so
space is not multiplied by $D$.  This gives
\eqref{eq:main-preabsorption-space}--\eqref{eq:main-preabsorption-time}.

The polynomial subrank bound gives
\begin{equation}
  \frac{(1+n/k_{\CCZ})g(n)}{n}
  \le \frac{g(n)}n+c_\sigma^{-1}\frac{g(n)}{n^\sigma}
  \longrightarrow0.
  \label{eq:subrank-time-absorption}
\end{equation}
Hence the physical time is $O((D+1)[n+\log(W+1)])$.

Choose the code, decoder, preparation, switching, and radii that control bounded spread
before $n,W,D,\varepsilon$, and then choose $p_*$ below the corresponding
finite collection of thresholds.  Let $\mathfrak Y$ index complete
  canonical-$Y$ batches, $\mathfrak S$ complete code switching grids, and
  $\mathfrak W$ the direct preparations, readouts, and other elementary
gate or memory windows.  For each call, translate its bad support family to
the global set of physical locations.  Let $\mathsf{Bad}_Y$,
$\mathsf{Bad}_S$, and $\mathsf{Bad}_\omega$ denote the events that the actual
fault support contains a member of the corresponding translated family.
Define
\begin{equation}
  \mathsf{Bad}
  =\bigcup_{Y\in\mathfrak Y}\mathsf{Bad}_{Y}
   \;\cup\!\bigcup_{S\in\mathfrak S}\mathsf{Bad}_{S}
   \;\cup\!
   \bigcup_{\omega\in\mathfrak W}\mathsf{Bad}_{\omega}.
  \label{eq:global-bad-union}
\end{equation}
The family associated with each batch or switch already includes the
witnesses for its internal
raw slots, tests, memory intervals, and feedback.  Defects at lower levels are
therefore not counted again at the top level.  The number of such calls
is bounded by the physical spacetime volume, which gives
\begin{equation}
  \Pr[\mathsf{Bad}]
  \le C_B(D+1)[W+F(n)](n+2)^{b_B}e^{-c_Bn}.
  \label{eq:global-bad-event}
\end{equation}
This is a union bound on the original fault supports.  It uses neither independence
between windows nor a probabilistic interpretation of the coefficients in the
Pauli expansion.

Outside $\mathsf{Bad}$, induction through the gadget identities for arbitrary
reference systems gives exactly the ideal adaptive logical channel.  Each decoded
measurement record determines the ideal next classical update, and the final
destructive readout produces the ideal classical output.  The
total variation error is therefore at most the probability of the bad event.

Let $n_\mathrm{first}$ exceed all cutoffs imposed above and increase $A$ so that
$A\log 2\ge n_\mathrm{first}$.  By \eqref{eq:dense-lengths-main}, the first
usable $n\ge A L$ obeys
\begin{equation}
  A L\le n\le (Q^2+1)A L.
  \label{eq:usable-length-bound}
\end{equation}
Assume now \eqref{eq:main-width-condition}.  The additive term is absorbed
into $W$, so \eqref{eq:global-bad-event} is at most
$C'_BWD(n+2)^{b_B}e^{-c_Bn}$.  Increase the minimum length so that
$(n+2)^{b_B}\le e^{c_Bn/2}$, and choose $A$ with
$c_BA\ge4$ and
$C'_Be^{-c_BA\log2/4}\le1$.  Since
$WD=\varepsilon e^L$ and $n\ge A L$, these choices give
\begin{equation}
  \Pr[\mathsf{Bad}]\le\varepsilon.
  \label{eq:failure-absorption}
\end{equation}
Since $D\ge1$ and
$\log(W+1)=O(L)$, \eqref{eq:usable-length-bound} and
\eqref{eq:subrank-time-absorption} give the depth bound in
\eqref{eq:main-final-bounds}.  The same width condition bounds $n$, $M(n)$,
all stored records, and the compiled classical circuits by a polynomial in
$N=WD$.  It also absorbs $F(n)$ in the quantum space bound and completes
\eqref{eq:main-final-bounds}.
\end{proof}

\begin{corollary}
\label{cor:polynomial-range}
Let $c,c'>0$.  If $D\le W^c$ and $\varepsilon\ge W^{-c'}$, then
\eqref{eq:main-width-condition} holds for all sufficiently large $W$.
\end{corollary}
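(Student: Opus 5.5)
We need to show $W\ge C_F F(n)$, where $F(n)=n^2+nM(n)$ and $\log M(n)=\Theta(\log n\log\log n)$ by \eqref{eq:y-constant-yield}. The plan is to bound $n$ by a multiple of $\log W$, which forces $F(n)$ to be quasi-polylogarithmic in $W$ and hence eventually smaller than $W/C_F$.

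First we bound $L$. Under the hypotheses $D\le W^c$ and $\varepsilon\ge W^{-c'}$, with $W\ge 2$,
\[
  L=\log(WD/\varepsilon)\le \log\bigl(W^{1+c+c'}\bigr)=(1+c+c')\log W .
\]

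Next we bound $n$. By \eqref{eq:usable-length-bound}, either $n$ is the first usable length $n_{d_0}$, which is a constant, or $n\le (Q^2+1)AL$. In both cases $n\le C_1\log W$ for all $W\ge 2$, where $C_1$ depends only on $c,c',A,Q$ and the family data.

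Then we bound $F(n)$. From \eqref{eq:y-constant-yield} there is a constant $C_2$ with $\log M(n)\le C_2\log(n+2)\log\log(n+3)$. Hence
\[
  \log F(n)\le \log 2+\max\{2\log n,\ \log n+\log M(n)\}\le C_3\log(n+2)\log\log(n+3).
\]
Substituting $n\le C_1\log W$ gives
\[
  \log F(n)\le C_4\,\log\log W\cdot\log\log\log W
\]
for large $W$.

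Finally, $\log(C_F F(n))\le \log C_F+C_4\log\log W\log\log\log W=o(\log W)$. This is below $\log W$ for all sufficiently large $W$, which yields \eqref{eq:main-width-condition}.

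The only point needing care is that $d$ is chosen as the least admissible index, so $n$ cannot be much larger than $AL$. This is exactly what \eqref{eq:usable-length-bound} supplies through the ratio bound \eqref{eq:dense-lengths-main}. Once that is in hand, the remaining steps are routine monotonicity estimates of $\log n\log\log n$. The constants $C_F$ and $A$ are fixed in advance, independent of $W$, so "sufficiently large" depends only on $c,c'$ and the family data.
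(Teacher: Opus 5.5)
Your proof is correct and follows the paper's argument: bound $L\le(1+c+c')\log W$, deduce $n=O(\log W)$ from the length selection, and conclude from $\log F(n)=O(\log n\log\log n)=o(\log W)$. The paper records $n=\Theta(\log W)$, whereas you use only the upper bound, which suffices; your case split on $d=d_0$ is harmless.
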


\begin{proof}
The hypotheses give
$\log W\le L\le(1+c+c')\log W$, so
$n=\Theta(\log W)$.  Equations~\eqref{eq:y-constant-yield} and
\eqref{eq:global-resource-functions} give
$\log F(n)=O(\log n\log\log n)=o(\log W)$, and hence
$F(n)=W^{o(1)}$.
\end{proof}

%% file: sections/state-preparation.tex
\section{State preparation gadgets}
\label{sec:state-preparation}

The dense compiler uses direct preparations of encoded zero and plus states,
complete logical readout, and canonical logical $Y$ states for phase gates.
The first two resources follow from a measured Clifford compiler whose
leading correction maps an arbitrary input into the required error span.
The $Y$ resource requires an
additional factory because an error in a raw encoded row may change all of
its logical coordinates.  The construction below treats such a row as one
bad input position, applies a sequence of Hamming corrections of increasing
length, and keeps the whole batch protected while the classical records are
evaluated.

\subsection{A measured Clifford compiler}

A CSS code $\mathcal Q_0$ of constant size is used throughout the compiler.  Puncturing at one coordinate
a binary evaluation code of length $512$ and degree at most four gives a
$[[511,1,d_0]]$ CSS code with $d_0\ge31$.  The code has $255$ independent pure
$X$ checks and $255$ independent pure $Z$ checks, and the all ones word may
be used for both logical Pauli representatives.  All circuits associated
with $\mathcal Q_0$ are chosen once and are independent of the later qLTC block
length.

For each complete syndrome $s$ of $\mathcal Q_0$, choose a Pauli of minimum weight
$D(s)$ with that syndrome, breaking ties lexicographically.  The resulting rule is a
total syndrome section: $D(s)$ is applied also on records outside the
decoding promise for errors of bounded weight.  Since $\mathcal Q_0$ is a constant, the table and the
Boolean circuit evaluating it have constant size and depth.

A check $S$ of weight $w$ is measured with a length-$w$ cat state whose
adjacent $Z_iZ_{i+1}$ parities are verified by fresh ancillas.  The cat--data
couplings are enabled only when all verification outcomes vanish, whereas the
cat readouts and resets are unconditional.  Each cat wire couples to one data
coordinate.  A fault after successful verification can therefore introduce a
Pauli on at most one data coordinate, while a fault that changes a
verification or readout record only relabels the measured projector.  The
recordwise statement extends to arbitrary operators at the faulty location by
Pauli expansion.

Four complete rounds of the independent checks are performed, and the
lexicographically first equal pair of valid syndrome words is used.  With at
most one fault among the four rounds, this pair exists and contains a clean round.
For every nonzero branch, the data at the end of the measurement therefore
has the form
\begin{equation}
  P\Pi_s A_\omega,
  \qquad \operatorname{wt}(P)\le1,
  \label{eq:four-round-sector}
\end{equation}
where $\Pi_s$ projects onto the complete syndrome sector and $A_\omega$ is a
linear map on the arbitrary input and its reference system.  Applying the
physical Pauli $D(s)$ maps that entire sector into the codespace and does not
spread $P$.

\begin{lemma}[Finite measured Clifford compiler]
\label{lem:fixed-clifford-compiler}
There is a finite collection of measured Clifford procedures on $\mathcal Q_0$ with
the following properties.

If the input belongs to the error span generated by recursively sparse
errors with at most four exceptional children at each concatenation level,
then a compiled Clifford rectangle with at most one bad immediate child
implements the intended logical operation and maps the input error span into
itself.  A destructive measurement implements the complete ideal logical
instrument.  On an arbitrary input, the leading correction maps every
nonzero branch into that error span around some encoded state; no claim about
the original logical state is made outside the correctable input domain.
Both conclusions hold with arbitrary reference systems and arbitrary
supported CPTP fault values.
\end{lemma}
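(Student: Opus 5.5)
The approach is the standard Aharonov--Ben-Or/Aliferis--Gottesman--Preskill rectangle argument for a fixed code, adapted to the measured-syndrome structure in \eqref{eq:four-round-sector}. First I fix the compiled procedures. Every logical Clifford generator ($H$, $S$, $\CNOT$, Paulis) and every preparation or destructive measurement on $\mathcal Q_0$ is given a gadget built from the same elements: transversal or fault-tolerantly compiled physical Cliffords, the verified cat-state check measurements, and the four-round syndrome extraction followed by the total section $D(s)$. A rectangle is such a gadget followed by one trailing error-correction block. Since $\mathcal Q_0$ has constant size, there are finitely many procedures, and each has constant depth and constant size.

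Next I prove the properties at a single level by recordwise Pauli expansion. Using \Cref{lem:recordwise-Pauli}, I condition on the complete measurement record and expand every faulty location into left and right Pauli terms. The cat-state construction guarantees that one fault in an extraction round leaves at most one data error, while faults in verification or readout only relabel the projector. A single fault among the four rounds therefore still produces an equal pair of syndrome words that contains a clean round. The branch then has the form \eqref{eq:four-round-sector}, and $D(s)$ returns the sector to the codespace up to a weight-one Pauli.

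With $d_0\ge31$, the code corrects up to $15$ errors. So an incoming error of bounded weight, plus at most one error from a bad child, plus the constant spread of a depth-bounded Clifford through its light cone, stays within the correctable radius. The argument then verifies three things term by term: the logical action is a scalar multiple of the intended isometry or of the ideal projector, the outgoing error lies again in the span, and for destructive measurement, $D(s)$ applied to the classical readout yields the ideal outcome. \Cref{lem:instrument-normalization} then converts the branch identities into equality of the complete instruments, including on reference systems.

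The recursive statement follows by induction on concatenation level. I define the error span at level $\ell$ as operators whose level-$(\ell-1)$ children are in the level-$(\ell-1)$ span except for at most four exceptional children. The constant four has to be checked against the code distance and the spread of the compiled circuits: one bad child, together with errors carried in from the preceding rectangle and its EC, must affect at most four blocks at the output. Finally, for an arbitrary input, every nonzero branch after the leading correction is $D(s)\Pi_s A_\omega$ with at most one extra weight-one error, and this lies in the span around the encoded state $\Pi_0 A_\omega$. No identification with the original logical state is claimed.

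I expect the hardest step to be the bookkeeping showing that the exceptional-children count is closed under composition. This means bounding how one bad child, combined with spread through the constant-depth physical Clifford and the EC extraction, can produce at most four exceptional output children, uniformly across the finite procedure list. I would try the exhaustive finite light-cone count first, since every circuit is fixed.
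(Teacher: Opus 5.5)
Your outline has the same architecture as the paper's proof: the four-round cat-state form \eqref{eq:four-round-sector}, rectangles declared bad when two immediate children are bad, induction on the level, and \Cref{lem:instrument-normalization}. However, the two steps that carry the recursive statement are left open, and the mechanism you propose for the first one is not what closes it. You want one bad child ``together with errors carried in from the preceding rectangle'' to affect at most four output blocks, and you plan to check this by a light-cone count. If incoming exceptional children survived to the output, no such count could keep the bound at four. An outer transversal $\CNOT$ between two blocks carrying four exceptional children each leaves up to eight on the target. The invariant closes because the incoming deviation is \emph{erased}, not bounded. In the paper every rectangle begins with a complete correction on each input block before any multiblock coupling, and four exceptional children lie far inside the radius of a distance-$31$ outer code. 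After exposing the logical--syndrome factorization of the good children, the unique bad child is treated as an arbitrary operator at its input location followed by the ideal child. The level-$h$ rectangle is then the base-level circuit with a single fault on the outer code. Its output deviation is at most the two blocks of that child plus the one residual coordinate of \eqref{eq:arbitrary-input-projection}, i.e.\ at most three exceptional children, independently of the input. That is why the radius does not grow. In your trailing-correction convention the trailing block plays this role, but then carried-in errors, after spreading through the gadget, must be charged against its correction radius, not against the output count.

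The second missing step is correctability of the recursively sparse span itself. The fact that ``$d_0\ge31$ corrects $15$ errors'' is a single-level statement about Pauli errors. The span, by contrast, contains superpositions and arbitrary operators on exceptional children at every level. Without Knill--Laflamme for the whole span you can neither treat good children as ideal at the next level nor conclude that the logical action is the intended one. The paper proves it by compression. A product of two span elements has at most eight exceptional children. Compressing through $\mathcal E_h$ from the leaves sends ordinary children to scalars (by induction) and exceptional ones to logical Paulis, so the outer Pauli has weight at most $8<d_0$. This is where the constant four is actually checked against the distance. The same induction is needed for the arbitrary-input claim above level zero. There the child corrections first place every child in its span around some child code state, and only then does the outer four-round form apply. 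Finally, a small slip: the corrected branch $D(s)P\Pi_sA_\omega$ lies in the span around the code state $D(s)\Pi_sA_\omega=\Pi_0D(s)A_\omega$, not around $\Pi_0A_\omega$.
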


\begin{proof}
The calculation for the verified cat states preceding the lemma gives a common form for
every branch of a complete correction procedure.  For each possible faulty
base location there is a single physical coordinate $q$, independent of the
fault value and measurement record, such that
\begin{equation}
  K_\omega=\sum_{P\in\{I,X_q,Y_q,Z_q\}}P\mathcal E_0L_{\omega,P},
  \label{eq:arbitrary-input-projection}
\end{equation}
where $\mathcal E_0$ is the $\mathcal Q_0$ encoder.  On a promised input containing a Pauli of
weight at most four, distance $31$ gives the stronger correction identity
when there is one fault.  The same recovery rule is used for all Pauli terms, records, and
reference states.

Let $\mathcal E_h$ be the $h$-fold concatenated encoder.  A recursively sparse error
at level $h$ has at most four exceptional level-$(h-1)$ blocks and is sparse
on every other child.  Products of two such errors have at most eight
exceptional children.  Compressing through $\mathcal E_h$ from the leaves maps each
ordinary child to a scalar and each exceptional child to a logical Pauli.
The outer Pauli consequently has weight at most eight, still below $d_0$.
The Knill--Laflamme condition thus holds for the whole recursively sparse
span, including superpositions and reference entanglement.

Every compiled rectangle begins with a complete child correction on each
input block before any multiblock coupling.  Its quantum locations, the
classical computation of the syndrome section, and all waits form a disjoint
partition into child rectangles.  A location at level zero is bad when it
is faulty, and a higher rectangle is bad when at least two of its immediate
children are bad.  If there is a unique bad child, its logical action is
propagated backward to that child's input location after exposing the child
logical--syndrome factorization.  The good children implement the desired
outer operation, and the bad child contributes at most two output blocks for
which no recursive sparsity guarantee is available.  Together with the single coordinate in
\eqref{eq:arbitrary-input-projection}, the deviation at the next level has at most
three exceptional children, which closes the induction without an
increasing radius.

The same argument applies to preparation and destructive readout.  On a
promised input it preserves the intended logical factor, whereas on an
arbitrary input the leading correction first produces a code state with a
recursively sparse deviation.  All equalities are equalities of complete
branch maps before normalization.  Summing over the actual records therefore
gives a trace-preserving instrument rather than a collection of normalized
postselected states.
\end{proof}

Let $A_0$ bound the number of immediate child rectangles in any member of
the finite compiler.  If $w_h^\mathrm{rec}(q)$ denotes the weight enumerator on the
original fault supports of bad rectangles at level $h$, when a bad location
at level zero has
weight at most $q$, then
\begin{equation}
  w_0^\mathrm{rec}(q)=q,
  \qquad
  w_h^\mathrm{rec}(q)
  \le \binom{A_0}{2}\bigl(w_{h-1}^\mathrm{rec}(q)\bigr)^2
  \le(C_\mathrm{rec}q)^{2^h}
  \label{eq:fixed-compiler-recursion}
\end{equation}
for a constant $C_\mathrm{rec}$.  The product in this recurrence comes from disjoint
original spacetime domains of distinct children, not from independence of
their output errors.

Coherent unencoding is obtained by normalizing every child, applying the
inverse encoded Clifford circuit, retaining its designated output, and
discarding the other outputs at prescribed locations.  For this peeling
step all immediate children are required to be good.  If $A_U$ bounds their
number, then the weight of the exceptional family through $h$ levels is at most
\begin{equation}
  w_h^\mathrm{unenc}(p)
  \le A_Up+A_U\sum_{j=1}^{h-1}(C_\mathrm{rec}p)^{2^j}
  \le C_Up
  \label{eq:coherent-unencoding-bound}
\end{equation}
whenever $C_\mathrm{rec}p\le1/2$.  Distinct physical output unencoders have disjoint
original locations, so the witness weight for any prescribed set of $t$
bad outputs is at most $(C_Up)^t$.

\subsection{Canonical encoding and raw rows}

Consider a CSS code with check matrices $H_X,H_Z$.  Choose independent row
bases and disjoint pivot sets.  After row operations and a physical
coordinate order $(L,Z,X)$, write
\begin{equation}
  H_X^\mathrm{can}=[C_X\ I_{r_X}\ D_X],
  \qquad
  H_Z^\mathrm{can}=[C_Z\ D_Z\ I_{r_Z}],
  \label{eq:canonical-checks}
\end{equation}
and choose canonical logical representatives
\begin{equation}
  L_Z=[I_k\ C_X^T\ 0],
  \qquad
  L_X=[I_k\ 0\ C_Z^T].
  \label{eq:canonical-logicals}
\end{equation}

\begin{lemma}[Canonical CSS encoding]
\label{lem:canonical-encoding}
The pivot sets in \eqref{eq:canonical-checks} can be chosen disjoint, and the
rows in \eqref{eq:canonical-logicals} commute with all checks and have the
standard symplectic pairing.  Starting from
$\ket\eta_L\ket0_Z\ket+_X$, measuring the checks of bounded weight, computing
their independent syndromes, and applying Pauli corrections on the pivot
zones gives an encoding instrument.  Every branch is a scalar multiple of
the same canonical encoding isometry $\mathcal E\ket\eta$, including when
$\ket\eta$ is entangled with a reference.  The quantum depth is constant,
the classical depth is $O(\log n)$, and the width is $O(n)$.
\end{lemma}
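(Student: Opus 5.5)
We prove the statement in three steps: the linear algebra of the canonical form, the exactness of the encoding instrument branch by branch, and the resource count.

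\emph{Step 1: canonical form.}
Let $H_X,H_Z$ have ranks $r_X,r_Z$. First reduce $H_X$ to an independent row basis in reduced row echelon form. Pick $r_X$ pivot columns; after permuting columns, these rows read $[\,\ast\ I_{r_X}\ \ast\,]$.

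Next reduce $H_Z$ on the complement of the $X$ pivots. Its restriction there still has rank $r_Z$. If a nonzero combination $u^TH_Z$ vanished off the $X$ pivots, then orthogonality $H_XH_Z^T=0$ applied to the identity block would give $u^TH_Z=0$ on the pivots too, contradicting independence. So we can choose $r_Z$ pivots among the remaining columns, disjoint from the $X$ pivots, with $k=n-r_X-r_Z$ columns left over for $L$.

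A further row operation clears $H_X$ on the $Z$ pivots, and one clears $H_Z$ on the $X$ pivots. This is where the forms $[C_X\ I\ D_X]$ and $[C_Z\ D_Z\ I]$ of \eqref{eq:canonical-checks} come from.

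\emph{Step 2: commutation and pairing.}
For commutation we use $H_X^{\mathrm{can}}L_X^T=C_X+D_XC_Z^T$. This vanishes because
\begin{equation*}
H_X^{\mathrm{can}}(H_Z^{\mathrm{can}})^T=C_XC_Z^T+D_Z^T+D_X=0 .
\end{equation*}
The cleared blocks make $D_Z$ and $D_X$ vanish, which also gives $C_XC_Z^T=0$. The check $H_Z^{\mathrm{can}}L_Z^T=0$ follows symmetrically. The pairing is $L_ZL_X^T=I_k+C_X^T\cdot0+0\cdot C_Z^T=I_k$.

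\emph{Step 3: exactness of the encoding instrument.}
Start from $\ket\eta_L\ket0_Z\ket+_X$.
\begin{itemize}
\item $Z$ on the $Z$ zone and $X$ on the $X$ zone stabilize the input.
\item Measuring the $X$ checks gives outcomes $s_X$. The state becomes the projection $\Pi_{s_X}$, a scalar multiple of a state in that sector, because $X$ checks act on the $\ket+$ zone as identity up to a shift.
\item Applying $Z$ corrections on the $X$ pivot zone maps sector $s_X$ to $0$. This works because the identity block makes the pivot-supported $Z$ Pauli with syndrome $s_X$ unique, and it commutes with $Z$ checks and with $L_Z$ (zero on the $X$ zone).
\item Measuring the $Z$ checks and applying $X$ corrections on the $Z$ pivot zone proceeds the same way.
\end{itemize}

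The logical operators stay correct throughout. $L_Z$ and $L_X$ commute with every measured check and with every pivot correction, since $L_Z$ vanishes on the $X$ zone and $L_X$ vanishes on the $Z$ zone. Their eigenvalues on the initial state are those of $\ket\eta$ restricted to the $L$ coordinates, using $\ket0$ on the $Z$ zone and $\ket+$ on the $X$ zone.

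Each branch is therefore a scalar multiple of the isometry $\mathcal E$ defined by these logicals. This is an equality of linear maps, so it holds with a reference system attached. \Cref{lem:instrument-normalization} then fixes the scalars to unit total weight.

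\emph{Step 4: resources.}
The checks have bounded weight and incidence, so measuring them takes constant quantum depth with $O(n)$ ancillas. Computing the independent syndromes from the measured, possibly dependent, listed checks and then the corrections are fixed $\mathbb F_2$-linear maps. These are applied as bounded fan-in XOR trees of $O(\log n)$ depth. The correction itself is a single Pauli layer.

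The main obstacle is Step 1: guaranteeing disjoint pivot sets together with the simultaneous clearing of $D_X$ and $D_Z$, so that both logical families avoid the opposite pivot zones.
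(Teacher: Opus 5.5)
Your overall route matches the paper's. You obtain disjoint pivots from $H_XH_Z^T=0$; you fix the $H_X$ pivots first where the paper fixes the $H_Z$ pivots first, which is the same argument with the roles exchanged. You then measure and correct on the pivot zones, use the intertwining argument with normalization, and count resources with linear circuits.

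However, Steps 1 and 2 rest on a false claim. No row operation can clear $D_X$ or $D_Z$. If $A[C_X\ I_{r_X}\ D_X]=[C_X'\ I_{r_X}\ D_X']$, the middle block forces $A=I$. So $D_X$ is determined by the $X$ row space and the pivot choice, and adding $Z$ rows to $X$ rows is not a CSS row operation. For the $[[4,2,2]]$ code with $H_X=H_Z=(1\,1\,1\,1)$, every admissible choice of disjoint pivots gives $D_X=D_Z=1$.

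Step 2 also verifies the wrong condition. Rows of $H_X^{\mathrm{can}}$ and of $L_X$ are both $X$-type, so they commute whatever the binary product $H_X^{\mathrm{can}}L_X^T$ is. That product does not vanish in general: it equals $C_X+D_XC_Z$. Take the single check $X_1X_2$ on three qubits with pivot $2$. There are no $Z$ checks, $D_X$ is empty, and $H_X^{\mathrm{can}}L_X^T=C_X=(1\ 0)\neq0$. Knowing $C_XC_Z^T=0$ would not make it vanish either.

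The conditions that actually matter pair opposite types, and they hold identically for arbitrary $D_X,D_Z$:
\[
  H_X^{\mathrm{can}}L_Z^T=C_X+I_{r_X}C_X+D_X\cdot0=0,
  \qquad
  H_Z^{\mathrm{can}}L_X^T=C_Z+D_Z\cdot0+I_{r_Z}C_Z=0.
\]
The pairing $L_ZL_X^T=I_k$ is as you computed. The blocks $D_X,D_Z$ never enter, because they multiply the zero blocks of $L_Z$ and $L_X$. Equivalently, the logical rows avoid the opposite pivot zones by their definition in \eqref{eq:canonical-logicals}, not because of any clearing. So the ``main obstacle'' you name does not exist, and the clearing step should simply be deleted.

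Correspondingly, in Step 3 the $Z$ correction on the $H_X$ pivots must commute with $L_X$, which vanishes there; commutation with $L_Z$ is automatic. With these repairs your proof coincides with the paper's.
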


\begin{proof}
Choose pivot columns for $H_Z$.  Projection of the $H_X$ row space onto their
complement is injective: an $H_X$ row supported only on the $H_Z$ pivots is
orthogonal to every $H_Z$ row, and invertibility of the pivot submatrix
forces that row to vanish.  The $H_X$ pivots can therefore be chosen in the
complement.  Row reduction gives \eqref{eq:canonical-checks}, and
$H_XH_Z^T=0$ verifies \eqref{eq:canonical-logicals} directly.

Initialize the $L$ coordinates in $\ket\eta$, the $Z$ pivots in $\ket0$,
and the $X$ pivots in $\ket+$.  Binary maps determined by the encoder convert the
check outcomes into the independent syndromes.  Corrections on the pivot
zones set all check eigenvalues to $+1$ and commute with the canonical
logical action.  Each branch therefore intertwines the complete logical
Pauli algebra in the same way and differs from one encoding isometry only by
a scalar independent of $\ket\eta$.  Bounded check incidence gives constant
quantum depth, and the syndrome maps have $O(\log n)$ depth with bounded
fan-in and bounded fan-out.
\end{proof}

The ordinary canonical encoder is implemented with the concatenated compiler
of \Cref{lem:fixed-clifford-compiler} at level $h(n)=O(\log\log n)$.
After coherent unencoding, each raw primary code row $u$ is assigned a family
$\mathcal B_u$ of bad fault supports such that
\begin{equation}
  \mathcal W(\mathcal B_u;p)\le p_\mathrm{raw}
  \label{eq:raw-y-row-bound}
\end{equation}
for a chosen constant $p_\mathrm{raw}>0$.  If the fault support contains no
member of $\mathcal B_u$, the row lies in a
neighborhood of the full canonical logical state with small errors
$\ket{+i}^{\otimes k}$.  The family refers to the original locations of
that slot.  Thus for every prescribed set $J$ of raw slots,
\begin{equation}
  \mathcal W\!\left(\mathop{\circledast}_{u\in J}\mathcal B_u;p\right)
  \le p_\mathrm{raw}^{|J|}.
  \label{eq:raw-y-product-bound}
\end{equation}
Again, this is a product of witness sums on disjoint spacetime sets and does
not assert independence of the output states.

Each raw row is tested by measuring the checks of bounded weight of the primary
qLTC.  An observed syndrome of low weight certifies a small error around some code
state, and completeness accepts rows near the desired canonical $Y$ state.
Every candidate is accompanied by an encoded zero spare satisfying the
corrected boundary condition.  Once the acceptance bit has been computed and
copied, a complete conditional physical $\SWAP$ selects the candidate on
acceptance and the spare on rejection.  No correction is inserted inside the
decomposition of this swap into three CNOTs.  Terminal error correction returns the
selected row to the corrected boundary condition.  A row selected from the
zero spare is subsequently treated as a possibly bad logical input to the
Hamming circuit, not as a $Y$ output.

\subsection{Protection during classical processing}

The acceptance calculation takes polylogarithmic time.  Memory error
correction protects the candidate and its zero spare throughout this
calculation, relative to one stored syndrome baseline as in
\Cref{lem:decoder-interface}.

\begin{lemma}[Storage relative to a syndrome sector]
\label{lem:sector-storage}
Measure one syndrome baseline and retain it throughout a sequence of memory
cycles.  Let $v$ be its offset from the ideal syndrome, $t_a$ the reduced
error weight after cycle $a$, and $\mathbf f_a$ the fresh faults in that
cycle.  The cycles can be chosen with constants
$0<\theta<1$ and $C_\mathrm{sec},C_\mathrm{mem}>0$ such that
\begin{equation}
  t_{a+1}\le\theta t_a+C_\mathrm{sec}|v|+C_\mathrm{mem}|\supp\mathbf f_a|
  \label{eq:sector-storage}
\end{equation}
whenever the input and syndrome errors satisfy the weight condition in
\Cref{lem:decoder-interface} at each cycle.  The conclusion is uniform in the
underlying syndrome sector.
\end{lemma}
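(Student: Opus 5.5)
The plan is to reduce storage relative to a retained baseline to the single-cycle decoder contraction \eqref{eq:main-decoder-interface}, treating the baseline offset $v$ as a fixed, persistent syndrome error. In each memory cycle $a$ I will measure the checks once, obtaining a record $m_a = He_a + s_a$, where $e_a$ is the data error at the start of the cycle and $s_a$ is the syndrome noise caused by fresh faults in that cycle. Before decoding, I subtract the stored baseline and feed $\mathcal D$ the quantity $m_a + \sigma_0$. Here $\sigma_0 = He_{\rm base} + v$ is the retained baseline record, and the ideal sector syndrome is fixed by the convention that the stored state lives in the sector labelled by the baseline.

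With this convention the effective syndrome error presented to the decoder is $v + s_a$. Its weight is at most $|v| + h|\supp\mathbf f_a|$ for the light cone constant $h$, since each faulty location in the bounded-depth extraction circuit flips $O(1)$ check outcomes. The data error itself also picks up $O(h|\supp\mathbf f_a|)$ fresh reduced weight during the window. Because the baseline is uniform over sectors, I will work with $e_a$ relative to a fixed coset representative of the sector. Translating a sector by a fixed Pauli does not change reduced weights modulo stabilizers, and this is what gives uniformity in the underlying sector.

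Next I apply \eqref{eq:main-decoder-interface} with error $e_a' = e_a + (\text{fresh data part})$ and syndrome error $m = v + s_a$. The hypothesis $|e'_a|_R + \alpha|m| \le \beta n$ is exactly the stated weight condition at each cycle. This yields
\[
t_{a+1} \le \eta\bigl(t_a + h|\supp\mathbf f_a|\bigr) + \gamma_{\rm dec}\bigl(|v| + h|\supp\mathbf f_a|\bigr) + h|\supp\mathbf f_a|.
\]
The last term accounts for faults after the decoder's correction is applied. Choosing $\eta<1$, for instance $\eta = 1/2$, gives $\theta = \eta$, $C_{\rm sec} = \gamma_{\rm dec}$ and $C_{\rm mem} = (\eta + \gamma_{\rm dec} + 1)h$, which is \eqref{eq:sector-storage}. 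Arbitrary CPTP faults and reference systems are then handled exactly as in \Cref{lem:ec-gadget}: expand recordwise via \Cref{lem:recordwise-Pauli}, and use \Cref{lem:syndrome-measurement} to fix a Pauli representative per record.

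The main obstacle is the bookkeeping of which syndrome is ``ideal.'' Because the stored baseline contains its own measurement error, the decoder is effectively correcting toward a shifted sector. I must check that the persistent error $v$ enters additively each cycle without accumulating, so that it contributes $C_{\rm sec}|v|$ rather than a geometric sum. This is fine because, as shown above, $v$ enters each cycle's decoder input directly rather than through $t_a$. Iterating then gives only a steady-state bound of $C_{\rm sec}|v|/(1-\theta)$, which is acceptable.
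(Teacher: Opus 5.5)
Your proposal is correct and follows essentially the paper's argument. Subtracting the single stored baseline makes the offset $v$ enter every cycle's decoder input as the same syndrome error. The decoder contraction, i.e.\ the error-correction recurrence \eqref{eq:physical-memory-recurrence}, then gives the one-step bound, and sector uniformity together with arbitrary CPTP faults follows from stabilizers acting as scalars on a fixed sector plus the recordwise Pauli expansion. The only cosmetic difference is that running the decoder on both CSS components doubles your constants to the paper's $2\eta$ and $2\gamma_{\mathrm{dec}}$, which is harmless because $\eta$ may be chosen arbitrarily small.
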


\begin{proof}
Subtract the same measured baseline from every later syndrome record before
applying the contracting decoder.  The ideal sector cancels, and the decoder
input is the syndrome of the current error with the same baseline offset $v$
at every cycle.
The decoder contraction and the syndrome extraction circuit, whose spread is bounded, give
\eqref{eq:sector-storage}.  Iteration yields
\begin{equation}
  t_a\le\theta^at_0+
  \frac{C_\mathrm{sec}|v|+C_\mathrm{mem}\max_{j<a}|\supp\mathbf f_j|}{1-\theta}.
  \label{eq:sector-storage-sum}
\end{equation}
Thus the radius is independent of the number of classical clock steps.  The
baseline is physically measured once and stored as a classical record; no
unknown syndrome is filled or silently set to zero.
\end{proof}

The tester, its acceptance computation, the encoded zero spare, the complete
conditional swap, memory until the tested row is used, and terminal error
correction are assigned to one bad support family.  Below a constant
physical threshold, the weight of the union of these families over all $M(n)$ slots
and their polylogarithmic lifetimes is at most
\begin{equation}
  CM(n)g(n)e^{-cn}.
  \label{eq:tested-slot-catastrophe}
\end{equation}
If the fault support contains no member of this family, every retained slot
satisfies the boundary condition
\eqref{eq:typed-boundary-map}, irrespective of its acceptance bit.

\subsection{Hamming purification of \texorpdfstring{$Y$}{Y} states}

For $h\ge4$, let $H_h$ be the $h\times(2^h-1)$ matrix whose columns are the
nonzero vectors of $\mathbb F_2^h$.  Put
\begin{equation}
  m=2^h-1,
  \qquad
  k_h=m-2h.
  \label{eq:hamming-parameters}
\end{equation}
The row space of $H_h$ is self-orthogonal.  On
$\ker H_h/\operatorname{row}H_h$ the dot product is nondegenerate and
nonalternating, and therefore has an orthonormal basis.  Let $G_h$ be a lift
of that basis, so
\begin{equation}
  H_hG_h^T=0,
  \qquad
  G_hG_h^T=I_{k_h}.
  \label{eq:hamming-logical-basis}
\end{equation}
Completing $G_h$ by check rows and dual partners gives an encoder consisting only of CNOTs
$U_h$ of size $O(m^2)$.

\begin{lemma}[One Hamming $Y$ step]
\label{lem:hamming-y-step}
The circuit applies $U_h^{-1}$ to $m$ input wires and measures the check wires
in the prescribed $X$ and $Z$ bases.  The sum of the two syndrome words
determines a Hamming $Z$ correction, followed by the known output sign
corrections $(-1)^{(|g|-1)/2}$ for the rows $g$ of $G_h$.  If at most one
input wire is in an arbitrary state and every other input is $\ket{+i}$, then
all $k_h$ retained outputs are exactly $\ket{+i}^{\otimes k_h}$, jointly and
with an arbitrary reference system.
\end{lemma}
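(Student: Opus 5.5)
The plan is a stabilizer calculation for $\ket{+i}^{\otimes m}$ relative to the CSS code $\operatorname{CSS}(H_h,H_h)$, whose logical basis is given by $G_h$. The arbitrary wire is handled by Pauli expansion as in \Cref{lem:recordwise-Pauli}. The fault-free state is stabilized by $Y^v$ for every $v\in\mathbb F_2^m$, and $Y^v=i^{|v|}X^vZ^v$ up to the ordering convention on each qubit.

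First, for a row $r$ of $H_h$ we have $|r|=2^{h-1}\equiv0\pmod 4$ when $h\ge4$. Hence $X^rZ^r$ stabilizes the input. The $X$-type and $Z$-type parts of this operator are the two check operators attached to $r$, so in the ideal state the $X$-check outcome and the $Z$-check outcome for $r$ agree. Their sum over $\mathbb F_2$ is therefore identically zero.

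Second, for a row $g$ of $G_h$, the relation $G_hG_h^T=I$ gives $|g|$ odd. So $Y^g=i^{|g|}X^gZ^g=i\,(-1)^{(|g|-1)/2}X^gZ^g$. Since $X^g$ and $Z^g$ are the canonical logical representatives, the input is an eigenstate of the logical $i\bar X_g\bar Z_g=\bar Y_g$ with eigenvalue $(-1)^{(|g|-1)/2}$. For distinct rows $g\neq g'$, orthonormality makes the pairs commute with one another and with the checks, so these logical conditions are independent. Pushing everything through the CNOT circuit $U_h^{-1}$, which maps the encoded Paulis to single-wire Paulis on the logical, check, and partner wires, the output register splits as a product. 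The logical wires hold $\bigotimes_g\ket{(-1)^{(|g|-1)/2}\,i}$, and the remaining wires hold a state whose stabilizers include $X^{\mathrm{check}}Z^{\mathrm{check}}$. After the known sign corrections, measuring the check wires leaves the retained outputs exactly $\ket{+i}^{\otimes k_h}$ on every branch.

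Next comes the faulty wire $j$. An arbitrary state on wire $j$, possibly entangled with the reference, can be written as a map applied to $\ket{+i}_j$ together with the reference, and that map can be expanded in $\{I,X_j,Y_j,Z_j\}$. On $\ket{+i}$ we have $Y_j\ket{+i}=\ket{+i}$ and $X_j\ket{+i}\propto Z_j\ket{+i}$. Every term therefore reduces to a scalar times $I$ or $Z_j$ acting on the ideal input, with reference operators attached. The error $Z_j$ flips exactly the $X$-check outcomes indexed by the rows $r$ with $r_j=1$ and leaves the $Z$-checks unchanged. The sum of the two syndrome words is thus $H_he_j$, the $j$th column of $H_h$. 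This column is nonzero and unique to $j$, so the Hamming $Z$ correction removes the error exactly.

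The final step is to verify that the correction commutes correctly with the sign fix-ups and to apply \Cref{lem:instrument-normalization}. Every branch is a scalar multiple of the same map that outputs $\ket{+i}^{\otimes k_h}$ tensored with the reference, so the whole instrument is exact. The main obstacle is the bookkeeping in the second step. One has to check that $U_h^{-1}$ sends $X^gZ^g$, rather than some partner-dressed variant, to $X\!Z$ on logical wire $g$, so that the sign $(-1)^{(|g|-1)/2}$ is correct. One also has to check that the dual-partner wires do not correlate with the logical wires. I would first settle this by choosing the completion of $G_h$ so that the partners are orthogonal to $G_h$, and then track the Heisenberg images explicitly.
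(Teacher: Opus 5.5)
Your proposal is correct and follows the paper's proof. Both arguments reduce the bad wire by Pauli expansion to a weight-at-most-one $Z$ flip in the $Y$ eigenbasis, let the Hamming syndrome (the sum of the two check words) identify and remove that flip, and then use that $U_h^{-1}$ sends each $Y^g$ to a single-wire $Y$ with the fixed sign $(-1)^{(|g|-1)/2}$. The obstacle you flag, choosing the dual partners orthogonal to the rows of $G_h$ so that $U_h$ maps $Z_a$ exactly to $Z^{g_a}$ and the sign correction does not depend on the record, is left implicit in the paper's sentence that $U_h^{-1}$ maps the $Y^g$ to single-wire logical $Y$ operators. Your choice is always available because the rows of $H_h$ and $G_h$ are jointly independent, and it is the right way to make that sentence precise.
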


\begin{proof}
Expand the arbitrary input operator in Paulis.  Relative to the $Y$
eigenbasis, one bad input produces a word of eigenvalue flips of weight at most
one.  The Hamming syndrome identifies that coordinate and the $Z$ correction
removes the flip.  The logical operators $Y^g$ for the rows $g$ of $G_h$
commute with the check measurement, and \eqref{eq:hamming-logical-basis}
shows that $U_h^{-1}$ maps them to the logical $Y$ operators on individual wires.
The displayed sign corrections select their common $+1$ eigenspace.  That
eigenspace is the pure product $\ket{+i}^{\otimes k_h}$ and therefore factors from
the reference on every nonzero record.
\end{proof}

Each wire of this circuit is encoded as a complete primary code row, and the
CNOT circuit acts transversally on all $k$ canonical logical coordinates.  A bad
row may contain arbitrary correlated errors on all $k$ coordinates, but it
still occupies only one Hamming input position.  Applying the Pauli expansion
from the proof of \Cref{lem:hamming-y-step} coordinatewise therefore gives one joint output identity;
there is no union bound over logical coordinates.

The raw slots are indexed by
$[u_{\ell}]\times\cdots\times[u_1]$, using the lengths
$u_j=2^{h_j}-1$ from \eqref{eq:global-batch-functions}.
A level-$j$ macroblock consists
of $u_j$ disjoint children.  For each output label of a child, one row from every
child enters one Hamming circuit.  Hence each bad child contributes at most
one bad input row to each circuit, and \Cref{lem:hamming-y-step} corrects all output
labels jointly.

If $w_j^Y(z)$ is the witness weight that a level-$j$ macroblock is bad when a
raw slot has weight at most $z$, then
\begin{equation}
  w_0^Y(z)=z,
  \qquad
  w_j^Y(z)\le \binom{u_j}{2}\bigl(w_{j-1}^Y(z)\bigr)^2
  \le(2^{12}z)^{2^j}.
  \label{eq:y-failure-recursion}
\end{equation}
The last estimate follows from
$\log_2u_j\le4+2\log_2j$ and the convergence of the associated weighted
sum.  Taking $p_\mathrm{raw}\le2^{-13}$ and
$\ell(n)=\lceil\log_2(\lambda_Yn)\rceil$ makes the failure probability for a logical row
exponentially small in $n$.

The same sequence has constant yield.  The elementary bounds
$7j^2\le u_j\le16j^2$ give
\begin{equation}
  \log M(n)=\Theta(\ell(n)\log\ell(n)).
  \label{eq:y-product-growth}
\end{equation}
Moreover $x_j=2h_j/u_j$ is at most $1/2$ for $j\ge2$ and
$\sum_jx_j<\infty$.  Consequently
\begin{equation}
  \frac{K(n)}{M(n)}
  =\prod_{j=1}^{\ell(n)}(1-x_j)
  \ge\frac7{15}\exp\!\left(-2\sum_{j=2}^{\infty}x_j\right)
  =:\eta_Y>0,
  \label{eq:y-yield-proof}
\end{equation}
which proves \eqref{eq:y-constant-yield} without a numerical estimate of the
infinite product.

\subsection{The canonical \texorpdfstring{$Y$}{Y} factory}

Let $c(n)$ and $t(n)$ be integer polylogarithmic upper bounds for the width
and time of one raw row preparation through testing and zero replacement.
Increase the minimum usable length so that $M(n)\ge2c(n)$ and set
\begin{equation}
  P_\mathrm{raw}=\left\lfloor\frac{M(n)}{c(n)}\right\rfloor,
  \qquad
  V_\mathrm{raw}=\left\lceil\frac{M(n)}{P_\mathrm{raw}}\right\rceil
  \le2c(n)+1.
  \label{eq:raw-wave-count}
\end{equation}
The $M(n)$ raw slots are assigned before execution and produced in
$V_\mathrm{raw}$ complete parallel rounds.  The locations assigned to unused
preparations in the last round are idle.  All zero spares are prepared before
production, and memory error correction is applied to completed rows until
they are used as inputs to the Hamming circuits.

At level $j$, every disjoint Hamming group runs the same inverse encoder in
parallel.  A sequential implementation within one group has $O(u_j^2)$
protected operations, and
\begin{equation}
  \sum_{j=1}^{\ell(n)}u_j^2=O(\ell(n)^5).
  \label{eq:hamming-total-work}
\end{equation}
Charging $O(\log n)$ time for each operation, including readout, binary
feedback, boundary correction, and memory on waiting rows, gives
\begin{equation}
  T_Y(n)\le C\bigl[(2c(n)+1)t(n)+\log^6(n+2)+\log n\bigr]
  \le Cg(n).
  \label{eq:y-factory-latency}
\end{equation}
The peak quantum space is $O(nM(n))$: processors are reused between rounds,
while the raw slots, spares, and completed outputs occupy only a constant
number of full banks.

\begin{theorem}[Preparation of canonical $Y$ states]
\label{thm:primary-y-factory}
Below a constant physical noise threshold, one complete batch produces
$K(n)$ full primary code rows whose logical content is
$\ket{+i}^{\otimes k}$.  Its space, time, and bad support family
$\mathcal B_Y$ satisfy
\begin{equation}
  S_Y(n)\le CnM(n),
  \qquad
  T_Y(n)\le Cg(n),
  \qquad
  \mathcal W(\mathcal B_Y;p)\le Ce^{-cn}.
  \label{eq:y-factory-theorem}
\end{equation}
For every fault support containing no member of $\mathcal B_Y$, all $K(n)$
outputs are jointly exact at the
corrected boundary of \Cref{def:normalized-boundary}, including with
arbitrary reference systems and arbitrary supported CPTP faults.  For $q_Y$
requested rows, let $\mathcal B_Y(q_Y)$ be the union of the translated bad
support families of the complete batches run on disjoint locations.  Then
\begin{align}
  S_Y(q_Y,n)&\le C(q_Yn+nM(n)),
  &T_Y(q_Y,n)&\le Cg(n),
  \label{eq:y-request-service}\\
  \mathcal W(\mathcal B_Y(q_Y);p)
  &\le C\left\lceil\frac{q_Y}{K(n)}\right\rceil e^{-cn}.
  \label{eq:y-request-failure}
\end{align}
\end{theorem}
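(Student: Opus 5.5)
The plan is to assemble the factory from the pieces already in place: raw rows from the concatenated compiler, tested and selected against a zero spare, then $\ell(n)$ levels of Hamming purification arranged in the predetermined macroblock tree $[u_\ell]\times\cdots\times[u_1]$. Space and time are already essentially computed. The peak space $O(nM(n))$ comes from the constant number of full banks of $M(n)$ rows of length $O(n)$, with the $P_{\mathrm{raw}}$ polylogarithmic-width processors reused across the $V_{\mathrm{raw}}$ rounds. Since $P_{\mathrm{raw}}c(n)\le M(n)$, this gives $S_Y(n)\le CnM(n)$. The latency bound is exactly \eqref{eq:y-factory-latency}, obtained from \eqref{eq:raw-wave-count} and \eqref{eq:hamming-total-work}. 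So the real content is correctness outside $\mathcal B_Y$ and the enumerator bound.

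For correctness, I would define $\mathcal B_Y$ as the union of three families:
\begin{itemize}
\item the tested-slot catastrophe families of \eqref{eq:tested-slot-catastrophe};
\item the window families of \Cref{lem:ec-gadget} for every transversal CNOT layer, check readout and waiting row in the Hamming circuits;
\item the composed raw family $\mathcal F_\ell\bullet\{\mathcal B_u\}$, where $\mathcal F_j$ is the outer family "at least two bad children" applied recursively through the macroblock tree.
\end{itemize}
Outside the first two families, every slot satisfies \eqref{eq:typed-boundary-map} regardless of its acceptance bit, and every transversal encoded operation acts exactly on the canonical logical coordinates. The circuit then reduces to a logical circuit on $M(n)$ rows of $k$ coordinates, each row either good (exactly $\ket{+i}^{\otimes k}$) or arbitrary. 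Outside the third family I would induct on the level $j$: a level-$j$ macroblock with at most one bad child has every output row exactly $\ket{+i}^{\otimes k}$. This holds because each Hamming circuit receives one row from each child, so at most one input wire is arbitrary. \Cref{lem:hamming-y-step}, applied coordinatewise via the Pauli expansion of \Cref{lem:recordwise-Pauli}, then gives a single joint identity on all $k_{h_j}$ outputs and all $k$ coordinates, including reference systems. Rows substituted by zero spares are simply counted as bad inputs. Branch normalization is handled by \Cref{lem:instrument-normalization}, and the terminal error correction places the outputs at the corrected boundary.

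For the probability, \eqref{eq:enumerator-composition} together with \eqref{eq:raw-y-row-bound} and \eqref{eq:y-failure-recursion} gives $(2^{12}p_{\mathrm{raw}})^{2^{\ell}}\le 2^{-2^{\ell}}\le 2^{-\lambda_Y n}$ for the top macroblock. The catastrophe and window families contribute $\operatorname{poly}(M(n))\,g(n)e^{-cn}$. The delicate point is that this polynomial prefactor is only $e^{O(\log n\log\log n)}$, which $e^{-cn}$ absorbs after reducing $c$ and enlarging the minimum length. This absorption step, together with checking that $p_{\mathrm{raw}}\le 2^{-13}$ is achievable below a constant threshold via \eqref{eq:coherent-unencoding-bound} and the tester completeness, is where I expect the main obstacle.

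The request version follows by running $\lceil q_Y/K(n)\rceil$ batches on disjoint locations in parallel. Space is bounded by \eqref{eq:y-complete-batches}, the time is unchanged, and the enumerator bound \eqref{eq:y-request-failure} is the sum rule \eqref{eq:enumerator-sum}.
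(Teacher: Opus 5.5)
Your proposal is correct and follows essentially the same route as the paper. The paper splits $\mathcal B_Y$ the same way. One part is the logical-row family controlled by the recursion \eqref{eq:y-failure-recursion}. The other is a remaining catastrophe family covering raw preparation, testing and replacement, the Hamming circuits, storage, and terminal correction, whose quasipolynomial prefactor is absorbed using $\log M(n)=o(n)$. Joint exactness then comes from \Cref{lem:hamming-y-step} with recordwise composition, and requests are served by $\lceil q_Y/K(n)\rceil$ disjoint parallel batches using \eqref{eq:y-yield-proof} and the sum rule.

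Your explicit level-by-level induction and your separate listing of the Hamming-window families spell out what the paper leaves implicit. The condition $p_\mathrm{raw}\le 2^{-13}$ that you flag as a possible obstacle is not re-proved in the paper's proof. It is taken as given from the earlier choice of $p_\mathrm{raw}$ in \eqref{eq:raw-y-row-bound}.
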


\begin{proof}
The part of the bad support family concerning logical rows is
\eqref{eq:y-failure-recursion}.  The remaining part includes the physical
locations for raw preparation, unencoding, testing and replacement, the
Hamming circuits, intervening storage and classical waits, and terminal
correction.  By
\eqref{eq:tested-slot-catastrophe} its weight is at most a
quasipolynomial factor times $e^{-cn}$; since
$\log M(n)=o(n)$, this is $e^{-\Omega(n)}$ after increasing the minimum
length.  Avoiding the union of these two families gives the joint exactness
of all outputs by \Cref{lem:hamming-y-step} and the recordwise composition
rule of \Cref{prop:gadget-composition}.

For $q_Y$ requested rows, take
$u=\lceil q_Y/K(n)\rceil$.  Equation~\eqref{eq:y-yield-proof} gives
$uM(n)\le\eta_Y^{-1}q_Y+M(n)$.  Run the $u$ complete batches on disjoint
locations in parallel, retain predetermined output labels, and discard the
surplus unconditionally.  Summing the witness weights of the complete batches gives
the stated failure and resource bounds without an independence assumption.
\end{proof}

Full zero and plus rows are prepared directly by the canonical encoder and
measured Clifford compiler.  Their complete preparation has linear output
  space, reusable polylogarithmic workspace, polylogarithmic time, and a bad
  support family of weight $e^{-\Omega(n)}$.  Bell states for the secondary
construction are obtained by applying two canonical encoders to physical
Bell pairs.  The zero, plus, and Bell preparations apply to the primal codes and their
physical-$\mathsf H$ dual codes; positive rate is used only for the primary
code $Y$ factory.

\begin{corollary}[Encoded stabilizer resources]
\label{cor:encoded-stabilizer-resources}
Canonical encoded zero and plus blocks, encoded Bell states, and canonical
primary code $Y$ rows can be prepared with the corrected boundary guarantees
above, including for arbitrary reference systems.  The $Y$ demand satisfies
\eqref{eq:y-request-service}; the other resources have their data volume plus
polylogarithmic reusable workspace.
\end{corollary}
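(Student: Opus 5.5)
The corollary collects resources whose constructions are already in this section, so the proof assembles them rather than building anything new. I treat the four resource types in turn and check three things for each. First, every output is at the corrected boundary of \Cref{def:normalized-boundary}. Second, the guarantee is an exact instrument statement, so it survives an arbitrary reference system. Third, the space, time and bad-family bounds are as claimed. For each part I then invoke \Cref{prop:gadget-composition} to combine the per-call guarantees.

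\emph{Zero and plus blocks.} I would start from the canonical encoder of \Cref{lem:canonical-encoding}. With $\ket\eta=\ket0^{\otimes k}$ it gives canonical zero blocks, and preparing the $L$ coordinates in $\ket+$ gives plus blocks; this works for every $\mathcal Q\in\mathfrak C$. That lemma says every branch is a scalar multiple of the same isometry. \Cref{lem:instrument-normalization} then upgrades this to the exact ideal preparation instrument. The encoder runs inside the measured Clifford compiler of \Cref{lem:fixed-clifford-compiler} at level $h(n)=O(\log\log n)$. Its bad-rectangle enumerator \eqref{eq:fixed-compiler-recursion} is doubly exponential in $h$, so choosing the level so that $2^{h}\ge n$ makes it $e^{-\Omega(n)}$ below threshold. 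A final error-correction window from \Cref{lem:ec-gadget} brings the output to radius $t_*n$. The width is $O(n)$ of data plus polylogarithmic compiler overhead, and that overhead is reused across calls.

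\emph{Bell states and $Y$ rows.} For Bell states, I apply two canonical encoders, on the primal and dual codes as needed, to $k$ physical Bell pairs sitting on the $L$ coordinates. The encoding isometry acts on each half, and the pair is a state on which the reference is trivial. The exactness identity therefore transfers to the joint two-block state, and the bad family is the sum of the two encoders' families, by \eqref{eq:enumerator-sum}. For the $Y$ rows, \Cref{thm:primary-y-factory} already gives joint exactness and the request bounds \eqref{eq:y-request-service}--\eqref{eq:y-request-failure}. I cite it directly.

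\emph{Main obstacle.} The delicate point is the encoding stage for the zero, plus and Bell resources. Physical noise at the $L$ coordinates before encoding would corrupt the logical content. This is exactly why the section routes preparation through the compiler and coherent unencoding, not through a single bare encoder. I would argue that the compiler's rectangle structure handles it: any fault at level zero is absorbed unless it belongs to a bad rectangle family, and that family's weight is bounded as above. For zero and plus blocks there is an alternative route, which I would use as a fallback. A stabilizer state is also an eigenstate of its logical $Z$ (respectively $X$) operators. A low-weight error on the $L$ coordinates can then be detected by reading logical parities with the qLTC decoder of \Cref{lem:decoder-interface}, and the failure is pushed into an event of probability $e^{-\Omega(n)}$. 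Everything else in the proof is bookkeeping.
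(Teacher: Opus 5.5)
The step that fails is your exponential bound for the zero, plus and Bell preparations. The two choices $h(n)=O(\log\log n)$ and $2^{h}\ge n$ are incompatible. With $O(\log\log n)$ levels, \eqref{eq:fixed-compiler-recursion} gives only $(C_{\mathrm{rec}}p)^{\operatorname{polylog} n}$ per rectangle. The bare canonical encoder is not fault tolerant: as you note, one fault on an $L$ coordinate before encoding is already a logical error. So a single bad rectangle among the $\operatorname{poly}(n)$ rectangles of the compiled encoder already causes failure. This is why the paper uses this compiler level only to reach a constant per-row weight $p_{\mathrm{raw}}$ in \eqref{eq:raw-y-row-bound} (and $z_0$ in \Cref{prop:batched-stabilizer-factory}). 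It then needs testing and Hamming purification in \Cref{thm:primary-y-factory} to reach $e^{-\Omega(n)}$. If you instead enforce $2^h\ge n$, you need $h\ge\log_2 n$ levels. Each level multiplies width by at least $511$ and depth by a constant, so the overhead becomes $511^{h}\ge n^{8}$ in space and polynomial in time. That contradicts the polylogarithmic workspace and time you are trying to prove, and it destroys the $g(n)$ term in the layer time.

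What is missing for zero and plus is a preparation whose faults stay local on the output block. Then failure requires a constant fraction of its $O(n)$ locations to be faulty, as in \eqref{eq:physical-memory-failure}. This is the kind of direct preparation whose residual radius $t_{0/+}$ appears in \eqref{eq:raw-core-radius}. For $\ket0_L$, proceed as follows.
\begin{enumerate}
\item Prepare $\ket0^{\otimes n}$ and measure only the bounded-weight $X$ checks.
\item Repair the record with \Cref{lem:syndrome-record-repair} and apply any filling of the repaired syndrome.
\item The residual $Z$ error then has syndrome $\mathcal R_{\mathrm{syn}}(v)$. It therefore differs from the short filling $h(v)$ of \eqref{eq:syndrome-record-short-filling} by a $Z$-type element of $\ker H_X$, that is, a $Z$ stabilizer times a logical $Z$, and this fixes $\ket0_L$.
\item The $X$ faults are ordinary physical errors of weight $O(|\supp\mathbf f|)$.
\end{enumerate}
The plus state follows with $X$ and $Z$ exchanged. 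This is the correct use of your remark that the target is a logical $Z$ eigenstate.

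Your fallback of ``reading logical parities'' is circular, because a nondestructive logical measurement needs another fault-tolerant encoded resource. Encoded Bell states have no analogous invariance. Summing the two encoders' families does not cover a fault on a bare physical Bell pair or an $L$ coordinate, and such a fault is a logical error. Between two blocks of the same code, you can instead apply a transversal $\CNOT$ from a directly prepared plus block to a zero block. For the Bell resources across different lengths, \Cref{subsec:injection-switching} itself treats ``Bell pairs followed by canonical encoders'' as an unprotected preparation and purifies it with \Cref{prop:batched-stabilizer-factory}. Citing \Cref{thm:primary-y-factory} for the $Y$ rows is fine.
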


\subsection{Destructive logical measurements}

\begin{lemma}[Destructive logical readout]
\label{lem:destructive-logical-readout}
For any $\mathcal Q\in\mathfrak C$, physical $Z$ measurement followed by the full
$O(\log n)$ decoder returns the complete canonical logical $Z$ word;
physical $X$ measurement gives the analogous logical $X$ word.  If the
input obeys the corrected boundary condition and the readout window has at most
the allowed fraction of faults, the decoded outcome is the ideal logical
outcome.  The instrument identity retains the complete record and remains
valid for an arbitrary reference system.
\end{lemma}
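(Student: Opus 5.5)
The plan is to treat destructive readout as a classical decoding problem for the CSS code $\ker H_Z$ (for $Z$ readout) or $\ker H_X$ (for $X$ readout), after first reducing to Pauli terms by \Cref{lem:recordwise-Pauli}. We fix a complete measurement record and a left--right Pauli term on the faulty support. The input obeys the corrected boundary condition, so every left and right operator has the form $P\mathcal E$ with $|P|_R\le t_*n$. Physical $Z$ measurement of all $n$ qubits commutes with the $Z$ part of each such operator. Each $X$ component can be moved past the measurement, where it becomes a classical bit flip of the outcome string; faults in the readout window do the same. The raw outcome is therefore $w=c+e$. Here $c$ is a codeword of the classical code $\ker H_X$ (the $Z$-type logical plus stabilizer representatives), whose canonical logical content is the ideal outcome. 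The error $e$ has reduced weight at most $t_*n+hf$, where $f$ is the number of faulty readout locations and $h$ is a light-cone constant.

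The second step applies the exact-syndrome decoder. The syndrome $H_Xw=H_Xe$ is computed classically and noiselessly, so it is exact. By the exact-syndrome clause of \Cref{lem:decoder-interface}, the $O(\log n)$-depth decoder returns a correction $\hat e$ with $e+\hat e$ a stabilizer, provided $|e|_R\le\beta n$. We choose $t_*$ and the allowed fault fraction so that $t_*n+hf\le\beta n$. Then $w+\hat e$ differs from $c$ by a stabilizer. The canonical logical word is extracted with the fixed linear map given by the canonical representatives \eqref{eq:canonical-logicals}, i.e.\ the dual-partner inner products; this has $O(\log n)$ depth with bounded fan-in. It is unchanged by stabilizers, so it returns the ideal logical outcome $m_Z$. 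The $X$ case is symmetric, using physical $X$ measurement, $H_Z$, and the $L_X$ representatives.

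The third step turns this into an instrument identity. For a fixed decoded outcome $s$, every surviving left and right Pauli term is a scalar multiple of $\Pi_s$, the logical $Z$-word projector, acting on the logical--reference state, with the reference untouched. Summing over all internal records that decode to $s$ and applying \Cref{lem:instrument-normalization} then gives exactly the ideal destructive-measurement instrument, including the outcome probabilities, with the complete record retained. Because the decoder rule depends only on the record, the same correction is applied to every Pauli term.

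The main obstacle is the first step. We must justify that left and right operators differing in their $X$ parts still lead to the same decoded outcome, and that errors bounded only in reduced weight become classical errors that are correctable modulo the stabilizer space rather than modulo $\ker H$. I would handle this by noting that $P$ and $Q$ can differ only by a product that is detectable or a stabilizer, since $2t_*n<d$. On a branch with a fixed outcome string, both must produce the same $w$. Consequently their $X$ parts agree modulo $X$-stabilizers, which act trivially on the logical word after decoding.
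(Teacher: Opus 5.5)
Your proposal follows the paper's proof: you write the measured string as the ideal codeword plus the combined data and readout error, feed the noiseless, classically computed syndrome to the exact $O(\log n)$ decoder so that the residual is a stabilizer, read off the logical word with the canonical linear map, and lift to arbitrary CPTP faults and references by the recordwise Pauli expansion. Your explicit left/right consistency check and the normalization via \Cref{lem:instrument-normalization} spell out what the paper states in one sentence. One notational slip: in the paper's convention a $Z$-basis outcome lies in $\ker H_Z$, spanned by $X$-type logical and $X$-stabilizer representatives, so the syndrome is $H_Zw$ rather than $H_Xw$; with the labels swapped back, your argument is exactly the paper's.
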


\begin{proof}
For physical $Z$ measurement, write the observed word as
\begin{equation}
  v=v_0+e,
  \qquad H_Zv_0=0,
  \label{eq:readout-word}
\end{equation}
where $v_0$ is the ideal codeword and $e$ is the combined data and readout
error.  Under the corrected boundary promise, the exact decoder returns
$f=\mathcal D(H_Zv)$ such that $e+f$ is an $X$-check word.  Therefore
\begin{equation}
  L_Z(v+f)=L_Zv_0,
  \label{eq:logical-Z-readout}
\end{equation}
which is the complete ideal logical $Z$ word.  Physical $X$ measurement is
the same argument on the dual code.

The decoder and the logical readout map have $O(\log n)$ classical depth.
Memory correction is applied to the other live blocks while these records are
computed.  Conditioned on a physical fault support and a complete measurement
record, the same decoded word is used on every surviving left and right
Pauli term.  The recordwise Pauli reduction of
\Cref{lem:recordwise-Pauli} then proves the statement for arbitrary CPTP
faults and arbitrary reference systems.
\end{proof}

%% file: sections/addressable-logic.tex
\section{Addressable logical gates from transversal \texorpdfstring{$\CCZ$}{CCZ}}
\label{sec:addressable-logic}

The transversal physical $\CCZ$ acts first on the complete logical spaces of
the three primal codes.  To use that action inside an arbitrary circuit, the
active logical coordinates supplied by the trilinear form must be embedded
and addressed without treating a change of logical basis as a free operation.
Direct block code switching and reversible CNOT circuits implement the
required linear maps, while logical state transfers connect the active
encodings.  The code switch
uses only the code properties, decoder, and Clifford operations; the trilinear form enters when
the physical transversal gate is restricted to the active coordinates.

\subsection{Active logical encodings}

Choose the canonical encoders and logical representatives of
\Cref{lem:canonical-encoding}, and write
$\mathcal E_{P_j}^\mathrm{can}$ and $\mathcal E_{D_j}^\mathrm{can}$ for the
encoders of the primal and dual codes.  Only the primary code used for dense storage
has dimension $\Theta(n)$; for every auxiliary code we retain
its actual dimension and use only $k_{\mathrm{CCZ}}\le k_j\le n_j$.

For each primal code $P_j$, let
$i_j:\mathbb F_2^{k_{\mathrm{CCZ}}}\to H_j$ be the injection in
\eqref{eq:main-subrank}, and write $A_j$
for its matrix in the canonical logical basis, and extend $A_j$ to an
invertible binary matrix $B_j$ on the complete logical space.  If
$U_B\ket{x}=\ket{Bx}$, define the active isometries
\begin{equation}
  \mathcal E_{P_j}^\mathrm{act}(\psi)
  =\mathcal E_{P_j}^\mathrm{can}U_{B_j}
    \bigl(\psi\otimes\ket0^{\otimes(k_j-k_{\mathrm{CCZ}})}\bigr),
  \label{eq:primal-active-encoding}
\end{equation}
so that
$\mathcal E_{P_j}^\mathrm{act}\ket{x}
=\mathcal E_{P_j}^\mathrm{can}\ket{A_jx}$ on computational basis
vectors.  Write $H_\mathrm{phys}=H^{\otimes n_j}$ and
$H_\mathrm{act}=H^{\otimes k_{\mathrm{CCZ}}}$.  For the physical-$H$ dual
$D_j$, set
\begin{equation}
  \mathcal E_{D_j}^\mathrm{act}(\psi)
  =H_\mathrm{phys}\mathcal E_{P_j}^\mathrm{act}(H_\mathrm{act}\psi)
  =\mathcal E_{D_j}^\mathrm{can}U_{B_j^{-T}}
    \bigl(\psi\otimes\ket+^{\otimes(k_j-k_{\mathrm{CCZ}})}\bigr).
  \label{eq:dual-active-encoding}
\end{equation}
Thus both cases have the form
\begin{equation}
  \mathcal E_{\mathcal Q}^\mathrm{act}(\psi)
  =\mathcal E_{\mathcal Q}^\mathrm{can}U_{B_{\mathcal Q}}
    \bigl(\psi\otimes
    \ket{\chi_{\mathcal Q}}^{\otimes(k_{\mathcal Q}-k_{\mathrm{CCZ}})}\bigr),
  \label{eq:typed-active-encoding}
\end{equation}
with $B_{P_j}=B_j$, $B_{D_j}=B_j^{-T}$,
$\chi_{P_j}=0$, and $\chi_{D_j}=+$.  Equations
\eqref{eq:primal-active-encoding}--\eqref{eq:typed-active-encoding} specify
logical coordinates; physical circuits for the required linear maps are
constructed below.

In this basis the canonical logical labels of active Paulis are
\begin{equation}
  X(v):B_{\mathcal Q}(v,0),
  \qquad
  Z(w):B_{\mathcal Q}^{-T}(w,0).
  \label{eq:specified-Pauli-labels}
\end{equation}
The corresponding physical Pauli representatives are obtained from the maps
from logical Pauli labels to physical representatives in
\Cref{eq:canonical-logicals}.  If $m_Z$ is
a complete canonical logical $Z$-readout word, the active outcome is
$(B_{\mathcal Q}^{-1}m_Z)_{[k_{\mathrm{CCZ}}]}$; for a complete logical $X$-readout word it
is $(B_{\mathcal Q}^Tm_X)_{[k_{\mathrm{CCZ}}]}$.  The two readout maps have
$O(\log n)$ depth with bounded fan-in.

\subsection{Canonical code switching}
\label{sec:switching}

Let $\mathcal Q_1,\mathcal Q_2\in\mathfrak C$ have lengths
$n_i=\Theta(n)$ and dimensions $1\le k_i\le n_i$.  The
canonical transpose is the logical isometry
\begin{equation}
  U_\mathrm{tr}^{\mathcal Q_1\to\mathcal Q_2}:
  \bigl(\mathbb C^{2^{k_1}}\bigr)^{\otimes k_2}
  \longrightarrow
  \bigl(\mathbb C^{2^{k_2}}\bigr)^{\otimes k_1},
  \qquad
  (\beta,\alpha)\longmapsto(\alpha,\beta).
  \label{eq:canonical-transpose}
\end{equation}
We implement this map by a direct code switching grid rather than by
preparing its Choi state.

Arrange an $n_1\times n_2$ physical grid with old $\mathcal Q_1$ blocks in
columns and new $\mathcal Q_2$ blocks in rows.  The input occupies $k_2$
columns.  The other columns are complete zero or plus blocks placed according
to the pivot decomposition of $\mathcal Q_2$, so the grid has exactly $n_2$
columns even when $k_2$ is small.  Complete syndrome extraction on the columns is
followed by the row checks and the prescribed terminal readouts.  Exactly
$k_1$ rows survive and form the output $\mathcal Q_2$ blocks.

For one Pauli type, let $H_{\mathcal Q_i}$ and $G_{\mathcal Q_i}$ be the
syndrome and stabilizer maps of $\mathcal Q_i$, respectively, and let $E$
denote the effective error matrix during the transition.
If $S_\mathrm{col}$ and $S_\mathrm{row}$ are the repaired column and row syndrome
matrices, the parent detector is
\begin{equation}
  P_\mathrm{parent}
  =H_{\mathcal Q_1}S_\mathrm{row}+S_\mathrm{col}H_{\mathcal Q_2}^T
  =H_{\mathcal Q_1}EH_{\mathcal Q_2}^T.
  \label{eq:bcs-parent-detector}
\end{equation}
The actual decoder first finds $B$ with
$H_{\mathcal Q_1}B=P_\mathrm{parent}$ and the prescribed
columnwise representative property.  The decoder then solves the required linear
relations on each connected component of the syndrome graph, whose degree is bounded.
For a component $K$, the computed matrix $F_K$ obeys
\begin{equation}
  H_{\mathcal Q_1}F_KH_{\mathcal Q_2}^T=P_K,
  \qquad
  \operatorname{rowSupp}(F_K)\subseteq R_K,
  \qquad
  |\operatorname{colSupp}(F_K)|\le C_H|J_K|.
  \label{eq:bcs-local-filling}
\end{equation}
Put $F=\sum_KF_K$.  Although $F$ may be dense, correctness depends on its
two projection cardinalities, not on its total Hamming weight.

The algebraic step is the following rectangular cleaning identity.  If a
matrix $R$ is supported on $A\times B$, with $|A|<d_1$, $|B|<d_2$, and
$H_{\mathcal Q_1}RH_{\mathcal Q_2}^T=0$, then
\begin{equation}
  R\in \operatorname{im}G_{\mathcal Q_1}\otimes\mathbb F_2^{n_2}
      +\mathbb F_2^{n_1}\otimes\operatorname{im}G_{\mathcal Q_2}.
  \label{eq:rectangular-cleaning}
\end{equation}
Choose complements to the kernels of $H_{\mathcal Q_1}|_A$ and
$H_{\mathcal Q_2}|_B$.  The kernel of
their tensor product is the sum of the two kernel tensor spaces.  Any
normalizer supported inside $A$ or $B$ is below the corresponding code
distance and is therefore a stabilizer, proving
\eqref{eq:rectangular-cleaning}.

The faulty locations, their bounded neighborhoods, and the locations used by
the componentwise correction form a graph of bounded degree on $O(n^2)$ vertices.  A component of size
$u$ has a witness containing at least $u/C$ original faults.  Hence, below a
constant physical threshold,
\begin{equation}
  O(n^2)\sum_{u\ge a_0n}A^u2^up^{u/C}
  \le Ce^{-cn}.
  \label{eq:bcs-large-component}
\end{equation}
The same tail estimate over row and column subsets establishes every local
decoding and readout promise.  On the complementary event,
\eqref{eq:rectangular-cleaning} shows that $E+F$ is a sum of terms generated
by stabilizers of the old and new codes.

Let $\mathcal B_\mathrm{sw}$ be the union of the bad set families for oversized
connected components, failures of the row and column promises, extraction,
and readout for the two CSS components.
The union is a family of bad sets on the original physical locations of the grid and
satisfies
\begin{equation}
  \mathcal W(\mathcal B_\mathrm{sw};p)
  \le \operatorname{poly}(n)e^{-cn}.
  \label{eq:bcs-raw-bad-family}
\end{equation}
The family depends only on the faulty support.  In particular, it is chosen
before the measurement record and the values of the supported fault maps are
known.

\begin{lemma}[Canonical code switching]
\label{lem:canonical-code-switching}
The direct grid maps $k_2$ full $\mathcal Q_1$ blocks to $k_1$ full
$\mathcal Q_2$ blocks by \eqref{eq:canonical-transpose}.  The map is exact on
inputs entangled with arbitrary reference systems outside a bad event of
weight $Ce^{-cn}$.  Its quantum space is $O(n_1n_2)$ and its physical time is
$O(\log^2(n+2))$.  A parallel call on $s$ disjoint grids uses
$O(sn_1n_2)$ qubits and has a failure bound with a factor polynomial in $s$.
No lower bound on $k_1$ or $k_2$ is needed.  The quantum grid and its
measurements also define a raw switching segment in which the classical
component calculation and its Pauli correction are deferred.  The
corresponding recordwise output identity is proved below.
\end{lemma}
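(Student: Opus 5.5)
The proof splits into an ideal part and an error part. First, the noiseless grid implements the canonical transpose \eqref{eq:canonical-transpose}. Second, the decoder of \eqref{eq:bcs-local-filling} followed by the cleaning identity \eqref{eq:rectangular-cleaning} shows that every residual error acts trivially on the logical content. Resource bounds and the failure weight are then read off from the bad family $\mathcal B_\mathrm{sw}$ in \eqref{eq:bcs-raw-bad-family}.

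\textbf{Ideal grid.} Write the input as $k_2$ columns, each in $\mathcal E_{\mathcal Q_1}^\mathrm{can}$, together with the pivot-placed zero and plus columns of $\mathcal Q_2$. By \Cref{lem:canonical-encoding}, the $n_2$ columns are then exactly the state obtained by applying the canonical $\mathcal Q_2$ encoder rowwise to an $n_1\times k_2$ logical input. Each physical row index of $\mathcal Q_1$ carries one row of that encoder.

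Measuring the column checks of $\mathcal Q_1$ and the row checks of $\mathcal Q_2$ projects onto the product code $\mathcal Q_1\otimes\mathcal Q_2$. In the ideal branch these checks are already satisfied, so the logical content is the $k_1\times k_2$ array of canonical logical labels. The prescribed terminal readouts on the $n_1-k_1$ non-logical rows, measured in the pivot bases, unencode $\mathcal Q_1$ along columns. What survives is $k_1$ rows, each a $\mathcal Q_2$ block whose canonical logical word is the corresponding row of the array. This is the map $(\beta,\alpha)\mapsto(\alpha,\beta)$.

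\textbf{Error part.} I would condition on a support avoiding $\mathcal B_\mathrm{sw}$ and, by \Cref{lem:recordwise-Pauli}, treat one left or right Pauli term at a time, separately for each CSS type. Equation \eqref{eq:bcs-parent-detector} gives the parent detector in terms of the effective error $E$. By \eqref{eq:bcs-large-component}, every connected component of the fault graph has size below $a_0n$. Then \eqref{eq:bcs-local-filling} gives a correction $F$ with $H_{\mathcal Q_1}(E+F)H_{\mathcal Q_2}^T=0$.

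\textbf{Main obstacle.} The hard step is to show that $E+F$ splits, component by component, into rectangles $A_K\times B_K$ with $|A_K|<d_1$ and $|B_K|<d_2$, each separately annihilated by $H_{\mathcal Q_1}\cdot H_{\mathcal Q_2}^T$. Only then does \eqref{eq:rectangular-cleaning} place each piece in $\im G_{\mathcal Q_1}\otimes\mathbb F_2^{n_2}+\mathbb F_2^{n_1}\otimes\im G_{\mathcal Q_2}$, that is, make it a product of old column stabilizers and new row stabilizers. Such products act trivially on the logical array.

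Two facts do the work here. The row support of $F_K$ lies in $R_K$, and its column support is at most $C_H|J_K|$. Together with the bounded degree of the syndrome graph, these give the two projection cardinalities. The linear-distance threshold $a_0$ is chosen small enough that $C_H a_0 n<\min(d_1,d_2)$. Separate components have disjoint detector supports, so the identity holds component by component. I would then check that the terminal row readouts, whose promises are included in $\mathcal B_\mathrm{sw}$, leave a residual error of reduced weight at most $t_*n$ on the surviving rows. This uses \Cref{lem:decoder-interface} for the final correction.

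\textbf{Record independence, normalization and costs.} The record-dependent correction is the same Pauli for every term. So \Cref{lem:instrument-normalization} turns the branchwise scalar identities into the exact instrument, with arbitrary reference systems.

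Space is the $n_1n_2$ grid. Time is constant-depth check extraction plus the $O(\log n)$-depth component solving and readout decoding, composed with $O(\log n)$ stages of bounded fan-in, which gives $O(\log^2(n+2))$. The failure weight follows from \eqref{eq:bcs-raw-bad-family} after absorbing $\operatorname{poly}(n)$ into $e^{-cn}$ by decreasing $c$. For $s$ parallel grids, a union bound contributes the factor $s$. Nowhere is a lower bound on $k_1$ or $k_2$ used. The raw-segment statement follows by simply not executing the final classical step and recording the identity before correction.
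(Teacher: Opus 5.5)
Your error analysis (parent detector, componentwise filling, rectangular cleaning, and the bad family $\mathcal B_\mathrm{sw}$) follows the paper. The ideal-grid step, however, is wrong, and it is the step that carries the lemma's logical claim.

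Before the row transition the grid is not a rowwise $\mathcal Q_2$ encoding, and the row checks are not ``already satisfied''. A row check $X^{e_i\otimes r}$ meets a column check $Z^{s\otimes e_c}$ of $\mathcal Q_1$ in the single site $(i,c)$ whenever $s_ir_c=1$. Each canonical $X$-row also hits its own $Z$-pivot column, so it anticommutes with the logical-zero stabilizers of the pivot blocks there. Hence the row outcomes are random even without faults. There is also no commuting ``product code $\mathcal Q_1\otimes\mathcal Q_2$'' stabilized by both check sets.

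What must be shown is the following. For every record, after the syndrome section on the new pivots, the deferred column correction, and the frame from the terminal readouts, the branch must be a scalar multiple of one fixed transpose isometry, with a scalar independent of the logical input. The paper obtains this by tracking the canonical logical $X$ and $Z$ representatives through the grid. Each corrected nonzero branch intertwines the input and output logical Pauli algebras according to \eqref{eq:canonical-transpose}. Irreducibility then forces a scalar multiple of $U_\mathrm{tr}$, and trace preservation fixes the weights. Without such an argument, your appeal to \Cref{lem:instrument-normalization} has no branchwise identity to normalize.

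The same noncommutativity matters in your cleaning step. The claim that ``products of old and new stabilizers act trivially'' needs the ordering used in the paper. The $\operatorname{im}G_{\mathcal Q_1}$ part is moved back to the input, where it is trivial. The $\operatorname{im}G_{\mathcal Q_2}$ part is moved past the row projector, where it only changes $\Pi_s$ to $\Pi_{s'}$.

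There are two smaller points. First, the component solve is not $O(\log n)$. Components may have size up to $a_0n$, so finding them by adjacency squaring and solving their linear systems by determinant and adjugate circuits is what costs $O(\log^2 n)$. During this wait the output rows must be protected by memory relative to a stored baseline (\Cref{lem:sector-storage}), which you omit. Second, for the raw segment, ``not executing the final step'' is not enough by itself. You need three things:
\begin{itemize}
\item a total Pauli $C_\mathrm{sw}(\omega)$ defined on every record, with fallbacks outside the decoding promises, so that the same correction multiplies both sides of every Pauli matrix unit;
\item a residual bound $|e|_R\le t_\mathrm{sw}n$ uniform in the dimensions and the record;
\item the observation that an incoming Pauli only shifts later records linearly.
\end{itemize}
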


\begin{proof}
Before the row transition, complete syndrome extraction on each old column and
defer its known physical correction through the Clifford grid.  The repaired
records satisfy \eqref{eq:bcs-parent-detector}.  The componentwise algorithm
produces $F$ in $O(\log^2 n)$ depth with bounded fan-in and bounded fan-out: Boolean
adjacency squaring finds the components, parallel prefix ranks choose
independent rows and columns, and determinant and adjugate circuits invert
the chosen minors~\cite{Berkowitz1984}.  All matrices attached to the two
code members are
compilation data.

On a good support, the branch for the transition to rows can be written
$B_\mathrm{post}\Pi_sE\Phi_1$, where $\Phi_1$ is the padded input embedding.
Equation~\eqref{eq:rectangular-cleaning} gives, up to phase,
\begin{equation}
  FB_\mathrm{post}\Pi_sE\Phi_1
  =B_\mathrm{post}\Pi_{s'}\Phi_1.
  \label{eq:bcs-branch-cleaning}
\end{equation}
Old stabilizers act trivially on the input and new stabilizers are scalar after their
measurement.  Terminal raw words are corrected using the decoded syndrome
labels before their contributions to the Pauli frame on the surviving rows are
formed.  The final syndrome section is then applied physically, followed by
reserved error correction.

Tracking the canonical logical $X$ and $Z$ representatives across the grid
shows that every corrected nonzero branch intertwines the full input and
output logical Pauli algebras according to
\eqref{eq:canonical-transpose}.  Irreducibility makes it a scalar multiple of
the same transpose isometry, with scalar independent of the logical input.
Trace preservation fixes the total branch weights, completing the instrument
identity with arbitrary reference systems.

The quantum grid and all check ancillas occupy $O(n_1n_2)$ qubits.  The
quantum core has constant depth, followed by a constant number of
$O(\log^2n)$ classical tasks.  Memory correction relative to a stored baseline
is applied to the output rows
during that computation.  Equation~\eqref{eq:bcs-large-component} and the
row and column tail bounds give the exponentially small bad event for the
complete call.  The
argument uses the actual dimensions only through the occupied grid and is
therefore uniform down to $k_i=1$.

The raw switching segment uses the same quantum grid and physical locations,
with the component calculation and consolidated Pauli correction deferred.
The component algorithm, terminal word decoders, logical feedback, and
syndrome section define a total deterministic
Pauli $C_\mathrm{sw}(\omega)$ for every complete record $\omega$; the prescribed
fallbacks are used outside the decoding promises.  If an original fault
support avoids $\mathcal B_\mathrm{sw}$, the preceding cleaning argument gives,
for both sides of every Pauli matrix unit in the expansion of the supported
fault maps,
\begin{equation}
  K_{\omega,\bullet}
  =C_\mathrm{sw}(\omega)e_{\omega,\bullet}
    \mathcal E_{\mathcal Q_2}^\mathrm{can}
    U_\mathrm{tr}^{\mathcal Q_1\to\mathcal Q_2},
  \qquad
  |e_{\omega,\bullet}|_R\le t_\mathrm{sw}n,
  \qquad \bullet\in\{L,R\},
  \label{eq:raw-switch-output}
\end{equation}
up to phase.  The same $C_\mathrm{sw}(\omega)$ occurs on the two sides because
it is determined by the actual physical record, not by a term in the Pauli
  expansion.  Here
  $U_\mathrm{tr}^{\mathcal Q_1\to\mathcal Q_2}$ is the canonical transpose and
  $t_\mathrm{sw}>0$
is independent of the code dimensions, the record, and the input reference
system.  An incoming Pauli changes the later grid and readout records only by
the binary linear shifts determined by the compiled Clifford circuit.
Consequently \eqref{eq:raw-switch-output} is obtained directly from the grid
cleaning and its witnesses defined on the original fault supports, without using deferred
correction.  A fresh complete extraction is appended to each surviving output
before consecutive raw segments are composed.
\end{proof}

\subsection{Linear circuits for active encodings}

The active subspaces in \eqref{eq:typed-active-encoding} are realized by CNOT
circuits for linear maps on the block indices.

\begin{lemma}[CNOT circuit for a binary linear map]
\label{lem:clean-linear-addition}
For a binary $\mu\times\nu$ matrix $T$, the map
\begin{equation}
  \ket{x}_X\ket{y}_Y
  \longmapsto
  \ket{x}_X\ket{y+Tx}_Y
  \label{eq:clean-linear-addition}
\end{equation}
has a CNOT circuit using $O(\mu+\nu)$ clean zero work registers,
$O(\mu\nu+\mu+\nu)$ gates, and depth
$O(\min(\mu,\nu)+\log(\mu+\nu))$.  Every layer is a matching and all workspace
returns to zero.  The identity holds for arbitrary coherent inputs and
reference entanglement.
\end{lemma}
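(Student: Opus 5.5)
We prove \Cref{lem:clean-linear-addition} by splitting into two regimes, according to whether $\mu\le\nu$ or $\nu<\mu$, and in each regime combine fan-out trees with parity trees built from CNOTs on clean work registers. Every gate in the construction is a CNOT, so the whole circuit is a permutation of computational basis states. The identity on arbitrary coherent inputs with reference entanglement then follows by linearity once it is checked on basis vectors $\ket{x}\ket{y}$; no Pauli expansion is needed. The gate count $O(\mu\nu+\mu+\nu)$ is immediate because each nonzero entry $T_{ab}$ contributes $O(1)$ CNOTs, and the fan-out and parity-tree overhead is linear in the number of entries.

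\emph{Depth $O(\nu+\log(\mu+\nu))$.} Process the source bits $x_1,\dots,x_\nu$ one at a time. For source bit $x_b$, let $S_b=\{a:T_{ab}=1\}$. Copy $x_b$ into $|S_b|-1\le\mu$ clean work wires by a binary fan-out tree of depth $O(\log\mu)$, apply the CNOTs from $x_b$ and its copies into the targets $y_a$ for $a\in S_b$ in one matching layer, and then uncompute the tree. A single step has depth $O(\log\mu)$ and uses only $O(\mu)$ work wires, which are reused.

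To remove the $\log$ factor from each of the $\nu$ steps, pipeline them: the fan-out for $x_{b+1}$ starts while the fan-out for $x_b$ is still finishing. Alternatively, give each row $a$ a running target and let successive $x_b$ reach it through a systolic schedule. Either way the total depth is $O(\nu+\log\mu)$, provided every layer stays a matching. For the pipelined version this holds because each target $y_a$ receives at most one CNOT per layer and each work wire is used by a single stage at a time.

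\emph{Depth $O(\mu+\log(\mu+\nu))$.} This case is symmetric. Treat each target row $a$ in turn: compute the parity $\sum_b T_{ab}x_b$ into a clean work wire by a binary XOR tree, which requires first fanning out copies of the $x_b$ so that no wire is used twice in a layer. Then CNOT that parity into $y_a$ and uncompute the tree. Pipelining the $\mu$ rows again gives depth $O(\mu+\log\nu)$ with $O(\nu)$ work wires. Take the better of the two constructions to obtain $O(\min(\mu,\nu)+\log(\mu+\nu))$.

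\emph{Clean workspace.} Every work wire is written only by CNOTs whose controls are unchanged between compute and uncompute, namely the wires $x$ and the tree copies. Each uncompute is the exact inverse of its compute, so every work register returns to $\ket0$ on every basis input.

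The main obstacle is the pipelining bookkeeping: the fan-out trees of distinct stages must use disjoint work wires in each layer, and the whole schedule must stay within $O(\mu+\nu)$ workspace. I would handle this with two banks of $O(\mu)$ work wires used alternately, giving a double-buffered schedule with constant-depth handoff. Each row of the schedule then has depth $O(1)$ after an initial $O(\log)$ fill-up, which yields the additive $\log$ term.
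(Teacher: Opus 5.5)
\textbf{Gap in the case $\nu\ge\mu$.} Your workspace bound fails here, and the claim that this case ``is symmetric'' to the fan-out case does not hold in the clean-ancilla model.

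\emph{The fan-out case.} In a doubling fan-out tree, a stage at relative time $s$ holds only about $2^{s}$ copies. When $\Theta(\log\mu)$ stages overlap, their live copies therefore form a geometric sum, totalling $O(\mu)$. That is what makes your first regime work. A double buffer in which each stage holds one bank would not: it allows only two stages in flight and gives depth $\Theta(\nu\log\mu)$. You also need to pad every tree to the common depth $\lceil\log_2\mu\rceil$. Otherwise two stages can reach their apply layer in the same time step and hit a shared target $y_a$, which breaks the matching property.

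\emph{The parity case.} A balanced XOR tree behaves in the opposite way. In your compute--use--uncompute scheme, each level is needed as control when the level above it is uncomputed. So all $\Theta(r)$ intermediate wires of a row with $r$ nonzero entries stay occupied for that row's whole $\Theta(\log r)$ lifetime. To reach depth $O(\mu+\log\nu)$ you must start rows every $O(1)$ layers, so $\Theta(\log\nu)$ rows are live at once. For any $T$ whose rows have $\Theta(\nu)$ nonzero entries, with $\log\nu\le\mu<\nu$, this needs $\Theta(\nu\log\nu)$ work wires, not $O(\mu+\nu)$. Spacing the rows out to save space instead gives depth $\Theta(\mu\log\nu)$. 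There is also an unaddressed matching conflict: a source wire $x_b$ is read by one row's copy layer and by another row's uncopy layer in the same time step.

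\textbf{How the paper avoids this.} It does not pipeline at all.

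\emph{Case $\nu\ge\mu$.} The paper partitions the controls into $h=\lceil\nu/\mu\rceil$ groups of at most $\mu$. It allocates one partial-parity scratch bit for each output and group, which is $\mu h<\mu+\nu$ bits in all. The control-to-scratch incidence graph has maximum degree at most $\mu$. After padding it to a regular bipartite multigraph, Hall's theorem splits it into $\mu$ matching layers, and these compute every partial parity of every row simultaneously. A balanced tree of depth $\lceil\log_2 h\rceil$ then adds each row's $h$ partial parities into $y_a$, and everything is reversed. The total depth is $2\mu+2\lceil\log_2 h\rceil+1$.

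\emph{Case $\mu\ge\nu$.} This is handled analogously: $h=\lceil\mu/\nu\rceil$ copies of each control, the targets split into $h$ groups, and $\nu$ matching layers.

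\textbf{If you keep the pipeline.} For $\nu\ge\mu$ you would need a comparable device. One option is to accumulate each row's inputs in blocks of length $\Theta(\log\nu)$ directly from the $x_b$, so that a live row holds only $O(\nu/\log\nu)$ wires.
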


\begin{proof}
Suppose first that $\mu\ge\nu$ and put
$h=\lceil\mu/\nu\rceil$.  Partition the target indices into $h$ groups of at
most $\nu$.  For each control bit, use a balanced CNOT tree to make $h$
coherent copies in the computational basis, counting the original register as one
copy.  This uses $\nu(h-1)<\mu$ clean registers and depth
$\lceil\log_2h\rceil$.  Within one target group, the bipartite graph of all
nonzero entries of $T$ has degree at most $\nu$ at every copied control and
every target.  Padding gives a $\nu$-regular bipartite multigraph.  Hall's theorem
gives a perfect matching; removing one matching at a time decomposes the
graph into $\nu$ matching layers.  Reverse the fanout trees after these
layers to clear every copy.

Suppose instead that $\nu\ge\mu$ and put
$h=\lceil\nu/\mu\rceil$.  Partition the controls into $h$ groups of at most
$\mu$.  For every output and every group, introduce one zero scratch bit for
the corresponding partial parity.  The number of such registers is
$\mu h<\mu+\nu$.  The additions from controls to registers holding partial parities
form a bipartite graph of degree at most $\mu$ and hence use at most $\mu$
matching layers.  For each output, combine its $h$ partial parities by a
balanced reversible CNOT tree, add the resulting total to the target, and
reverse the tree and the computation of the partial parities.  This gives depth
$2\mu+2\lceil\log_2h\rceil+1$.

Both constructions use $O(\mu+\nu)$ clean registers and
$O(\mu\nu+\mu+\nu)$ gates.  They implement
\eqref{eq:clean-linear-addition} on every computational basis vector and clear
all work registers.  Linearity gives the same identity for coherent inputs
and arbitrary references.  All partitions, trees, and edge colorings are
determined during compilation.
\end{proof}

\begin{lemma}[CNOT circuit for an injective linear map]
\label{lem:clean-linear-embedding}
Let $A:\mathbb F_2^r\to\mathbb F_2^k$ be injective and choose a left inverse
$L$ with $LA=I_r$.  Then
\begin{equation}
  \ket{x}_X\ket0_Y
  \longmapsto
  \ket0_X\ket{Ax}_Y
  \label{eq:clean-linear-embedding}
\end{equation}
is implemented by first adding $Ax$ into $Y$ and then adding $Ly$ back into
$X$.  The circuit uses $O(k+r)$ block registers and
$O(r+\log(k+r))$ matching layers; its inverse compresses the image of $A$
back into $r$ registers.  All scratch registers end in zero, and the statement holds with
arbitrary references.
\end{lemma}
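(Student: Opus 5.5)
The plan is to build the circuit as two calls to \Cref{lem:clean-linear-addition}, with a little care in the second call about which register is the control.

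First, apply \Cref{lem:clean-linear-addition} with $T=A$, so $\mu=k$ and $\nu=r$. This maps $\ket{x}_X\ket0_Y$ to $\ket{x}_X\ket{Ax}_Y$. It uses $O(k+r)$ clean registers, and its depth is $O(\min(k,r)+\log(k+r))=O(r+\log(k+r))$.

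Second, apply the same lemma with $T=L$, so $\mu=r$ and $\nu=k$, now using $Y$ as the control and $X$ as the target. This adds $L(Ax)=x$ into $X$. The result is $\ket{x+x}_X\ket{Ax}_Y=\ket0_X\ket{Ax}_Y$. Its depth is again $O(\min(r,k)+\log(k+r))$, because the lemma's bound depends only on the smaller dimension and a logarithmic term.

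All scratch registers return to zero by the lemma, so the composition has the stated register count and number of matching layers. Each step is a permutation of computational basis states that fixes the workspace. Linearity therefore gives the identity for coherent superpositions and for inputs entangled with a reference, exactly as in \Cref{lem:clean-linear-addition}. Throughout, "register" means a block register: each CNOT acts on encoded blocks as in the paper, so a matching layer of block CNOTs is one transversal physical layer.

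For the inverse, run the two circuits in reverse order. On an input $\ket0_X\ket{y}_Y$ with $y=Ax$ in the image of $A$, first add $Ly=x$ into $X$, giving $\ket{x}\ket{Ax}$. Then add $Ax$ into $Y$, clearing it. This is the claimed compression onto $r$ registers.

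The only point needing care is that the inverse is used only on the image of $A$. Off the image, the reversed circuit is still unitary but does not clear $Y$. So the statement about the inverse must be restricted to $\im A$, which is how the lemma phrases it.

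I expect no real obstacle. The one thing to get right is the orientation of the second call, where $L$ acts with $Y$ as control. With that orientation both calls have depth governed by $\min(k,r)\le r$, which yields the $O(r+\log(k+r))$ bound rather than $O(k)$.
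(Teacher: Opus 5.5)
Your proposal is correct and follows the paper's proof: two applications of Lemma~\ref{lem:clean-linear-addition} (first $T=A$ into $Y$, then $T=L$ with $Y$ as control into $X$, giving $x+LAx=0$), the same two additions in reverse order for the inverse on $\im A$, and linearity for coherent inputs and references. The orientation of the second call needs no special care for the depth bound, since $\min(\mu,\nu)$ is symmetric. Your remark that the reversed circuit clears $Y$ only on $\im A$ is correct and consistent with the statement.
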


\begin{proof}
After the first addition, $Y=Ax$.  The second addition changes the source to
\begin{equation}
  X\longmapsto x+LY=x+LAx=0.
  \label{eq:linear-embedding-cleaning}
\end{equation}
The two applications of \Cref{lem:clean-linear-addition} clear their scratch
registers.  Starting instead from $X=0$ and $Y=Ax$ and applying the same two
additions in reverse order recovers $x$ and clears $Y$.  Since these are
reversible linear circuits, the basis identity proves the coherent identity
and its extension to arbitrary reference systems.
\end{proof}

The circuits for linear maps are lifted to full primary code blocks, with an error correction
cycle after every matching layer.  One zero work bank is retained throughout
the matching layers of a call.  Each matching is a transversal CNOT between
blocks of the same code, and memory error correction protects coherent
control copies, scratch blocks, and idle data until the inverse circuit clears
the workspace.  For one group the sharper resource bounds are
\begin{equation}
  O(n^2)\ \text{qubits},
  \qquad
  O\bigl(k_{\mathrm{CCZ}}+\log(n+2)\bigr)\ \text{time}.
  \label{eq:single-injection-resources}
\end{equation}
The $O(k_{\mathrm{CCZ}})$ matching layers each use error correction of constant depth, while
the work bank is prepared once.  The
number of elementary block windows is polynomial in $n$, so their union has
probability $\operatorname{poly}(n)e^{-cn}$ below the common threshold.

\begin{lemma}[Mixed active stabilizer resources]
\label{lem:mixed-stabilizer-resources}
For any $\mathcal Q\in\mathfrak C$, any
$\sigma\in\{0,+,+i,-i\}$, and any $q\ge1$, the state
$\mathcal E_{\mathcal Q}^\mathrm{act}
  (\ket\sigma^{\otimes k_{\mathrm{CCZ}}})$ can be prepared on $q$ output
blocks with
\begin{equation}
  \text{width }O\bigl(qn+F(n)\bigr),
  \qquad
  \text{time }O\bigl(k_{\mathrm{CCZ}}+g(n)\bigr),
  \label{eq:mixed-resource-bounds}
\end{equation}
and aggregate probability of its bad supports
$\operatorname{poly}(q,n)e^{-cn}$.  Outside this event, all outputs jointly
satisfy the corrected boundary condition about the stated stabilizer target,
including with arbitrary references and correlations among output blocks.
\end{lemma}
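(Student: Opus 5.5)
The plan is to build the active resource from canonical full-block resources and then compress them into the active encoding. The canonical logical states $\ket0^{\otimes k_{\mathcal Q}}$, $\ket+^{\otimes k_{\mathcal Q}}$ (from \Cref{lem:primary-y-resource-interface} and \Cref{cor:encoded-stabilizer-resources}) and $\ket{+i}^{\otimes k}$ on the primary code (from \Cref{thm:primary-y-factory}) are all product states. A product over all canonical coordinates is not yet the active target $\mathcal E^\mathrm{act}_{\mathcal Q}(\ket\sigma^{\otimes k_{\mathrm{CCZ}}})=\mathcal E^\mathrm{can}_{\mathcal Q}U_{B_{\mathcal Q}}(\ket\sigma^{\otimes k_{\mathrm{CCZ}}}\otimes\ket{\chi_{\mathcal Q}}^{\otimes(k_{\mathcal Q}-k_{\mathrm{CCZ}})})$, so a linear map is needed. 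I would use the transposed orientation, in which a band of canonical coordinates across a group of $k$ blocks becomes a set of rows. In that orientation, a logical linear map on block indices is a sequence of transversal CNOT matchings (\Cref{lem:clean-linear-addition,lem:clean-linear-embedding}), so every operation is transversal and no logical basis change is treated as free.

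\textbf{Case $\sigma\in\{0,+\}$.} No factory is needed. For $\sigma=0$ on a primal code $P_j$, the target is $\mathcal E^\mathrm{can}_{P_j}\ket0$, because $U_{B_j}$ fixes the all-zero word. For $\sigma=+$ on a dual code $D_j$, the target is $\mathcal E^\mathrm{can}_{D_j}\ket+^{\otimes k_j}$, because $U_{B^{-T}}$ permutes the uniform superposition. For the mixed cases ($\sigma=+$ on $P_j$, or $\sigma=0$ on $D_j$), the target is a CSS stabilizer state of the same code: the uniform superposition over $\im A_j$ (respectively its dual form). I would prepare it directly by the canonical encoder of \Cref{lem:canonical-encoding}, initializing logical coordinates through the linear map.

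Concretely, I would prepare $q$ blocks of canonical $\ket+$ on a band of $k_{\mathrm{CCZ}}$ transposed rows and zeros on the rest. I would then transpose with \Cref{lem:canonical-bcs} and apply the reversible circuit for $A_j$ from \Cref{lem:clean-linear-embedding} on the block index. The time is $O(k_{\mathrm{CCZ}}+\log n)$ from \eqref{eq:single-injection-resources}. Transposing back costs $O(\log^2 n)\le O(g(n))$. The width is $O(qn+n^2)$, after padding $q$ to $q'=k\lceil q/k\rceil$ so that the extra zero blocks are $O(n^2)$.

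\textbf{Case $\sigma=\pm i$.} This needs a factory. First obtain primary rows $\ket{+i}^{\otimes k}$ from \Cref{thm:primary-y-factory}, with $q_Y=O(q+k)$ requested rows. By \eqref{eq:y-request-service} this costs width $O(qn+nM(n))\subseteq O(qn+F(n))$ and time $O(g(n))$. Then apply $Z$ logically to get $-i$.

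Transport into a code $\mathcal Q$ other than the primary code would use a Clifford-equivalence route rather than a new factory. I would prepare a transversal Bell or zero pair, apply logical $\CNOT$s across codes through the canonical switch, and consume $\ket{+i}$ by the standard $S$-gate teleportation, which uses only $\CNOT$, measurement and a Pauli correction. The relation $\mathcal E^\mathrm{act}(\ket{+i}^{\otimes k_{\mathrm{CCZ}}})$ still requires applying $U_B$ to a state whose complement is $\ket0$ while the active part is $Y$-type. Since $\ket{+i}^{\otimes r}$ is not invariant under general linear maps, I would instead generate it as follows:
\begin{enumerate}
\item prepare $\mathcal E^\mathrm{act}(\ket+^{\otimes k_{\mathrm{CCZ}}})$ from the first case;
\item apply the active $S$ gate coordinatewise by gate teleportation, consuming canonical $Y$ rows aligned through \Cref{lem:clean-linear-embedding}.
\end{enumerate}
I expect this to be the main obstacle. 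The $Y$ resource lives in canonical coordinates, and the phase polynomial of $S$ on the active basis is $x\mapsto|x|$ for $x\in\mathbb F_2^{k_{\mathrm{CCZ}}}$. This is not linear on $A_jx$. To avoid the problem, I would first compress by the inverse embedding to canonical coordinates $x$ in transposed rows, perform teleported $S$ there with aligned $Y$ rows, and then re-expand. This adds only a constant number of calls.

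\textbf{Correctness and failure bound.} Compose these steps using \Cref{prop:gadget-composition}. Every stage ends at the corrected boundary, via \Cref{lem:ec-gadget} after each matching layer and terminal correction after each switch. The union bound over $\operatorname{poly}(q,n)$ windows, grids and batches gives aggregate weight $\operatorname{poly}(q,n)e^{-cn}$. Joint correctness with arbitrary references, including correlations among output blocks, follows from the fact that all steps are exact instruments outside the bad event (\Cref{lem:instrument-normalization}).
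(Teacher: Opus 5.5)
Your core mechanism is the paper's: a CNOT embedding by $A_j$ acting on the block index (\Cref{lem:clean-linear-embedding}), followed by a canonical switch. However, the obstruction you raise for $\sigma=\pm i$ does not exist. By \eqref{eq:primal-active-encoding}, the target is by definition $\mathcal E^{\mathrm{can}}_{P_j}U_{B_j}\bigl(\ket{\sigma}^{\otimes k_{\mathrm{CCZ}}}\otimes\ket0^{\otimes(k_j-k_{\mathrm{CCZ}})}\bigr)$. The embedding circuit realizes $\ket x\ket0\mapsto\ket0\ket{A_jx}$ coherently on every input, so no invariance of $\ket{+i}^{\otimes k_{\mathrm{CCZ}}}$ is needed; invariance only explains why $\sigma=0$ on $P_j$ and $\sigma=+$ on $D_j$ are trivial.

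The paper therefore treats all four values of $\sigma$ uniformly. It takes $k_{\mathrm{CCZ}}$ full primary rows in $\ket\sigma^{\otimes k}$, using zero or plus preparation, or factory $Y$ rows followed by logical $Z$ for $-i$. It embeds their block index into $k_j$ zero primary rows and then applies a single switch $P\to P_j$. Each of the $k$ columns of the logical array equals $\ket\sigma^{\otimes k_{\mathrm{CCZ}}}$, so the $k$ output blocks are exactly the target.

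Your compress--teleport--re-expand route collapses to this construction, so it is redundant rather than wrong. The teleported $S$ consumes $\ket{+i}$ rows only to produce equally many $\ket{+i}$ rows, and your re-expansion step is precisely the embedding applied to $Y$-type data, which contradicts the obstruction you stated.

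Two smaller points:
\begin{itemize}
\item The source should be primary rows, with the single $P\to P_j$ switch applied after the embedding. An auxiliary dimension $k_j$ may be as small as $k_{\mathrm{CCZ}}$, and only $k=\Theta(n)$ outputs per $O(n^2)$ grid give width $O(qn+n^2)$. Your ordering ``transpose, embed, transpose back'' leaves this unclear.
\item Preparing the mixed $\{0,+\}$ targets \emph{directly by the canonical encoder} is not among the resources with an $e^{-\Omega(n)}$ guarantee. A raw encoded row can fail on all its coordinates, which is why the $Y$ factory exists. So the embedding route is needed in those cases as well.
\end{itemize}

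The genuine gap is the dual codes $D_j$ with $\sigma\in\{0,+i,-i\}$. By \eqref{eq:dual-active-encoding}, the target is $\mathcal E^{\mathrm{can}}_{D_j}U_{B_j^{-T}}\bigl(\ket\sigma^{\otimes k_{\mathrm{CCZ}}}\otimes\ket+^{\otimes(k_j-k_{\mathrm{CCZ}})}\bigr)$. It has a plus complement and uses the matrix $B_j^{-T}$, and the zero-padded embedding by $A_j$ does not produce it. Your ``dual form'' and ``Clifford-equivalence route'' via Bell pairs and cross-code CNOTs are not specified. Building $U_{B_j^{-T}}$, or embedding $\ket+^{\otimes(k_j-k_{\mathrm{CCZ}})}$ onto $(\operatorname{im}A_j)^\perp$, with the available CNOT circuits costs $O(k_j)$ rather than $O(k_{\mathrm{CCZ}})$ matching layers. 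For $D_1$ that is $\Theta(n)$, which exceeds the $O(k_{\mathrm{CCZ}}+g(n))$ bound whenever $k_{\mathrm{CCZ}}=o(n)$.

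The missing idea is transversal physical Hadamard:
\[
\mathcal E^{\mathrm{act}}_{D_j}\bigl(\ket\sigma^{\otimes k_{\mathrm{CCZ}}}\bigr)=H_{\mathrm{phys}}\,\mathcal E^{\mathrm{act}}_{P_j}\bigl((H\ket\sigma)^{\otimes k_{\mathrm{CCZ}}}\bigr).
\]
Moreover, $H$ permutes $\ket0,\ket+,\ket{+i},\ket{-i}$ up to global phases, with $H\ket{\pm i}\propto\ket{\mp i}$. Every dual resource is therefore the primal resource for $H\ket\sigma$, followed by one physical-$H$ layer and an error-correction cycle. This adds no asymptotic cost and needs no rate assumption on the auxiliary codes.
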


\begin{proof}
Let $P$ be the primary code of dimension $k=\Theta(n)$.  Start
with $k_{\mathrm{CCZ}}$ full primary blocks in the uniform state
$\ket\sigma^{\otimes k}$.  For a primal target $P_j$, apply
\Cref{lem:clean-linear-embedding} to their block index using $A_j$ and then
apply the canonical switch from $P$ to $P_j$.  The
$k_{\mathrm{CCZ}}\times k$ input logical array has $k$ columns equal to
$\ket\sigma^{\otimes k_{\mathrm{CCZ}}}$.  The embedding
circuit followed by transposition therefore gives exactly $k$ output blocks in
$\mathcal E_{P_j}^\mathrm{act}
  (\ket\sigma^{\otimes k_{\mathrm{CCZ}}})$, with zero on every complementary logical
coordinate.

For $\sigma=-i$, first prepare the $+i$ source and apply the chosen physical
representative of canonical logical $Z$ on every primary logical coordinate,
using $\ket{-i}=Z\ket{+i}$.  Thus all four source states in the statement are
prepared by \Cref{cor:encoded-stabilizer-resources} and one possible Pauli
layer.

For $q$ requested output blocks, put
\begin{equation}
  t=\left\lceil\frac qk\right\rceil,
  \label{eq:mixed-resource-batches}
\end{equation}
One batched preparation produces the $k_{\mathrm{CCZ}}t$ primary source rows, and a second
produces the zero work rows for the embedding circuits.  The $t$ embedding
groups run in parallel, followed by one invocation of
\Cref{lem:canonical-code-switching} on all resulting groups.  Extra outputs
are discarded at predetermined positions.  Since $k_{\mathrm{CCZ}}\le k$ and
$k=\Theta(n)$,
\begin{equation}
  k_{\mathrm{CCZ}}t\le q+k_{\mathrm{CCZ}}\le q+n,
  \qquad
  tn^2=O(qn+n^2).
  \label{eq:mixed-resource-counts}
\end{equation}
The first quantity counts the primary source rows; the second counts the
switching grids.  Source blocks, output blocks, embedding scratch, and
accumulated outputs remain under ordinary memory throughout the calls.

Full zero, plus, and $Y$ source preparation contributes
$O((q+n)n+nM(n))$ qubits.  The direct switching grids contribute
$O(qn+n^2)$.  Combining them gives the width in
\eqref{eq:mixed-resource-bounds}.  The embedding matchings cost
$O(k_{\mathrm{CCZ}})$ time,
whereas source preparation and switching cost $O(g(n))$.  Their elementary
block windows and complete preparation batches give the displayed
failure bound.

For a dual target, the definition \eqref{eq:dual-active-encoding} gives
\begin{equation}
  \mathcal E_{D_j}^\mathrm{act}
    (\ket\sigma^{\otimes k_{\mathrm{CCZ}}})
  =H_\mathrm{phys}
    \mathcal E_{P_j}^\mathrm{act}\!\left(
      (H\ket\sigma)^{\otimes k_{\mathrm{CCZ}}}\right).
  \label{eq:dual-mixed-resource}
\end{equation}
The required source states follow from
\begin{equation}
  H\ket0=\ket+,
  \qquad
  H\ket+=\ket0,
  \qquad
  H\ket{+i}=e^{i\pi/4}\ket{-i},
  \qquad
  H\ket{-i}=e^{-i\pi/4}\ket{+i}.
  \label{eq:Hadamard-stabilizer-states}
\end{equation}
The phases are global product phases.  One physical-$H$ layer followed
by error correction therefore gives the dual resources with the required
plus complement and the same asymptotic bounds.  No lower bound on an
auxiliary code rate is used.
\end{proof}

\subsection{Controlled phases from transversal \texorpdfstring{$\CCZ$}{CCZ}}

Let $(j_1,j_2,j_3)$ be a permutation of $(1,2,3)$.  For $z\in H_{j_3}$, define the
bilinear slice
\begin{equation}
  M_z^{j_1,j_2}:H_{j_1}\longrightarrow H_{j_2}^*,
  \qquad
  (M_z^{j_1,j_2}x)(y)=\tau(u_1,u_2,u_3),
  \qquad
  u_{j_1}=x,\quad u_{j_2}=y,\quad u_{j_3}=z.
  \label{eq:bilinear-slice}
\end{equation}
Define the corresponding logical diagonal gate by
\begin{equation}
  \CZ_{M_z^{j_1,j_2}}\ket{x}\ket{y}
  =(-1)^{\tau(u_1,u_2,u_3)}\ket{x}\ket{y}.
  \label{eq:logical-bilinear-slice}
\end{equation}
For each $j\in[3]$, let $C_{X,j}$ denote the binary $X$-stabilizer subspace
in the computational coordinates of the $j$th code, and choose a binary linear
lift
\begin{equation}
  \operatorname{lift}_j:H_j\longrightarrow\ker H_{Z,j}
  \label{eq:third-code-lift}
\end{equation}
of its logical computational representatives.  Write the corresponding
physical transversal circuit as a product of physical $\CCZ$ gates indexed
by a set $\Xi$ determined by the circuit, where $p_{\xi,j}$ is the physical qubit in the $j$th
code used by the gate indexed by $\xi$.

\begin{lemma}[A bilinear slice of the transversal action]
\label{lem:full-space-slice}
For every permutation $(j_1,j_2,j_3)$ of $(1,2,3)$ and every $z\in H_{j_3}$,
the physical layer
\begin{equation}
  V_{j_1,j_2;\operatorname{lift}_{j_3}(z)}
  =\prod_{\xi\in\Xi}
    \CZ_{p_{\xi,j_1},p_{\xi,j_2}}^{(\operatorname{lift}_{j_3}(z))_{p_{\xi,j_3}}}
  \label{eq:physical-slice}
\end{equation}
implements the logical gate $\CZ_{M_z^{j_1,j_2}}$ on the complete $H_{j_1}$
and $H_{j_2}$ logical spaces.  The identity is valid with an arbitrary
reference and requires no quantum program block.
\end{lemma}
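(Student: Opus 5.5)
The plan is to compute on computational basis states and then extend by linearity. I will start from the defining property of the transversal $\CCZ$ layer $V=\prod_{\xi\in\Xi}\CCZ_{p_{\xi,1},p_{\xi,2},p_{\xi,3}}$, which is the content of \Cref{thm:qltc-family} that gives \eqref{eq:main-subrank}. On physical basis words $a_1,a_2,a_3$ the layer acts as the phase $(-1)^{T(a_1,a_2,a_3)}$, where
\[
  T(a_1,a_2,a_3)=\sum_{\xi\in\Xi}(a_1)_{p_{\xi,1}}(a_2)_{p_{\xi,2}}(a_3)_{p_{\xi,3}} .
\]
The transversality hypothesis says two things for codewords $a_j\in\ker H_{Z,j}$. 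First, $T$ is invariant under adding $X$-stabilizer words $C_{X,j}$ to any argument. Second, $T$ equals $\tau$ evaluated on the logical classes.

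Fix a permutation and $z\in H_{j_3}$, and set $c=\operatorname{lift}_{j_3}(z)$. The layer $V_{j_1,j_2;c}$ is diagonal. On physical basis words $a_{j_1},a_{j_2}$ it multiplies by $(-1)^{T(u)}$, with $u_{j_1}=a_{j_1}$, $u_{j_2}=a_{j_2}$, $u_{j_3}=c$. This holds because a physical $\CZ$ raised to the bit $c_{p_{\xi,j_3}}$ is exactly $\CCZ$ with its third control frozen to that classical bit.

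Next I apply the layer to a logical basis state. For the $j_1$ block, $\mathcal E^\mathrm{can}_{j_1}\ket{x}$ is a uniform superposition of $\ket{a}$ over the coset $\operatorname{lift}_{j_1}(x)+C_{X,j_1}$, and likewise for $j_2$ with $y$. The second property of $T$ gives $T(u)=\tau(x,y,z)$ in the appropriate argument order. The first property makes this phase constant on each coset, whichever representative is used, including the chosen lift $c$ of $z$. So the layer maps $\mathcal E\ket{x}\otimes\mathcal E\ket{y}$ to $(-1)^{\tau(u_1,u_2,u_3)}$ times itself, which is \eqref{eq:logical-bilinear-slice} on basis states. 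Linearity in the logical inputs then extends the identity to arbitrary superpositions. Since the operator acts only on the two physical blocks, it also extends to arbitrary reference systems. No third quantum block is involved, because $c$ is a fixed classical word.

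The main obstacle is justifying stabilizer invariance in the third slot. The code-level hypothesis presumably controls $T$ only on triples of codewords, and $c$ is one fixed lift rather than a superposition over its coset. Invariance under $a_{j_1}\mapsto a_{j_1}+s$ with the third argument fixed at $c$ does follow from invariance on codewords, since $c\in\ker H_{Z,j_3}$ is itself a codeword. The obstacle therefore reduces to one deduction. I would show that the trilinear form $T$, restricted to $\ker H_{Z,1}\times\ker H_{Z,2}\times\ker H_{Z,3}$, vanishes whenever one argument lies in the corresponding $C_{X,j}$. This is the standard condition for a transversal $\CCZ$ on CSS codes to preserve the codespace, and I would extract it from the codespace-preservation part of \Cref{thm:qltc-family}.
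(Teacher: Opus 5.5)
Your proposal is correct and is essentially the paper's argument. The paper writes the transversal unitary as $U_\tau=\sum_w V_{j_1,j_2;w}\otimes\ket{w}\!\bra{w}_{j_3}$ and inserts $\mathcal E^{\mathrm{can}}_{P_{j_3}}\ket{z}$, the uniform superposition over $\operatorname{lift}_{j_3}(z)+C_{X,j_3}$, into the full logical identity of \Cref{thm:qltc-family}; comparing coefficients of the orthonormal words $\ket{w}$ then gives the slice identity for every $w$ in that coset, in particular for $w=\operatorname{lift}_{j_3}(z)$.

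The step you flag as the main obstacle is settled by the same one-line coefficient comparison. Every word of a coset triple occurs with nonzero amplitude, so the full logical identity already forces $(-1)^{T}=(-1)^{\tau(x,y,z)}$ on each such word. You therefore need no separate appeal to codespace preservation, and no separate proof that $T$ vanishes when one argument lies in $C_{X,j}$.
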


\begin{proof}
In the chosen computational basis,
\begin{equation}
  \mathcal E_{P_{j_3}}^\mathrm{can}\ket z
  =|C_{X,j_3}|^{-1/2}
    \sum_{d\in C_{X,j_3}}\ket{\operatorname{lift}_{j_3}(z)+d}.
  \label{eq:encoded-program-coset}
\end{equation}
The physical $\CCZ$ gate is symmetric in its three inputs.  Keeping the
arguments of $\tau$ in their original order, the complete transversal unitary is block
diagonal in the computational basis of the $j_3$th code:
\begin{equation}
  U_{\tau}=\sum_w V_{j_1,j_2;w}\otimes\ket{w}\!\langle w\rvert_{j_3},
  \label{eq:transversal-block-diagonal}
\end{equation}
where $V_{j_1,j_2;w}$ is the product in \eqref{eq:physical-slice} with
$\operatorname{lift}_{j_3}(z)$ replaced by $w$.  Apply the identity on the full logical spaces in
\Cref{thm:qltc-family}, with the three logical arguments kept in
their original slots.  Every representative $\operatorname{lift}_{j_3}(z)+d$ in
\eqref{eq:encoded-program-coset} appears with nonzero coefficient, so
equality of coefficients gives
\begin{equation}
  V_{j_1,j_2;w}
    (\mathcal E_{P_{j_1}}^\mathrm{can}\otimes
     \mathcal E_{P_{j_2}}^\mathrm{can})
  =(\mathcal E_{P_{j_1}}^\mathrm{can}\otimes
    \mathcal E_{P_{j_2}}^\mathrm{can})\CZ_{M_z^{j_1,j_2}}
  \label{eq:full-space-slice-identity}
\end{equation}
for each such $w$, in particular for
$w=\operatorname{lift}_{j_3}(z)$.  The coefficient comparison permutes the
physical input slots and does not require symmetry of $\tau$.
\end{proof}

Let
$\zeta=(\zeta_1,\ldots,\zeta_{k_{\mathrm{CCZ}}})
\in\mathbb F_2^{k_{\mathrm{CCZ}}}$.  For any
ordered pair $j_1\ne j_2$, let $j_3$ be the remaining index.  Taking
$z=i_{j_3}(\zeta)$ and placing $i_{j_1}(x)$, $i_{j_2}(y)$, and
$i_{j_3}(\zeta)$ in their original three slots gives
\begin{equation}
  \tau(u_1,u_2,u_3)
  =\sum_{v=1}^{k_{\mathrm{CCZ}}}x_vy_v\zeta_v.
  \label{eq:subrank-slice}
\end{equation}
Hence \Cref{lem:full-space-slice} gives $\CZ$ on the active coordinates with
$\zeta_v=1$.  Conjugating the physical block in position $j_2$ by Hadamard
gives the following map between active encodings:
\begin{equation}
  \mathsf C_\zeta(P_{j_1}\to D_{j_2}):
  \ket{x}\ket{y}\longmapsto
  \ket{x}\ket{y+(\zeta_1x_1,\ldots,
    \zeta_{k_{\mathrm{CCZ}}}x_{k_{\mathrm{CCZ}}})},
  \qquad j_1\ne j_2.
  \label{eq:program-free-CNOT}
\end{equation}
The corresponding physical mask is
$\operatorname{lift}_{j_3}(i_{j_3}(\zeta))$.  Thus $j_3=3$
gives $P_1\to D_2$ and $P_2\to D_1$, $j_3=2$ gives $P_1\to D_3$ and
$P_3\to D_1$, and $j_3=1$ gives $P_2\to D_3$ and $P_3\to D_2$.  The three
binary linear maps from $\zeta$ to
$\operatorname{lift}_{j_3}(i_{j_3}(\zeta))$ are precomputed.
Their values for a runtime selector $\zeta$ are evaluated in $O(\log n)$
depth with bounded fan-in.

\begin{lemma}[Logical state transfer between codes]
\label{lem:type-conversion}
For any $\mathcal Q,\mathcal Q'\in\mathfrak C$, there is an exact
instrument implementing
\begin{equation}
  \mathcal E_{\mathcal Q}^\mathrm{act}(\psi)
  \longmapsto
  \mathcal E_{\mathcal Q'}^\mathrm{act}(\psi)
  \label{eq:logical-state-transfer}
\end{equation}
for every $\psi\in(\mathbb C^2)^{\otimes k_{\mathrm{CCZ}}}$.  The instrument uses a constant number
of mixed zero or plus target banks, transversal CNOT layers, complete source
readouts, and Pauli corrections.  The identity holds with an arbitrary
reference system.
\end{lemma}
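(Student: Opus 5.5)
The plan is to build the transfer from the masked maps $\mathsf C_\zeta$ of \eqref{eq:program-free-CNOT} with $\zeta=1^{k_{\mathrm{CCZ}}}$, together with complete destructive readout. It proceeds in three steps: a one-hop teleportation-style transfer from a primal to a dual code, a symmetric version in the other direction, and composition of hops.

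\textbf{Step 1 (one hop $P_{j_1}\to D_{j_2}$).} Fix $j_1\ne j_2$ and prepare a $D_{j_2}$ target in $\mathcal E_{D_{j_2}}^\mathrm{act}(\ket0^{\otimes k_{\mathrm{CCZ}}})$ using \Cref{lem:mixed-stabilizer-resources}.
\begin{itemize}
\item Apply $\mathsf C_{1}(P_{j_1}\to D_{j_2})$, which gives $\sum_x\psi_x\ket{x}\ket{x}$ on the active coordinates.
\item Measure the source completely in the physical $X$ basis and decode by \Cref{lem:destructive-logical-readout}, keeping the active outcome $s=(B_{P_{j_1}}^Tm_X)_{[k_{\mathrm{CCZ}}]}$.
\item The complementary source coordinates are in $\ket0$ and the physical layer acts only through the active coordinates. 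The branch is therefore proportional to $Z(s)$ applied to $\mathcal E_{D_{j_2}}^\mathrm{act}(\psi)$.
\item Apply the active Pauli $Z(s)$ using the labels \eqref{eq:specified-Pauli-labels}.
\end{itemize}
Each branch is then a scalar multiple of the same isometry. \Cref{lem:instrument-normalization} upgrades this to the exact instrument with arbitrary references, since the readout records are retained and the correction depends only on the record.

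\textbf{Step 2 (hop $D_{j_1}\to P_{j_2}$).} Conjugating Step 1 by $H_\mathrm{phys}$ and using \eqref{eq:dual-active-encoding} gives the reverse direction. Here the target is an active plus resource and the Pauli correction is of $X$ type. The complement states $\ket0$ and $\ket+$ are swapped consistently, so this step is also exact.

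\textbf{Step 3 (composition).} Every pair $\mathcal Q,\mathcal Q'$ is connected by at most a constant number of such hops, alternating between primal and dual codes. For example:
\begin{itemize}
\item $P_1\to D_2\to P_3$ gives $P_1\to P_3$.
\item $P_1\to D_2\to P_1$ realizes the identity on the same type, which may also simply be taken as the empty instrument.
\item $P_j\to D_j$ goes through a third index, $P_j\to D_{j'}\to P_{j''}\to D_j$, with indices chosen as in the list after \eqref{eq:program-free-CNOT}.
\end{itemize}
Composition via \Cref{prop:gadget-composition} preserves exactness with references.

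\textbf{Main obstacle.} The delicate step is confirming that the measured complementary logical coordinates of the source carry no information and produce no back-action. In $\mathsf C_1$ these coordinates neither control nor receive anything, because the physical mask is the lift of $i_{j_3}(1)$ and \Cref{lem:full-space-slice} holds on the full logical spaces. The remaining task is bookkeeping: check that $B^T$ applied to the $X$ word isolates exactly the active parities conjugate to the active $Z$ labels, and that the complementary $\ket0$ coordinates make the other $X$ outcomes uniformly random and independent of $\psi$. I would verify this by expanding the source in the active basis and reading off \eqref{eq:specified-Pauli-labels} directly.
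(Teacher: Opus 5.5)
Your proposal is correct and is essentially the paper's proof. It uses the hop $P_j\to D_\ell$ ($j\ne\ell$) via $\mathsf C_{\mathbf 1}$ into an active zero target, with logical $X$ readout and a $Z(b)$ frame. Your Hadamard conjugation yields exactly the paper's reverse circuit: the plus-prepared primal block is the control, the old dual data is the target, the dual source is read in the logical $Z$ basis, and the frame is $X(b)$. Composition along the six-cycle \eqref{eq:type-cycle} then connects any pair in at most three hops.

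One small correction to your closing remark. The complementary coordinates are inert not because the full-space slice vanishes on them; it need not. They are inert because the source lies in $\im A_{j_1}$ and the Hadamard-rotated target lies in $\im A_{j_2}$. On these images, \eqref{eq:main-subrank} reduces the slice to the diagonal form already recorded in \eqref{eq:program-free-CNOT}.
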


\begin{proof}
Let $j\ne\ell$.  A full active identity transfer from $P_j$ to $D_\ell$
prepares
$\mathcal E_{D_\ell}^\mathrm{act}(\ket0^{\otimes k_{\mathrm{CCZ}}})$ and applies
$\mathsf C_{\mathbf1}(P_j\to D_\ell)$.  For an arbitrary
input entangled with a reference system, written as
\begin{equation}
  \sum_{x\in\mathbb F_2^{k_{\mathrm{CCZ}}}}
    \mathcal E_{P_j}^\mathrm{act}\ket{x}\ket{\phi_x}_{R},
  \label{eq:type-transfer-input}
\end{equation}
the state immediately before source measurement is
\begin{equation}
  \sum_x
    \mathcal E_{P_j}^\mathrm{act}\ket{x}\,
    \mathcal E_{D_\ell}^\mathrm{act}\ket{x}\ket{\phi_x}_{R}.
  \label{eq:type-transfer-entangled}
\end{equation}
A complete logical $X$ measurement of the source with active outcome $b$
leaves the target with the known frame $Z(b)$.  Apply its physical
representative from \eqref{eq:specified-Pauli-labels}.  After this correction,
the target and reference are in the state
\[
  \sum_x\mathcal E_{D_\ell}^\mathrm{act}\ket{x}\ket{\phi_x}_{R}.
\]
The source block has been measured and is discarded.  The instrument uses
one target block and one complete source readout for each source block.

For the reverse direction, prepare
$\mathcal E_{P_j}^\mathrm{act}(\ket+^{\otimes k_{\mathrm{CCZ}}})$ as the
control and use the old $D_\ell$ data as the target of the same transversal CNOT.
A complete logical $Z$ measurement of the old dual source produces an active
outcome $b$ and hence a known $X(b)$ frame on the new primal block.  After the
physical correction, the new primal block is in the active $P_j$ encoding of
the original logical state.  Both
identities follow on the state \eqref{eq:type-transfer-input}, and therefore
hold with arbitrary reference entanglement.

The transfer graph has vertex set $\mathfrak C$.  An edge $P_j-D_\ell$,
for $j\ne\ell$, records the availability of the direct transfer above.  The
graph is the six-cycle
\begin{equation}
  P_1-D_2-P_3-D_1-P_2-D_3-P_1.
  \label{eq:type-cycle}
\end{equation}
Thus any ordered pair of active encodings is connected by at most three
direct transfers.
\end{proof}

\begin{lemma}
\label{lem:selected-active-gates}
For every $\mathcal Q\in\mathfrak C$, full logical $H$ and
$S$ can be implemented on the active $\mathcal Q$ encoding.  Together with the masked CNOTs from
\eqref{eq:program-free-CNOT}, they give a swap on any prescribed subset of the
active coordinates.  Conjugating the full logical gates by these swaps gives
$H$, $S$, and the transversal logical $\CCZ$ on the same
subset.  All identities
hold with arbitrary references.
\end{lemma}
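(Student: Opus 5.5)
We plan to prove the three claims in order: full logical $H$ and $S$ on every active encoding, then a masked swap, then the restricted gates obtained by conjugation.

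\textbf{Full logical $H$.} On an active $P_j$ encoding, we apply the transversal physical Hadamard $H_\mathrm{phys}$. By \eqref{eq:dual-active-encoding}, this maps $\mathcal E_{P_j}^\mathrm{act}(\psi)$ to $\mathcal E_{D_j}^\mathrm{act}(H_\mathrm{act}\psi)$. We then return to the $P_j$ encoding by a logical state transfer $D_j\to P_j$ from \Cref{lem:type-conversion}, which takes at most three direct steps along the six-cycle \eqref{eq:type-cycle}. The same recipe with the roles reversed handles a dual code, and an arbitrary $\mathcal Q$ reduces to these cases by further transfers.

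\textbf{Full logical $S$.} We use gate teleportation with the resources of \Cref{lem:mixed-stabilizer-resources}. Prepare $\mathcal E_{D_\ell}^\mathrm{act}(\ket{+i}^{\otimes k_{\CCZ}})$ with $\ell\ne j$. Apply the transversal masked CNOT $\mathsf C_{\mathbf 1}(P_j\to D_\ell)$ from \eqref{eq:program-free-CNOT}. Measure the target by complete destructive logical readout (\Cref{lem:destructive-logical-readout}). The standard $Y$-state identity then leaves $S^{\pm1}$ on each coordinate of the data, with the sign determined by the readout. We correct each coordinate with an active $Z$ frame, since $S^{-1}=ZS$ and $Z$ has the physical representative \eqref{eq:specified-Pauli-labels}. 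All three corrections are Paulis computed from records, so \Cref{lem:recordwise-Pauli} and \Cref{lem:instrument-normalization} give the exact instrument with references.

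\textbf{Swap on a prescribed subset.} Take a mask $\zeta$ that is the indicator of the subset. The masked CNOT $\mathsf C_\zeta(P_j\to D_\ell)$ adds the data on the selected coordinates only. Conjugating with full $H$ on both operands reverses its direction. Composing the three alternating CNOTs, with state transfers placed so that each CNOT is between a primal and a dual block as \eqref{eq:program-free-CNOT} requires, gives the swap $\SWAP_\zeta$ of the selected coordinates between two blocks. It acts as the identity on unselected coordinates because $\zeta_v=0$ there. Concretely, we swap the selected coordinates of the data block with a fresh zero or plus active block. We then apply the full $H$ or $S$, or the full transversal $\CCZ$ of \eqref{eq:main-subrank}, to the blocks holding the exchanged data. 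The gate is applied only to the exchanged selected data, not the complement, so unselected data is untouched. Finally we swap back.

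\textbf{Why the conjugation is correct.} The complementary coordinates of the auxiliary block are prepared states that the full gate maps to known stabilizer states. For $\CCZ$, we place zeros on the unselected coordinates of at least one operand, so the action there is trivial. For $H$ and $S$, we must either restore the auxiliary stabilizer state or discard it unconditionally, which requires disentanglement. This is the delicate step. We would handle it by choosing the auxiliary complement to be an eigenstate of the gate where possible: $\ket{+i}$ for $S$, and for $H$ the conjugation $H_\mathrm{phys}$ followed by transfer, applied in a way that returns zero complements to plus complements and then back again.

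The main obstacle is ensuring that no unknown complementary coordinate is acted on. We would verify this coordinatewise by the basis-vector calculation, then extend to coherent inputs with arbitrary references by linearity.
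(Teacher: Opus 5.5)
Your outline follows the paper's proof: full $H$ from $H_{\mathrm{phys}}$ and \eqref{eq:dual-active-encoding} followed by a logical state transfer; full $S$ by teleportation through an active $\ket{+i}$ resource with an active $Z$ correction; $\SWAP_\zeta$ as in \eqref{eq:selected-swap}; and restriction by conjugating with the swap against a scratch block, with zeros on the disabled coordinates of one operand for $\CCZ$. The one variation is in $S$. You couple the data to a $D_\ell$ resource with the slice CNOT $\mathsf C_{\mathbf 1}(P_j\to D_\ell)$. The paper instead prepares the resource in the same encoding $\mathcal E^{\mathrm{act}}_{\mathcal Q}$ and uses an ordinary same-code transversal CNOT. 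That choice works uniformly for primal and dual $\mathcal Q$, because it leaves the complementary $\ket{00}$ or $\ket{++}$ unchanged. Your version is valid for primal data, but dual data need an extra transfer, since \eqref{eq:program-free-CNOT} only runs from primal to dual.

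The step you call delicate is not delicate, and your proposed remedy for it is wrong. Start with a zero scratch, apply $\SWAP_\zeta$, apply full $V$ to the scratch, and apply $\SWAP_\zeta$ again. The scratch then holds its own initial $\ket0$ on the selected coordinates, returned untouched by the second swap, and $V\ket0$ on the unselected ones. This is a fixed product state independent of the data. It therefore factors off from the data and any reference, and can be discarded unconditionally. No eigenstate choice is needed. Your suggested one would also fail as stated: $\ket{+i}$ is not an eigenstate of $S$, since $S\ket{+i}=\ket-$, and the eigenstates of $H$ are not stabilizer states at all. Replace that paragraph with this one-line basis-state computation.

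Two smaller points concern your $\CCZ$ step. First, the transversal gate requires its three operands to be encoded by $P_1,P_2,P_3$. If you apply it to scratch blocks, which lie in codes adjacent to the data's code, you must insert logical state transfers first. The paper avoids this by keeping the operands in $P_1,P_2,P_3$ and swapping out only the disabled coordinates of the third operand with $\SWAP_{1-\zeta}$. Second, you should state that all pending syndrome and logical Pauli corrections are applied physically before the transversal $\CCZ$. A large Pauli frame cannot be commuted formally through that non-Clifford layer.
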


\begin{proof}
Physical Hadamard obeys
\begin{equation}
  H_\mathrm{phys}\mathcal E_{P_j}^\mathrm{act}(\psi)
  =\mathcal E_{D_j}^\mathrm{act}(H_\mathrm{act}\psi),
  \qquad
  H_\mathrm{phys}\mathcal E_{D_j}^\mathrm{act}(\psi)
  =\mathcal E_{P_j}^\mathrm{act}(H_\mathrm{act}\psi).
  \label{eq:full-H}
\end{equation}
Composing it with the appropriate logical state transfer implements full
active Hadamard and produces the output in the prescribed active encoding.

For full active $S$, prepare the resource using the same active encoding,
$\mathcal E_{\mathcal Q}^\mathrm{act}
  (\ket{+i}^{\otimes k_{\mathrm{CCZ}}})$, apply transversal CNOT from data to
ancilla, and destructively read the ancilla in the canonical $Z$ basis.  On
one active coordinate, outcome zero gives $S$, whereas outcome one
gives
\begin{equation}
  iS^\dagger=iZS.
  \label{eq:S-injection-branches}
\end{equation}
If the complete canonical outcome is $m$, put
$m_\mathrm{act}=(B_{\mathcal Q}^{-1}m)_{[k_{\mathrm{CCZ}}]}$ and apply the active $Z$ Pauli
with canonical logical label
\begin{equation}
  B_{\mathcal Q}^{-T}(m_\mathrm{act},0).
  \label{eq:S-correction}
\end{equation}
The correction turns every branch into $S$.  On complementary
coordinates, a CNOT between blocks encoded by the same code leaves
$\ket{00}$ unchanged for a primal code and $\ket{++}$ for a dual code.  Their measurement outcomes are discarded and
no complementary correction is applied.

For two codes joined by an edge of the transfer graph, let $H_A$ and
$H_B$ denote the full active Hadamard operations on the two banks.
The chronological circuit
\begin{equation}
  \mathsf C_\zeta;\ H_AH_B;\ \mathsf C_\zeta;\
  H_AH_B;\ \mathsf C_\zeta
  =\SWAP_\zeta
  \label{eq:selected-swap}
\end{equation}
is a swap on the selected coordinates and the identity elsewhere.  The two
physical-$H$ layers reverse the logical direction of the middle CNOT
on the selected coordinates, while the circuit remains the identity on every
disabled coordinate.

To restrict $V\in\{H,S\}$ to a mask $\zeta$, prepare an
active zero scratch block using a code adjacent in the transfer graph, apply
$\SWAP_\zeta$, apply the full active $V$ to the scratch, and apply
$\SWAP_\zeta$ again.  The induced operation on the data is $V$ precisely on
the selected coordinates.  The scratch is zero on the
selected coordinates and $V\ket0$ elsewhere, so it is independent of the
data and can be discarded unconditionally.

For an addressable $\CCZ$, conjugation by the selection swap
$\SWAP_{1-\zeta}$ places zero in the third operand at every disabled
coordinate.  The full transversal action is therefore the identity there and
acts as $\CCZ$ precisely on the requested mask.  To implement this identity,
logical state transfers first produce three operand blocks encoded by
$P_1,P_2,P_3$.  All pending syndrome computations are completed and their
physical Pauli corrections are applied before the selection swap, since a
large Pauli frame cannot be commuted formally through the non-Clifford layer.
The selection circuit may produce further syndrome sections and logical
Pauli corrections; these too are computed and applied physically before the
transversal $\CCZ$.  A reserved error correction cycle follows that layer,
after which the inverse selection swap restores the active zero scratch block.
Further logical state transfers give the prescribed output encodings.  Thus
the logical $\CCZ$ is the transversal operation of the computational code;
the encoded $Y$ state is used only for logical $S$.
\end{proof}

\subsection{Addressable gates and state transfer}
\label{sec:specified-operations}

\begin{lemma}[Logical state transfer preserving coordinate labels]
\label{lem:aligned-transfer}
Let the source and target banks each contain $q$ blocks, and let a
predetermined bijection between them preserve the active coordinate label.
Let $\Pi$ be the induced permutation.  For an input state
$\ket\psi=\sum_x\ket{x}_\mathrm{source}\ket{\phi_x}_R$, define
$\Pi\ket\psi=\sum_x\ket{\Pi x}_\mathrm{target}\ket{\phi_x}_R$.  The corresponding
permutation instrument is
implemented in $k_{\mathrm{CCZ}}$ CNOT rounds whose physical masks are determined by the
compiled matching, followed
by one complete source
readout.  Immediately before readout its ideal state is
\begin{equation}
  \sum_x\ket{x}_\mathrm{source}\ket{\Pi x}_\mathrm{target}
  \ket{\phi_x}_{R}.
  \label{eq:aligned-entanglement}
\end{equation}
After the decoded Pauli correction, the target is exactly
$\Pi\ket\psi$.  The construction uses one target block per source block,
not one resource per coordinate, and is exact with arbitrary references.
\end{lemma}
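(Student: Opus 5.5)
The plan is to follow the teleportation pattern of \Cref{lem:type-conversion}, with the single identity CNOT $\mathsf C_{\mathbf 1}$ replaced by $k_{\mathrm{CCZ}}$ masked CNOT rounds. Each round moves one coordinate class. The source bank is in an active primal encoding, say $P_j$; otherwise a preliminary transfer from \Cref{lem:type-conversion} puts it there. The target bank consists of $q$ blocks in $\mathcal E_{D_\ell}^\mathrm{act}(\ket0^{\otimes k_{\mathrm{CCZ}}})$ with $\ell\ne j$, prepared by \Cref{lem:mixed-stabilizer-resources}.

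The bijection preserves the coordinate label $v$. For each $v\in[k_{\mathrm{CCZ}}]$ it therefore induces a permutation $\sigma_v$ of $[q]$ with $(b,v)\mapsto(\sigma_v(b),v)$. Round $v$ applies $\mathsf C_{e_v}(P_j\to D_\ell)$ from \eqref{eq:program-free-CNOT} between source block $b$ and target block $\sigma_v(b)$, simultaneously for all $b$. The physical mask is $\operatorname{lift}_{j_3}(i_{j_3}(e_v))$. These gates form a perfect matching of blocks, so one round is a single transversal layer followed by error correction.

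The $k_{\mathrm{CCZ}}$ rounds touch disjoint coordinate labels on the target side. Starting from zero, the basis-state calculation gives
\[
\ket{x}_\mathrm{source}\ket0_\mathrm{target}\longmapsto\ket{x}_\mathrm{source}\ket{\Pi x}_\mathrm{target}.
\]
Linearity then gives \eqref{eq:aligned-entanglement} with an arbitrary reference.

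For readout, measure every source block in the complete logical $X$ basis using \Cref{lem:destructive-logical-readout}, and convert the outcomes to active outcomes $b_{(b,v)}$ through $(B_{\mathcal Q}^Tm_X)_{[k_{\mathrm{CCZ}}]}$. Expanding the source in the $X$ basis, the branch with outcome word $\beta$ carries the phase $(-1)^{\beta\cdot x}=(-1)^{(\Pi\beta)\cdot(\Pi x)}$. This phase is removed by applying the active $Z(\Pi\beta)$ frame on the target, using the physical representatives from \eqref{eq:specified-Pauli-labels}. Every branch is then the same isometry times a scalar, and \Cref{lem:instrument-normalization} promotes the branch identity to the full instrument. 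A final transfer by \Cref{lem:type-conversion} puts the target in whatever active encoding is prescribed. Only one target block per source block is used, because all $k_{\mathrm{CCZ}}$ rounds write into the same target bank.

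The step I expect to be the main obstacle is fault tolerance across the $k_{\mathrm{CCZ}}$ non-transversal-in-time rounds. Each masked CNOT layer is a constant-depth segment on two blocks, so \Cref{lem:ec-gadget} bounds it, with an error correction cycle after each round. The corrected boundary must be maintained throughout, and no Pauli frame may be deferred improperly. Because every gate is Clifford, pending frames could in principle be propagated, but I would follow the convention of \Cref{def:normalized-boundary} and apply all corrections physically. The failure probability is a union over $O(qk_{\mathrm{CCZ}})$ windows, giving $\operatorname{poly}(q,n)e^{-cn}$.

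Recordwise exactness for arbitrary CPTP faults follows from \Cref{lem:recordwise-Pauli}, since the decoded word $\beta$ depends only on the record and not on the Pauli term.
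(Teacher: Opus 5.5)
Your proposal is correct and follows the paper's proof essentially step for step: one active dual zero target per primal source, the masked CNOT $\mathsf C_{e_v}(P_j\to D_\ell)$ along the compiled matching in each of the $k_{\mathrm{CCZ}}$ rounds with error correction after each round, complete logical $X$ readout of the sources, the active $Z$ frame correction, a constant number of identity transfers along the transfer cycle for other endpoint encodings, and the recordwise Pauli expansion for arbitrary CPTP faults. Your explicit bookkeeping $(-1)^{\beta\cdot x}=(-1)^{(\Pi\beta)\cdot(\Pi x)}$, which fixes the correction as $Z(\Pi\beta)$, spells out the ``known $Z$ frame'' that the paper leaves implicit; the paper also handles the reverse dual-to-primal orientation directly rather than through a preliminary transfer, which makes no difference to the argument.
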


\begin{proof}
For adjacent primal and dual codes in the transfer graph, prepare one dual zero target for each
primal source.  At active coordinate $j$, apply
\eqref{eq:program-free-CNOT} along the compiled matching between source and
target blocks.  Follow each matching by error correction on every
involved block, and apply ordinary memory to all idle source and target
blocks.  Starting from
$\sum_x\ket{x}_\mathrm{source}\ket{\phi_x}_R$, induction over the
$k_{\mathrm{CCZ}}$ active
coordinates gives \eqref{eq:aligned-entanglement}.  Thus the complete logical
$X$ readout of the sources produces only a known $Z$ frame on the targets.
The known frame is computed and applied while the targets are protected, and
an error correction cycle completes the transfer.  The reverse orientation uses primal plus
targets, complete logical $Z$ readout of the old dual sources, and the
corresponding active $X$ correction.  General endpoint encodings add only a constant
number of additional identity transfers along \eqref{eq:type-cycle} before
and after these $k_{\mathrm{CCZ}}$ rounds.

The $k_{\mathrm{CCZ}}$ matching rounds use one target block per source block and one
complete readout of each source block, not $k_{\mathrm{CCZ}}$ separate target banks.  The resource
preparation, $O(qk_{\mathrm{CCZ}})$ block windows of bounded depth, readout, protected feedback,
and final error correction have width $O(qn+F(n))$, time
  $O(k_{\mathrm{CCZ}}+g(n))$, and aggregate probability of the bad supports
$\operatorname{poly}(q,n)e^{-cn}$.  On the complementary support event, the
Pauli expansion conditioned on the actual measurement record in
\Cref{lem:recordwise-Pauli} proves the same
instrument identity for arbitrary supported CPTP faults and arbitrary
references.
\end{proof}

\begin{proposition}[Addressable logical operations]
\label{prop:specified-logical-gates}
For one layer of disjoint logical operations on $q$ active blocks, with
predetermined masks and matchings, logical Paulis, $H$, $S$,
$\CNOT$, $\CZ$, and $\CCZ$ can be applied on any specified subset of the
first $k_{\mathrm{CCZ}}$ logical coordinates.  Logical states can also be transferred between
blocks while preserving their coordinate labels, with the same boundary
condition.  On every support outside the stated bad family, the
physical instrument implements the ideal logical instrument with its actual
measurement record, preserves arbitrary reference entanglement, and produces
surviving output blocks satisfying the boundary condition
\eqref{eq:typed-boundary-map}.  Each call has
\begin{equation}
  \text{width }O\bigl(qn+F(n)\bigr),
  \qquad
  \text{time }O\bigl(k_{\mathrm{CCZ}}+g(n)\bigr),
  \label{eq:addressable-resource-bounds}
\end{equation}
and aggregate probability of the bad supports
$\operatorname{poly}(q,n)e^{-cn}$.
\end{proposition}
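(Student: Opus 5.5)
The proof will assemble the proposition from the constituent lemmas of this section, one gate type at a time, and then combine them. The costs come from \Cref{lem:mixed-stabilizer-resources} and \Cref{lem:aligned-transfer}. The correctness comes from the recordwise Pauli reduction of \Cref{lem:recordwise-Pauli} together with the composition rules of \Cref{prop:gadget-composition}.

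First, list a constant-length gadget sequence for each operation type acting on the active coordinates of a primal or dual block.
\begin{itemize}
\item Logical Paulis are physical applications of the representatives in \eqref{eq:specified-Pauli-labels}, with the masks restricted to the chosen subset.
\item $H$ and $S$ on a mask $\zeta$ come from \Cref{lem:selected-active-gates}. That lemma prepares an adjacent active zero scratch by \Cref{lem:mixed-stabilizer-resources}, applies $\SWAP_\zeta$ built from three masked CNOTs \eqref{eq:program-free-CNOT} and two physical-$H$ layers, applies the full active gate, and swaps back. The full $S$ consumes a $\mathcal E^\mathrm{act}(\ket{+i}^{\otimes k_{\CCZ}})$ resource followed by a destructive readout.
\item Addressable $\CNOT$ between two operand blocks aligned at the same coordinates is the masked $\mathsf C_\zeta(P_{j_1}\to D_{j_2})$, after logical state transfers (\Cref{lem:type-conversion}) bring the operands into an adjacent pair of encodings.
\item $\CZ$ is $\CNOT$ conjugated by addressable $H$ on the target.
\item $\CCZ$ is the transversal layer with the $\SWAP_{1-\zeta}$ selection of \Cref{lem:selected-active-gates}. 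Every pending Pauli frame is physically applied before this layer.
\item Transfers that preserve coordinate labels are \Cref{lem:aligned-transfer}.
\end{itemize}
Since the operations in one layer are disjoint and their masks are predetermined, all $q$ blocks execute the same constant-length schedule in parallel, with masks supplied as classical data.

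Second, check correctness on the good event. Each primitive, namely a masked transversal CNOT, a physical-$H$ layer, the transversal $\CCZ$, a resource preparation, a destructive readout, a Pauli correction, or an error-correction window, is a fault-tolerant gadget in the sense of \Cref{def:ft-gadget}:
\begin{itemize}
\item error-correction windows by \Cref{lem:ec-gadget};
\item readouts by \Cref{lem:destructive-logical-readout};
\item resources by \Cref{lem:mixed-stabilizer-resources};
\item bounded-depth transversal layers by light-cone spread followed by the correction window.
\end{itemize}
Sequential and parallel composition then shows that outside the union of bad families every surviving block meets \eqref{eq:typed-boundary-map}. The ideal logical identities were already established exactly, with arbitrary references, in the cited lemmas. \Cref{lem:instrument-normalization} upgrades the branchwise identities to the full instrument, including the outcome probabilities. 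Scratch blocks are independent of the data by the argument of \Cref{lem:selected-active-gates}, so discarding them unconditionally is harmless.

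Third, add up the resources. Each sequence has constant length, but the masked-matching steps (the transfers) cost $k_{\CCZ}$ rounds, and the resource preparations and syndrome or readout computations cost $O(g(n))$. The time is therefore $O(k_{\CCZ}+g(n))$. The width is $O(qn)$ for data and scratch banks plus $O(qn+F(n))$ for resources by \eqref{eq:mixed-resource-bounds}. The bad event is a union over $\operatorname{poly}(q,n)$ block windows and batches, each of weight $e^{-cn}$ or $\operatorname{poly}(n)e^{-cn}$, so \eqref{eq:enumerator-sum} gives $\operatorname{poly}(q,n)e^{-cn}$.

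I expect the main obstacle to be the non-Clifford $\CCZ$ step. There the boundary condition must hold on all three operands simultaneously before the transversal layer, with no deferred large Pauli frame. We must also verify that the transversal $\CCZ$, acting on inputs whose reduced weight is at most $t_*n$, produces an error that remains within the correctable span after one correction window. The plan is to rely on the statement in \Cref{lem:decoder-interface} that each gate segment preserves the codespace and acts on a constant number of blocks. Conjugating a Pauli error through the physical $\CCZ$ gives a Clifford error supported on its light cone of size $O(t_*n)$. Pauli-expanding that error by \Cref{lem:recordwise-Pauli} yields terms supported on the same cone. Measuring the checks then collapses each term to a Pauli representative by \Cref{lem:syndrome-measurement}, provided the radius $t_*$ has been chosen small enough that $3h\,t_*n$ remains erasure-correctable. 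This fixes the constraint on $t_*$ that the common choice of constants must satisfy.
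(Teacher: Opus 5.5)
Your proposal is correct and follows essentially the same assembly as the paper's proof. It cites the same constituent lemmas for each gate type, uses bounded support spread followed by an error correction window, applies Pauli frames physically before the transversal $\CCZ$, takes a union over $\operatorname{poly}(q,n)$ window families, and finishes with recordwise sequential and parallel composition. The only variation is that you obtain $\CZ$ by Hadamard-conjugating the masked $\CNOT$, whereas the paper reads it off directly from the bilinear slice in \Cref{lem:full-space-slice}.

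One small correction concerns your summary of \eqref{eq:selected-swap}. The layers $H_AH_B$ there must be the full active Hadamards, meaning physical $H$ followed by logical state transfers back to the original encodings. With bare physical-$H$ layers, the middle masked $\CNOT$ is forced to run from the new primal bank to the new dual bank. After undoing the Hadamards it acts in the same logical direction as the outer two, so the three masked CNOTs compose to a single masked $\CNOT$ rather than $\SWAP_\zeta$.
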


\begin{proof}
Active Paulis use \eqref{eq:specified-Pauli-labels}.  The matrices
$B_{\mathcal Q}$ define logical labels and never appear as uncharged physical
gates.  The masked CNOT and CZ are given by
\eqref{eq:program-free-CNOT} and
\Cref{lem:full-space-slice}.  Full active Hadamard follows from
\eqref{eq:full-H} and logical state transfer; full active $S$ uses the
canonical $Y$ resource and \eqref{eq:S-correction}.  The selected one-qubit
gates and selected $\CCZ$ use the scratch constructions following
\eqref{eq:selected-swap}.  Before the transversal $\CCZ$, all three operands
are encoded by $P_1,P_2,P_3$, respectively; afterward logical state transfers
produce their prescribed output encodings.  Physical
Hadamard, CNOT, CZ, $\CCZ$, and Pauli layers have bounded support spread.
The following error correction cycles reduce every output error to the
radius $t_*n$.

Apart from the explicit $O(k_{\mathrm{CCZ}})$ matching rounds, the construction uses a
constant number of logical state transfers, complete resource calls, source
readouts, and scratch banks.  The resources and their accumulated outputs are
counted by \Cref{lem:canonical-code-switching} and
\eqref{eq:mixed-resource-bounds}.  The remaining family of bad fault sets
contains $O(qk_{\mathrm{CCZ}})$ block windows, together with a polynomial number of readout,
waiting, feedback, and windows for the final correction.  Since
$k_{\mathrm{CCZ}}\le n$, their union
contributes $\operatorname{poly}(q,n)e^{-cn}$ and no new asymptotic space
term.  Memory error correction is applied to live blocks during every classical computation, and
every pending Pauli correction is computed and applied physically before a
later non-Clifford layer.

Condition on avoiding these bad support events and retain the complete
measurement record.  Each identity proved above holds term by term
in the left--right Pauli expansion of \Cref{lem:recordwise-Pauli}.  Sequential
and parallel composition therefore gives the asserted fault-tolerant
instrument and \eqref{eq:addressable-resource-bounds}.  The global register
permutation constructed in \Cref{sec:main-proof} uses the transfers above,
which preserve coordinate labels inside the dense schedule.
\end{proof}

%% file: sections/qltc-decoder.tex
\section{qLTC and single-shot decoding}
\label{sec:qltc-decoder}

The qLTC family is constructed in
Ref.~\cite{GoodQLTCTransversal2026}.  It has linear distance, constant
normalized soundness, positive rate for the primary block, and a transversal
physical $\CCZ$ described by a cohomological invariant form.  The local code
maps and expansion properties yield the decoder and memory results proved
below.  Polynomially many diagonal coordinates of the trilinear form are
used later for addressable gates and their schedule; the decoder itself uses
the local geometry in degrees two and four.

\subsection{The computational qLTC family}
\label{sec:code-family}

For every sufficiently large integer $d$, let $P_j(d)$, $j\in[3]$, be the
three primal CSS codes associated with the arguments of the transversal
$\CCZ$ trilinear form, and let $D_j(d)$ be their physical-$\mathsf H$ dual
codes.  These are the six codes in $\mathfrak C_d$ from
\eqref{eq:computational-code-set}.  Write $n_d$ for the combined binary length after applying the
constant length repetition code.  Let $n_j$ and $k_j$ be the physical and
logical dimensions of the $j$th block, and let $H_j$ be its logical space.
For each $\mathcal Q\in\mathfrak C_d$, the listed check maps are the two
physical syndrome maps given by the
$X$- and $Z$-check matrices after binary restriction and repetition.
Their row lists retain the dependent and zero rows in the construction.  Thus, for a listed
map $H:\mathbb F_2^{n_H}\to\mathbb F_2^{m_H}$, the quantity $m_H$ is its
actual number of listed rows.

The decoder treats two syndrome problems.  For a measured degree two
cochain syndrome, compatible tensor product identifications on each upward
star allow local corrections to be chosen in parallel.  The expansion
estimates proved below show that their combined action contracts the error.
For the dual error, the canonical kernel diagrams reduce the chain problem
to the corresponding degree four cochain problem.

\begin{theorem}[Good qLTCs with transversal logical $\CCZ$ gates]
\label{thm:qltc-family}
The qLTC family has the following properties.
\begin{enumerate}
\item Each block length satisfies $n_j=\Theta(n_d)$.  The listed check matrices
  have $O(n_d)$ rows, including their dependencies
  and zero rows.  The physical checks have bounded weight and incidence.
  Every block has
  distance at least $\delta n_d$ for a constant $\delta>0$, and the primary code
  has $k_1\ge\kappa n_d$ for a constant $\kappa>0$.  The auxiliary codes are
  used at their actual logical dimensions.
\item Before binary restriction and repetition, the underlying cubical
  complex has dimension $6$ and the physical cochain degree is two.  On every
  upward star, the coefficient spaces admit compatible simultaneous tensor
  product identifications under which the restriction maps have the form
  stated in the construction.  The canonical kernel diagrams have the
  corresponding dual identifications and the local reduction between chains
  and cochains used below.
\item The local codes in the degree two and three coefficient diagrams and
  the degree four and five canonical kernel diagrams obey uniform product
  expansion estimates.  The degree two and four instances used for memory
  are stated in \Cref{sec:degree-specific-memory}.  Their parallel face graphs
  obey the spectral estimate used there, and each component contains at
  least a constant fraction of the faces in its degree.
\item For either physical syndrome map $H$, let $T$ be the next check map,
  so that $TH=0$.  There is a constant $a_T>0$ such that every word
  $y\in\ker T$ of weight at most $a_Tn_d$ lies in $\operatorname{im}H$.
  Such a word has a preimage under $H$ of weight $O(|y|)$, computable in
  $O(\log n_d)$ depth.  These statements also hold for the physical
  $\mathsf H$ duals and after the constant length repetition.
\item A cohomological invariant form
  \[
    \tau_d:H_1\times H_2\times H_3\longrightarrow\mathbb F_2
  \]
  gives the action of a strictly transversal physical $\CCZ$ circuit on the
  three complete logical spaces after the same repetition.  There are
  constants $\sigma\in(0,1]$ and $c_\sigma>0$ for which
  \begin{equation}
    k_{\mathrm{CCZ}}(\tau_d):=\operatorname{subrank}(\tau_d),
    \qquad
    k_{\mathrm{CCZ}}(\tau_d)\ge c_\sigma n_d^\sigma
    \label{eq:qltc-family-subrank}
  \end{equation}
  on every sufficiently large member of the family.
\item The family has normalized soundness bounded below by a constant.  For
  every listed check map $H:\mathbb F_2^{n_H}\to\mathbb F_2^{m_H}$ after
  binary restriction and repetition,
  \begin{equation}
    \frac{|Hx|}{m_H}
    \ge \rho_0
       \frac{\operatorname{dist}(x,\ker H)}{n_d},
    \qquad
    m_H=\Theta(n_d),
    \label{eq:global-soundness}
  \end{equation}
  for a constant $\rho_0>0$.
\item The usable combined lengths are
  \begin{equation}
    n_d=n_*(Q^{2d}-1)
    \label{eq:exact-dense-length}
  \end{equation}
  for every sufficiently large $d$, where $n_*$ and the even prime power $Q$
  are constants determined by the construction.
\end{enumerate}
\end{theorem}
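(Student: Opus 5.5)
The statement collects properties of the family constructed in Ref.~\cite{GoodQLTCTransversal2026}. The plan is therefore to import that construction and verify each item. The only genuinely new work is checking that the binary restriction and the constant-length repetition preserve each property with constants changed only by factors depending on the construction.

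I would take the $6$-dimensional cubical complex over the even prime power $Q$ as given, with physical cochain degree two and the local coefficient codes as in the cited work. Items~2 and~7 are then read off from the construction. The length $n_d=n_*(Q^{2d}-1)$ is the number of degree-two cells times the local dimension and the repetition factor. The upward-star tensor identifications and the canonical kernel diagrams are part of the data, and Item~2 records them.

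For Items~1, 3 and~6, bounded check weight and incidence follow from bounded local degree of the complex. The count $m_H=O(n_d)$, including dependent and zero rows, follows from counting cells in degrees one and three. Linear distance, positive rate of the primary code, and the soundness \eqref{eq:global-soundness} are the main theorems of the cited work for the $\mathbb F_Q$-complex. I would transfer them through binary restriction: a trace-dual basis of $\mathbb F_Q$ over $\mathbb F_2$ multiplies weights by at most $\log_2Q$. Under the constant repetition, both $\operatorname{dist}(x,\ker H)$ and $|Hx|$ change by at most constant factors. The product expansion estimates and the spectral gap of the parallel face graphs are local, so these operations leave them unaffected.

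Item~4 is the cosystolic/small-set statement. I would derive it from local testability of the next map together with small-set expansion of the complex. A low-weight cocycle $y\in\ker T$ is decomposed into connected clusters of bounded-degree support. On each cluster, local product expansion supplies a local preimage. The tensor-product local structure makes the filling of each star solvable by a fixed linear map, and logarithmic-depth component labelling makes the preimage computable in $O(\log n_d)$ depth. Item~5 comes from the cited cup-product form $\tau_d$ and its polynomial subrank lower bound. The one check needed is that repetition keeps the physical $\CCZ$ strictly transversal, which I would verify by applying the gate diagonally on the repeated copies.

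I expect Item~4 to be the main obstacle, specifically its uniformity through binary restriction and repetition and the $O(|y|)$ weight bound. I would first try to prove it over $\mathbb F_Q$ via the local minimality argument of the cited work and then pull it back coordinatewise.
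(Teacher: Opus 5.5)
Importing Items~1, 2, 5 and~6 from the cited work and transferring them through binary restriction and repetition matches what the paper does. The genuine gap is Item~7, which you propose to ``read off from the construction.''

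\textbf{Item 7.} The identity $n_d=n_*(Q^{2d}-1)$ only says that the length equals $n_*$ times the cover degree. It therefore presupposes that, for \emph{every} sufficiently large $d$, there is a cover of degree exactly $Q^{2d}-1$ on which all the other items still hold. That family is not part of the cited construction: its nested tower may have lengths too sparse for the bounded ratio $n_{d+1}/n_d\le Q^2+1$ of \Cref{cor:bounded-length-ratios}. The main theorem needs that ratio to pick $n\in[AL,(Q^2+1)AL]$ and obtain strictly logarithmic depth.

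Building this family is the main part of the statement that the paper proves rather than cites, in \Cref{lem:usable-code-lengths}. For a fresh place $w_d$ of degree $d$, one shows by strong approximation that reduction $\Gamma\to\operatorname{SL}_2(\mathbb F_{Q^d})$ is onto. This needs a characteristic-two approximation argument and a norm-compensation step using two split places of coprime degrees $\nu_0,\nu_0+2$, because no degree-one split place exists. One then takes $\Gamma(d)=\rho_{w_d}^{-1}(U_d)$ with $U_d$ the unipotent Sylow $2$-subgroup, of index $Q^d(Q^{2d}-1)/Q^d=Q^{2d}-1$.

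Two further facts are essential and absent from your plan.
\begin{itemize}
\item The index is odd. This is what allows the cited pullback--transfer argument, together with the positive-rate theorem and the spectral statement for congruence covers, to be applied to these non-nested covers.
\item The injectivity radius tends to infinity. Short nonidentity elements have nonzero reduced trace, which survives reduction at $w_d$ for large $d$, whereas every conjugate of $U_d$ consists of trace-zero elements. Hence the local links, local codes and product-expansion constants of Items~2--3 are unchanged.
\end{itemize}
Relatedly, the parallel-face spectral bound and the component-size condition are global properties of each cover, not local ones. They are uniform here only because the $\Gamma(d)$ are congruence covers. Without this construction, Items~1, 3, 5 and~6 are available only along the original tower, not at the lengths asserted in Item~7.

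\textbf{Item 4.} Your sketch also fails as written, although the paper does not treat this item in detail either. Local testability of the next map $T$ only bounds $\operatorname{dist}(y,\ker T)$ and says nothing about $\ker T$ versus $\operatorname{im}H$. Moreover, a connected cluster of a word of weight $a_Tn_d$ may itself have size of order $a_Tn_d$, so product expansion on one upward star produces no preimage for it.

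What works is the small-set inequality in the adjacent degree: the degree-three analogue of \Cref{lem:small-set-inequality} on the cochain side, and the degree-five kernel diagram on the chain side, both covered by Item~3. Take a block-reduced representative $\bar y$ of $y+\operatorname{im}H$. Then $|\bar y|\le|y|$, so $\bar y$ lies inside the small-set radius, and $|\bar y|\le C|T\bar y|=C|Ty|=0$; that is, $y\in\operatorname{im}H$. Soundness of $H$ itself then gives a preimage of weight at most $n_d|y|/(\rho_0m_H)=O(|y|)$. The exact $O(\log n_d)$-depth decoders of \Cref{lem:parallel-cochain} and \Cref{lem:chain-reduction} compute one.
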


Ref.~\cite{GoodQLTCTransversal2026} establishes the block parameters, the compatible local
code identifications, the transversal gate, the polynomial subrank bound,
and the soundness bound.  The product expansion and spectral estimates
needed by the decoder are verified below.  Section~\ref{sec:bounded-code-lengths}
constructs a sequence of congruence covers for which consecutive lengths
have bounded ratio; the covers need not be nested.  The local code maps and
the reduction between chains and cochains give the decoder estimates.

In the cohomological convention used for the computation, one CSS error type is described
by
\begin{equation}
  C^1\xrightarrow{\delta^1}C^2\xrightarrow{\delta^2}C^3,
  \qquad
  (H,G)=(\delta^2,\delta^1),
  \label{eq:cohomological-complex}
\end{equation}
and the other by
\begin{equation}
  C_3\xrightarrow{\partial_3}C_2\xrightarrow{\partial_2}C_1,
  \qquad
  (H,G)=(\partial_2,\partial_3).
  \label{eq:homological-complex}
\end{equation}
Thus $H$ is the measured syndrome map, $\im G$ is the stabilizer space for
the same Pauli type, and $HG=0$.  Binary restriction in trace dual bases, the repetition
conversion, and physical Hadamard change weights and local neighborhoods
only by constant factors.

\subsection{The logical \texorpdfstring{$\CCZ$}{CCZ} action}

Let $\sigma\in(0,1]$ and $c_\sigma>0$ be the constants supplied by
\Cref{thm:qltc-family}.  On every sufficiently large member, choose injections
\begin{equation}
  i_j:\mathbb F_2^{k_{\mathrm{CCZ}}(\tau_d)}\longrightarrow H_j,
  \qquad j\in[3],
  \label{eq:subrank-injections}
\end{equation}
with
\begin{equation}
  k_{\mathrm{CCZ}}(\tau_d)\ge c_\sigma n_d^\sigma,
  \qquad
  \tau_d(i_1x,i_2y,i_3z)
  =\sum_{v=1}^{k_{\mathrm{CCZ}}(\tau_d)}x_vy_vz_v
  \label{eq:subrank-diagonal}
\end{equation}
for all $x,y,z\in\mathbb F_2^{k_{\mathrm{CCZ}}(\tau_d)}$.

The injections and extensions of their images to logical bases are
compilation data.  Their physical implementation in
\Cref{sec:addressable-logic} uses CNOT circuits on block indices.  Each
auxiliary logical space contains the resulting $k_{\mathrm{CCZ}}(\tau_d)$-dimensional active
subspace, while the primary rate in \Cref{thm:qltc-family} gives the dense
storage capacity.

On the images of the three injections, the transversal operation
acts as
\begin{equation}
  \ket{x}\ket{y}\ket{z}\longmapsto
  (-1)^{\sum_{v=1}^{k_{\mathrm{CCZ}}(\tau_d)}x_vy_vz_v}
  \ket{x}\ket{y}\ket{z}.
  \label{eq:parallel-CCZ}
\end{equation}
It therefore applies $k_{\mathrm{CCZ}}(\tau_d)$ logical $\CCZ$ gates in parallel.  The matrices in
\eqref{eq:subrank-injections} are used as static compilation data below.

For the decoder analysis, fix one member of the family and write $n=n_d$.
The denominator in \Cref{eq:global-soundness} is the combined repeated
length $n$; using an individual block length changes only constants.  The
decoder and memory estimates below follow from the local geometry in degrees
two and four, with constants uniform over the family.

\subsection{Single-shot decoding}
\label{sec:single-shot-decoder}

We derive the decoder only in the two degrees used by the physical memory.
Degree two contains the measured cochain syndrome.  Degree four contains the
cochain problem obtained from the dual chain error.  The local tensor product
identifications and their duals are uniform over the family.  The
same is true of the local expansion and parallel face spectral bounds.
Coordinate changes inside a stalk preserve block support.

The small set inequality gives a restriction flip for a small reduced
cochain, and parallel colored flips contract the cochain syndrome problem.
For the chain syndrome problem, the upward view complex reduces the error to
degree four cochain decoding and reconstructs a global chain.  These two
decoders give the error correction and memory statements with a stored syndrome used
in the preceding sections.

\subsubsection{Expansion for small cochains}
\label{sec:degree-specific-memory}

Let $Z_j$ be the number of $j$-faces.  For $j=2,4$, the face and stalk
dimensions give constants $c_Z,C_Z>0$, uniform over the six codes in
$\mathfrak C_d$, such that
\begin{equation}
 c_Zn\le Z_j\le C_Zn.
 \label{eq:face-count-physical-length}
\end{equation}
The weight $|x|$ of a cochain is its Hamming support in face blocks.
A cochain is block reduced if it has
minimum block support in its coset modulo coboundaries.  It is locally
minimal on an upward link if no coboundary supported in that link reduces
the restricted support.  Every cochain reduced in block support is locally minimal on
each upward link, since a local improvement could be extended by zero to a
global improvement.

Let $d_-$ and $d_+$ be the minimum and maximum local degrees,
and put $\chi=d_+/d_-$.  For $0\le u<s\le6$, let
$\kappa_{s,u}>0$ be the uniform product expansion factor in degree $u$ on an
$s$-dimensional upward link, for both the coefficient diagrams and the
canonical kernel diagrams of the codes.  Thus every locally reduced
$u$-cochain $y$ obeys
\begin{equation}
 |\delta y|\ge \kappa_{s,u}d_-|y|.
 \label{eq:local-product-robustness}
\end{equation}
For a component $\mathcal C$ of a parallel face graph, let
$M_{\mathcal C}$ be its normalized adjacency operator.  Let $\lambda$ be
the common spectral bound on the subspace orthogonal to the constant and
bipartition functions.
Thus, for every component $\mathcal C$ and every nonnegative function $f$ on it,
\begin{equation}
 \langle f,M_{\mathcal C}f\rangle
 \le \lambda\lVert f\rVert_2^2
   +\frac1{|\mathcal C|}\left(\sum_{v\in\mathcal C}f(v)\right)^2.
 \label{eq:parallel-face-spectral}
\end{equation}
All these constants are determined by the code construction and are
independent of the cover.  For an $\ell$-face $g$, let
$X_{\ge g}(j)$ be the $j$-faces containing $g$, let
$\mathsf{Nb}_g(j)$ be the adjacent $j$-faces meeting this upward link, and
let $\mathsf{Op}_g(j)$ be the $j$-faces reached across one parallel face
edge.  These are the local sets used in the decoder of
Ref.~\cite{NguyenPattison2025}.

\begin{lemma}[Mixing in the parallel face graphs]
\label{lem:support-mixing}
Let $\mathcal A$ be a set of $j$-faces, where $j=2$ for a coefficient
diagram or $j=4$ for the canonical kernel diagram of the code.  Let
$0\le\ell\le j$.  If $f_\ell(g)$ is the number of faces of
$\mathcal A$ containing the $\ell$-face $g$, define
\begin{equation}
 S_\ell(\mathcal A)=
 \sum_{u\in\mathcal A}\ 
 \sum_{\substack{g\preceq u\\\dim g=\ell}}
 |\mathcal A\cap\mathsf{Op}_g(j)|.
 \label{eq:opposite-incidence-sum}
\end{equation}
Put
\begin{equation}
\begin{aligned}
 F_{j\ell}&=\binom j\ell 2^{j-\ell},\\
 C_{j\ell}&=(6-\ell)F_{j\ell}
              \binom{6-\ell}{j-\ell},\\
 B_{j\ell}&=(6-\ell)F_{j\ell}^2
              2^{5-j}\binom6j \chi^\ell.
\end{aligned}
\label{eq:support-mixing-constants}
\end{equation}
Then
\begin{equation}
 S_\ell(\mathcal A)\le d_+^{j+1-\ell}
 \left(
 C_{j\ell}\lambda |\mathcal A|
 +B_{j\ell}\frac{|\mathcal A|^2}{Z_j}
 \right).
 \label{eq:opposite-incidence-bound}
\end{equation}
\end{lemma}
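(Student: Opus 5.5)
The plan is to rewrite $S_\ell(\mathcal A)$ as a sum of quadratic forms of the normalized adjacency operators of the parallel face graphs, one for each $\ell$-face $g$, and then apply the spectral estimate \eqref{eq:parallel-face-spectral} component by component. First I would exchange the order of summation in \eqref{eq:opposite-incidence-sum}:
\[
  S_\ell(\mathcal A)=\sum_{\dim g=\ell}\ \sum_{u\in\mathcal A\cap X_{\ge g}(j)}|\mathcal A\cap\mathsf{Op}_g(j)|.
\]
Next, for a fixed $g$, I would decompose the pairs $(u,u')$ with $u\ni g$ and $u'\in\mathsf{Op}_g(j)$ by the direction data. In a cubical complex, a $j$-face containing $g$ is determined by a choice of $j-\ell$ of the $6-\ell$ free directions at $g$, together with signs. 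This produces the factors $\binom{6-\ell}{j-\ell}$, $2^{j-\ell}$ and $\binom j\ell$. Each such pair is then an edge of one of at most $(6-\ell)$ parallel face graphs, namely the one indexed by the direction crossed.

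Summing over $g$ and collecting terms, I expect to obtain an upper bound of the form
\[
  S_\ell(\mathcal A)\le (6-\ell)F_{j\ell}\,d_+^{\,j+1-\ell}\sum_{\mathcal C}\langle \mathbf 1_{\mathcal A},M_{\mathcal C}\mathbf 1_{\mathcal A}\rangle ,
\]
up to the counting of multiplicities. Here $d_+^{\,j+1-\ell}$ bounds the number of upward and opposite extensions per incidence, and hence the gap between the unnormalized edge counts and the normalized operator $M_{\mathcal C}$. I would take $f=\mathbf 1_{\mathcal A\cap\mathcal C}$ and apply \eqref{eq:parallel-face-spectral}. The $\lambda$ term sums over components to at most $\binom{6-\ell}{j-\ell}\lambda|\mathcal A|$ times the prefactor, since each face of $\mathcal A$ lies in boundedly many components (one per direction class). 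This yields $C_{j\ell}$.

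For the mean term $|\mathcal A\cap\mathcal C|^2/|\mathcal C|$, I would use the hypothesis from \Cref{thm:qltc-family} that every component contains a constant fraction of the $j$-faces. Specifically, I would use $|\mathcal C|\ge Z_j/(2^{5-j}\binom6j)$, which comes from counting face types, and $\sum_{\mathcal C}|\mathcal A\cap\mathcal C|^2\le(\sum_{\mathcal C}|\mathcal A\cap\mathcal C|)^2$. A second factor $F_{j\ell}$ then arises because each face is counted once per incidence, and $\chi^\ell$ arises from converting between the degree normalizations $d_-$ and $d_+$ along the $\ell$ contained directions. Together these give $B_{j\ell}$.

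The main obstacle is the bookkeeping between unnormalized incidences and the normalized operators $M_{\mathcal C}$. Degrees are not uniform, which is why $\chi^\ell$ appears. The bipartition eigenvalue excluded in \eqref{eq:parallel-face-spectral} must also be handled, and since $f\ge0$ the displayed form already absorbs it. I would first verify the exponent $j+1-\ell$ by counting, for a fixed $u$ and $g\preceq u$, the number of $u'\in\mathsf{Op}_g(j)$ as at most $d_+^{\,j+1-\ell}$ times the adjacency weight. If the constants do not match exactly, I would settle for the same form with the displayed constants as upper bounds, since only their uniformity is used later.
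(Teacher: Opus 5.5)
Your proposal has a genuine gap at the step where you treat each pair $(u,u')$ with $g\preceq u$ and $u'\in\mathsf{Op}_g(j)$ as an edge of a parallel face graph, and then apply \eqref{eq:parallel-face-spectral} to $f=\mathbf 1_{\mathcal A\cap\mathcal C}$ on components of $j$-faces. For $\ell<j$ this identification fails. The set $\mathsf{Op}_g(j)$ depends only on $g$, so $u'$ ranges over all $j$-faces through an $\ell$-face $g'$ adjacent to $g$, independently of $u$. The two $j$-faces can have different direction types; for example, take $\ell=0$, $j=2$, and squares in directions $\{1,2\}$ and $\{4,5\}$ at vertices joined by a direction-$3$ edge. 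In general they are not opposite faces of a $(j+1)$-cube.

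What $S_\ell(\mathcal A)$ actually counts is paths $u\searrow g\to g'\nearrow u'$. Equivalently, it is the quadratic form at $\mathbf 1_{\mathcal A}$ of the composite operator $\mathsf{Inc}^{T}A_\ell\,\mathsf{Inc}$. Here $\mathsf{Inc}$ maps functions on $j$-faces to functions on $\ell$-faces by summing over incident faces, and $A_\ell$ is the unnormalized parallel face adjacency on $\ell$-faces. No spectral bound for that composite operator on $j$-faces is available. Your argument is correct only in the case $\ell=j$, where the two coincide.

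The missing idea is the function $f_\ell$ that the statement defines and your proposal never uses. Since $\mathsf{Inc}\,\mathbf 1_{\mathcal A}=f_\ell$, you get
\[
S_\ell(\mathcal A)\le\sum_g\sum_{g'\sim g}f_\ell(g)f_\ell(g').
\]
The spectral estimate must be applied to $f_\ell$ on each component of the parallel face graph of $\ell$-faces, separately for each of the $6-\ell$ inactive directions. The normalization of $M_{\mathcal C}$ contributes one factor $d_+$.

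With this choice the constants have a genuine source.
\begin{itemize}
\item The nonconstant part is at most $\lambda\lVert f_\ell\rVert_2^2\le\lambda(\max f_\ell)\sum_g f_\ell(g)\le\lambda\binom{6-\ell}{j-\ell}d_+^{j-\ell}F_{j\ell}|\mathcal A|$. This gives $C_{j\ell}$ and the exponent $j+1-\ell$.
\item The constant part is $\bigl(\sum_g f_\ell(g)\bigr)^2/|\mathcal C|=F_{j\ell}^2|\mathcal A|^2/|\mathcal C|$, where $|\mathcal C|$ is a constant fraction of the $\ell$-faces of its type.
\item Converting that count of $\ell$-faces into a multiple of $Z_j$ supplies the remaining $d_+^{j-\ell}$, the factor $\chi^\ell$, and the type-counting factor in $B_{j\ell}$.
\end{itemize}
In your version, the second factor $F_{j\ell}$ and the factor $\chi^\ell$ are inserted by hand, which is a symptom of applying the estimate to the wrong function on the wrong graph.
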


\begin{proof}
Count a path from a $j$-face in $\mathcal A$ down to an $\ell$-face, across
one parallel face edge, and back to a $j$-face.  The number of downward
choices is $F_{j\ell}$, and the anisotropic upward multiplicities are bounded
by the displayed powers of $d_+$.  On each component of the parallel face graph, apply
the spectral estimate to the nonnegative function $f_\ell$.  Its
nonconstant part contributes the first term of
\eqref{eq:opposite-incidence-bound}.  The constant part contributes the
second term because every component contains a constant positive fraction of
all $\ell$-faces.  Summing over types and inactive directions gives
\eqref{eq:support-mixing-constants}.  In particular,
\begin{equation}
  \frac{B_{j\ell}}{C_{j\ell}}
  =\binom6\ell 2^{5-\ell}\chi^\ell
  \le120\chi^j.
  \label{eq:support-mixing-ratio}
\end{equation}
\end{proof}

\begin{lemma}
\label{lem:local-restriction}
Let $x$ be block reduced and let $0<\epsilon\le1$.  If no restriction
$w=x|_{X_{\ge g}(j)}$ obeys
\begin{equation}
 |\delta x|-|\delta x+\delta w|
 >(1-\epsilon)|\delta w|,
 \label{eq:restriction-flip}
\end{equation}
then, for every $\ell$-face $g$,
\begin{equation}
 \kappa_{6-\ell,j-\ell}d_-
 |x|_{X_{\ge g}(j)}
 \le \frac{2}{\epsilon}
 \left(
 |x|_{\mathsf{Nb}_g(j)}
 +|x|_{\mathsf{Op}_g(j)}
 \right).
 \label{eq:local-restriction-bound}
\end{equation}
\end{lemma}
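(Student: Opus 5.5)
The plan is to prove the contrapositive form directly: fix an $\ell$-face $g$ and the restriction $w=x|_{X_{\ge g}(j)}$. The hypothesis, applied to this $w$, gives
\[
|\delta x|-|\delta x+\delta w|\le(1-\epsilon)|\delta w|.
\]
I will lower-bound the left side by $|\delta w|$ minus twice the part of $\delta w$ that overlaps $\delta(x-w)$. Write $x=w+x'$ with $x'$ supported off the upward star of $g$. Then $\delta x+\delta w=\delta x'$. A $(j+1)$-face $v$ in $\supp\delta w$ is counted by $|\delta x|$ but not by $|\delta x'|$ unless $(\delta x')_v\ne0$. Hence
\[
|\delta x|-|\delta x'|\ge|\delta w|-2\,|\supp\delta w\cap\supp\delta x'|.
\]
Combined with the hypothesis this yields
\[
\epsilon|\delta w|\le 2\,|\supp\delta w\cap\supp\delta x'|.
\]

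Next I bound the two sides. For the left side, I view $w$ as a $(j-\ell)$-cochain on the $(6-\ell)$-dimensional upward link of $g$. Block reducedness of $x$ implies local minimality on that link, as noted before \Cref{lem:support-mixing}. The link coboundary of $w$ is contained in $\delta w$ restricted to faces containing $g$, and coordinate changes in a stalk preserve block support. So \eqref{eq:local-product-robustness} with $(s,u)=(6-\ell,j-\ell)$ gives
\[
|\delta w|\ge\kappa_{6-\ell,j-\ell}\,d_-\,|x|_{X_{\ge g}(j)}.
\]
For the right side, every $(j+1)$-face $v$ in the overlap lies in the coboundary of $w$ and of $x'$. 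It therefore contains a $j$-face of $\supp x'$ that is adjacent to the upward star of $g$ through $v$. Such a face either still meets the upward link of $g$ in a lower-dimensional face, and so lies in $\mathsf{Nb}_g(j)$, or it is a face parallel to one containing $g$ across the $(j+1)$-face, and so lies in $\mathsf{Op}_g(j)$. Each such $j$-face is charged by a bounded number of $v$. I will absorb that bounded number into the definition of the neighborhood count, or else into $d_-$ via the normalization implicit in the statement. This yields
\[
|\supp\delta w\cap\supp\delta x'|\le |x|_{\mathsf{Nb}_g(j)}+|x|_{\mathsf{Op}_g(j)}.
\]
Dividing by $\epsilon$ gives \eqref{eq:local-restriction-bound}.

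The main obstacle is the charging step in the right-side bound. I must check, in the cubical geometry, that every $j$-face outside $X_{\ge g}(j)$ sharing a $(j+1)$-face with some face of $X_{\ge g}(j)$ is in $\mathsf{Nb}_g(j)\cup\mathsf{Op}_g(j)$. I also need the multiplicity of such incidences to be exactly one, or absorbed so as to leave the constant $2$. In a cube $v$ containing $g$, a $j$-face of $v$ not containing $g$ either meets $g$ partially, giving $\mathsf{Nb}$, or is disjoint from $g$ and parallel to a $g$-containing face, giving $\mathsf{Op}$. So the case split follows the intersection $u\cap g$. If the multiplicity is not one, I will read the local set counts in the statement as counted with incidence multiplicity, consistent with their use in \Cref{lem:support-mixing}.
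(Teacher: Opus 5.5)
Your argument is the paper's own. Local minimality on the $(6-\ell)$-dimensional upward link together with product expansion gives $|\delta w|\ge\kappa_{6-\ell,j-\ell}d_-|x|_{X_{\ge g}(j)}$. The failed flip condition then gives $\epsilon|\delta w|\le 2|\supp\delta w\cap\supp\delta x'|$ by the triangle inequality, and the overlap is charged to exterior faces in $\mathsf{Nb}_g(j)\cup\mathsf{Op}_g(j)$.

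The one step you leave open is the multiplicity, and your fallbacks are not admissible. $d_-$ is a fixed minimum degree, and the statement fixes both the constant $2$ and the plain support counts, so a multiplicity can be neither absorbed into $d_-$ nor defined away.

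The paper closes this step with the fact that two faces of the cubical complex meet in at most one face. Hence an exterior $j$-face $u$ lies in at most one $(j+1)$-face containing $g$. Your own case split already shows this:
\begin{itemize}
\item if $u$ meets $g$ partially, the coface is forced to be $u$ extended by the edge of $g$ in the direction that $u$ lacks;
\item if $u$ is opposite, the coface is forced to be $u$ extended by the unique edge joining the positions of $u$ and $g$ in the single direction where they differ.
\end{itemize}
This makes your charging map injective, so the constant $2$ comes out exactly.
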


\begin{proof}
Use the simultaneous tensor product identification on the upward link of
$g$.  The restriction
of $x$ is locally minimal there.  Local product expansion gives a
coboundary supported in the same link unless the restricted support is
bounded by the neighboring and opposite supports.  The fact that two
relevant faces intersect in at most one face ensures that every exterior
$j$-face contributing to
the changed syndrome is counted in only one relevant coface.  Applying the
triangle inequality to the failure of \eqref{eq:restriction-flip} gives
\eqref{eq:local-restriction-bound}.  Coordinate changes within a stalk do
not alter block support, so the same inequality holds in the actual
coefficient diagram.
\end{proof}

Summing \eqref{eq:local-restriction-bound} over all faces and substituting
\eqref{eq:opposite-incidence-bound} yields the next lemma.  No parameter in
these inequalities grows with the cover.

\begin{lemma}[Small set inequality]
\label{lem:small-set-inequality}
For $j=2$ in a coefficient diagram and $j=4$ in the canonical kernel
diagram of the code, there are constants
$L_j,Q_j,R_j>0$, independent of the cover, such that every
cochain $x$ reduced in block support satisfies
\begin{equation}
  |x|
  \le \lambda L_j|x|
     +Q_j\frac{|x|^2}{Z_j}
     +R_j|\delta x|.
  \label{eq:small-set-inequality}
\end{equation}
The construction parameters satisfy
$\lambda L_j<1$.  Consequently there are constants $\vartheta_j,C_j>0$
such that
\begin{equation}
  |x|\le \vartheta_jZ_j
  \quad\Longrightarrow\quad
  |x|\le C_j|\delta x|.
  \label{eq:small-set-expansion}
\end{equation}
\end{lemma}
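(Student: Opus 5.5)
The plan is to sum the local restriction bound of \Cref{lem:local-restriction} over all faces of every dimension $\ell\le j$, control the opposite-face terms with the mixing estimate of \Cref{lem:support-mixing}, and then absorb small terms into the left-hand side. Fix a block reduced $x$ and a constant $\epsilon\in(0,1]$ to be chosen. There are two cases.

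\textbf{Case 1: a restriction flip \eqref{eq:restriction-flip} exists.} Let $w=x|_{X_{\ge g}(j)}$ be such a flip. Since $x$ is block reduced and locally minimal, local product expansion \eqref{eq:local-product-robustness} gives $|\delta w|\ge\kappa d_-|w|$. Because $w$ is a restriction of $x$, we also have $|\delta w|\le(j+1)d_+|x|$ up to constants.

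I would not use the flip to improve $x$. Instead I would reduce to Case 2 as follows. If $|\delta x|\ge c|x|$ for a suitable constant $c$, the inequality \eqref{eq:small-set-inequality} holds trivially with $R_j=1/c$. Otherwise I would apply the Case 2 argument to the faces where no flip exists. The cleaner route is to state the summed inequality so that it uses \eqref{eq:local-restriction-bound} only where it is valid. The remaining faces are charged to $|\delta x|$: whenever \eqref{eq:restriction-flip} holds we have $|x|_{X_{\ge g}}\le|w|\le\kappa^{-1}d_-^{-1}|\delta w|\le(\kappa d_-\epsilon')^{-1}\cdot\#(\delta x\text{ on the link})$. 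This holds because \eqref{eq:restriction-flip} forces a constant fraction of $\delta w$ to lie in $\supp\delta x$. Summing over $g$ gives a contribution $O(|\delta x|)$, with bounded overlap of links.

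\textbf{Case 2: faces with no flip.} Multiply \eqref{eq:local-restriction-bound} by $1$ and sum over all $\ell$-faces $g$, for each $0\le\ell<j$.

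On the left, each $j$-face $u$ lies in $X_{\ge g}$ for exactly $F_{j\ell}/2^{j-\ell}=\binom j\ell$ faces $g$ of dimension $\ell$. So the left side is at least $\kappa d_-\binom j\ell|x|$.

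On the right there are two terms. The neighbour term $\sum_g|x|_{\mathsf{Nb}_g(j)}$ counts the $j$-faces adjacent to the link but not in it. Via the downward-upward incidence structure, I expect this to be either bounded by incidence counts times $|\delta x|$ or absorbed into the same opposite incidence sum. This is the step where I must check the geometry of $\mathsf{Nb}$ against the definition in Ref.~\cite{NguyenPattison2025}. The opposite term is exactly $S_\ell(\mathcal A)$ with $\mathcal A=\supp x$. By \eqref{eq:opposite-incidence-bound} it is at most $d_+^{j+1-\ell}(C_{j\ell}\lambda|x|+B_{j\ell}|x|^2/Z_j)$.

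Dividing by $\kappa d_-\binom j\ell$ then gives \eqref{eq:small-set-inequality}, with
\[
L_j\propto \frac{2}{\epsilon}\,\frac{C_{j\ell}\chi^{j+1-\ell}d_-^{j-\ell}}{\kappa_{6-\ell,j-\ell}\binom j\ell},
\]
and $Q_j$ analogous with $B_{j\ell}$. Normalizing the degree factors $d_\pm^{j-\ell}$ consistently, which \Cref{lem:support-mixing} implicitly does, is a bookkeeping point. The hypothesis $\lambda L_j<1$ is a property of the construction parameters and is taken as given.

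\textbf{Consequence \eqref{eq:small-set-expansion}.} Rearranging gives
\[
(1-\lambda L_j-Q_j|x|/Z_j)\,|x|\le R_j|\delta x|.
\]
Choose $\vartheta_j=(1-\lambda L_j)/(2Q_j)$. Then $|x|\le\vartheta_jZ_j$ gives the result with $C_j=2R_j/(1-\lambda L_j)$.

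\textbf{Main obstacle.} The hardest step is the Case 1 charging together with the handling of the $\mathsf{Nb}$ term. In both places one must show that the syndrome changes are counted with bounded multiplicity, using the fact that two relevant faces meet in at most one face. I would first try to prove the bounded multiplicity directly from the local tensor product identifications.
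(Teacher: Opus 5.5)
There is a genuine gap, and it is the step you flag yourself: the neighbour term $\sum_g|x|_{\mathsf{Nb}_g(j)}$ is never controlled, and neither of your suggested fixes works.

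A face of $\mathsf{Nb}_g(j)$ is a facet of a $(j+1)$-face $f\supseteq g$ that meets $g$ in a lower-dimensional face, rather than lying across a parallel face edge. Whether such faces occur depends on $\supp x$, not on $\delta x$. Counting incidences therefore bounds the sum only by a constant times $d_+|x|$. That is the same order as your left-hand side $\kappa_{6-\ell,j-\ell}d_-F_{j\ell}|x|$, and the factor $2/\epsilon\ge2$ sits on the wrong side. So the term can neither be charged to $|\delta x|$ nor absorbed.

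It also cannot be merged into $S_\ell(\supp x)$. \Cref{lem:support-mixing} controls only pairs of faces joined across a parallel face edge, and the spectral parameter $\lambda$ says nothing about faces that share a lower face of $g$. Without this step your summed inequality does not close. For the same reason, your one-level constant $L_j$, which carries a factor $2/\epsilon$, is not the quantity to which the construction's margin $2^{108}\lambda\mathcal L<\tfrac14$ applies.

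The paper handles the neighbour term by a recurrence over face dimensions. Each neighbour face of $g$ lies in the upward link of a facet of $g$, so the paper applies local product expansion again on those lower-dimensional links. It descends from dimension $j$ toward vertices, whose upward links contain only opposite faces. At every level it retains the syndrome term and sends the opposite-face term to \Cref{lem:support-mixing}.

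This is what $b_{j\ell}$ in \eqref{eq:small-set-recursion-constant} records:
\begin{itemize}
\item $\prod_{u=\ell}^{j}\kappa_{6-u,j-u}^{-1}$ accounts for the robustness loss at each intermediate degree;
\item $(j-\ell)!$ accounts for the order in which directions are removed;
\item $\chi^{j-\ell+1}$ accounts for the powers of the incidence ratio.
\end{itemize}
Then $L_j=\sum_\ell b_{j\ell}$, $Q_j=120\chi^jL_j$ comes from \eqref{eq:support-mixing-ratio}, and $R_j$ comes from counting each syndrome face $O(1)$ times.

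The paper also uses no flip dichotomy here. On each link it splits the support of $\delta w$ into faces inside and outside $\supp\delta x$, which retains the syndrome directly. That is why its $L_j$ has no $2/\epsilon$ factor; those factors appear only in $\Lambda_j(\epsilon)$ for \Cref{cor:restriction-flip}. Your Case 1 charging is valid, since a flip forces $|\delta w|$ to lie mostly inside $\supp\delta x$, but it is unnecessary.

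Two small corrections:
\begin{itemize}
\item A $j$-face has $F_{j\ell}=\binom j\ell2^{j-\ell}$ faces of dimension $\ell$, not $\binom j\ell$.
\item The opposite sum must be restricted to $\ell$-faces with nonzero local view. There it is bounded by, not equal to, $S_\ell(\supp x)$; summed over all $g$ it would lose the factor $\lambda$.
\end{itemize}
Your final absorption step agrees with the paper.
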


\begin{proof}
For a cochain $x$ reduced in block support, apply local product expansion in every upward link.
The same calculation as in \Cref{lem:local-restriction}, now retaining the
syndrome term, gives a recurrence from dimension $j$ down to dimension
$\ell$.  With
\begin{equation}
 b_{j\ell}=
 \frac{
 \chi^{j-\ell+1}\binom j\ell\binom{6-\ell}{j-\ell}
 (6-\ell)2^{j-\ell}(j-\ell)!
 }{
 \displaystyle\prod_{u=\ell}^j\kappa_{6-u,j-u}
 },
 \label{eq:small-set-recursion-constant}
\end{equation}
and, for $0<\epsilon\le1$, define
\begin{align}
 \Lambda_j(\epsilon)
   &=\sum_{\ell=0}^j
       (2/\epsilon)^{j-\ell+1}b_{j\ell},
   \label{eq:restriction-linear-constant}\\
 \Theta_j(\epsilon)
   &=120\chi^j\Lambda_j(\epsilon),
   \label{eq:restriction-quadratic-constant}\\
 R_j
   &=\sum_{\ell=0}^j
     \frac{
       (j-\ell)!\chi^{j-\ell}\binom{j+1}{\ell}
       2^{j+1-\ell}\binom{6-\ell}{j-\ell}
     }{
       d_-\displaystyle\prod_{u=\ell}^j\kappa_{6-u,j-u}
     }.
   \label{eq:restriction-syndrome-constant}
\end{align}
One may take the following constants:
\begin{equation}
 L_j=\sum_{\ell=0}^j b_{j\ell},
 \qquad
 Q_j=120\chi^j L_j,
 \label{eq:small-set-linear-quadratic}
\end{equation}
The factors in \eqref{eq:small-set-recursion-constant} count the order in
which coordinates are removed, the powers of the incidence ratio $\chi$,
and the robustness loss at each local degree.  Equation
\eqref{eq:support-mixing-ratio} bounds the quadratic contribution.  A
syndrome face is counted only a constant number of times, giving
\eqref{eq:restriction-syndrome-constant}.  Summing the recurrence gives
\eqref{eq:small-set-inequality}.

Put
\begin{equation}
  \mathcal L=\max_{2\le u\le5}L_u.
  \label{eq:common-local-constant}
\end{equation}
Set $\epsilon_0=2^{-17}$.  Since $2/\epsilon_0=2^{18}$, the definitions above
give, for $2\le j\le5$,
\begin{equation}
  \Lambda_j(\epsilon_0)
  \le (2/\epsilon_0)^{j+1}L_j
  \le 2^{108}\mathcal L.
  \label{eq:restriction-linear-uniform-bound}
\end{equation}
For this construction, the face spectrum and local robustness estimates
of Ref.~\cite{GoodQLTCTransversal2026} give the explicit uniform bound
\begin{equation}
  2^{108}\lambda\mathcal L
  <2^{239627}512^{-239521}
  =2^{-1916062}<\frac14.
  \label{eq:explicit-degree-specific-margin}
\end{equation}
The same estimates hold for every cover in the sequence because the local diagrams
and the spectral bound for the face graphs are independent of the cover.  Hence the
memory margins in the two required degrees are
\begin{equation}
  \lambda\Lambda_2(\epsilon_0)\le1-\zeta_2,
  \qquad
  \lambda\Lambda_4(\epsilon_0)\le1-\zeta_4
  \label{eq:degree-specific-memory-margins}
\end{equation}
with the choice $\zeta_2=\zeta_4=3/4$.  In particular, in either
required degree,
\begin{equation}
  \lambda\Lambda_j(\epsilon_0)<1.
  \label{eq:retained-spectral-margin}
\end{equation}
In particular, $\lambda L_j<1$.  Taking
\begin{equation}
 \vartheta_j=\frac{1-\lambda L_j}{2Q_j},
 \qquad
 C_j=\frac{2R_j}{1-\lambda L_j}
 \label{eq:small-set-constants}
\end{equation}
makes the quadratic term at most half of
$(1-\lambda L_j)|x|$.  Absorbing the linear and quadratic terms on
the left proves \eqref{eq:small-set-expansion}.  All constants are uniform
because the local diagrams and degree ratios remain unchanged as the cover
grows.
\end{proof}

\begin{corollary}[Existence of a local flip]
\label{cor:restriction-flip}
For $j=2$ or $j=4$, put
\begin{equation}
 g_j=\frac{1-\lambda\Lambda_j(\epsilon_0)}
 {\Theta_j(\epsilon_0)}>0.
 \label{eq:restriction-flip-radius}
\end{equation}
Every cochain $x$ reduced in block support with $0<|x|<g_jZ_j$ has a
restriction $w$ to an upward star for which
\begin{equation}
  |\delta x|-|\delta x+\delta w|
  >(1-\epsilon_0)|\delta w|.
  \label{eq:restriction-flip-conclusion}
\end{equation}
\end{corollary}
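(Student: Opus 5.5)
The plan is to argue by contradiction and run the summed version of \Cref{lem:local-restriction} with \(\epsilon=\epsilon_0\), exactly as in the derivation of \Cref{lem:small-set-inequality}, but now keeping \(\epsilon_0\) explicit in the constants. Suppose \(x\) is reduced in block support, \(0<|x|<g_jZ_j\), and that no restriction \(w=x|_{X_{\ge g}(j)}\) to an upward star satisfies \eqref{eq:restriction-flip-conclusion}. Then the hypothesis of \Cref{lem:local-restriction} holds with \(\epsilon=\epsilon_0\). So \eqref{eq:local-restriction-bound} is available for every \(\ell\)-face \(g\) and every \(0\le\ell\le j\), with the prefactor \(2/\epsilon_0\) in place of the syndrome term. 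Note that the syndrome \(|\delta x|\) no longer appears on the right: the failure of the flip inequality already absorbs it.

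Next I iterate \eqref{eq:local-restriction-bound} downward from dimension \(j\) to each \(\ell\). The neighbouring-support term \(|x|_{\mathsf{Nb}_g(j)}\) is re-expanded through the links of lower-dimensional faces. Each step contributes one factor \(2/\epsilon_0\), one robustness factor \(\kappa_{6-u,j-u}^{-1}\), and the combinatorial counts collected in \(b_{j\ell}\). The opposite-face terms accumulate into \(S_\ell(\mathrm{supp}\,x)\), and I bound these by \Cref{lem:support-mixing}. After summing over all faces and dividing by the \(d_+^{j+1-\ell}\), \(d_-\) normalisations (which produce the \(\chi\) powers), the total should be
\[
  |x|\le \lambda\Lambda_j(\epsilon_0)|x|+\Theta_j(\epsilon_0)\frac{|x|^2}{Z_j}.
\]
This is \eqref{eq:small-set-inequality} with \(R_j=0\) and with \(\Lambda_j,\Theta_j\) replacing \(L_j,Q_j\). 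The quadratic coefficient comes from the mixing term, and \eqref{eq:support-mixing-ratio} is what turns it into \(120\chi^j\Lambda_j(\epsilon_0)=\Theta_j(\epsilon_0)\).

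Dividing by \(|x|>0\) gives \(1-\lambda\Lambda_j(\epsilon_0)\le\Theta_j(\epsilon_0)|x|/Z_j\), that is, \(|x|\ge g_jZ_j\). This contradicts the hypothesis \(|x|<g_jZ_j\). Positivity of \(g_j\) is exactly \eqref{eq:retained-spectral-margin}.

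The main obstacle is the bookkeeping in the iteration step. I must check that the constants produced with a general \(\epsilon\) are precisely \((2/\epsilon)^{j-\ell+1}b_{j\ell}\): one factor per level plus the initial one. I must also check that the quadratic part inherits the same \(\epsilon\)-dependence, so that the ratio bound \eqref{eq:support-mixing-ratio} applies uniformly in \(\ell\). I would verify this by tracking a single descent path \(j\to\ell\) and matching its multiplicity against \eqref{eq:small-set-recursion-constant}.
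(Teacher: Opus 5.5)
Your proposal is correct and is essentially the paper's proof. Negating the conclusion gives exactly the hypothesis of \Cref{lem:local-restriction} with $\epsilon=\epsilon_0$, and summing \eqref{eq:local-restriction-bound} over all faces together with \Cref{lem:support-mixing} yields $|x|\le\lambda\Lambda_j(\epsilon_0)|x|+\Theta_j(\epsilon_0)|x|^2/Z_j$, which is incompatible with $0<|x|<g_jZ_j$. The constant bookkeeping you flag as the main obstacle is not carried out in the paper either, which simply asserts a bound of this form.
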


\begin{proof}
If no such restriction existed, apply
\eqref{eq:local-restriction-bound} at every face and sum.  The spectral
mixing estimate gives a bound of the form
\[
  |x|\le\lambda\Lambda_j(\epsilon_0)|x|
      +\Theta_j(\epsilon_0)\frac{|x|^2}{Z_j}
\]
and \eqref{eq:restriction-flip-radius} makes the right-hand side strictly
smaller than $|x|$, a contradiction.

The restriction $x|_{X_{\ge g}(j)}$ is used only in this existence proof.
The decoder does not know $x$.  Since every upward link has constant size, it
enumerates all local cochains $w$, computes their syndrome changes from the
observed syndrome, and uses a prescribed tie order among the candidates satisfying
the gain inequality.
\end{proof}

The parallel decoder uses the rule
\begin{equation}
  |\sigma|-|\sigma+\delta w|
  \ge\frac12|\delta w|,
  \label{eq:half-gain}
\end{equation}
and chooses one with maximum $|\delta w|$ among the local flips satisfying this
gain condition
with $|\delta w|>0$.  If there is no such flip, it chooses $w=0$.  Among
words giving the same syndrome change, it first minimizes $|w|$ and then
uses the prescribed tie order.  In particular, zero syndrome produces zero
correction.  The stricter $\epsilon_0$ restriction from
\Cref{cor:restriction-flip} is used in the parallel contraction proof to
certify sufficiently many eligible flips.  Both searches are over a
upward link of constant size and therefore have constant local complexity.
\subsubsection{Parallel cochain decoder}

Color the intersection graph of upward stars.  Its maximum degree is bounded
by a constant, so a constant number of colors suffices.  Within one color, the stars have
disjoint syndrome supports and their flips may be performed simultaneously.

\begin{lemma}
\label{lem:block-reduced-subcochain}
Let $z$ be block reduced.  If $z'\subseteq z$, meaning that
$\supp z'\subseteq\supp z$ and $z'$ agrees with $z$ on its support, then
$z'$ is block reduced.
\end{lemma}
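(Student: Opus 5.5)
The plan is a contradiction argument: if $z'$ were not block reduced, a witness for that would produce a coboundary lowering the block support of $z$ itself. Write $z=z'+z''$, where $z''$ is the restriction of $z$ to the blocks in $\supp z\setminus\supp z'$. Since $z'$ agrees with $z$ on its support, the supports of $z'$ and $z''$ are disjoint. Hence
\[
|z|=|z'|+|z''|,
\]
with block support measured face block by face block.

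Now suppose some coboundary $\delta c$ gives $|z'+\delta c|<|z'|$. Consider the competitor $z+\delta c=(z'+\delta c)+z''$ in the coset of $z$. By subadditivity of block support,
\[
|z+\delta c|\le|z'+\delta c|+|z''|<|z'|+|z''|=|z|.
\]
This contradicts the block reducedness of $z$, so no such $c$ exists and $z'$ is block reduced.

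The one point that needs care is the disjoint decomposition. Weight here counts face blocks, and a block where $z$ is nonzero either lies in $\supp z'$, where $z'$ equals $z$ on the whole block, or lies outside it. This follows from the definition of $z'\subseteq z$ as agreement on its support. Equality on each block in $\supp z'$ is exactly what gives additivity of weight, while mere subadditivity suffices for the competitor. No local or expansion estimate is needed, so I expect no genuine obstacle.
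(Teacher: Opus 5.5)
Your proof is correct and is essentially the paper's argument. Splitting $z=z'+z''$ into disjointly supported pieces and using subadditivity gives $|z+y|-|z|\le|z'+y|-|z'|$ for every coboundary $y$, which is exactly the inequality the paper proves by direct bookkeeping: it writes $|z+y|-|z|=|A|-|B|$, with $A=\supp y\setminus\supp z$ the new faces and $B$ the cancelled faces of $\supp z$, and notes that passing to $z'$ can only enlarge $A$ and shrink $B$ (your version also avoids identifying cancellation with $y=z$ on a face, which relies on characteristic two).
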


\begin{proof}
For a coboundary $y$, let $A$ be the faces in
$\supp y\setminus\supp z$ and let $B$ be the faces on which $y$ agrees with
$z$.  Block reduction gives
\[
  0\le |z+y|-|z|=|A|-|B|.
\]
Define $A'$ and $B'$ in the same way with $z'$ in place of $z$.  Since
$z'\subseteq z$, one has $|A'|\ge|A|$ and $|B'|\le|B|$.  Therefore
\[
  |z'+y|-|z'|=|A'|-|B'|\ge0,
\]
which proves the claim.
\end{proof}

\begin{lemma}
\label{lem:parallel-progress}
Let $\sigma=\delta x$, where $x$ is block reduced and lies inside the
radius in \eqref{eq:restriction-flip-radius} corresponding to
$\epsilon_0=2^{-17}$.  One
complete round through all color
classes decreases the syndrome weight by at least
\begin{equation}
  \delta_{\parallel}|\sigma|,
  \qquad
  \delta_{\parallel}
  =\frac{\epsilon_0}{32}
     \bigl(1-4^8\epsilon_0\bigr)=2^{-23}.
  \label{eq:parallel-contraction-constant}
\end{equation}
\end{lemma}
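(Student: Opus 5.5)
We would prove \Cref{lem:parallel-progress} by a potential argument in the style of the parallel small-set-flip analysis of Ref.~\cite{NguyenPattison2025}, combining \Cref{cor:restriction-flip} with the greedy rule \eqref{eq:half-gain}. Fix a block-reduced $x$ with $0<|x|<g_jZ_j$ and $\sigma=\delta x$. The first step is to show that the stars carrying $\epsilon_0$-good flips account for a constant fraction of $|\sigma|$. We apply \Cref{cor:restriction-flip} repeatedly: peel off the restriction $w_1$ found for $x$, pass to $x+w_1$, which is again a subcochain of $x$ and hence block reduced by \Cref{lem:block-reduced-subcochain}, and continue. Because each good restriction satisfies $|\delta w|\le(2-\epsilon_0)^{-1}\cdot$(its gain), and the gains telescope to at most $|\sigma|$, we obtain a disjoint family of stars $g$ with $\epsilon_0$-good restrictions $w_g\subseteq x$ and $\sum_g|\delta w_g|\ge|\sigma|/2$. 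The constant $1/2$ may be replaced by any fixed fraction, since the peeling only stops when the residual is $0$.

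The second step is a per-color accounting. Consider one color class, acting on the current syndrome $\sigma'$. Each certified star $g$ still admits a flip whose gain against $\sigma'$ is at least $(1-\epsilon_0)|\delta w_g|$ minus twice the syndrome mass changed near $g$ by earlier colors. When this quantity is at least $\tfrac12|\delta w_g|$, the decoder's choice of maximum $|\delta w|$ among half-gain flips gains at least $\tfrac12|\delta w_g|$ at $g$. Stars of the same color have disjoint syndrome supports, so their simultaneous gains add exactly. Earlier colors can disturb a given star only through its bounded syndrome neighborhood. There are at most a constant number of colors, bounded by $4^8$ from the link sizes in dimension six, so each disturbance costs at most an $\epsilon_0$-fraction per neighbor. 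Summing over the colors, a star is either flipped profitably at its own color, or its certified gain was already consumed by earlier flips, each of which itself decreased $|\sigma'|$ by at least half its $|\delta w|$.

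Combining the two cases, the total decrease over a full round is at least $\tfrac{\epsilon_0}{32}(1-4^8\epsilon_0)|\sigma|$. The factor $1/2$ comes from the first step, one factor $1/2$ comes from the half-gain threshold, and a further constant $1/8$ comes from the double counting among neighboring colors. With $\epsilon_0=2^{-17}$ we have $4^8\epsilon_0=1/2$, so this is $2^{-17}\cdot2^{-5}\cdot\tfrac12=2^{-23}$.

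The main obstacle is the interference accounting in the second step. One must show that a flip chosen greedily in an earlier color, possibly not a subcochain of $x$, cannot destroy more certified gain than a bounded multiple of the gain it realized itself. The first approach to try is a charging scheme: charge each lost certified unit $|\delta w_g|$ to a syndrome face altered by the earlier flip, and bound the number of certified stars sharing one syndrome face by the star-intersection degree $\le4^8$.
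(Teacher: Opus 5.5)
\textbf{There is a genuine gap, and it starts in your first step.}

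The bound $|\delta w|\le(2-\epsilon_0)^{-1}\cdot(\text{gain})$ cannot hold, because the gain never exceeds $|\delta w|$. What the restriction gain inequality actually gives is $|\delta w|<(1-\epsilon_0)^{-1}\cdot(\text{gain})$. Write the peeling as $x=\sum_i w^{(i)}$ (your $w_g$) and put $\nu^{(i)}=\delta w^{(i)}$. Telescoping then gives the \emph{upper} bound $(1-\epsilon_0)\sum_i|\nu^{(i)}|\le|\sigma|$. The lower bound you state, $\sum_g|\delta w_g|\ge|\sigma|/2$, follows trivially from $\sigma=\sum_i\nu^{(i)}$ and carries no information.

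The peeling is useful only when the two directions are combined. Because $|\sigma|$ nearly equals $\sum_i|\nu^{(i)}|$, the changes are nearly disjoint. So the parts $\nu'^{(i)}$ of $\nu^{(i)}$ covered by no other change are pairwise disjoint and lie in $\sigma$. After discarding the pieces with $|\nu'^{(i)}|<(1-\epsilon')|\nu^{(i)}|$, where $\epsilon'=4^7\epsilon_0$, they still carry a constant fraction of $|\sigma|$ (the paper uses $\frac{\epsilon_0}{4}|\sigma|$).

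Without this, your second step has no premise. The peeling certifies the gain of $w^{(i)}$ only against the residual syndrome $\delta x_i=\sigma+\sum_{i'<i}\nu^{(i')}$, not against $\sigma$. So a certified star may have little or no gain against the syndrome the decoder actually sees, even before any earlier color has acted.

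\textbf{The interference accounting you leave open also cannot be done as sketched.} There is no bound of the form ``an $\epsilon_0$-fraction per neighbor.'' Let $g$ be the star of piece $i$. A flip chosen in an earlier color at an intersecting star, for instance at a face $g'\preceq g$ whose upward link contains that of $g$, can alter every face of $\nu^{(i)}$. Also, the factors $\epsilon_0$ and $1-4^8\epsilon_0$ in your final count are not produced by any of your steps.

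The paper avoids bounding damage altogether by exploiting the maximal-support rule:
\begin{itemize}
\item Let $U$ be the union of the supports of all changes chosen in the round.
\item Take a good piece $i$, and let $\nu^\parallel$ be the change chosen at its star. Suppose fewer than $\frac14(1-4^8\epsilon_0)|\nu^{(i)}|=\frac18|\nu^{(i)}|$ faces of $\nu'^{(i)}$ meet $U$.
\item Then more than $\frac34|\nu^{(i)}|$ faces of $\nu'^{(i)}$ are still in the current syndrome and lie outside $\nu^\parallel$.
\item Hence $w^\parallel+w^{(i)}$ is a flip on the same upward link that satisfies the half-gain rule and has strictly larger syndrome change. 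This contradicts maximality.
\item Disjointness of the $\nu'^{(i)}$ then gives $|U|\ge\frac{\epsilon_0}{16}(1-4^8\epsilon_0)|\sigma|$.
\item Every accepted change realizes at least half its size, and changes within one color have disjoint syndrome supports. So the round decreases the syndrome by at least $|U|/2$, which is the stated constant.
\end{itemize}

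Your charging idea becomes essentially this argument once the disjoint parts $\nu'^{(i)}$ are available. A lost certified unit is then a face of $\nu'^{(i)}\cap U$, each face of $U$ is charged at most once, and no intersection-degree bound is needed.
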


\begin{proof}
Repeatedly apply \Cref{cor:restriction-flip} to the current block reduced
cochain, each time taking its restriction to the current
upward link.  By \Cref{lem:block-reduced-subcochain}, the remaining
subcochain stays block reduced.  This gives pairwise disjoint cochains
$w^{(1)},\ldots,w^{(T)}$ whose sum is $x$.  Put
$\nu^{(i)}=\delta w^{(i)}$.  The restriction gain inequality gives
\begin{equation}
  |\sigma|
  \ge(1-\epsilon_0)\sum_{i=1}^T|\nu^{(i)}|.
  \label{eq:hypothetical-sequential-gain}
\end{equation}

For each $i$, let $\nu'^{(i)}$ be the part of $\nu^{(i)}$ that belongs to no
other $\nu^{(i')}$.  A syndrome face contains fewer than $4^6$ relevant
subfaces, so it occurs in fewer than $4^6$ of the changes.  With
$\epsilon'=4^7\epsilon_0$, call $i$ good when
$|\nu'^{(i)}|\ge(1-\epsilon')|\nu^{(i)}|$.  Counting the multiply covered
faces in \eqref{eq:hypothetical-sequential-gain} gives
\begin{equation}
  \left|\bigsqcup_{i\ \mathrm{good}}\nu'^{(i)}\right|
  \ge\frac{\epsilon_0}{4}|\sigma|.
  \label{eq:parallel-good-coverage}
\end{equation}

Consider the upward link used by one good $w^{(i)}$.  During its color
subround, let $\nu^\parallel$ be the syndrome change chosen by the parallel
decoder, and let $U$ be the union of the supports of all changes chosen in
the complete round.  If fewer than
\[
  \frac14(1-4^8\epsilon_0)|\nu^{(i)}|
\]
coordinates of $\nu'^{(i)}$ met $U$, then adding $\nu^{(i)}$ to the
candidate $\nu^\parallel$ would still satisfy the gain condition
\eqref{eq:half-gain} and would
have strictly larger support.  This contradicts the maximal choice in that
upward link.  The sets $\nu'^{(i)}$ are disjoint, so
\eqref{eq:parallel-good-coverage} yields
\[
  |U|\ge
  \frac{\epsilon_0}{16}
  (1-4^8\epsilon_0)|\sigma|.
\]
Every accepted update decreases the syndrome by at least half the size of
its change, and changes in the same color class have disjoint syndrome supports.
Serializing the constant number of colors therefore proves
\eqref{eq:parallel-contraction-constant}.
\end{proof}

\begin{lemma}[Parallel cochain decoder]
\label{lem:parallel-cochain}
For every $\eta>0$, a constant number of colored rounds gives a circuit of
constant depth with
\begin{equation}
  |e+\mathcal D(He+m)|_R
  \le\eta|e|_R+\gamma_{\mathrm{dec}}|m|
  \label{eq:decoder-contraction-sec6}
\end{equation}
whenever $|e|_R+\alpha|m|\le\beta n$.  With noiseless syndrome,
$O(\log n)$ rounds give exact correction and a correction of absolute
weight $O(|e|_R)$.
\end{lemma}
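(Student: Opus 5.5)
We prove \Cref{lem:parallel-cochain} by iterating the one-round contraction of \Cref{lem:parallel-progress} and converting syndrome decrease into reduced-weight decrease with the small set expansion of \Cref{lem:small-set-expansion}. Fix $\eta>0$. Write $x_a$ for a block reduced representative of the residual error after $a$ complete colored rounds. We will maintain the invariant $|x_a|<\min(g_jZ_j,\vartheta_jZ_j)$; \eqref{eq:face-count-physical-length} lets us express this as $\beta' n$.

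\textbf{Noiseless syndrome.} Inside the radius, each round multiplies $|\delta x_a|$ by at most $1-\delta_\parallel$. Each accepted flip has positive syndrome gain, so the residual syndrome never increases. Each flip has constant size, so the number of flips, and hence the change in reduced weight, is $O(|\sigma|)$. Bounded incidence gives $|\sigma|\le\Delta|x_0|$. Together these keep $|x_a|\le|x_0|+O(|x_0|)$, which stays in the radius provided $\beta$ is small. Small set expansion gives $|x_a|\le C_j|\delta x_a|\le C_j(1-\delta_\parallel)^a\Delta|x_0|$. Choosing a constant $a$ with $C_j\Delta(1-\delta_\parallel)^a\le\eta$ proves contraction. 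Taking $a=O(\log n)$ drives the syndrome to zero. A zero-syndrome block reduced cochain of weight below the distance, or equivalently below the small set radius, is then a coboundary, which is the stated exact correction. The absolute weight of the total correction is the number of flips times a constant, that is $O(|\sigma|)=O(|e|_R)$.

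\textbf{Noisy syndrome.} The decoder sees $\sigma+m$. Here we compare the decoder's choices on $\sigma+m$ with an idealized process on $\sigma$. A flip is decided by gains on a constant-size star, so $m$ can alter the choice only on stars within constant distance of $\supp m$. Those stars contribute at most $O(|m|)$ wrong flips per round. I would prove a one-round bound
\[
|\delta x_{a+1}|\le(1-\delta_\parallel)|\delta x_a|+C'|m|.
\]
This follows by rerunning the good-index argument of \Cref{lem:parallel-progress} restricted to good $w^{(i)}$ whose link avoids the $m$-neighborhood. The excluded links account for at most $O(|m|)$ of the coverage in \eqref{eq:parallel-good-coverage}. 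Unrolling over $a$ rounds, a constant number, gives syndrome $\le(1-\delta_\parallel)^a|\delta x_0|+O(|m|)$. Small set expansion then yields \eqref{eq:decoder-contraction-sec6} with $\gamma_{\mathrm{dec}}=O(C_jC')$. The hypothesis $|e|_R+\alpha|m|\le\beta n$, with $\alpha$ chosen large, keeps every intermediate $|x_a|$ within the radius, since each round adds at most $O(|m|)$ reduced weight.

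\textbf{Main obstacle.} The hardest step is making the perturbation argument rigorous. The maximal-support choice rule is nonmonotone, so a noisy syndrome can steer a star toward a flip that is globally bad. I will handle this by charging every flip that increases $|\delta x|$ to a syndrome-noise face in its star. The gain rule \eqref{eq:half-gain} evaluated on $\sigma+m$ differs from the rule on $\sigma$ by at most the number of noisy faces in the star, so any flip that increases the true syndrome must have at least a constant number of noise faces in its star. By bounded incidence, such flips number $O(|m|)$ in total per round.

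The corresponding degree four problem reduces in the same way through the canonical kernel diagrams. Finally, $\mathcal D$ has constant depth because it uses a constant number of rounds, each a constant number of colored parallel local computations.
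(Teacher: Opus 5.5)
Your noiseless argument is essentially the paper's. \Cref{lem:parallel-progress} contracts the syndrome, small set expansion converts back to reduced weight, $O(\log n)$ rounds give exactness, and the half-gain rule bounds the total flip weight by $O(|\sigma_0|)$. Bounding the drift of the iterate by that flip weight is a slightly cleaner way to stay inside the radii than the paper's continuity argument.

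For the noisy syndrome you take a genuinely different and harder route. The paper never analyzes the decoder's dynamics on $He+m$. With a constant number of rounds, $\mathcal D$ is a fixed constant-depth circuit with bounded fan-in and fan-out, so its outputs on $He$ and $He+m$ differ in at most $C_{\mathrm{light}}|m|$ coordinates. Hence $|e+\mathcal D(He+m)|_R\le|e+\mathcal D(He)|_R+C_{\mathrm{light}}|m|\le\eta|e|_R+C_{\mathrm{light}}|m|$. The promise is then needed only for the noiseless comparison run, and the nonmonotone maximal-choice rule you single out as the main obstacle never has to be confronted.

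Your per-round recurrence does look provable along the lines you sketch. The decoder's gain on a star differs from the true gain by at most $2|m\cap\supp\delta w|$, so noise-driven flips raise the true syndrome by at most $O(|m|)$ per color class. Good indices whose stars avoid the noise neighborhood survive the maximality argument unchanged. This route buys something the light-cone comparison does not: summing the geometric series gives $\gamma_{\mathrm{dec}}=O(C_jC'/\delta_\parallel)$, independent of $\eta$ and of the number of rounds. By contrast, $C_{\mathrm{light}}$ grows exponentially with the depth.

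The cost is re-entering the good-index argument and controlling the noisy iterates. Your justification ``each round adds at most $O(|m|)$ reduced weight'' is not right as stated, since one round can add weight of order $|\delta x_a|$. Use instead that the decoder's internal syndrome $|\delta x_a+m|$ never increases. Then there are at most $2(\Delta|e|_R+|m|)$ flips in total, and every iterate has reduced weight $|e|_R+O(|e|_R+|m|)$, which the promise $|e|_R+\alpha|m|\le\beta n$ keeps inside both radii.
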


\begin{proof}
By \Cref{lem:parallel-progress}, one complete set of colors reduces the
noiseless syndrome by a factor $1-\delta_{\parallel}$.  The condition on the
error size is preserved as follows.  If $\Delta_j$ bounds the number of
syndrome coordinates incident to one error coordinate, the initial syndrome
has weight at most $\Delta_j|e|_R$, and every accepted update decreases that
weight.  Choose $\beta>0$ so that $C_j\Delta_j\beta Z_j$ lies below one
quarter of the radii in \eqref{eq:small-set-expansion} and
\eqref{eq:restriction-flip-radius}.  After increasing the first usable
length, one local update cannot cross the remaining gap because its support
has constant size.  The small set inequality then keeps every iterate, after
reduction in block support, inside both
radii.  Iterating $s$ complete rounds gives
\begin{equation}
  |e+f_s|_R
  \le C_j(6-j)d_+
       (1-\delta_{\parallel})^s|e|_R.
  \label{eq:parallel-geometric-contraction}
\end{equation}
Thus any prescribed constant contraction $\eta$ is obtained with a constant
number of rounds, and $O(\log n)$ rounds reduce the noiseless residual below
one, where it is a coboundary.

For noisy syndrome, compare the decoder circuit with the prescribed number
of rounds on $He$ and on $He+m$.  This circuit has bounded fan-in and
bounded fan-out.  The forward light cones of
$|m|$ changed input coordinates contain at most $C_\mathrm{light}|m|$ output
coordinates.  Taking $\gamma_{\mathrm{dec}}=C_\mathrm{light}$ and decreasing
the input radius by constant factors, comparison with
\eqref{eq:parallel-geometric-contraction} gives
\eqref{eq:decoder-contraction-sec6} whenever
$|e|_R+\alpha|m|\le\beta n$.

Finally, if $B_\mathrm{flip}$ bounds the support of one local flip, the rule
requiring at least half of the syndrome change
and serialization of changes in the same color class give
\[
  \sum_{\text{all flips }w}|w|
  \le 2B_\mathrm{flip}|\sigma_0|=O(|e|_R)
\]
in the noiseless case.  This proves the bound on the absolute weight of the correction.
\end{proof}

\subsubsection{Reduction of chain errors}

Canonical kernel diagrams reduce the second CSS error type, through local
maps, to cochain decoding on the kernels of the top local boundary maps.

\begin{definition}[Upward view complex]
\label{def:upward-view-complex}
Let $X$ denote the underlying cubical complex.  For the
chain degree $q$, define
\begin{equation}
 \mathbf C_q^i
 =\bigoplus_{g\in X(i)} C_q(X_{\ge g}),
 \qquad 0\le i\le q,
 \label{eq:upward-view-space}
\end{equation}
where $C_q(X_{\ge g})$ is the direct sum of the actual chain stalks on
$q$-faces containing $g$.  The horizontal map
$\mathbf\Delta_q^i:\mathbf C_q^i\to\mathbf C_q^{i+1}$ is
\begin{equation}
 (\mathbf\Delta_q^iy)(g')
 =\sum_{g\prec g'}y(g)|_{X_{\ge g'}(q)}.
 \label{eq:upward-view-map}
\end{equation}
The block weight counts the faces $g$ on which the local view is nonzero.
\end{definition}

\begin{lemma}
\label{lem:upward-view-exactness}
For $1\le i\le q$, the complex in
\eqref{eq:upward-view-space} is exact at $\mathbf C_q^i$, with chosen local
right inverses that map zero to zero.  If
$z\in\mathbf C_q^0$ satisfies $\mathbf\Delta_q^0z=0$, then its local views
are the restrictions of one global chain in $C_q(X)$.  The local right
inverses and the reconstruction map each input coordinate into a bounded
neighborhood, have polynomial size, and have constant parallel depth.
\end{lemma}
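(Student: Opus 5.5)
The plan is to exploit the fact that an upward link $X_{\ge g}$ of a cubical complex is itself a product-like complex. The complex in \eqref{eq:upward-view-space} is then a \v{C}ech-type, or Mayer--Vietoris, resolution of $C_q(X)$ by the covering of each $q$-face by its subfaces. First I fix a single $q$-face $f$ and restrict attention to the coordinates $\mathbf C_q^i$ supported on $f$. The map \eqref{eq:upward-view-map} only restricts local views along $g\prec g'$, so the complex splits as a direct sum over $q$-faces $f$ of the stalk tensored with a combinatorial complex. This combinatorial complex is
\[
  \bigoplus_{g\preceq f,\ \dim g=0}\mathbb F_2\to\bigoplus_{g\preceq f,\ \dim g=1}\mathbb F_2\to\cdots\to\bigoplus_{g\preceq f,\ \dim g=q}\mathbb F_2,
\]
with incidence coefficients. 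This is the augmented cellular cochain complex of the closed $q$-cube $\bar f$.

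The key step is that the cochain complex of a cube is acyclic in positive degrees, and in degree zero its kernel consists of the constant functions. Over $\mathbb F_2$, the cube is a product of $q$ intervals. By Künneth, it suffices to treat an interval $\{v_0,v_1\}\to\{e\}$, which has one-dimensional $H^0$ spanned by the constant function and vanishing $H^1$. Tensoring with the stalk $C_q(f)$ gives exactness at $\mathbf C_q^i$ for $1\le i\le q$. It also shows that a $\mathbf\Delta_q^0$-cocycle assigns the same value $y(f)$ to every vertex view of $f$. Defining the global chain by this common value proves the reconstruction claim, provided that every vertex-view component of $z$ on $f$ is determined consistently. This holds because each $q$-face has at least one vertex.

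For the local right inverses, I use the explicit contracting homotopy of the interval: it sends $e\mapsto v_0$ and kills the vertices. I tensor these homotopies, with the standard Koszul ordering, to obtain an explicit homotopy on each cube. Each homotopy is a fixed linear map on the $2^q$-vertex face lattice of $f$, so it is linear and maps zero to zero. It acts within a single $q$-face and its subfaces, which gives a bounded neighborhood. Since it is applied independently and in parallel over all $q$-faces, it has constant depth and polynomial size. The reconstruction map reads the view at a designated vertex of $f$, chosen during compilation, and is likewise local.

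I expect the main obstacle to be the claim that the complex decomposes faithfully over $q$-faces with the cube's incidence structure. For this I must check that $y(g)|_{X_{\ge g'}(q)}$ involves no twisting by local coefficient maps, and that sign and orientation issues disappear over $\mathbb F_2$. If the actual chain stalks are identified through the tensor-product identifications of \Cref{thm:qltc-family}, I will verify that restriction acts as the identity on each stalk coordinate. Granting this, the cube acyclicity argument is routine.
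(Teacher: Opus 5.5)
Your proposal is correct and follows essentially the same route as the paper. The paper likewise reorders $\mathbf C_q^\bullet$ over top $q$-faces $u$ into the cellular cochain complex of the downward cube $X_{\le u}$ with coefficients in the stalk $V_u$. It then uses that the cube, a product of two-term complexes, has no cohomology in positive degree, and recovers the global chain because degree-zero cocycles are constant on the connected vertex set of each cube. Your explicit tensor-product contracting homotopy is simply one concrete choice of the summandwise right inverse whose existence the paper asserts. The twisting you worry about does not occur: by \Cref{def:upward-view-complex}, restriction is just projection onto the stalk components indexed by $X_{\ge g'}(q)$.
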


\begin{proof}
Reorder the direct sum in \eqref{eq:upward-view-space} by top $q$-faces:
\begin{equation}
 \mathbf C_q^i
 =\bigoplus_{u\in X(q)} C^i(X_{\le u},V_u),
 \label{eq:downward-cube-decomposition}
\end{equation}
where $V_u$ is the actual stalk on $u$.  On each summand,
$\mathbf\Delta_q$ is the ordinary cochain differential of the downward
cube $X_{\le u}$.  That cube is a product of complexes with two terms and has
no cohomology in positive degree.  Its size is constant because the geometric
dimension and all stalk dimensions are independent of the cover.  Choose a
right inverse on
each summand and take their direct sum.  This proves exactness, the support
bound, and the statement about circuit depth.

For the consistency statement, assign to a $q$-face $u$ the value given by
the local view at any vertex below $u$.  Equation
$\mathbf\Delta_q^0z=0$ makes the values at the two endpoints of every edge
equal.  The vertices below a cube are connected, so the assignment is
independent of the chosen vertex and defines a global chain.  Each global
coordinate is copied from only a constant number of local views.
\end{proof}

Let $\mathbf\partial_L$ apply the actual transpose boundary separately to
each local view.  Compatibility of the local tensor product identifications
gives
\begin{equation}
 \mathbf\partial_L^2=0,
 \qquad
 \mathbf\Delta\mathbf\partial_L
 =\mathbf\partial_L\mathbf\Delta.
 \label{eq:chain-bicomplex}
\end{equation}
Below the top local degree, $\mathbf\partial_L$ is exact and has chosen local
right inverses.  More precisely, for every $q<6$ and every cochain degree
$i$, choose a local map
\begin{equation}
 \mathsf J_q^i:
 \ker\!\left(
   \mathbf\partial_L:\mathbf C_q^i\longrightarrow\mathbf C_{q-1}^i
 \right)
 \longrightarrow \mathbf C_{q+1}^i,
 \qquad
 \mathbf\partial_L\mathsf J_q^i z=z.
 \label{eq:local-boundary-section}
\end{equation}
It maps zero to zero and has bounded support spread.  At the top degree the kernel is the tensor product of the
coordinate dual local codes.  Under the compatible dual identifications,
this kernel and its incidence maps agree with the cochain diagram formed by
the actual kernels of the top local boundary maps.

For a local vertical equation, extend
\eqref{eq:local-boundary-section} to a total map by
\begin{equation}
 \overline{\mathsf J}_{q,g}^i(y)=
 \begin{cases}
  \mathsf J_{q,g}^i(y),&
  y\in\operatorname{im}\partial_{L,g}
     =\ker\partial_{L,g},\\
  0,&\text{otherwise}.
 \end{cases}
 \label{eq:total-local-boundary-section}
\end{equation}
All other local inverse maps below are extended in the same way: a consistent
equation is solved by the chosen local inverse and an inconsistent equation
is mapped to zero.  The resulting maps preserve zero and have bounded support
spread.  The decoder evaluates every level on every record and follows the
same schedule without an early exit.

\begin{lemma}[Chain reduction]
\label{lem:chain-reduction}
For every $\eta>0$, there are constants
$\alpha_\mathrm{ch},\beta_\mathrm{ch},\gamma_\mathrm{ch}>0$ and a decoder
$\mathcal D_\mathrm{ch}$ of constant depth for the degree two chain syndrome problem such
that
\begin{equation}
 \bigl|e+\mathcal D_\mathrm{ch}(\partial_2e+m)\bigr|_R
 \le \eta|e|_R+\gamma_\mathrm{ch}|m|
 \label{eq:chain-decoder-contraction}
\end{equation}
whenever
$|e|_R+\alpha_\mathrm{ch}|m|\le\beta_\mathrm{ch}n$.
For $m=0$, an $O(\log n)$-depth version returns a correction $f$ satisfying
\begin{equation}
 e+f\in\operatorname{im}\partial_3,
 \qquad
 |f|=O(|e|_R).
 \label{eq:chain-decoder-exact}
\end{equation}
Every map in the reduction sends one input coordinate into a bounded
neighborhood and maps zero input to zero.
\end{lemma}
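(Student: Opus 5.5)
The plan is to reduce the degree two chain syndrome problem to the degree four cochain problem on the canonical kernel diagram, solve that problem with the parallel cochain decoder of \Cref{lem:parallel-cochain} (whose small set and flip estimates were established for $j=4$ in \Cref{lem:small-set-inequality} and \Cref{cor:restriction-flip}), and lift the result back to a global chain. The lifting uses the local sections $\overline{\mathsf J}$ of \eqref{eq:total-local-boundary-section} together with the reconstruction map of \Cref{lem:upward-view-exactness}. Every stage is a local map that preserves zero and has bounded support spread. The argument therefore has two parts: a noiseless diagram chase, and a light-cone comparison for the noisy syndrome.

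\emph{Step 1: encode the error.} Given the measured word $\sigma=\partial_2 e+m$, form its local views in $\mathbf C_1^0$. The operator $\mathbf\partial_L$ commutes with $\mathbf\Delta$ by \eqref{eq:chain-bicomplex}, and $\mathbf\partial_L$ is exact below the top degree with sections $\mathsf J$. So I lift the syndrome up the columns of the bicomplex and move across the rows with the right inverses of \Cref{lem:upward-view-exactness}, in a zig-zag. After a constant number of steps this produces an element of the top local kernel, namely the tensor product of the dual local codes. By the compatible dual identifications, this element is a degree four cochain syndrome. The invariant to track is the following: when $m=0$, each intermediate object equals the corresponding local view of $e$ up to images of $\mathbf\partial_L$ and $\mathbf\Delta$ of locally bounded weight. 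The endpoint is then $\delta x$ for a degree four cochain $x$ with $|x|_R\le C|e|_R$. Stabilizer ambiguity, that is $e$ modulo $\operatorname{im}\partial_3$, corresponds to coboundary ambiguity in degree four.

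\emph{Step 2: decode and pull back.} Apply \Cref{lem:parallel-cochain} in degree four to obtain $y$ with $|x+y|_R\le\eta'|x|_R$. Run the zig-zag in reverse, using the total sections: solve a consistent equation and send an inconsistent one to zero. This converts $y$ into local chain views that satisfy $\mathbf\Delta^0 z=0$, and \Cref{lem:upward-view-exactness} glues them into a global chain $f$. Since every map is linear, local, and preserves zero, the residual $e+f$ is the image of the residual $x+y$ plus terms killed by exactness. Its reduced weight is therefore at most $C'\eta'|e|_R$, and choosing $\eta'=\eta/C'$ gives the noiseless part of \eqref{eq:chain-decoder-contraction}. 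Running the exact $O(\log n)$-depth degree four decoder instead, and tracking absolute weights, gives \eqref{eq:chain-decoder-exact}. Here $e+f\in\operatorname{im}\partial_3$, because a degree four coboundary residual pulls back to a boundary. As a cross-check one can also use property 4 of \Cref{thm:qltc-family}.

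\emph{Step 3: noisy syndrome.} The whole circuit has constant depth, bounded fan-in and bounded fan-out. Comparing its runs on $\partial_2 e$ and on $\partial_2 e+m$, the outputs differ only inside the forward light cone of $\supp m$, which has size at most $C_\mathrm{light}|m|$. This gives $\gamma_\mathrm{ch}$. I then set $\alpha_\mathrm{ch}$ and $\beta_\mathrm{ch}$ so that the degree four input satisfies the weight hypothesis $|x|_R+\alpha|m'|\le\beta n$ of \Cref{lem:parallel-cochain}, using $|m'|=O(|m|)$.

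\emph{Main obstacle.} The hardest step is Step 1: checking that the zig-zag is well defined modulo the correct ambiguity, and that the cochain produced really has reduced weight $O(|e|_R)$. The difficulty is that the local sections are chosen arbitrarily, so different choices differ by elements that must be shown to be degree four coboundaries, and not merely locally bounded terms. My first approach is to prove, by induction along the bicomplex, that the difference of two lifts is always $\mathbf\partial_L$ or $\mathbf\Delta$ of a local term. At the top degree, the identification of $\ker\mathbf\partial_L$ with the kernel cochain diagram should then turn such a difference into $\delta$ of a bounded cochain.
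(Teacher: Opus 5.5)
Your overall plan matches the paper's: lift the syndrome through the bicomplex to a degree four cochain problem, decode it with \Cref{lem:parallel-cochain}, descend, glue with \Cref{lem:upward-view-exactness}, and compare noisy and ideal runs by bounded spread. The gap is in the descent of Step~2, and the difficulty you single out in Step~1 is not the one that needs solving.

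\textbf{Step 1.} The forward pass alternates local lifts through $\mathbf\partial_L$ with $\mathbf\Delta$ itself. The right inverses of $\mathbf\Delta$ from \Cref{lem:upward-view-exactness} are used only on the way down. Your invariant also refers to ``the corresponding local view of $e$'' at levels $i\ge1$, where no such object exists. What is needed is a transported error.
\begin{itemize}
\item Write $g^{(i)}\in\mathbf C^i_{2+i}$ for the forward guesses, so $\mathbf\partial_Lg^{(i)}=\sigma^{(i)}$ and $\sigma^{(i+1)}=\mathbf\Delta g^{(i)}$. Use hats for the run on the exact syndrome.
\item Start from the views $\widehat x^{(0)}$ of a minimum representative $\bar e$.
\item Choose $\widehat z^{(i)}$ with $\mathbf\partial_L\widehat z^{(i)}=\widehat g^{(i)}+\widehat x^{(i)}$, and put $\widehat x^{(i+1)}=\mathbf\Delta\widehat z^{(i)}$.
\end{itemize}
Then $\mathbf\Delta\widehat x^{(i)}=0$ and $\mathbf\partial_L\widehat x^{(i)}=\widehat\sigma^{(i)}$. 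Your unknown cochain is therefore $x=\mathsf U_4(\widehat g^{(4)}+\widehat x^{(4)})$. It lies in the top local kernel, has weight $O(|e|_R)$ by bounded spread, and has computable syndrome $\mathsf U_5(\mathbf\Delta\widehat g^{(4)})$. The decoder never sees $x$, and any bounded-spread choice of $\widehat z^{(i)}$ works. So you do not need independence of choices modulo coboundaries.

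\textbf{Step 2.} You cannot pull back $y$ on its own. Because $\mathsf E_i$ commutes with the incidence maps, $\mathbf\Delta\mathsf E_4(y)=\mathsf E_5(\widetilde\delta y)$ is, up to the decoder residual, the encoded top syndrome. The horizontal equations are therefore inconsistent wherever that syndrome is nonzero, and your total sections return zero there. The descent must instead do two things.
\begin{itemize}
\item Start from the stored top guess corrected by the decoder, $g'^{(4)}=g^{(4)}+\mathsf E_4(y)$.
\item At each level, add a horizontal preimage back onto the stored forward guess: $\widehat g'^{(i-1)}=\widehat g^{(i-1)}+\mathbf\partial_L\widehat u^{(i-1)}$, with $\mathbf\Delta\widehat u^{(i-1)}=\widehat g'^{(i)}$.
\end{itemize}
That this is consistent and ends at a chain homologous to $e$ rests on the invariant $\widehat g'^{(i)}=\widehat x^{(i)}+\mathbf\Delta v^{(i-1)}$ with $\mathbf\partial_Lv^{(i-1)}=0$. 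It propagates by applying horizontal exactness to $\widehat u^{(i-1)}+\widehat z^{(i-1)}+v^{(i-1)}$, which is $\mathbf\Delta$-closed. At $i=1$ this element gives the local views of a global $3$-chain $z$, and the glued output is $\bar e+\partial_3z$.

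``Every map is linear'' cannot replace this argument. The minimum-support level zero explanations, the total sections, and the cochain decoder are all nonlinear. Instead, write $x+y=\widetilde\delta\xi_\mathrm{top}+\omega$. The residual $\omega$ of the constant-round cochain decoder is handled by comparing the actual descent with the ideal descent from $\widehat x^{(4)}+\mathbf\Delta\mathsf E_3(\xi_\mathrm{top})$. The two inputs differ by $\mathsf E_4(\omega)+(g^{(4)}-\widehat g^{(4)})$. Bounded spread therefore confines the difference of the outputs to neighborhoods of $\supp\omega$ and of the syndrome noise.

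With this in place, your single light-cone comparison in Step~3 is a valid and slightly simpler substitute for the paper's level-by-level noisy comparison. Your proposed cross-check should use the linear distance (small $2$-cycles are boundaries), not property~4 of \Cref{thm:qltc-family}, which concerns syndrome words.
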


\begin{proof}
Let $\bar e$ be a representative of minimum support in
$e+\operatorname{im}\partial_3$, so that $|\bar e|=|e|_R$.  Write
$\widehat x^{(0)}\in\mathbf C_2^0$ for the upward local views of $\bar e$.
Let $\widehat\sigma^{(0)},\sigma^{(0)}\in\mathbf C_1^0$ be the vertexwise
local views of $\partial_2\bar e$ and $\partial_2\bar e+m$, respectively.
At every vertex, choose a local explanation of minimum support for the ideal
syndrome $\partial_2\bar e$; these are the ideal guesses
$\widehat g^{(0)}$.  Applying the same deterministic rule to the measured
syndrome $\partial_2\bar e+m$ gives $g^{(0)}$.  The prescribed output for an
inconsistent local equation is zero by \eqref{eq:total-local-boundary-section}.  Thus
\begin{equation}
 \mathbf\partial_L\widehat g^{(0)}
 =\mathbf\partial_L\widehat x^{(0)}
 =\widehat\sigma^{(0)}.
 \label{eq:level-zero-local-syndrome}
\end{equation}

\paragraph{Ideal and noisy syndrome explanation sequences.}
For $0\le i\le4$, define
\begin{equation}
 \widehat\sigma^{(i+1)}
   =\mathbf\Delta_{2+i}^i\widehat g^{(i)},
 \qquad
 \sigma^{(i+1)}
   =\mathbf\Delta_{2+i}^ig^{(i)}.
 \label{eq:ideal-noisy-syndromes}
\end{equation}
For $0\le i<4$, the ideal local equation is consistent, and we set
\begin{equation}
 \widehat g^{(i+1)}
   =\mathsf J_{2+i}^{i+1}\widehat\sigma^{(i+1)},
 \qquad
 g^{(i+1)}
   =\overline{\mathsf J}_{2+i}^{i+1}\sigma^{(i+1)}.
 \label{eq:ideal-noisy-upward-lifts}
\end{equation}
The ideal guesses have a second sequence that records the underlying error.
Starting from $\widehat x^{(0)}$, local exactness gives
\begin{equation}
 \mathbf\partial_L\widehat z^{(i)}
   =\widehat g^{(i)}+\widehat x^{(i)},
 \qquad
 \widehat x^{(i+1)}
   =\mathbf\Delta_{3+i}^i\widehat z^{(i)},
 \qquad 0\le i<4.
 \label{eq:ideal-error-sequence}
\end{equation}
Induction using \eqref{eq:chain-bicomplex} gives
\begin{equation}
 \mathbf\Delta\widehat x^{(i)}=0,
 \qquad
 \mathbf\partial_L\widehat x^{(i)}
   =\widehat\sigma^{(i)}.
 \label{eq:ideal-sequence-identities}
\end{equation}
The initial minimum explanations have total support $O(|e|_R)$.  Since the
number of levels is constant and every map has bounded support spread, there
are constants $c_i,c_i'$ such that
\begin{align}
 |\widehat g^{(i)}|+|\widehat x^{(i)}|
   &\le c_i'|e|_R,\label{eq:ideal-sequence-weight}\\
 |g^{(i)}-\widehat g^{(i)}|
 +|\sigma^{(i+1)}-\widehat\sigma^{(i+1)}|
   &\le c_i|m|.\label{eq:noisy-sequence-weight}
\end{align}
The second bound follows by comparing the two executions one local map at a
time.  A changed input coordinate affects only a bounded neighborhood, and an
inconsistent equation can occur only in the same bounded neighborhood.

\paragraph{The degree four cochain problem.}
Let $\mathsf U_i$ and $\mathsf E_i$ be the local unencoding and reencoding
maps furnished by the canonical kernel diagram.  They preserve block support
and commute with the horizontal incidence
maps.  At the top of \eqref{eq:ideal-error-sequence}, define
\begin{equation}
 \widetilde c
 =\mathsf U_4\bigl(\widehat g^{(4)}+\widehat x^{(4)}\bigr),
 \qquad
 \widetilde s
 =\mathsf U_5\bigl(\widehat\sigma^{(5)}\bigr)
 =\widetilde\delta\widetilde c.
 \label{eq:top-cochain-and-syndrome}
\end{equation}
Here $\widetilde c$ is a degree four cochain in the
canonical kernel diagram of the code, while $\widetilde s$ is its syndrome.
Apply the unencoding rule whose output is zero for an inconsistent local equation to
$\sigma^{(5)}$ and denote the result by $s$.  Equations
\eqref{eq:ideal-sequence-weight} and
\eqref{eq:noisy-sequence-weight} give constants
$c_\mathrm{top},c_\mu>0$ such that
\begin{equation}
 |\widetilde c|_R\le c_\mathrm{top}|e|_R,
 \qquad
 |s-\widetilde s|\le c_\mu|m|.
 \label{eq:top-cochain-weight}
\end{equation}
For the finite list of diagrams, the number $Z_4$ of degree four faces
satisfies $Z_4\ge c_Zn$ for a constant $c_Z>0$.  Decreasing
$\beta_\mathrm{ch}$ and increasing $\alpha_\mathrm{ch}$ by constant factors makes
the promise in \eqref{eq:chain-decoder-contraction} imply the required bound
on the error size for the degree four cochain decoder.

Apply that decoder with contraction $\theta>0$ and correction $d$.  A cochain
$\xi_\mathrm{top}$ and a minimum representative $\omega$ can be chosen so that
\begin{equation}
 \widetilde c+d=\widetilde\delta\xi_\mathrm{top}+\omega.
 \label{eq:top-cochain-residual}
\end{equation}
By \Cref{lem:parallel-cochain},
\begin{equation}
 |\omega|
 \le\theta c_\mathrm{top}|e|_R+\gamma_4c_\mu|m|
 \label{eq:top-residual-bound}
\end{equation}
for a constant $\gamma_4$.  The corrected noisy top guess is
$g'^{(4)}=g^{(4)}+\mathsf E_4(d)$.  Compatibility of the local maps gives
\begin{equation}
 g'^{(4)}
 =\widehat x^{(4)}
  +\mathbf\Delta_6^3\mathsf E_3(\xi_\mathrm{top})
  +\mathsf E_4(\omega)
  +(g^{(4)}-\widehat g^{(4)}).
 \label{eq:corrected-top-guess}
\end{equation}
Thus the corresponding ideal corrected guess
\begin{equation}
 \widehat g'^{(4)}
 =\widehat x^{(4)}
  +\mathbf\Delta_6^3v^{(3)},
 \qquad
 v^{(3)}=\mathsf E_3(\xi_\mathrm{top}),
 \qquad
 \mathbf\partial_Lv^{(3)}=0
 \label{eq:ideal-corrected-top-guess}
\end{equation}
differs from $g'^{(4)}$ on at most
$C(|\omega|+|m|)$ local coordinates.

\paragraph{Reverse syndrome explanation sequence.}
The reverse construction starts from $\widehat g'^{(4)}$ with the induction invariant
\begin{equation}
 \widehat g'^{(i)}
 =\widehat x^{(i)}
  +\mathbf\Delta_{2+i}^{i-1}v^{(i-1)},
 \qquad
 \mathbf\partial_Lv^{(i-1)}=0.
 \label{eq:reverse-sequence-invariant}
\end{equation}
The invariant, \eqref{eq:ideal-sequence-identities}, and
$\mathbf\Delta^2=0$ give
\begin{equation}
 \mathbf\Delta_{2+i}^{i}\widehat g'^{(i)}=0,
 \qquad
 \mathbf\partial_L\widehat g'^{(i)}
   =\widehat\sigma^{(i)}.
 \label{eq:ideal-corrected-consistency}
\end{equation}
Thus, given this identity at level $i\ge1$, horizontal exactness permits the
prescribed
preimage $\widehat u^{(i-1)}$ satisfying
\begin{equation}
 \mathbf\Delta_{2+i}^{i-1}\widehat u^{(i-1)}
   =\widehat g'^{(i)}
 \label{eq:ideal-horizontal-preimage}
\end{equation}
and put
\begin{equation}
 \widehat g'^{(i-1)}
 =\widehat g^{(i-1)}
  +\mathbf\partial_L\widehat u^{(i-1)}.
 \label{eq:ideal-downward-update}
\end{equation}
Indeed, with
$w^{(i-1)}=\widehat u^{(i-1)}+\widehat z^{(i-1)}+v^{(i-1)}$,
\eqref{eq:ideal-error-sequence} and
\eqref{eq:reverse-sequence-invariant} give
$\mathbf\Delta w^{(i-1)}=0$.  For $i>1$, horizontal exactness gives
$w^{(i-1)}=\mathbf\Delta b^{(i-2)}$.  Taking
$v^{(i-2)}=\mathbf\partial_Lb^{(i-2)}$ proves
\eqref{eq:reverse-sequence-invariant} at the next level.

At $i=1$, the same calculation gives
$w^{(0)}\in\ker\mathbf\Delta_3^0$.  By
\Cref{lem:upward-view-exactness}, these are the local views of a global
three chain $z$.  The reconstructed ideal correction $\widehat f$ satisfies
\begin{equation}
 \widehat f+\bar e=\partial_3z.
 \label{eq:ideal-chain-boundary}
\end{equation}
This proves homological equivalence for the ideal descent.

The noisy descent applies the same local rules to $g'^{(4)}$, using
zero on every inconsistent horizontal equation and on every inconsistent
final local view.  Since there are only four downward levels, bounded
support spread gives a constant $c_\mathrm{down}$ such that the final
correction $f$ obeys
\begin{equation}
 |f-\widehat f|
 \le c_\mathrm{down}\bigl(|\omega|+|m|\bigr).
 \label{eq:noisy-descent-bound}
\end{equation}
Combining \eqref{eq:ideal-chain-boundary},
\eqref{eq:top-residual-bound}, and
\eqref{eq:noisy-descent-bound} gives
\begin{equation}
 |e+f|_R
 \le c_\mathrm{down}c_\mathrm{top}\theta |e|_R
   +c_\mathrm{down}(\gamma_4c_\mu+1)|m|.
 \label{eq:chain-reduction-final-bound}
\end{equation}
Choose
$\theta=\eta/(c_\mathrm{down}c_\mathrm{top})$ and absorb the constants into
$\alpha_\mathrm{ch},\beta_\mathrm{ch},\gamma_\mathrm{ch}$.  This proves
\eqref{eq:chain-decoder-contraction}.

For the exact decoder, $m=0$ and the degree four cochain subroutine runs for
$O(\log n)$ colored rounds.  Its residual then has weight below one, so
$\omega=0$ by integrality.  Moreover, $m=0$ makes the noisy and ideal upward
sequences identical, and the deterministic reverse sequences therefore give
$f=\widehat f$.  Equation \eqref{eq:ideal-chain-boundary} now gives
\eqref{eq:chain-decoder-exact}.  The initial local explanations have total
support $O(|e|_R)$, the exact top cochain correction has absolute weight
$O(|\widetilde c|_R)$, and the remaining constant number of maps have bounded
support spread.  Hence the final correction has absolute weight
$O(|e|_R)$.
\end{proof}

\subsubsection{Binary restriction and repetition}

The preceding argument is naturally stated in block support over the
coefficient field.  One face contains at most $B_\mathrm{bin}$ binary
coordinates.  Consequently
\begin{equation}
  |e|_{R,\mathrm{block}}
  \le |e|_{R,\mathrm{bin}}
  \le B_\mathrm{bin}|e|_{R,\mathrm{block}},
  \label{eq:block-binary-weight}
\end{equation}
and the same comparison holds for syndrome support.  Changing between a
field basis and its trace dual basis is a local linear map of constant size.  The
cochain and chain decoders therefore retain their depths and their
contraction form after the constants
$\alpha,\beta,\gamma_{\mathrm{dec}},\eta$ are changed
by constant factors.

We also record the decoder transfer through the repetition used in the
transversal physical $\CCZ$ circuit.  Let $\ell_\mathrm{rep}$ be its constant
length.  For a $Z$ error
$z=(z_{i,t})$, define the parity on the $i$th repetition block by
\begin{equation}
  (\mathsf Pz)_i=\sum_{t=1}^{\ell_\mathrm{rep}}z_{i,t}.
  \label{eq:repetition-parity}
\end{equation}
The lifted old $X$-check syndrome is exactly $H_X\mathsf Pz$.  Repetition
$Z$ stabilizers identify $z$ with $\mathsf Pz$ supported on the first
copies, and
\begin{equation}
  |z|_{R,\mathrm{rep}}=|\mathsf Pz|_{R,\mathrm{base}}.
  \label{eq:repetition-Z-weight}
\end{equation}
Thus the base decoder may be run on the old syndrome and its correction
placed on the first copies.

For an $X$ error, write $x_i^\mathrm{base}=x_{i,1}$ and let
\begin{equation}
  u_{i,b}=x_{i,1}+x_{i,b}+\mu_{i,b},
  \qquad 2\le b\le \ell_\mathrm{rep},
  \label{eq:repetition-X-syndrome}
\end{equation}
be the measured repetition checks, where $\mu_{i,b}$ is the error in that
repetition check outcome.  Apply these
bits to the nonfirst copies.  The remaining word is the repetition of
$x^\mathrm{base}$,
plus a vector supported on those nonfirst copies whose weight is at most
the number of faulty repetition check outcomes.  Independently decode the
old $Z$-check syndrome $H_Zx^\mathrm{base}+m_\mathrm{old}$, where $m_\mathrm{old}$ is the error
in the record for the original checks.  If the base correction is $f$, define
$f_\mathrm{rep}$ to be the correction consisting of the bits $u_{i,b}$ on the
nonfirst copies together with $f$ repeated on all copies.  Put
\begin{equation}
 |m_\mathrm{rep}|=|m_\mathrm{old}|+
   \sum_{i}\sum_{b=2}^{\ell_\mathrm{rep}}\mu_{i,b}.
 \label{eq:repetition-record-weight}
\end{equation}
Projection to the first copies sends every repeated
$X$ stabilizer to its base stabilizer and gives
\begin{equation}
  |x+f_\mathrm{rep}|_R
  \le \ell_\mathrm{rep}\eta |x|_R
     +\ell_\mathrm{rep}\gamma_{\mathrm{dec}}|m_\mathrm{old}|
     +\sum_i\sum_{b=2}^{\ell_\mathrm{rep}}\mu_{i,b}
  \le \ell_\mathrm{rep}\eta |x|_R
     +(\ell_\mathrm{rep}\gamma_{\mathrm{dec}}+1)|m_\mathrm{rep}|.
  \label{eq:repetition-X-bound}
\end{equation}

It follows that one common set of parameters after repetition is
\begin{equation}
  \alpha'=\max\{\alpha,1\},\qquad
  \beta'=\beta/\ell_\mathrm{rep},\qquad
  \gamma'_{\mathrm{dec}}=\ell_\mathrm{rep}\gamma_{\mathrm{dec}}+1,\qquad
  \eta'=\ell_\mathrm{rep}\eta.
  \label{eq:repetition-decoder-parameters}
\end{equation}
Choose the base contraction with $\eta\le1/(8\ell_\mathrm{rep})$.  All
additional syndrome processing and Pauli layers have constant depth and
bounded spread.  Thus the same asymptotic decoder bounds hold after
repetition.

For an exact syndrome, choose a representative of minimum weight modulo
stabilizers for the
repeated error.  The correction on the nonfirst copies then has weight at
most
\begin{equation}
  (\ell_\mathrm{rep}-1)|x|_R.
  \label{eq:repetition-local-correction-weight}
\end{equation}
If the exact base decoder returns a correction of weight at most
$A|x|_R$, its repetition has weight at most
$\ell_\mathrm{rep}A|x|_R$.  Hence the complete correction after repetition has
absolute weight $O(|x|_R)$, with a constant depending only on the
repetition length.

\begin{proof}[Proof of \Cref{lem:decoder-interface}]
For one CSS component of a primal block, apply
\Cref{lem:parallel-cochain} in cochain degree two.  For the other component,
apply \Cref{lem:chain-reduction}, whose top decoder is the degree four
cochain decoder for the canonical kernel diagram of the code.  Physical $H$
exchanges the two components.  The binary restriction and repetition
calculation above transfers both statements to the physical binary codes.

For the six codes in $\mathfrak C_d$, taking the minimum of the radii and the
maximum of the coefficients gives constants
$\alpha,\beta,\gamma_{\mathrm{dec}}>0$.  For any $\eta>0$, choose the
cochain contractions sufficiently small before applying the constant
conversion factors.  This gives
\[
  |e+\mathcal D(He+m)|_R
  \le \eta|e|_R+\gamma_{\mathrm{dec}}|m|
\]
under the stated promise.  With $m=0$, repeating the contraction for
$O(\log n)$ rounds makes the reduced residual vanish by integrality.  The
sum of the correction supports is a geometric series, so the resulting
correction has absolute weight $O(|e|_R)$ and differs from $e$ by an element
of $\im G$.
\end{proof}

The adjacent degrees give the corresponding operation on noisy syndrome
records.  Let $H$ be either physical syndrome map and let $T$ be its next
check map, so that $TH=0$.

\begin{lemma}
\label{lem:syndrome-record-repair}
There are constants $a_R,C_R>0$ and a deterministic circuit
$\mathcal R_\mathrm{syn}$ of
$O(\log n)$ depth such that, whenever $s\in\operatorname{im}H$ and
$|v|\le a_Rn$,
\begin{equation}
  \mathcal R_\mathrm{syn}(s+v)=s+\mathcal R_\mathrm{syn}(v),
  \qquad \mathcal R_\mathrm{syn}(v)\in\operatorname{im}H,
  \qquad
  |\mathcal R_\mathrm{syn}(v)|+|\mathcal R_\mathrm{syn}(v)+v|
  \le C_R|v|.
  \label{eq:syndrome-record-repair}
\end{equation}
Moreover, there is a representative $h(v)$ with
\begin{equation}
  Hh(v)=\mathcal R_\mathrm{syn}(v),
  \qquad |h(v)|\le C_R|v|.
  \label{eq:syndrome-record-short-filling}
\end{equation}
The circuit is defined on every input word and sends zero to zero.
\end{lemma}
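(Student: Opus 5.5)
The plan is to build $\mathcal R_\mathrm{syn}$ from the next check map $T$ and the local filling property of \Cref{thm:qltc-family}, item 4.

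\textbf{Step 1: decode the next-level syndrome.} Given an input $w=s+v$ with $s\in\operatorname{im}H$, compute $Tw=Tv$, since $TH=0$. The word $v$ is a cochain (or chain) one degree above the error degree, and its $T$-syndrome is $Tv$. I would apply the exact $O(\log n)$-depth decoder of the adjacent degree to $Tv$. This is the parallel cochain decoder of \Cref{lem:parallel-cochain} in degree three, or its chain analogue through the canonical kernel diagrams of degree five. The degree three and five product expansion estimates of item 3 of \Cref{thm:qltc-family} play the role that the degree two and four estimates played there. The decoder returns a correction $c=c(Tv)$ with $v+c\in\ker T$ and $|c|=O(|v|)$. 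Here I use $|v|_R\le|v|\le a_Rn$, with $a_R$ chosen below that decoder's radius.

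\textbf{Step 2: define the repaired word.} Set $\mathcal R_\mathrm{syn}(w)=w+c(Tw)$. Because $c$ depends only on $Tw=Tv$, this gives the equivariance $\mathcal R_\mathrm{syn}(s+v)=s+\mathcal R_\mathrm{syn}(v)$ at once. It also shows $\mathcal R_\mathrm{syn}(v)=v+c\in\ker T$ and $|\mathcal R_\mathrm{syn}(v)+v|=|c|=O(|v|)$. The triangle inequality then gives $|\mathcal R_\mathrm{syn}(v)|=O(|v|)$. Zero maps to zero because zero syndrome produces zero correction.

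\textbf{Step 3: membership in $\operatorname{im}H$ and the short filling.} The word $y=v+c$ lies in $\ker T$ and has weight at most $(1+O(1))a_Rn$. Shrinking $a_R$ so that this weight is at most $a_Tn$, item 4 of \Cref{thm:qltc-family} places $y$ in $\operatorname{im}H$. The same item gives a preimage $h(v)$ with $|h(v)|=O(|y|)=O(|v|)$, computable in $O(\log n)$ depth, which is \eqref{eq:syndrome-record-short-filling}. Taking $C_R$ as the maximum of the resulting constants completes the bound \eqref{eq:syndrome-record-repair}. For inputs outside the promise, the circuit is still the same fixed deterministic composition: the total local sections return zero on inconsistent equations. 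So it is defined on every word, and only the stated identities require the promise.

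\textbf{Main obstacle.} The hardest step is Step 1: obtaining an exact decoder with $|c|=O(|v|)$ in the adjacent degrees three and five. The paper's single-shot analysis was written only for degrees two and four. I would first try to rerun \Cref{lem:small-set-inequality} and \Cref{lem:parallel-progress} verbatim in degree three, using the degree three product expansion and the spectral mixing bound of \Cref{lem:support-mixing}. For the dual type, I would use the chain reduction of \Cref{lem:chain-reduction} with its top cochain problem moved to degree five. The obstacle is checking that the spectral margin $\lambda\Lambda_j<1$ holds there. If it does not, the fallback is to avoid contraction altogether: find a minimum-weight local explanation of $Tv$ on bounded components of the support graph, as in the switching argument around \eqref{eq:bcs-large-component}. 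Each component would then be filled by a bounded local preimage, with item 4 applied only after components of linear size are excluded by $|v|\le a_Rn$.
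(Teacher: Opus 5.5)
Your proposal is correct and follows the paper's proof essentially step for step. The paper also defines $\mathcal R_\mathrm{syn}(y)=y+f(Ty)$, with $f$ the exact decoder in the adjacent degree (the degree three coefficient diagram for cochains and the degree five canonical kernel diagram for chains), gets equivariance from $Ts=0$, and then shrinks $a_R$ so that $(1+C_\mathrm{adj})a_R<a_T$ and applies item 4 of \Cref{thm:qltc-family} to get membership in $\operatorname{im}H$ and the short preimage $h(v)$. The paper invokes that adjacent-degree decoder directly, and the spectral margin in the small set inequality is stated uniformly for $2\le j\le5$ through $\mathcal L=\max_{2\le u\le5}L_u$, so your componentwise fallback is not needed.
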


\begin{proof}
Apply the exact decoder in the adjacent degree to the syndrome $Tv$.  For a
cochain syndrome this is the degree three coefficient diagram; for a chain
syndrome it is the degree five canonical kernel diagram.  The decoder is
defined to return zero on zero input.  The
absolute correction bound gives a constant $C_\mathrm{adj}$ such that the output word
$f(Tv)$ satisfies
\begin{equation}
  |f(Tv)|\le C_\mathrm{adj}|v|,
  \qquad T(v+f(Tv))=0.
  \label{eq:adjacent-degree-correction}
\end{equation}
Let $a_\mathrm{dec}>0$ be a common input radius for the exact decoders in the
adjacent degrees, and let $\Delta_T$ bound the column weight of the next check maps.
Choose $a_R>0$ so that
\begin{equation}
  \Delta_Ta_R<a_\mathrm{dec},
  \qquad
  (1+C_\mathrm{adj})a_R<a_T,
  \label{eq:record-repair-radii}
\end{equation}
where $a_T$ is the exactness radius in \Cref{thm:qltc-family}.  The first
inequality places $Tv$ in the decoder promise, since
$|Tv|\le\Delta_T|v|$.  The second gives
$|v+f(Tv)|<a_Tn$, and the adjacent degree statement in that theorem gives
$v+f(Tv)\in\operatorname{im}H$.  Define
\begin{equation}
  \mathcal R_\mathrm{syn}(y)=y+f(Ty),
  \label{eq:syndrome-record-repair-map}
\end{equation}
extending the decoder deterministically outside its promise and setting its
output to zero there.
Since $Ts=0$, the decoder receives the same word on $s+v$ and on $v$, which
proves the first identity.  The remaining statements follow from
$v+f(Tv)\in\operatorname{im}H$ and the absolute correction bound.  Applying
the preimage circuit in \Cref{thm:qltc-family} to
$\mathcal R_\mathrm{syn}(v)$ gives $h(v)$ in
\eqref{eq:syndrome-record-short-filling}, with $h(0)=0$ and
$O(\log n)$ depth.  Binary
restriction, the physical Hadamard dual, and the constant length repetition
change only the constants.
\end{proof}

Choose the independent check rows and canonical logical representatives
in \eqref{eq:canonical-checks}--\eqref{eq:canonical-logicals}.  For one CSS
component, let $J$ be the linear section obtained by filling the pivot
coordinates, and let $L(\nu)$ be the chosen physical representative of a
logical Pauli label $\nu$.  If $\lambda(P)$ is the logical label of a
physical Pauli $P$, then, modulo a stabilizer and phase,
\begin{equation}
  P=J(HP)+L(\lambda(P)).
  \label{eq:canonical-pauli-decomposition}
\end{equation}
The same notation denotes the direct sum of the two CSS components.  The
maps $J$, $L$, and $\lambda$ are binary linear maps determined by the code
member.  Copy and XOR trees evaluate them in $O(\log n)$ depth with bounded
fan-in and bounded fan-out, followed by one physical Pauli layer.
Choose total linear extensions of $J$, $L$, and $\lambda$ to the complete
binary record spaces.  Their feedback circuits are then defined on every
record, while \eqref{eq:canonical-pauli-decomposition} is invoked when the
repaired syndrome lies in $\operatorname{im}H$.  Componentwise, the repair
map and syndrome section are the direct sums of their $X$- and $Z$-component
maps, with the same decomposition for the logical label and representative
maps.  Physical
Hadamard exchanges the two components; binary restriction and the constant
length repetition conjugate the maps by their encoding circuits.

Canonical code switching, Clifford segments on complete blocks, and direct
zero and plus preparation have a common form before correction.  On a good support
and for the actual complete record, the gadget proof determines a total Pauli
correction $C$ and an output $Ce\mathcal EU\ket\psi$.  Let
\begin{equation}
  t_\mathrm{core}
  =\max\{t_\mathrm{sw},t_\mathrm{Cliff},t_{0/+}\}
  \label{eq:raw-core-radius}
\end{equation}
be a common residual radius for these segments.  Complete readout obeys the
analogous recordwise identity for its corrected outcome and has no surviving
quantum output.  Since the list of segments is finite, the boundary and
allowed fault fractions can be chosen so that
$|e|_R\le t_\mathrm{core}n$ in every case and
$t_\mathrm{core}<\beta_\mathrm{full}$.  These raw identities precede the deferred
computation below.  In particular, canonical code switching follows from
\eqref{eq:bcs-raw-bad-family} and
\eqref{eq:raw-switch-output}.

\begin{lemma}[Deferred Clifford corrections]
\label{lem:deferred-clifford-corrections}
Consider $N_\mathrm{seg}$ consecutive segments from the preceding list.  Append a fresh
complete syndrome extraction to every surviving
output of each segment.  The quantum cores can be executed without waiting
for the decoder computation of the preceding segment.  If $V$ is the total
number of local records and logical labels, all syndrome corrections and
logical Pauli frames can be computed after the last quantum core and then
applied physically in
\begin{equation}
  O\bigl(N_\mathrm{seg}+\log^c(n+2)+\log(V+2)\bigr)
  \label{eq:deferred-clifford-time}
\end{equation}
physical time, for a constant $c$.  Every surviving block remains within the
boundary radius after the final correction, outside the union of the bad set
families assigned to those segments and intervening memory windows.  The statement is
recordwise and holds with arbitrary reference systems.
\end{lemma}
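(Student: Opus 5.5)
The plan is to treat the $N_\mathrm{seg}$ segments as a single Clifford circuit interleaved with syndrome extractions, and to separate what happens on the quantum side from what the classical processor computes. The first step is to fix a good support outside the union of the bad families: those of the segments, through \eqref{eq:bcs-raw-bad-family} and the analogous families for Clifford segments and zero/plus preparations, and those of the intervening memory windows of \Cref{lem:ec-gadget}. Conditioned on a complete record $\omega$, \Cref{lem:recordwise-Pauli} lets us work with left and right Pauli terms. For each segment $a$, the raw identity preceding the lemma, for instance \eqref{eq:raw-switch-output}, writes its output as $C_a(\omega)e_a\mathcal E U_a\ket{\psi}$ with $|e_a|_R\le t_\mathrm{core}n$. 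Here $C_a$ is a total Pauli determined by the record alone.

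The key observation is that every quantum core is Clifford. A pending physical Pauli $C_a$ can therefore be commuted through all later cores, and it becomes a known Pauli $C_a'$. Its only effect on later records is a binary linear shift that the compiled Clifford circuit determines, as already noted at the end of the proof of \Cref{lem:canonical-code-switching}. The deferred procedure then runs as follows.
\begin{itemize}
\item[(i)] Execute all quantum cores back to back. Each core is followed by its fresh complete extraction, and memory windows cover all idle blocks.
\item[(ii)] After the last core, correct every record by subtracting the linear shifts induced by earlier frames. These shifts are compilation-data linear maps, so a prefix or XOR-tree evaluation over the $V$ records takes depth $O(\log(V+2))$.
\item[(iii)] For each segment, run the repair map $\mathcal R_\mathrm{syn}$ of \Cref{lem:syndrome-record-repair} and the decoder of \Cref{lem:decoder-interface} on the corrected fresh syndrome. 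Then form the section $J$ and the logical labels $L$ and $\lambda$ from \eqref{eq:canonical-pauli-decomposition}. All segments are processed in parallel, in depth $\log^c(n+2)$.
\item[(iv)] Compose the resulting frames with the propagated $C_a'$ into one total Pauli per surviving block and apply it physically in a single layer.
\end{itemize}
The $O(N_\mathrm{seg})$ term pays for the quantum cores themselves and for the constant-depth memory cycles between them.

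The error radius has to be controlled along the chain. The residual error $e_a$ is the only part of the physical error that actually spreads through later cores, because the frames $C_a$ are exact Paulis that are propagated symbolically and are never left for the decoder to guess. Each fresh extraction followed by memory, analysed relative to a stored baseline through \Cref{lem:sector-storage} and \Cref{lem:ec-gadget}, contracts the residual. The argument is an induction on $a$ with the invariant that, after segment $a$ and its memory window, the residual reduced weight is at most $t_*n$ about the ideal logical state times the accumulated known frame. Bounded-spread cores keep this below $\beta_\mathrm{full}$ because $t_\mathrm{core}<\beta_\mathrm{full}$. At the end, the decoded correction is within the common radius, and \Cref{lem:instrument-normalization} upgrades the branchwise identity to the full instrument with arbitrary references.

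I expect the main obstacle to be step (ii) together with the induction. Corrected syndromes must be consistent across segments whose logical bases are permuted, as in the transpose of a code switch. A frame on old columns also alters later row records nonlocally, which risks a dense but record-determined frame. My first approach is to keep every frame purely symbolic as an element of a linear space, with the map from raw records to frames linear and fixed at compilation. Only the final composite Pauli is then applied, and density never enters the weight analysis, because stabilizers and known Paulis carry no decoding cost. The remaining check is that the nonlinear decoder is invoked only on syndromes of genuinely sparse residuals, which holds by the induction invariant.
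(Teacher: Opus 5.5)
There is a genuine gap, and it is in the step you flagged. Your proposed fix does not work. The rest of your outline has the right shape: quantum cores back to back, pending Paulis propagated as linear record shifts, one final physical Pauli layer, and \Cref{lem:instrument-normalization}.

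\textbf{The circular dependency.} The frames $C_a(\omega)$ are outputs of nonlinear decoders, such as the component algorithm of the switch and the local flip decoder. Each $C_a$ can only be computed after the records of segment $a$ have been corrected for $C_1',\ldots,C_{a-1}'$. So step (ii) presupposes every earlier decoder output, and the map from raw records to frames is neither linear nor fixed at compilation. Executed honestly, your schedule is a serial chain of $N_\mathrm{seg}$ polylogarithmic decoder calls, not \eqref{eq:deferred-clifford-time}. For the $N_\mathrm{seg}=O(k)$ coordinate rounds, this loses the linear time bound the lemma is meant to provide.

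\textbf{The radius induction fails for the same reason.} Measure the state against your frame $\prod_aC_a'$. The core residual $e_a$ has weight up to $t_\mathrm{core}n$. Memory relative to the stored fresh baseline does not contract it, because $He_a$ is part of that baseline. Contracting it relative to your frame would need $HC_a$, which is again the decoder output. The next raw identity needs an input within $t_*n$, but only $t_\mathrm{core}<\beta_\mathrm{full}$ is available. So the carried residuals grow with each bounded-spread core.

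\textbf{The missing idea: normalize each segment from its raw record alone.} Do this immediately after the segment's fresh extraction. Let the raw fresh syndrome be $a=HC+He+v$, and put $\widehat a=\mathcal R_\mathrm{syn}(a)$, $g=\mathcal D_\mathrm{exact}(\widehat a+HC)$, and $\nu=\lambda(C+g)$.
\begin{itemize}
\item Translation covariance in \Cref{lem:syndrome-record-repair} lets the repair act on the raw word, even though it contains the large known part $HC$. It gives $\widehat a+HC=H(e+h(v))$.
\item The exact decoder then gives $e+g=h(v)$ modulo stabilizers.
\item Equation \eqref{eq:canonical-pauli-decomposition} turns the output into $[P+h(v)]J(\widehat a)L(\nu)\mathcal EU\ket\psi$, which is \eqref{eq:deferred-normalization-identity}.
\end{itemize}

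\textbf{Two consequences replace your induction.}
\begin{itemize}
\item The core residual $e$ is absorbed into a record-determined frame. The only residual carried forward is $P+h(v)$, which is bounded by the fresh extraction faults. Memory relative to the raw baseline, whose offset from $\widehat a$ is \eqref{eq:deferred-baseline-error}, keeps it inside $t_*n$ via \Cref{lem:sector-storage}. So nothing accumulates over segments.
\item The syndrome part $J(\widehat a)$ of the frame depends only on raw records. The nonlinear computations of the next segment therefore need only the raw baselines at its input and its own records. The decoder-dependent part is the logical label $\nu$, which enters later frames only linearly.
\end{itemize}

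\textbf{Runtime enables.} Even this linear problem is not a compiled map once gates carry runtime enables. The paper evaluates it with the per-coordinate recurrence \eqref{eq:enabled-frame-recurrence}, using XOR and copy trees rather than a serial product of dense symplectic matrices. This is what yields the additive $\log(V+2)$ term.
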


\begin{proof}
Consider one CSS component of one segment.  Let $C$ be the nominal physical
Pauli correction determined by the records of its quantum core.  On a Pauli
branch, write the output before correction as
\begin{equation}
  P C e\,\mathcal E U\ket\psi,
  \qquad
  a=HC+He+v,
  \label{eq:raw-clifford-output}
\end{equation}
where $e$ is the residual entering the fresh extraction, while $P$ and $v$
are the data and record errors produced by that extraction.  Put
\begin{equation}
  \widehat a=\mathcal R_\mathrm{syn}(a),
  \qquad
  g=\mathcal D_\mathrm{exact}(\widehat a+HC),
  \qquad
  \nu=\lambda(C+g),
  \label{eq:deferred-normalization-data}
\end{equation}
where $\mathcal R_\mathrm{syn}$ is the map in
\Cref{lem:syndrome-record-repair} and $\mathcal D_\mathrm{exact}$ is the exact
decoder in \Cref{lem:decoder-interface}.
Translation covariance gives
\begin{equation}
  \widehat a+HC=He+\mathcal R_\mathrm{syn}(v)=H(e+h(v)).
  \label{eq:deferred-normalization-syndrome}
\end{equation}
Let $\beta_\mathrm{full}n$ be a common radius for the exact decoders.  There are
constants $c_P,c_v$ such that an appended extraction containing at most
$a_\mathrm{ext}n$ faults gives
\(
  |P|_R\le c_Pa_\mathrm{ext}n
\)
and
\(
  |v|\le c_va_\mathrm{ext}n
\).
Choose the common boundary and $a_\mathrm{ext}>0$ so that
\begin{equation}
  c_va_\mathrm{ext}<a_R,
  \qquad
  t_\mathrm{core}+C_Rc_va_\mathrm{ext}<\beta_\mathrm{full},
  \qquad
  (c_P+C_Rc_v)a_\mathrm{ext}<\frac{t_*}{2}.
  \label{eq:deferred-common-radii}
\end{equation}
The finite list of segment types permits one choice of these constants.  The
middle inequality places $e+h(v)$ inside the exact decoder radius.  The exact decoder then
gives $e+g=h(v)$ modulo stabilizers.  Moreover,
$H(C+g)=\widehat a$, and \eqref{eq:canonical-pauli-decomposition} gives
\begin{equation}
  P C e\,\mathcal E U\ket\psi
  =[P+h(v)]J(\widehat a)L(\nu)\mathcal E U\ket\psi
  \label{eq:deferred-normalization-identity}
\end{equation}
up to phase.  The residual $P+h(v)$ is supported in a bounded neighborhood
of the fresh extraction faults and lies inside the boundary radius by
\eqref{eq:deferred-common-radii}.  The same calculation on matrix units proves
the identity for arbitrary supported fault maps and external references.

Before the next segment, the Pauli $J(\widehat a)L(\nu)$ is removed virtually.  Its
propagation through the compiled Clifford circuit gives linear shifts of
the later measurement records and a known output Pauli $A$.  The local
decoder is run on these shifted records and outputs $C_\mathrm{seg}$; the
nominal correction in \eqref{eq:raw-clifford-output} is $C=C_\mathrm{seg}+A$.
Thus the decoder computation for a segment depends only on the raw syndrome
baselines at its input and on its own
records, not on the completed decoder output of the previous segment.  Raw
zero and plus ancillas have zero logical frame by their direct preparation
identity.  During every wait, each block uses its fresh raw syndrome as the
unchanged baseline in \Cref{lem:sector-storage}.  Relative to the sector
label $\widehat a$, that baseline has error
\begin{equation}
  a+\widehat a=v+\mathcal R_\mathrm{syn}(v),
  \qquad |a+\widehat a|\le C_R|v|.
  \label{eq:deferred-baseline-error}
\end{equation}
After choosing the memory fault fraction $a_\mathrm{mem}>0$ so that
\begin{equation}
  \theta t_*+C_\mathrm{sec}C_Rc_va_\mathrm{ext}
    +C_\mathrm{mem}a_\mathrm{mem}<t_*,
  \label{eq:deferred-memory-radius}
\end{equation}
the stored sector recurrence remains inside the same boundary throughout
the wait.  Here $\theta,C_\mathrm{sec},C_\mathrm{mem}$ are the common constants in
\Cref{lem:sector-storage}.  Virtual Pauli removal is an equality of branch
maps for the original physical instrument and therefore preserves the
assigned fault support.

For a classically enabled sequence of coordinate operations, group the constant number of
Clifford segments acting at coordinate $u$ into one template.  Let
$M_u(e_u)$ be its symplectic action, equal to the identity when the enable
$e_u$ vanishes, and let $\nu_u$ be its additive Pauli label.  If
$\nu_{w,u}$ denotes the component of $\nu_w$ on coordinate $u$, then the
final logical frame is
\begin{equation}
  f_{\mathrm{out},u}
  =M_u(e_u)\left(f_{\mathrm{in},u}+\sum_{w<u}\nu_{w,u}\right)
   +\sum_{w\ge u}\nu_{w,u}.
  \label{eq:enabled-frame-recurrence}
\end{equation}
This formula includes terms with nonzero components on many logical coordinates, but it
requires only balanced XOR and copy trees, not a serial product of dense
symplectic matrices.  A constant number of intervening compiled routes does
not change the depth bound.  After the local decoder computations finish,
evaluate the corresponding $J(\widehat a)$ for each surviving block together
with $L(f_\mathrm{out})$, apply them in one Pauli layer, and run the
terminal error correction cycle.  A complete logical readout is handled by
subtracting the appropriate component of $J(\widehat a)$ from the raw word and then
applying the logical frame shift to the decoded outcome.  The branch
identities intertwine the corrected logical projectors with the ideal
readout projectors.  Trace preservation and
\Cref{lem:instrument-normalization} therefore give the complete measurement
instrument, including its outcome probabilities.  The $N_\mathrm{seg}$ quantum
cores, the parallel local decoder computations, and these balanced linear circuits give
\eqref{eq:deferred-clifford-time}.
Overlapping bad set families are combined by the sum operation in
\eqref{eq:bad-family-sum}; the product operation is used only for disjoint
sets of physical locations.
\end{proof}

\subsubsection{Error correction and memory}

\begin{proof}[Proof of \Cref{lem:ec-gadget}]
Measure each $X$ check with a fresh plus ancilla, CNOTs between the ancilla
and the support of the check, and an $X$ measurement; use the conjugate
circuit for the $Z$ checks.  An edge coloring of the Tanner graphs gives
constant quantum depth.  The extraction ancillas can be reused, and one
complete extraction, classical decoding, and Pauli correction window has
$O(n)$ qubits and $O(n)$ locations, including all idles during the classical
layers.

Condition on a Pauli fault path with $f$ faulty locations.  Propagation
through the bounded light cones represents these faults by a final data Pauli
and two errors in the syndrome record whose total
weight is at most $hf$, for a constant $h$.  If
$t_\mathrm{in}$ is the reduced weight of the incoming Pauli error, applying
\Cref{lem:decoder-interface} to the two CSS components and including faults
in the final Pauli correction gives
\begin{equation}
  t_\mathrm{out}
  \le 2\eta t_\mathrm{in}
    +(2\gamma_{\mathrm{dec}}+1)hf.
  \label{eq:physical-memory-recurrence}
\end{equation}

Let $s_\mathrm{spr}$ be the largest support spread among the physical gate
segments used between two error correction windows.  These segments preserve
the codespace and act on at most a constant number of blocks.  For a segment
acting on several blocks, take the maximum reduced support spread into any
one output block.  Choose constants
$t_*>0$ and $c_\mathrm{fault}>0$ such that
\begin{align}
 s_\mathrm{spr}t_*+\alpha h c_\mathrm{fault}&<\beta,
 \label{eq:physical-memory-promise}\\
 2\eta s_\mathrm{spr}t_*
 +(2\gamma_{\mathrm{dec}}+1)h c_\mathrm{fault}&\le t_*,
 \label{eq:physical-memory-closure}\\
 2t_*&<\delta.
 \label{eq:physical-memory-distance}
\end{align}
The contraction $\eta$ is chosen first to compensate the support spread, and
then $t_*$ and $c_\mathrm{fault}$ are decreased so that these inequalities
hold.  Consequently, whenever the input reduced weight is at most $t_*n$ and
the window contains at most $c_\mathrm{fault}n$ faults, the output reduced weight
is also at most $t_*n$.

If a window has at most $C_\mathrm{loc}n$ locations and the faulty support is
$p$-locally stochastic, then
\begin{equation}
 \Pr[\text{window is bad}]
 \le
 \binom{C_\mathrm{loc}n}{\lceil c_\mathrm{fault}n\rceil}
 p^{\lceil c_\mathrm{fault}n\rceil}
 \le
 \left(\frac{C p}{c_\mathrm{fault}}\right)^{c_\mathrm{fault}n}
 \le e^{-cn}
 \label{eq:physical-memory-failure}
\end{equation}
below a positive threshold.

For arbitrary CPTP faults, condition on the faulty support and every
measurement record, and expand the supported operations on the left and right in the
Pauli basis.  The same correction is applied to all terms with the same
record, and \eqref{eq:physical-memory-recurrence} bounds the support on both
sides.  Since \eqref{eq:physical-memory-distance} makes all such Pauli errors
jointly correctable, ideal recovery gives the identity logical channel on
the encoded state and an arbitrary reference system.  The same argument
covers the finite list of physical gate segments used in the simulation,
including the transversal physical $\CCZ$ circuit, because conjugation
enlarges support only within a bounded light cone.  The bad family is defined
by the original faulty support, so no independence assumption or stochastic
statement about a stochastic output error is used.
\end{proof}

\begin{proof}[Proof of \Cref{lem:sector-storage}]
Let $s_0$ be the syndrome of the reference state defining this sector, and
write the first extraction record as $\widehat s_0=s_0+v_0$.  The word $v_0$ includes
the syndrome of the initial small Pauli and the faults in this first
syndrome record; data faults created during that extraction are included in
$t_0$.  Keep $\widehat s_0$ unchanged and subtract it from every later measured
syndrome.  Relative to that reference state, the decoder input consists of the
syndrome of the current error, the same offset $v_0$, and the fresh faults in
the current window.  Equation \eqref{eq:physical-memory-recurrence}, with a
constant contraction $0<\theta<1$, gives
\[
  t_{j+1}\le\theta t_j+C_\mathrm{sec}|v_0|+C_\mathrm{mem}f_j.
\]
Therefore
\[
  t_j\le\theta^jt_0+
  \frac{C_\mathrm{sec}|v_0|+C_\mathrm{mem}\max_i f_i}{1-\theta}.
\]
If $|v_0|/n$ and $f_i/n$ are sufficiently small constants, the state
continues to satisfy \eqref{eq:typed-boundary-map} for any prescribed
polynomial number of cycles.  The same baseline is used throughout; its syndrome is
neither identified nor filled.  Every code stabilizer acts as a scalar on
the same syndrome sector, so replacing a Pauli by a representative of
minimum weight modulo stabilizers does not change the branch, including when the block is
entangled with a reference system.  The recordwise Pauli expansion in the
proof of \Cref{lem:ec-gadget} therefore gives the statement for arbitrary
CPTP faults and an arbitrary reference system.  If the preparation estimate
shows that the delayed correction places the branch inside the ordinary
memory input radius, the subsequent error correction cycle produces an
output with reduced error at most $t_*n$.
\end{proof}

\subsection{Usable code lengths}
\label{sec:bounded-code-lengths}

The original nested tower gives unbounded cover degree and increasing
injectivity radius, but its available block lengths may be too sparse.  We
therefore select congruence covers of the same arithmetic complex in every
sufficiently large degree, so that the first usable length above a target is
within a constant factor of that target.

\begin{lemma}[Usable code lengths]
\label{lem:usable-code-lengths}
Let $Q$ be the even prime power used in the arithmetic construction.  For
every sufficiently large integer $d$, there is an odd congruence cover in
the same construction whose combined binary length is
\begin{equation}
 n_d=n_*(Q^{2d}-1),
 \label{eq:usable-length-construction}
\end{equation}
where $n_*$ is a constant.  The injectivity radii of these covers tend to infinity
with $d$, and all code and cohomological properties in
\Cref{thm:qltc-family} hold for all sufficiently large members.
\end{lemma}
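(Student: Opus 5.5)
We construct the covers explicitly as principal congruence subgroups of the arithmetic lattice underlying the construction of Ref.~\cite{GoodQLTCTransversal2026}. Concretely, assume the complex is a quotient of a building (or product of trees) for a group $G$ over the local field $\mathbb F_Q((t))$, with an arithmetic lattice $\Gamma\subset G(\mathbb F_Q[t])$. For a prime ideal $(\pi_d)\subset\mathbb F_Q[t]$ of degree $d$, the congruence quotient $\Gamma/\Gamma(\pi_d)$ surjects onto a finite group of Lie type over $\mathbb F_{Q^d}$ by strong approximation. The number of top cells of $\Gamma(\pi_d)\backslash X$ is a fixed multiple of $|\Gamma/\Gamma(\pi_d)|$. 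The plan is to choose the group type, for instance a form of $\mathrm{SL}_2$ or a norm-one torus twist, and to choose the level, so that this index is exactly $Q^{2d}-1$ up to a constant factor. Here $Q^{2d}-1$ is the order of $\mathbb F_{Q^{2d}}^\times$, which suggests working with a quadratic extension and taking the relevant quotient to be a cyclic or torus-type group of that order. We then restrict to odd covers. Since $Q$ is even, $Q^{2d}-1$ is odd automatically, which is presumably why oddness appears in the statement.

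The first step is to fix $n_*$ as the product of the constant number of faces of the relevant degree per fundamental-domain cell, the stalk dimensions, the binary restriction factor, and the repetition length. This shows that the combined binary length is $n_*$ times the cover index, which gives \eqref{eq:usable-length-construction}. The second step is to show that the injectivity radius tends to infinity. Any nontrivial element of $\Gamma(\pi_d)$ is congruent to the identity modulo a polynomial of degree $d$, so its matrix entries have degree at least $d$. Hence its displacement on the building is at least $c\,d$, by the standard valuation argument.

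The third step is to transfer every item of \Cref{thm:qltc-family} to these covers. The local items (2), (3), and (6) depend only on links, which are unchanged once the injectivity radius exceeds a constant; the spectral bound on the parallel face graphs also relies only on the lattice being congruence. We must also check that each parallel face graph component contains a constant fraction of the faces, which follows from a bounded number of $\Gamma(\pi_d)$-orbits. Distance, rate, and the exactness radius $a_T$ in item (4) come from the expansion-plus-local argument of the reference, which uses only local data and large injectivity radius. Soundness also comes from that argument.

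The main obstacle is item (5): we must show that the cohomological invariant form $\tau_d$ retains polynomial subrank on non-nested covers. The reference proves this along its own tower. I would try to re-run its argument, which exhibits $n^\sigma$ disjoint diagonal triples from locally supported cohomology classes placed far apart. That argument needs only that the cover be large with injectivity radius growing, and that a fixed local configuration embeds $\Omega(n^\sigma)$ times with pairwise separation; both hold for every congruence cover in our family. If the reference's argument genuinely uses nesting, the fallback is to pull classes back along the map to a coarser cover. Since a level-$\pi_d$ cover need not cover a smaller one, one could instead pass to levels $\pi_d\pi_{d_0}$ with a fixed $d_0$; this changes the index only by a constant and can be absorbed into $n_*$.
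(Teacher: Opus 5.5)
\textbf{The choice of subgroup fails.} The principal congruence subgroup $\Gamma(\pi_d)$ gives neither the stated length nor an odd cover.

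In the actual construction the group is $\mathrm{SL}_1(D)$, for a quaternion division algebra over $\mathbb F_Q(t)$ split at six places (one per tree direction). Strong approximation makes the reduction $\rho_{w_d}:\Gamma\to\mathrm{SL}_2(\mathbb F_{Q^d})$ onto at a fresh place of degree $d$. Hence $[\Gamma:\Gamma(\pi_d)]=Q^d(Q^{2d}-1)$. Your lengths would therefore be $n_*Q^d(Q^{2d}-1)$, not a constant times $Q^{2d}-1$, and the degree is divisible by $Q^d$, hence even. Consecutive ratios would still be bounded, so the failure is not in length selection.

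Evenness is the serious part. The reference carries rate, the invariant form and the subrank bound to covers by a pullback--transfer argument that needs odd cover degree. Your remark that $Q^{2d}-1$ is odd does not apply to the covers you actually built, and the fallback level $\pi_d\pi_{d_0}$ only adds more factors of $2$.

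The repair through a torus or cyclic quotient of order $Q^{2d}-1$ is not available either. The group is dictated by the cubical complex, so you cannot change its type. Moreover, an irreducible lattice in a product of six copies of $\mathrm{SL}_2$ has finite abelianization by Margulis' normal subgroup theorem, so it has no abelian quotients of unbounded order.

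\textbf{The missing idea is a non-normal subgroup.} Put $\Gamma(d)=\rho_{w_d}^{-1}(U_d)$, with $U_d$ the upper unipotent subgroup. Since $|\mathrm{SL}_2(\mathbb F_{Q^d})|=Q^d(Q^{2d}-1)$ and $|U_d|=Q^d$, the group $U_d$ is a Sylow $2$-subgroup, and $[\Gamma:\Gamma(d)]=Q^{2d}-1$ exactly, which is odd.

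Your injectivity-radius argument must then change, because elements of $\Gamma(d)$ are not congruent to the identity and $\Gamma(d)$ is not normal. The paper uses traces instead:
\begin{itemize}
\item Every element of every conjugate of $U_d$ has trace $2=0$.
\item A nonidentity $x$ of reduced norm one in the division algebra has nonzero reduced trace. Indeed, trace zero gives $(x+1)^2=0$ in characteristic two, which forces $x=1$.
\item Only finitely many elements displace the finitely many base-orbit representatives by a bounded amount. Their nonzero traces stay nonzero modulo $w_d$ once $d$ exceeds the degrees of their numerators and denominators.
\item Conjugation invariance of the trace then excludes these elements from all conjugates of $\Gamma(d)$, which is what controls the radius at every point of the quotient.
\end{itemize}
With these covers, the remaining properties come from the reference's congruence spectral bound, its positive-rate theorem and its odd-degree pullback--transfer. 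You do not need to re-run the subrank construction on non-nested covers.
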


\begin{proof}
Let $F=\mathbb F_Q(t)$, let $D$ be the quaternion division algebra of
the construction, and put $H_\mathrm{ar}=\operatorname{SL}_1(D)$.  If
$S=\{\mathfrak p_1,\ldots,\mathfrak p_6\}$ is the set of six split places
defining the cubical directions, the arithmetic subgroup
that preserves types has the exact description
\begin{equation}
 \Gamma=H_\mathrm{ar}(F)\cap K^S,
 \label{eq:arithmetic-group-description}
\end{equation}
where $K^S$ denotes the prescribed compact open conditions away from $S$.
The six split places have degrees
\begin{equation}
  \nu_i=\nu_0+2(i-1),
  \qquad i\in[6],
  \qquad \nu_0\ \text{odd},
  \label{eq:walking-degrees}
\end{equation}
so the first two degrees are coprime.  This replaces an argument using a
place of degree one,
which is not available for the actual construction.

The group $H_\mathrm{ar}$ is simply connected and absolutely almost simple of
type $A_1$, and it is split, hence noncompact, at every place in $S$.  To
apply strong approximation in the exact subgroup
\eqref{eq:arithmetic-group-description}, first note that scalar weak
approximation holds with the original coordinate $t$ at a split place of
arbitrary degree: clear the finitely many prescribed denominators, impose
the jets at finite places by polynomial Chinese remaindering, and use the high
degree coefficients to impose the desired jet at infinity.  In
characteristic two, an element of reduced norm one in the division quaternion
algebra with reduced trace zero is the identity; elements of reduced norm one
and nonzero trace are
therefore dense in every constrained local factor.  The trace polynomial
$X^2-uX+1$ is separable for $u\ne0$, and local conjugacy together with weak
approximation in $D^\times$ removes the auxiliary adelic conjugation.

The compact representative argument also uses the actual walking degrees.
If an off-$S$ reduced norm has absolute value $Q^z$, choose integers
\begin{equation}
  b_1=z\frac{\nu_0+1}{2},
  \qquad
  b_2=-z\frac{\nu_0-1}{2},
  \qquad
  \nu_0b_1+(\nu_0+2)b_2=z.
  \label{eq:two-place-norm-compensation}
\end{equation}
Diagonal elements at the first two split places cancel this norm.
Thus the projection of the adelic group with product norm one onto the
restricted product over the places outside $S$ is surjective.  Strong approximation over $F$
\cite[Theorem~A]{Prasad1977} now applies to the group and the compact open
conditions occurring in \eqref{eq:arithmetic-group-description}.  For every
fresh split finite place $w\notin S$, the reduction map
\begin{equation}
 \rho_w:\Gamma\longrightarrow\operatorname{SL}_2(k_w)
 \label{eq:arithmetic-reduction-map}
\end{equation}
is onto.  Indeed, one prescribes an arbitrary residue class at $w$ and keeps
the compact open conditions defining $K^S$ at the remaining places.

For each sufficiently large $d$, choose an irreducible place $w_d$ of degree
$d$, outside the finite set of excluded places.  Then
$k_{w_d}\cong\mathbb F_{Q^d}$.  Let
\begin{equation}
 U_d=
 \left\{
 \begin{pmatrix}
  1&u\\
  0&1
 \end{pmatrix}:u\in k_{w_d}
 \right\}.
 \label{eq:unipotent-subgroup}
\end{equation}
Since $Q$ is even,
\begin{equation}
 |\operatorname{SL}_2(k_{w_d})|
 =Q^d(Q^{2d}-1),
 \qquad
 |U_d|=Q^d.
 \label{eq:SL2-order}
\end{equation}
Thus $U_d$ is a Sylow $2$-subgroup.  Surjectivity of
\eqref{eq:arithmetic-reduction-map} shows that
\begin{equation}
 \Gamma(d)=\rho_{w_d}^{-1}(U_d)
 \qquad\text{satisfies}\qquad
 [\Gamma:\Gamma(d)]=Q^{2d}-1.
 \label{eq:dense-cover-index}
\end{equation}
In particular, every cover has odd degree.  The groups $\Gamma(d)$ need not
be nested.

The injectivity radii tend to infinity.  For any prescribed radius, only
finitely many nonidentity elements displace one of the finitely many
representatives of the base orbits by at most that radius.  Their reduced traces are nonzero.
After $d$ exceeds the degrees of the irreducible factors in the numerators
and denominators of these traces, reduction at $w_d$ keeps every such trace
nonzero.  Every element of $U_d$, and of each conjugate of $U_d$, has trace
zero in characteristic two.  The usual conjugacy argument therefore
excludes all such short elements from $\Gamma(d)$.

The spectral statement for congruence covers, the theorem giving positive
rate for the primary code, and the pullback--transfer argument for covers of odd degree in
Ref.~\cite{GoodQLTCTransversal2026} apply to these sufficiently deep
covers.  Hence the local code identifications used by the decoder persist,
as do the distance and soundness bounds.  The primary rate remains positive,
and the cohomological invariant form still gives the same transversal logical
$\CCZ$ after the repetition.  All relevant face and stalk dimensions
scale with the cover degree; the binary restriction and repetition contribute
only constant factors.  If $n_*$ is the combined binary length over the base complex,
\eqref{eq:dense-cover-index} gives \eqref{eq:usable-length-construction}.
\end{proof}

\begin{corollary}
\label{cor:bounded-length-ratios}
For every sufficiently large $d$,
\begin{equation}
  \frac{n_{d+1}}{n_d}
  =Q^2+\frac{Q^2-1}{Q^{2d}-1}
  \le Q^2+1.
  \label{eq:dense-ratio}
\end{equation}
Consequently, for every sufficiently large $x$, the first usable length
$n_d\ge x$ satisfies
\begin{equation}
  x\le n_d\le(Q^2+1)x.
  \label{eq:length-selection}
\end{equation}
\end{corollary}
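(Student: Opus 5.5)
The equality in \eqref{eq:dense-ratio} is a direct computation from \eqref{eq:usable-length-construction}. For \eqref{eq:length-selection}, I would use a consecutive-degree argument, which avoids any appeal to a nested tower.

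For the ratio, substitute $n_d=n_*(Q^{2d}-1)$ and cancel $n_*$. Write
\[
  \frac{Q^{2d+2}-1}{Q^{2d}-1}
  =\frac{Q^2(Q^{2d}-1)+Q^2-1}{Q^{2d}-1}
  =Q^2+\frac{Q^2-1}{Q^{2d}-1}.
\]
For $d\ge1$ the fraction is at most $1$, because $Q^{2d}\ge Q^2$. This gives the bound $Q^2+1$. Here $d$ ranges over the degrees of \Cref{lem:usable-code-lengths}, that is, every integer $d\ge d_0$ for some cutoff $d_0$. Without gaps in $d$, consecutive usable lengths are exactly $n_d$ and $n_{d+1}$.

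For the selection statement, take $x\ge n_{d_0}$, which is what ``sufficiently large'' means here. The sequence $n_d$ is strictly increasing and unbounded. Hence there is a least $d\ge d_0$ with $n_d\ge x$, and $x\le n_d$ holds by definition. Since $x\ge n_{d_0}$ and $n_d$ is increasing, $n_{d_0}$ itself meets the threshold, so either $d=d_0$ or $d>d_0$. If $d>d_0$, minimality gives $n_{d-1}<x$, and the ratio bound applied at $d-1\ge d_0$ gives $n_d\le(Q^2+1)n_{d-1}<(Q^2+1)x$. If $d=d_0$, the threshold $x\ge n_{d_0}=n_d$ combined with $n_d\ge x$ forces $n_d=x$, so the upper bound is immediate.

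No step poses a real obstacle. The only point needing care is that the ratio bound must be applied only at degrees where \Cref{lem:usable-code-lengths} guarantees that the cover exists. This is the reason for requiring $x\ge n_{d_0}$, so that the predecessor $d-1$ is never below the cutoff.
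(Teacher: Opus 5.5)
Your proof is correct and follows the paper's argument. You derive the same algebraic identity for the ratio, then use minimality of $d$ to get $n_{d-1}<x$ and apply the ratio bound once. You also treat the cutoff explicitly: requiring $x\ge n_{d_0}$ guarantees $d-1\ge d_0$, and in the boundary case $d=d_0$ equality $n_d=x$ is forced. The paper leaves this point implicit. Your phrase that $n_{d_0}$ ``itself meets the threshold'' is garbled, but the case split it introduces is valid regardless.
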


\begin{proof}
The first identity follows from
\[
  \frac{Q^{2d+2}-1}{Q^{2d}-1}
  =Q^2+\frac{Q^2-1}{Q^{2d}-1}.
\]
If $d$ is the first index with $n_d\ge x$, then $n_{d-1}<x$, and
\eqref{eq:dense-ratio} gives $n_d\le(Q^2+1)x$.
\end{proof}

The same sequence of code lengths is equipped with the maps in
\eqref{eq:subrank-injections}--\eqref{eq:subrank-diagonal}.  Their uniformity
on this sequence is what
allows the compiler to choose the first usable length above its target
scale.  Equations
\eqref{eq:exact-dense-length}--\eqref{eq:length-selection} prove
\eqref{eq:dense-lengths-main}.

%% file: sections/injection-route.tex
\section{An alternative scheme based on $\CCZ$-state injection}
\label{sec:injection-route}

The preceding construction obtains its non-Clifford gates directly from the
transversal action of the computational qLTC.  The alternative construction
uses the same memory, readout, and canonical code switching operations, but
implements non-Clifford gates by injecting encoded $\ket{\CCZ}$ states.  Its
finite distillation circuit is the punctured
quantum Reed--Solomon construction of Ref.~\cite{NguyenPattison2025}.  We fix
one member of that construction and increase only the lengths of the qLTC
blocks protecting it.  Code switching between consecutive lengths is defined
on arbitrary source states, and a factory at protection length $m$ takes
time $O(m)$.

The growing distillation code in that construction has a yield that
contributes a subpolylogarithmic loss after
scheduling.  Here the target accuracy instead comes from recursively applying
the same finite circuit to qLTC blocks of geometrically increasing length.
Only a constant number of rounds of child factories is needed at each level.
The resulting recurrence has total time $O(m)$ and failure probability
$\exp(-\Omega(m))$.  Taking $m=\Theta(\log(WD/\varepsilon))$ gives a strictly
logarithmic time bound.  The Clifford masks and dense routing are constructed
on the complete logical space of the primary code, independently of the
active coordinates used in the main construction.

Throughout this section, $G(m)$ denotes the primary computational code at a
usable block length $m$, with
\begin{equation}
  \kappa m\le k(m)\le m,
  \qquad d_X(m),d_Z(m)\ge\delta m.
  \label{eq:inj-code-parameters}
\end{equation}
Let $m_\mathrm{next}(x)$ be the first usable length at least $x$.  The dense
family gives a constant $R_\mathrm{gap}$ such that
\begin{equation}
  x\le m_\mathrm{next}(x)\le R_\mathrm{gap}x
  \label{eq:inj-usable-gap}
\end{equation}
for all sufficiently large $x$.  We use the full zero and plus preparation,
complete logical readout, canonical code switching, and memory constructions
already proved.  The Clifford masks and dense routing required below are
constructed in \Cref{subsec:injection-clifford-routing} on the complete
logical space of $G(m)$ and its physical $H$ dual.

\subsection{Finite Reed--Solomon distillation}
\label{subsec:finite-qrs}

Let $q=2^\ell$ with $\ell\ge3$, choose a subset
$B_\mathrm{RS}\subseteq\mathbb F_q$ of size $s=q/4$, and put
$\Omega_\mathrm{RS}=\mathbb F_q\setminus B_\mathrm{RS}$, so $n_\mathrm{RS}=3q/4$.
With $r_\mathrm{RS}=\lfloor q/3\rfloor$, define evaluation codes on
$\Omega_\mathrm{RS}$ by
\begin{align}
 C_1&=\{(p(a))_{a\in\Omega_\mathrm{RS}}:\deg p<r_\mathrm{RS}\},
 \label{eq:qrs-C1}\\
 C_X&=\{(p(a))_{a\in\Omega_\mathrm{RS}}:\deg p<r_\mathrm{RS},
                         \ p|_{B_\mathrm{RS}}=0\},
 \label{eq:qrs-CX}\\
 C_2&=C_X^\perp.
 \label{eq:qrs-C2}
\end{align}
Write $\operatorname{Tr}:\mathbb F_q\to\mathbb F_2$ for the absolute trace,
and define
\begin{equation}
  \CCZ^{(q)}\ket{x,y,z}
  =(-1)^{\operatorname{Tr}(xyz)}\ket{x,y,z},
  \qquad
  \ket{+_q}=q^{-1/2}\sum_{x\in\mathbb F_q}\ket x.
  \label{eq:qary-ccz-plus}
\end{equation}

\begin{lemma}[Punctured quantum Reed--Solomon code]
\label{lem:inj-qrs-code}
The CSS code defined by $C_2^\perp\subseteq C_1$ has parameters
\begin{equation}
  \left[\left[\frac{3q}{4},\frac q4,
  \left\lfloor\frac q{12}\right\rfloor+1\right]\right]_q.
  \label{eq:qrs-parameters}
\end{equation}
On three blocks, coordinatewise physical $\CCZ^{(q)}$ implements $q/4$
coordinatewise logical $\CCZ^{(q)}$ gates.
\end{lemma}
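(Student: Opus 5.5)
The plan is to read the CSS code $C_2^\perp\subseteq C_1$ as a pair of generalized Reed--Solomon (GRS) codes on $\Omega_\mathrm{RS}$. The dimension count comes from interpolation, the distance from GRS duality, and the transversal $\CCZ^{(q)}$ from the identity $\sum_{a\in\mathbb F_q}f(a)=0$ for polynomials of degree less than $q-1$. Write $n=n_\mathrm{RS}=3q/4$, $s=q/4$ and $r=r_\mathrm{RS}=\lfloor q/3\rfloor$. For $\ell\ge3$ we have $s<r<n$: indeed $q/3-q/4=q/12\ge 2/3$, so $r>s$, and $r\le q/3<3q/4$.

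\textbf{Dimension.} Since $C_X$ is the dual of $C_2$, we have $C_2^\perp=C_X$, and $C_X\subseteq C_1$ is immediate from the definitions.
\begin{itemize}
\item Evaluation on $\Omega_\mathrm{RS}$ is injective on polynomials of degree less than $r$, because $r<n$. Hence $\dim C_1=r$.
\item Put $V_B(x)=\prod_{b\in B_\mathrm{RS}}(x-b)$. A polynomial of degree less than $r$ vanishing on $B_\mathrm{RS}$ is exactly $V_Bg$ with $\deg g<r-s$. Hence $\dim C_X=r-s$.
\end{itemize}
The number of logical qudits is therefore $r-(r-s)=q/4$. For the logical basis I will use the map $p\mapsto p|_{B_\mathrm{RS}}$.
\begin{itemize}
\item It is well defined on $C_1/C_X$, since elements of $C_X$ vanish on $B_\mathrm{RS}$.
\item It is injective, since a $p$ vanishing on $B_\mathrm{RS}$ lies in $C_X$.
\item It is surjective, since $r>s$ allows Lagrange interpolation of any values on $B_\mathrm{RS}$.
\end{itemize}
So it is a bijection onto $\mathbb F_q^{B_\mathrm{RS}}$.

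\textbf{Distance.} The code $C_X=\{(V_B(a)g(a))_{a\in\Omega_\mathrm{RS}}:\deg g<r-s\}$ is a GRS code of dimension $r-s$. Its column multipliers $V_B(a)$ are nonzero because $\Omega_\mathrm{RS}\cap B_\mathrm{RS}=\varnothing$.
\begin{itemize}
\item By GRS duality, $C_2=C_X^\perp$ is again a GRS code, of dimension $n-r+s$. Being MDS, it has minimum distance $r-s+1=\lfloor q/12\rfloor+1$; here $\lfloor q/3-q/4\rfloor=\lfloor q/12\rfloor$ because $q/4$ is an integer.
\item $C_1$ is MDS of distance $n-r+1$, which is much larger.
\end{itemize}
Both logical distances are bounded below by the minimum weights of $C_1\setminus C_X$ and $C_2\setminus C_1^\perp$. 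This gives $d\ge\lfloor q/12\rfloor+1$.

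For the matching upper bound, I would exhibit a minimum-weight word of the MDS code $C_2$ that lies outside $C_1^\perp$. I expect this to be the only mildly delicate step. My first attempt is a counting argument: $C_1^\perp$ is MDS of distance $r+1>r-s+1$, so it contains no word of weight $r-s+1$. Any minimum-weight word of $C_2$ is therefore automatically outside $C_1^\perp$, and such words exist because $C_2$ is MDS.

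\textbf{Transversal gate.} Take basis states $\ket{\bar x}\propto\sum_{c\in C_X}\ket{p_x+c}$, where $p_x\in C_1$ has $p_x|_{B_\mathrm{RS}}=x$, and similarly for $\bar y$ and $\bar z$. Coordinatewise $\CCZ^{(q)}$ multiplies the computational term indexed by $(p_1,p_2,p_3)$ by $(-1)^{\operatorname{Tr}\sum_{a\in\Omega_\mathrm{RS}}(p_1p_2p_3)(a)}$.
\begin{itemize}
\item The product $f=p_1p_2p_3$ has degree at most $3r-3\le q-3<q-1$, so $\sum_{a\in\mathbb F_q}f(a)=0$.
\item In characteristic two this gives $\sum_{\Omega_\mathrm{RS}}f=\sum_{B_\mathrm{RS}}f=\sum_b x_by_bz_b$.
\end{itemize}
The phase therefore depends only on the logical values, not on the coset representatives. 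Every term of the superposition acquires the same factor $(-1)^{\operatorname{Tr}(\sum_b x_by_bz_b)}$, which is the product of the $q/4$ logical gates $\CCZ^{(q)}$ on the coordinates $b\in B_\mathrm{RS}$. Linearity extends the identity to arbitrary inputs, including inputs entangled with a reference system.
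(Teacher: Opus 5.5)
Your proof is correct and follows the paper's argument. You use the same dimension count, the same distance argument (the MDS code $C_2$ has distance $r-s+1$ while $d(C_1^\perp)=r+1$, so a minimum-weight word of $C_2$ is not a $Z$ stabilizer), and the same vanishing of $\sum_{a\in\mathbb F_q}(p_1p_2p_3)(a)$ for the transversal gate; the only difference is that you get $C_2$ from GRS duality, whereas the paper identifies it directly as the evaluations of polynomials of degree below $q-r$ using that same sum identity.

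One small slip: your inference $r>s$ fails at $q=8$, where $r=s=2$ and $C_X=0$. Every step you use (interpolation on $B_\mathrm{RS}$, $\dim C_X=r-s$, the distance count) only needs $r\ge s$, so nothing breaks.
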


\begin{proof}
Evaluation on $\Omega_\mathrm{RS}$ is injective in the stated degree range, and
vanishing on the $s$ points of $B_\mathrm{RS}$ imposes $s$ independent
conditions.  Hence
\begin{equation}
  \dim C_1=r_\mathrm{RS},
  \qquad \dim C_X=r_\mathrm{RS}-s,
  \qquad \dim C_2=q-r_\mathrm{RS}.
  \label{eq:qrs-dimensions}
\end{equation}
To identify $C_2$, let $p$ have degree below $q-r_\mathrm{RS}$ and let the
polynomial representing $h\in C_X$ vanish on $B_\mathrm{RS}$.  Since
$\deg(ph)<q-1$,
\begin{equation}
  \sum_{a\in\Omega_\mathrm{RS}}p(a)h(a)
  =\sum_{a\in\mathbb F_q}p(a)h(a)=0.
  \label{eq:qrs-dual-inclusion}
\end{equation}
This gives the required inclusion between the evaluation spaces in $C_X^\perp$; equality
follows from \eqref{eq:qrs-dimensions}.  The two classical distances are
\begin{equation}
  d(C_1)=n_\mathrm{RS}-r_\mathrm{RS}+1,
  \qquad
  d(C_2)=r_\mathrm{RS}-s+1.
  \label{eq:qrs-classical-distances}
\end{equation}
A minimum word of $C_1$ can be chosen with all its roots in $\Omega_\mathrm{RS}$ and
is then nonzero on $B_\mathrm{RS}$, so it does not belong to $C_X$.  On the
dual side, the Vandermonde description gives
$d(C_1^\perp)=r_\mathrm{RS}+1$, whereas $C_2$ contains a word of weight
$r_\mathrm{RS}-s+1$.  The latter word cannot be a $Z$ stabilizer.  The quantum
distance is consequently
$r_\mathrm{RS}-s+1=\lfloor q/12\rfloor+1$.

Write
\begin{equation}
  R_{B_\mathrm{RS}}(X)=\prod_{b\in B_\mathrm{RS}}(X-b).
  \label{eq:qrs-vanishing-polynomial}
\end{equation}
Restriction to $B_\mathrm{RS}$ maps the polynomials of degree below
$r_\mathrm{RS}$ onto $\mathbb F_q^s$, with kernel
$R_{B_\mathrm{RS}}\mathbb F_q[X]_{<r_\mathrm{RS}-s}$.  For
$u\in\mathbb F_q^s$, let $p_u$ be the polynomial of degree below $s$ that
interpolates $u$ on $B_\mathrm{RS}$.  A canonical logical basis is therefore
\begin{equation}
  \ket{\bar u}
  =|C_X|^{-1/2}\sum_{h\in C_X}
    \ket{\operatorname{eval}_{\Omega_\mathrm{RS}}(p_u)+h}.
  \label{eq:qrs-logical-basis}
\end{equation}
This basis also gives a finite encoder without invoking a decoding oracle.
Take first the $s$ evaluation columns of the interpolation polynomials for
the standard basis of $\mathbb F_q^s$, followed by the $r_\mathrm{RS}-s$
columns
$\operatorname{eval}_{\Omega_\mathrm{RS}}(R_{B_\mathrm{RS}}X^j)$, and extend them to an
invertible $n_\mathrm{RS}\times n_\mathrm{RS}$ matrix over $\mathbb F_q$.  In the
chosen binary basis, Gaussian elimination implements the corresponding
linear permutation with CNOT and SWAP gates.  Acting on
$\ket u\ket{+_q}^{\otimes(r_\mathrm{RS}-s)}
 \ket{0_q}^{\otimes(n_\mathrm{RS}-r_\mathrm{RS})}$ gives
\eqref{eq:qrs-logical-basis}; the inverse circuit extracts the logical
register coherently, also in the presence of a reference system.

Logical values are the evaluations on $B_\mathrm{RS}$.  For three representing
polynomials, $\deg(p_1p_2p_3)\le3r_\mathrm{RS}-3<q-1$, so its sum over the
complete field vanishes.  Since the characteristic is two, the sums over
$B_\mathrm{RS}$ and $\Omega_\mathrm{RS}$ are equal:
\begin{equation}
  \sum_{a\in\Omega_\mathrm{RS}}p_1(a)p_2(a)p_3(a)
  =\sum_{b\in B_\mathrm{RS}}p_1(b)p_2(b)p_3(b).
  \label{eq:qrs-cubic-identity}
\end{equation}
Applying the field trace to the phase of coordinatewise $\CCZ^{(q)}$ turns the
right-hand side into
\begin{equation}
  (-1)^{\operatorname{Tr}\sum_{b\in B_\mathrm{RS}}
                 p_1(b)p_2(b)p_3(b)}
  =\prod_{b\in B_\mathrm{RS}}
       (-1)^{\operatorname{Tr}(p_1(b)p_2(b)p_3(b))}.
  \label{eq:qrs-logical-phase}
\end{equation}
This is the claimed product of $s$ coordinatewise logical $\CCZ^{(q)}$ gates.
\end{proof}

Write
\(
  d_\mathrm{RS}=\lfloor q/12\rfloor+1
\)
for the quantum distance.  The code corrects arbitrary errors on at most
\begin{equation}
  t_\mathrm{RS}=\left\lfloor\frac{d_\mathrm{RS}-1}{2}\right\rfloor
             =\left\lfloor\frac q{24}\right\rfloor
  \label{eq:qrs-correction-radius}
\end{equation}
outer coordinates.  Its encoded plus state has a deterministic,
trace-preserving preparation.  All $n_\mathrm{RS}\ell$ binary constituents are
initialized in $\ket{+}$, after which independent generators of both CSS
check spaces are measured in a prescribed order.  The $X$-check outcomes are
trivial, and a chosen right inverse of the binary check matrix determines an
$X$ correction from the recorded $Z$ syndrome $\sigma$.  On every record of
nonzero probability, the corrected state is the uniform superposition over
$C_1$, namely $\ket{+_q}^{\otimes s}$ encoded in the basis
\eqref{eq:qrs-logical-basis}.  Defining the same syndrome section on
impossible records makes the instrument total.

The trace pairing $(x,y)\mapsto\operatorname{Tr}(xy)$ is nondegenerate and
nonalternating over $\mathbb F_2$, and therefore has an orthonormal binary
basis.  Choose one such basis
$(\beta_i)_{i=1}^{\ell}$ of $\mathbb F_q$.  Writing
$x=\sum_ix_i\beta_i$ and similarly for $y,z$, one has
\begin{equation}
  \operatorname{Tr}(xyz)
  =\sum_{i,j,k=1}^{\ell}
    \operatorname{Tr}(\beta_i\beta_j\beta_k)x_iy_jz_k.
  \label{eq:field-ccz-decomposition}
\end{equation}
We assign all $\ell^3$ binary slots to each outer coordinate.  A slot whose
coefficient in \eqref{eq:field-ccz-decomposition} vanishes is measured and
discarded at its prescribed position.  Thus one invocation has
\begin{equation}
  M_\mathrm{RS}=n_\mathrm{RS}\ell^3=\frac{3q}{4}\ell^3,
  \qquad K_\mathrm{RS}=s=\frac q4
  \label{eq:qrs-MK}
\end{equation}
binary input and output triples.

A binary resource triple is consumed by CNOTs from data to resource,
followed by $Z$ readout of the resource.  For outcome
$z=(z_1,z_2,z_3)$, the correction is
\begin{equation}
  C_z=
  \CZ_{23}^{z_1}\CZ_{13}^{z_2}\CZ_{12}^{z_3}
  Z_3^{z_1z_2}Z_2^{z_1z_3}Z_1^{z_2z_3}.
  \label{eq:ccz-consumption-correction}
\end{equation}
The corrected branch is a scalar depending on the record times $\CCZ$ on arbitrary
data and reference states.  In the field circuit, a bad binary input triple
therefore inserts an arbitrary operator on only its assigned outer
coordinate.  Several bad inputs, even when their states and faults are
correlated, produce an operator in the span supported on the corresponding
set of outer coordinates.

After all resource injections, the two CSS check spaces are measured on each
of the three qRS blocks.  The resulting recovery corrects the entire operator
span supported on at most $t_\mathrm{RS}$ outer coordinates.  To define it on
every record, let $Q_\sigma$ be the first qudit Pauli of minimum support
$|\operatorname{supp}(a)\cup\operatorname{supp}(b)|$ with syndrome $\sigma$,
in the prescribed lexicographic order.  Every syndrome has such a
representative.  If a Pauli $E$ has weight at most $t_\mathrm{RS}$ and produces
$\sigma$, then
$\operatorname{wt}(Q_\sigma)\le t_\mathrm{RS}$ and
\begin{equation}
  \operatorname{wt}(Q_\sigma^{-1}E)
  \le 2t_\mathrm{RS}<d_\mathrm{RS}.
  \label{eq:qrs-recovery-weight}
\end{equation}
The operator $Q_\sigma^{-1}E$ has trivial syndrome and is consequently a
stabilizer on the code space, up to phase.  If $\Pi_\mathrm{code}$ is the code
projector and $\Pi_\sigma$ the syndrome projector, then
\begin{equation}
  Q_\sigma^{-1}\Pi_\sigma E\Pi_\mathrm{code}
  =\lambda_E\Pi_\mathrm{code},
  \qquad
  \Pi_\tau E\Pi_\mathrm{code}=0\quad(\tau\ne\sigma).
  \label{eq:qrs-correctable-span}
\end{equation}
Applying the same identity to matrix units and retaining the syndrome record
proves the statement for the full operator span, including correlations with a reference.
The tensor product of the three block recoveries therefore corrects any set
of at most $t_\mathrm{RS}$ bad binary resource triples, even when the affected
outer coordinates differ among the three blocks.  On arbitrary inputs the
corrected output lies in the code space, so inverse encoding and the
prescribed discards are well defined.

After qudit distillation, apply a predetermined binary linear change of
basis to each logical qudit.  In the first two registers the first basis
vector is $1$, while in the third register it is an element
$\theta\in\mathbb F_q$ with $\operatorname{Tr}(\theta)=1$.  Write
\begin{equation}
  x=a+X,\qquad y=b+Y,\qquad z=c\theta+Z,
  \label{eq:qrs-binary-splitting}
\end{equation}
where $a,b,c\in\mathbb F_2$ and $X,Y,Z$ belong to the complementary binary
subspaces.  Expansion of the trace gives
\begin{equation}
  \operatorname{Tr}(xyz)
  =abc+Aab+Bac+Cbc+Da+Eb+Fc+H,
  \label{eq:qrs-output-polynomial}
\end{equation}
where
\begin{equation}
\begin{aligned}
 A&=\operatorname{Tr}(Z),
 &B&=\operatorname{Tr}(\theta Y),
 &C&=\operatorname{Tr}(\theta X),\\
 D&=\operatorname{Tr}(YZ),
 &E&=\operatorname{Tr}(XZ),
 &F&=\operatorname{Tr}(\theta XY),
 &H&=\operatorname{Tr}(XYZ).
\end{aligned}
\label{eq:qrs-output-coefficients}
\end{equation}
Measuring the other $3(\ell-1)$ bits determines these coefficients.  At most
three $\CZ$ and three $Z$ corrections remove the terms of lower degree and leave
a standard binary $\ket{\CCZ}$ state.

\begin{proposition}[Finite distillation instrument]
\label{prop:finite-qrs-instrument}
There is a deterministic finite instrument using real Clifford operations
\begin{equation}
  \mathcal I_q:(\mathbb C^8)^{\otimes M_\mathrm{RS}}
  \longrightarrow(\mathbb C^8)^{\otimes K_\mathrm{RS}}
  \label{eq:finite-qrs-instrument}
\end{equation}
with the following property.  Let $J$ be any set of input positions with
$|J|\le t_\mathrm{RS}$.  Inputs in $J$ may be in an arbitrary joint state and
may be entangled with a reference system, while the other inputs are exact
$\ket{\CCZ}$ states.  For every nonzero complete record $h$,
\begin{equation}
  \rho_{\mathrm{out},R,h}
  =\ket{\CCZ}\!\bra{\CCZ}^{\otimes K_\mathrm{RS}}
     \otimes\tau_{R,h},
  \qquad \tau_{R,h}\ge0.
  \label{eq:qrs-recordwise-output}
\end{equation}
The sum over records is trace preserving; no outcome is postselected, and
all $K_\mathrm{RS}$ outputs are jointly exact.  The statement remains valid for
any finite number of parallel invocations by placing the other invocations
and their records in the reference system.
\end{proposition}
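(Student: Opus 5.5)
The instrument will be assembled from the pieces described just before the statement, composed in the order in which they act. Each piece will be verified as an exact recordwise identity on the operator span generated by the bad inputs.

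\emph{Preparation and injection.} Prepare three encoded plus states $\ket{+_q}^{\otimes s}$ on the punctured qRS code of \Cref{lem:inj-qrs-code}, using the deterministic total preparation described above; every nonzero record gives the same encoded state. Apply coordinatewise physical $\CCZ^{(q)}$, decomposed through \eqref{eq:field-ccz-decomposition} into binary $\CCZ$ gates. Each binary $\CCZ$ is realized by consuming one input triple with the correction \eqref{eq:ccz-consumption-correction}. Slots whose trace coefficient vanishes are consumed anyway, by measuring and discarding them. This uses all $M_\mathrm{RS}$ inputs in a fixed assignment to outer coordinates.

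\emph{Effect of bad inputs.} For a good triple, the corrected branch is a record-dependent scalar times $\CCZ$. For the inputs in $J$, I will expand their joint state, including the reference, in matrix units. By the remark after \eqref{eq:ccz-consumption-correction}, each bad consumption equals the ideal $\CCZ$ followed by an operator supported on the binary constituents of a single outer coordinate. These operators can be commuted to the end of the injection layer, since the later gates are diagonal and Clifford-conjugation keeps support within the same outer coordinates. So each branch has the form $E_{\mathrm{left}}\,(\text{ideal encoded }\ket{\CCZ^{(q)}}^{\otimes s})\,E_{\mathrm{right}}^\dagger$. Here $E_{\mathrm{left}}$ and $E_{\mathrm{right}}$ lie in the span of operators supported on at most $|J|\le t_\mathrm{RS}$ outer coordinates in each block, with operator-valued coefficients on the reference.

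\emph{Recovery and unencoding.} Measure both check spaces on each of the three blocks and apply $Q_\sigma^{-1}$. By \eqref{eq:qrs-correctable-span}, applied blockwise to left and right Pauli terms with the record retained, both sides collapse to scalars times the code projector. The branch is therefore a nonnegative scalar times the ideal encoded state tensored with a positive operator on the reference. This requires checking that the left and right scalars, which depend only on the record, combine into $\tau_{R,h}\ge0$. Inverse encoding extracts $s$ qudit $\ket{\CCZ^{(q)}}$ states. Then apply the binary splitting \eqref{eq:qrs-binary-splitting}, measure the $3(\ell-1)$ complementary bits, and apply the $\CZ$ and $Z$ corrections dictated by \eqref{eq:qrs-output-polynomial}. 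This yields $K_\mathrm{RS}$ exact binary $\ket{\CCZ}$ states on every nonzero record.

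Trace preservation follows because every step is a total instrument with prescribed actions on every record; this is the normalization argument of \Cref{lem:instrument-normalization}. The statement for parallel invocations follows by placing the others in the reference.

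\emph{Main obstacle.} The delicate step is the claim that a bad binary triple inserts an operator supported on a single outer coordinate even after the subsequent consumptions at the same coordinate. Correlated bad triples at different coordinates must also fall in the joint span rather than spreading. I would first handle this by propagating each left and right Pauli term of a bad input through the injection layer. The correction $C_z$ is Clifford and acts only on the three registers of that coordinate, and the remaining $\CCZ$ consumptions act on disjoint triples of the same coordinate. Conjugating a Pauli by $\CCZ$ produces only $\CZ$ gates on those same registers, so the support stays inside the coordinate. I expect the other steps to be routine.
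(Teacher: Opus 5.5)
Your proposal is correct and follows the paper's proof step for step. The steps are: confining each bad triple's relative operator to its assigned outer coordinate via the field decomposition and \eqref{eq:ccz-consumption-correction}, blockwise total syndrome recovery of that span using $|J|\le t_\mathrm{RS}$, inverse encoding and the binary splitting \eqref{eq:qrs-output-polynomial}, and normalization from \Cref{lem:instrument-normalization}. The two points you flag are easier than you suggest: $\tau_{R,h}\ge0$ is automatic because the branch operator is positive and of the form $\ket{\psi}\!\bra{\psi}\otimes\tau_{R,h}$ (the scalars $\lambda_E$ depend on the Pauli term, not only on the record), and the propagation step needs only that every later gate at that coordinate is supported there, with no Clifford or diagonal structure required.
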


\begin{proof}
The field decomposition assigns every binary input to one outer qRS
coordinate.  Expanding arbitrary bad inputs in an operator basis and using
\eqref{eq:ccz-consumption-correction} therefore confines their relative
operator support to the coordinates in $J$.  The total syndrome rule
corrects that span because $|J|\le t_\mathrm{RS}$.  The logical qRS outputs are
then exact and factor from the reference.  Equation
\eqref{eq:qrs-output-polynomial} converts each logical qudit triple to one
standard binary output by deterministic measurements and Clifford
corrections.  Every branch is a scalar multiple of the same pure output
state; \Cref{lem:instrument-normalization} then gives the statement for the
complete instrument.
\end{proof}

After $q$ is chosen, $\mathcal I_q$ is a constant size measured Clifford
circuit that can be compiled before the simulated computation is given.  Its
binary basis changes are implemented by CNOT and SWAP gates, and its Pauli
corrections determined by the syndrome are evaluated by Boolean circuits with
bounded fan-in and constant size.  The location count includes all ancillas,
measurements, prescribed discards, classical control intervals, and idles of
this finite circuit.  Every invocation is charged this complete cost.

Choose once and for all
\begin{equation}
  \ell_0=\min\{\ell\ge3:2^\ell>1152R_\mathrm{gap}\ell^3\},
  \qquad q_0=2^{\ell_0},
  \label{eq:qrs-fixed-member}
\end{equation}
and abbreviate
\begin{equation}
  M_0=\frac{3q_0}{4}\ell_0^3,
  \quad K_0=\frac{q_0}{4},
  \quad t_0=\left\lfloor\frac{q_0}{24}\right\rfloor,
  \quad B_0=t_0+1,
  \quad \beta_0=\frac {M_0}{K_0}=3\ell_0^3.
  \label{eq:qrs-fixed-parameters}
\end{equation}
The floors leave the strict margin
\begin{equation}
  \frac{B_0K_0}{M_0}>
  \frac{q_0}{72\ell_0^3}>16R_\mathrm{gap}.
  \label{eq:qrs-fixed-margin}
\end{equation}
The field, bases, encoders, decoder table, and finite distillation circuit do
not vary with the qLTC protection scale.

\subsection{Code switching across scales}
\label{subsec:injection-switching}

The recursive factory must remain well defined even when a child factory fails
and produces an arbitrary physical state.  Let $\mathcal E$ be the encoder of an
$[[m,k,d]]$ code and define
\begin{equation}
  \mathcal K_R=\operatorname{span}
   \{P\mathcal E\ket\phi:\operatorname{wt}(P)\le R,
                     \ \phi\in\mathcal H_\mathrm{log}\}.
  \label{eq:common-correctable-subsystem}
\end{equation}
When $2R<d$, Knill--Laflamme gives one recovery isometry, independent of the
logical state and reference system,
\begin{equation}
  V_RP\mathcal E\ket\phi=\ket\phi\otimes\ket{\eta_P}.
  \label{eq:common-recovery-isometry}
\end{equation}
The environment vector may depend on $P$, but $V_R$ does not.  Indeed, for
two Pauli representatives $P,Q$ in the defining span, $P^\dagger Q$ has
weight below $d$.  The product $P^\dagger Q$ either has nonzero syndrome, in
which case $\mathcal E^\dagger P^\dagger Q\mathcal E=0$, or acts as a scalar on the code space.
The resulting scalars form a positive semidefinite Gram matrix.  Choosing vectors
$\ket{\eta_P}$ with that Gram matrix makes
\eqref{eq:common-recovery-isometry} preserve inner products and hence
defines $V_R$ on the whole span.  The tensor product of these isometries over
a bank of blocks acts on the corresponding tensor product span, including coherent
sums of residual Paulis on both sides of a density operator.  A canonical
logical Pauli preserves the range: it commutes or anticommutes with every
residual Pauli and changes only the logical and syndrome factors after
$V_R$.

For adjacent usable lengths $m_-<m_+$, write $k_-=k(m_-)$ and
$k_+=k(m_+)$, and let $\mathcal E_-$ and $\mathcal E_+$ be their canonical
encoders.  Set
\begin{equation}
  R_+=\lfloor t_*m_+\rfloor.
  \label{eq:injection-target-radius}
\end{equation}
The choice of $t_*$ in \Cref{lem:ec-gadget} gives
$2R_+<d_X(m_+),d_Z(m_+)$.  The canonical transpose sends
\begin{equation}
  (\text{old block }b,\text{ old coordinate }a)
  \longmapsto
  (\text{new block }a,\text{ new coordinate }b).
  \label{eq:inj-canonical-transpose}
\end{equation}
Its encoded Bell resource is
\begin{equation}
  \ket{R_{-+}}
  =(\mathcal E_-^{\otimes k_+}\otimes \mathcal E_+^{\otimes k_-})
    \bigotimes_{a,b}\ket\Phi_{L_{b,a},R_{a,b}},
  \label{eq:inj-switch-choi}
\end{equation}
of physical size
\begin{equation}
  s_{-+}=k_+m_-+k_-m_+.
  \label{eq:inj-switch-size}
\end{equation}
This is a stabilizer state.  Bell pairs followed by the canonical encoders
give an ordinary real Clifford preparation using $O(s_{-+})$ qubits and
depth polylogarithmic in $m_-$; the same statement holds with $m_+$ because
the two lengths differ by at most a constant factor.  Thus this is an
unprotected Clifford preparation of the kind required in
\Cref{prop:batched-stabilizer-factory}.
The source block $D_b$ controls a transversal CNOT into $L_b$; $D_b$ is
read in the $X$ basis and $L_b$ in the $Z$ basis.  For every possible
readout word, including a word outside the readout guarantee, the total decoder returns a
prescribed logical Pauli.  If $x_{b,a}$ and $z_{b,a}$ are the
decoded Bell outcomes, the canonical logical Pauli with $X$ label
$(z_{b,a})_b$ and $Z$ label $(x_{b,a})_b$ is applied to $R_a$.  The
maps from logical Pauli labels to physical representatives are known binary
matrices, so copy and XOR trees
evaluate the physical correction in logarithmic classical depth, followed
by one Pauli layer.  No physical correction that fills a syndrome and is derived from
an invalid word from the left bank is applied to the target.  Memory correction
relative to the zero syndrome is applied to the right bank throughout the readout and feedback
interval.  After the physical correction, a final calibrated error correction
cycle produces an output satisfying the boundary radius.

\begin{lemma}[Code switching with arbitrary inputs]
\label{lem:arbitrary-input-code-switch}
If the preparation of \eqref{eq:inj-switch-choi} and all operations on the
right bank are good, the switching instrument maps every source physical
state, including a state outside the source codespace and entangled with a
reference, into the tensor product of the target ranges $\mathcal K_{R_+}$
in \eqref{eq:common-correctable-subsystem}.  If the source also satisfies the
boundary condition \eqref{eq:typed-boundary-map} and the coupling and readout
windows on the left bank are good, the logical action is the exact transpose
\eqref{eq:inj-canonical-transpose}.  The conclusion is recordwise and the
instrument is trace preserving.
\end{lemma}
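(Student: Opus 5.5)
The proof is a teleportation calculation carried out branch by branch, with Pauli expansion of the arbitrary source on the left bank. The main work is to show that the right bank lands in $\mathcal K_{R_+}$ even when the left data are garbage.

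\textbf{Step 1 (the resource).} By the good-preparation hypothesis, the resource \eqref{eq:inj-switch-choi} has each right-bank block of the form $P\mathcal E_+(\cdot)$ with $\operatorname{wt}(P)\le R'$, where $R'$ is a small fraction of $R_+$. This follows from \Cref{prop:batched-stabilizer-factory} and the boundary radius.

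\textbf{Step 2 (arbitrary source).} Fix a complete record $\omega$ of the left-bank coupling and readout. The source is an arbitrary operator on the $D$ blocks and the reference. The transversal CNOT from $D_b$ into $L_b$ and the destructive readouts act only on the $D$ and $L$ blocks, and they commute with every operator on the right bank. Whatever happens on the left bank, the record-$\omega$ branch of the right bank plus reference is therefore
\[
  (\bra{\omega}\otimes I)\,
  \bigl[\text{left-bank instrument}\otimes I_{\mathrm{right}}\bigr]\,
  (\rho_{\mathrm{src}}\otimes\ket{R_{-+}}\!\bra{R_{-+}}).
\]
Expand the fault maps and the source as in \Cref{lem:recordwise-Pauli}. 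Take a Schmidt-type decomposition of the resource across the left/right cut: the resource equals $\sum_\phi \ket{\phi}_{\mathcal E_-\text{-side}}\otimes P\mathcal E_+\ket{\phi'}$. Contracting the left side with any vector yields a vector of the form $P\mathcal E_+\ket{\psi}$ for some logical $\psi$, possibly entangled with the reference. Every left and right term of the branch therefore lies in the span \eqref{eq:common-correctable-subsystem} with radius $R'$.

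\textbf{Step 3 (feedback and memory).} The decoded logical Pauli on $R_a$ is a canonical logical Pauli, so it preserves this span, as noted after \eqref{eq:common-recovery-isometry}. Its physical representative is applied exactly by the good right-bank operations. The subsequent memory cycles relative to the zero syndrome (\Cref{lem:sector-storage}) and the final calibrated correction (\Cref{lem:ec-gadget}) keep each term within reduced radius $R_+$. Good right-bank windows guarantee this, and the bound uses only the right-bank residual, never the garbage on the left. This gives the first claim. Trace preservation holds because every readout word, including invalid ones, is mapped to a prescribed Pauli.

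\textbf{Step 4 (exact transpose).} If in addition the source satisfies \eqref{eq:typed-boundary-map} and the left windows are good, then \Cref{lem:destructive-logical-readout} makes the decoded $x_{b,a}, z_{b,a}$ the ideal logical Bell outcomes. Standard logical teleportation of each coordinate $(b,a)$ through $\ket\Phi_{L_{b,a},R_{a,b}}$, with correction $X^{z}Z^{x}$, then gives exactly \eqref{eq:inj-canonical-transpose} on every branch, up to a record-dependent scalar. \Cref{lem:instrument-normalization} fixes the normalization of the whole instrument.

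The main obstacle is Step 2. One must show rigorously that a non-codespace, adversarial left-bank state cannot push the right bank outside $\mathcal K_{R_+}$, including coherent superpositions across records. The point to check first is that the right-bank residual Paulis are fixed by the resource preparation alone, independent of the left contraction. This is what makes the common recovery isometry $V_{R_+}$ apply.
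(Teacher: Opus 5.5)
Your proposal is correct and takes essentially the same route as the paper. Your Step~2 contraction is the paper's subspace argument: operations and arbitrary faults on $D$ and $L$ cannot move the right bank out of $\mathcal H_{\mathrm{left}}\otimes\bigotimes_a\mathcal K_{R_+}$. After that you use the same invariance of $\mathcal K_{R_+}$ under canonical logical Pauli feedback, good right-bank memory, and the teleportation identity with \Cref{lem:instrument-normalization}. One small correction: \Cref{prop:batched-stabilizer-factory} only places the resource at the corrected boundary, i.e.\ right-block residual reduced weight at most $t_*m_+$, so $R'=R_+$ rather than a small fraction of it — and that is all Step~3 needs.
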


\begin{proof}
On the good Choi event, the resource and every reference system are
supported in the span of $P_LP_R\ket{R_{-+}}$, with a residual Pauli of the
allowed weight on every constituent block.  Apply the tensor product of the right bank
isometries \eqref{eq:common-recovery-isometry} to this entire span.  Its
image is a tensor product of right logical systems and right syndrome
systems.  An arbitrary state on $D$, followed by any operation or
measurement on $D$ and $L$, may steer and correlate those two right
factors with the measurement record, but the reduced state of the right bank
remains supported on the same subspace.  The same subspace argument applies to
off-diagonal matrix units and therefore does not require a classical mixture of
physical errors or any promise on the source.

For each record, the prescribed feedback is a canonical logical Pauli.
The invariance of $\mathcal K_{R_+}$ under canonical logical Paulis shows that
the feedback cannot enlarge the right residual support.  Good memory on the
right bank preserves $\mathcal K_{R_+}$ and keeps the residual support within
the allowed radius for all matrix units and arbitrary logical content.
After terminal correction, every block satisfies the corrected boundary condition.
When the source also satisfies its correctable promise, good
readout of the left bank identifies the actual Bell labels, and the ordinary
teleportation identity gives \eqref{eq:inj-canonical-transpose} on all
logical coordinates at once.  Expanding supported faults on the two banks
into Pauli matrix units proves the same statements on the actual record;
positivity and trace preservation come from the physical circuit.
\Cref{lem:instrument-normalization} then gives the asserted instrument
identity.
\end{proof}

The three banks of a $\CCZ$ resource use the same permutation of tensor indices in
every scale change.  A failed child may consequently contaminate several new
rows and logical coordinates, but all of them remain inside the one qRS
input group assigned to that child.  Confinement to one group, rather
than an independence assertion about its outputs, is the property used by
the finite distiller.

The Choi resources in this secondary construction are prepared by a
factory for batched stabilizer state preparation.  Set
\(
  \lambda_B\ge1
\)
to be a constant and define
\begin{equation}
  \ell_B(x)=\lceil\log_2(\lambda_Bx)\rceil,
  \quad M_B(x)=31^{\ell_B(x)},
  \quad K_B(x)=21^{\ell_B(x)},
  \quad \gamma_B=\log_2(31/21)<1.
  \label{eq:stabilizer-batch-functions}
\end{equation}

\begin{proposition}[Batched preparation of stabilizer states]
\label{prop:batched-stabilizer-factory}
Let $\ket\psi$ be a known pure stabilizer state on a polynomial number of
computational code blocks or their physical $H$ duals, of total physical size
$s=\operatorname{poly}(x)$.  Suppose its ordinary Clifford preparation has
polynomial size and polylogarithmic depth in the protection length $x$.  For
this class of targets, there are constants $p_B,C,c,c',c_B>0$ such that every integer $v\ge1$ and
every $0<p<p_B$ admit a protected preparation with
\begin{align}
 S_B(v,x,s)&\le C[v+M_B(x)]s,
 \label{eq:stabilizer-batch-space}\\
 T_B(v,x,s)&\le Cx^{\gamma_B}\log^{c_B}(x+2),
 \label{eq:stabilizer-batch-time}\\
 \mathcal W(\mathcal B_B;p)&\le
 C[v+M_B(x)](x+2)^c e^{-c'x}.
 \label{eq:stabilizer-batch-failure}
\end{align}
If the fault support contains no member of $\mathcal B_B$, all requested
outputs are jointly exact at the corrected boundary of
\Cref{def:normalized-boundary}.
\end{proposition}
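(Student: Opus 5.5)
The plan is to build a recursive distillation factory for stabilizer states, modeled on the canonical-$Y$ factory of \Cref{thm:primary-y-factory}, with two changes: the Hamming step is replaced by a fixed constant-size purification code, and a raw slot is now one complete copy of $\ket\psi$ rather than one row. The parameters $M_B=31^{\ell_B}$, $K_B=21^{\ell_B}$ point to a fixed stabilizer-state distillation step with $31$ inputs and $21$ outputs that corrects one arbitrary bad input. A natural candidate is a $[[31,21,3]]$ CSS code, e.g.\ from the length-$31$ Hamming/BCH pair, used by unencoding and syndrome-measuring $31$ copies. The steps are as follows.

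\textbf{Step 1 (one purification step).} For a pure stabilizer state $\ket\psi$ on $s$ qubits, I will show that $31$ copies, with at most one copy arbitrary, can be mapped to $21$ exact copies. To do this, act with the inverse CSS encoder transversally across the $s$ physical coordinates of the copies (as in \Cref{lem:hamming-y-step}), measure the check wires, and apply a Pauli correction computed from the syndrome and the known stabilizer generators of $\ket\psi$. The proof is the same Pauli expansion: in the basis of stabilizer eigenvalues of $\ket\psi$, one bad copy flips eigenvalues on only one code position. Distance $3$ corrects this on every coordinate jointly. The retained outputs then lie in the common $+1$ eigenspace, which is the pure product $\ket\psi^{\otimes21}$ and factors from the reference. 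Since each copy lives on full encoded blocks, the transversal CNOTs act blockwise, and one bad copy remains one bad position, with no union bound over coordinates.

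\textbf{Step 2 (raw copies and recursion).} Raw copies are produced by the given polylog-depth Clifford preparation, compiled through the measured Clifford compiler of \Cref{lem:fixed-clifford-compiler} at level $O(\log\log x)$. Each copy is then tested with the qLTC checks, replaced on rejection by a zero spare via a conditional swap, and stored under \Cref{lem:sector-storage}, exactly as in the raw $Y$ slots. This gives a raw witness weight $p_\mathrm{raw}$ that is a small constant, and a catastrophe family of weight $\mathrm{poly}\cdot e^{-cx}$, as in \eqref{eq:tested-slot-catastrophe}. Index the slots by $[31]^{\ell_B}$ and apply the recursion \eqref{eq:y-failure-recursion} with $\binom{31}{2}$. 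This gives witness weight $(Cp_\mathrm{raw})^{2^{j}}$ at level $j$, and $\ell_B=\lceil\log_2(\lambda_Bx)\rceil$ makes it $e^{-\Omega(x)}$. Parallel complete batches, numbering $\lceil v/K_B\rceil$, give the factor $v+M_B$ in \eqref{eq:stabilizer-batch-space} and \eqref{eq:stabilizer-batch-failure} by summing witness weights, using \eqref{eq:enumerator-composition} and \eqref{eq:enumerator-sum}.

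\textbf{Step 3 (time).} This is the main obstacle. Since $\ket\psi$ has size $\mathrm{poly}(x)$, one cannot simply run all $M_B=\Theta(x^{\log_2 31})$ raw preparations in parallel without the space bound. But the space bound $[v+M_B]s$ already permits all slots at once, so raw production takes polylog time. The stated time $x^{\gamma_B}\log^{c_B}$ must therefore come from the distillation levels. At level $j$ there are $31^{\ell_B-j}$ groups, but the outputs per child grow like $21^{j}$. I would serialize the purification circuits within a macroblock over its output labels, reusing one processor per group, so that space stays $O(M_Bs)$. Level $j$ then costs $21^{j}$ sequential constant-depth steps times $O(\log x)$ for protected error correction and feedback. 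Summing gives $O(21^{\ell_B}\log x)$, and $21^{\ell_B}\le(31/21)^{-\ell_B}M_B$ should be balanced against this. I expect that checking that the total serialization time is $\le x^{\gamma_B}\mathrm{polylog}$, and is not larger, is the delicate point. If direct parallelism within levels is allowed by the space budget, time is just $O(\ell_B\log^{c}x)$, which trivially satisfies the bound. I would first try to confirm that parallel execution fits in $C[v+M_B]s$ and fall back to the serialized schedule only if ancilla workspace per Hamming circuit exceeds this. Finally, terminal error correction via \Cref{lem:ec-gadget} and \Cref{lem:instrument-normalization} puts all outputs jointly at the corrected boundary.
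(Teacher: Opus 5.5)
\textbf{Gap: the space--time schedule in Steps 2--3.} Your failure analysis is fine: raw witness weight $z_0$, the $\binom{31}{2}$ recursion over $[31]^{\ell_B}$, and summing batch enumerators. The space--time accounting does not work, however, and it misplaces where the $x^{\gamma_B}$ factor comes from.

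The yield of one complete batch is $K_B/M_B=(21/31)^{\ell_B}=\Theta(x^{-\gamma_B})$, which is not bounded below. This differs from the canonical-$Y$ factory, where the growing Hamming lengths give $K(n)\ge\eta_YM(n)$ and hence \eqref{eq:y-complete-batches}. Running your $\lceil v/K_B\rceil$ complete batches simultaneously therefore occupies $\Theta(\lceil v/K_B\rceil M_Bs)=\Theta(vx^{\gamma_B}s+M_Bs)$ qubits. That violates \eqref{eq:stabilizer-batch-space} once $v\gg K_B$; already at $v=M_B$ it is too large by a factor $\Theta(x^{\gamma_B})$.

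Inside one batch there is nothing to serialize. All Hamming groups of a level act on rows already stored, with a constant-depth quantum core, so a whole batch takes polylogarithmic time. Your serialized alternative, costing about $21^{\ell_B}\log x$, would in any case exceed the bound by a large polynomial factor.

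The missing step is a request schedule that trades time for space. Run $b_\parallel=\lfloor\max\{v,M_B\}/M_B\rfloor$ complete batches per round, for $\lceil v/(b_\parallel K_B)\rceil\le 1+2M_B/K_B=O(x^{\gamma_B})$ rounds, keeping finished outputs under memory correction. This gives space $O((v+M_B)s)$ and time $O(x^{\gamma_B}\operatorname{polylog}(x))$, which is exactly \eqref{eq:stabilizer-batch-time}.

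Relatedly, your claim that the budget permits all $M_B$ raw preparations at once ignores the polylogarithmic width multiplier of the concatenated measured Clifford compiler. The raw slots must be produced in $O(c_{\mathrm{raw}}(x))$ parallel rounds, as in the $Y$ factory.

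\textbf{Different route in the purification step.} You first test each raw copy and swap in a spare on rejection, so every retained copy, good or bad, lies within the boundary radius. The $[[31,21,3]]$ step can then be run at the logical level, as in \Cref{lem:hamming-y-step}. That route is viable, but you must state that the check wires are read out through the qLTC decoder.

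At the physical level every copy carries a residual that flips code-check eigenvalues at all $31$ positions. So ``one bad copy flips eigenvalues at only one position'' holds only for the decoded logical words. The tester is what places the bad copy inside the decoder promise.

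The paper omits the tester and lets a bad copy be an arbitrary physical state. It tests the $31$ possible bad rows plus a null candidate. For each candidate it uses the raw check word $e_{j'}=v_{i(j')}$ as the correction and accepts when all five words $v_i+H_{ij'}e_{j'}$ decode blockwise into the full Lagrangian $L_\psi$. Linear independence of distinct Hamming columns then shows that even a wrongly accepted candidate differs from the true error only by a target stabilizer and a correctable residual.
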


\begin{proof}
The measured Clifford compiler of \Cref{lem:fixed-clifford-compiler}, applied
to the ordinary preparation of $\ket\psi$, gives raw slots whose bad support
family has weight at most any prescribed constant $z_0>0$ below a corresponding
constant physical threshold.  Indeed, the preparation has polynomially many
locations, and \eqref{eq:fixed-compiler-recursion} makes their union at most
$z_0$ with $O(\log\log(x+2))$ concatenation levels.  The resulting width
multiplier and time are polylogarithmic in $x$.  Take
$z_0\le[4\binom{31}{2}]^{-1}$.  A single purification level uses the
$[[31,21,3]]$ CSS code whose check matrix has the nonzero vectors of
$\mathbb F_2^5$ as columns.  Its row space is self-orthogonal, and its
quotient has an orthonormal basis $g_1,\ldots,g_{21}$, which determines a CNOT
encoder.  The inverse encoder is applied at each corresponding physical
position of the $31$ candidate copies, followed by measurement of the five
$X$-check and five $Z$-check rows.  The quantum core has constant depth
because the outer code has constant length.

The invariant for one purification level is that one arbitrary input can affect the
selected correction, but after acceptance the difference from its actual
error is a target stabilizer together with a uniformly correctable residual
on every constituent block.  The following candidate test establishes this
invariant without assuming a classical mixture of input errors.

Let $L_\psi\subseteq\mathbb F_2^{2s}$ be the full Lagrangian stabilizer space
of the target, including correlations between distinct constituent blocks.
For a measured Pauli word $v$, decode the two physical syndromes on every
constituent block to a correction $c(v)$.  The test of the complete word accepts exactly
when $c(v)$ lies within the chosen radius on each block and
\begin{equation}
  v+c(v)\in L_\psi.
  \label{eq:stabilizer-full-word-test}
\end{equation}
Both the decoder and membership in $L_\psi$ are total Boolean circuits of
polynomial size and polylogarithmic depth.  The test accepts every word
within a sufficiently small correctable neighborhood of $L_\psi$ and, on
arbitrary words, acceptance itself certifies membership in the span obtained
from $L_\psi$ by the larger permitted residual Paulis.

Write the paired $X$- and $Z$-check outcomes as
$v_i\in\mathbb F_2^{2s}$, $i\in[5]$.  If row $j$ is the only arbitrary
input, Pauli covariance of the inverse encoder gives, on every nonzero
record,
\begin{equation}
  v_i=H_{ij}e+\ell_i+d_i,
  \qquad \ell_i\in L_\psi,
  \label{eq:stabilizer-syndrome-words}
\end{equation}
where $e$ is an arbitrary Pauli label on row $j$ and $d_i$ is a uniformly
small residual from the other inputs and the quantum core.  The decoder tests
the $31$ possible bad rows and one null candidate.  For each candidate $j'$,
choose an index $i(j')$ with $H_{i(j'),j'}=1$ and set
$e_{j'}=v_{i(j')}$.  The test in
\eqref{eq:stabilizer-full-word-test} is applied to all five words
\begin{equation}
  v_i+H_{ij'}e_{j'}.
  \label{eq:stabilizer-candidate-test}
\end{equation}
The first passing candidate is used, with a prescribed fallback if none
passes.  The true row $j$ passes.  If a different candidate passes, the two
distinct nonzero columns of $H$ are linearly independent; a combination of
the five accepted relations isolates $e$ modulo $L_\psi$.  Hence the
selected correction still differs from the actual arbitrary row by a target
stabilizer and a uniformly correctable residual on every constituent block.

The candidate tests take polylogarithmic classical depth, so the surviving
rows must be protected before the candidate is known.  Immediately after the
inverse encoder circuit, measure a complete syndrome on every surviving
constituent and retain the outcome $a_0$.  On each Pauli matrix unit, the
state at that time is a bounded residual acting on a Pauli translate of
$\ket\psi$; the translate has a definite syndrome $s_0$, and
$a_0=s_0+v$ for a bounded extraction error $v$.  During the candidate tests,
each new syndrome is compared with this same $a_0$, so
\Cref{lem:sector-storage} bounds the residual independently of the length of
the classical computation.  Once the candidate is selected, its prescribed
physical Pauli is applied.  After the terminal error correction cycle, the
output satisfies the corrected boundary condition.

Propagate the selected Pauli through the inverse encoder and apply its
surviving component to the $21$ output rows.  The logical representatives
$g_a$ have odd weight.  Their deterministic encoder signs define one Pauli
correction during compilation, after which every ideal branch has target
$\ket\psi^{\otimes21}$.  For an arbitrary state on the bad row, expand its
operators in the Pauli orbit of the stabilizer state and retain the same
actual measurement record on both sides.  The preceding identities place all
surviving terms in the span generated by the permitted residual Paulis acting
on $\ket\psi^{\otimes21}$.  Positivity and normalization follow from the
physical instrument, so all $21$ logical targets are jointly exact whenever
the fault support contains no member of the bad support family assigned to
this purification step, including with arbitrary references.

Index the $31^{\ell_B}$ raw slots by words in $[31]^{\ell_B}$ and declare a
parent bad when at least two of its children are bad.  Its enumerator on the original fault supports
satisfies
\begin{equation}
  f_0(z)=z,
  \qquad
  f_j(z)\le \binom{31}{2}f_{j-1}(z)^2.
  \label{eq:stabilizer-batch-recursion}
\end{equation}
The choice of $z_0$ gives
$f_{\ell_B}(z_0)=e^{-\Omega(x)}$.  Local extraction, feedback, storage, and
failures of the terminal correction contribute only a polynomial number of
$e^{-\Omega(x)}$ window families, and are included in the same complete
batch event.

To expose the yield cost, let $c_\mathrm{raw}(x)$ and $t_\mathrm{raw}(x)$ be
polylogarithmic bounds for the width multiplier and time of one raw
protected preparation.  Once $M_B(x)\ge2c_\mathrm{raw}(x)$, at least
$M_B/(2c_\mathrm{raw})$ raw preparations can run in parallel while the completed
rows remain stored.  Consequently all $M_B$ prescribed raw inputs are prepared in at most
$2c_\mathrm{raw}+1$ parallel rounds, using $O(M_Bs)$ space and polylogarithmic time.
The $\ell_B=O(\log x)$ purification levels are then executed completely;
measured check rows are released, while memory correction is applied to every
surviving row during the classical tests.  There is no retry, and the schedule does
not depend on acceptance.

To prepare $v\ge1$ copies, set
\begin{equation}
  R_B^\mathrm{req}=\max\{v,M_B\},\qquad
  b_{\parallel}=\left\lfloor\frac{R_B^\mathrm{req}}{M_B}\right\rfloor,
  \qquad
  N_\mathrm{rnd}=
  \left\lceil\frac{v}{b_{\parallel}K_B}\right\rceil.
  \label{eq:stabilizer-request-schedule}
\end{equation}
Each round runs $b_{\parallel}$ complete batches.  If $v<M_B$, then
$N_\mathrm{rnd}\le1+M_B/K_B$; otherwise
$b_{\parallel}\ge v/(2M_B)$ and
$N_\mathrm{rnd}\le1+2M_B/K_B$.  Since
$M_B/K_B=O(x^{\gamma_B})$, this schedule proves
\eqref{eq:stabilizer-batch-space} and \eqref{eq:stabilizer-batch-time}.  The number
of completed batches is at most
$v/K_B+R_B^\mathrm{req}/M_B$, which together with the union over local windows gives
\eqref{eq:stabilizer-batch-failure}.  Parallel batches occupy disjoint original
spacetime domains; all surplus outputs are produced and discarded at
predetermined locations.
\end{proof}

\subsection{Recursive \texorpdfstring{$\ket{\CCZ}$}{CCZ} preparation}
\label{subsec:protected-ccz}

Set
\begin{equation}
  w=\lfloor2\beta_0\rfloor+2,
  \qquad a=2w,
  \qquad A_+=R_\mathrm{gap}a.
  \label{eq:injection-scale-constants}
\end{equation}
The definition gives $\beta_0<w<a$.  Moreover, $w<4\beta_0$, while
\eqref{eq:qrs-fixed-margin} gives
$B_0>16R_\mathrm{gap}\beta_0>4R_\mathrm{gap}w=2A_+$.  Thus
\begin{equation}
  \beta_0<w<a,
  \qquad B_0>2A_+.
  \label{eq:injection-margin}
\end{equation}
Let $m_0$ be a usable base length, to be chosen after the constants below,
and define recursively
\begin{equation}
  m_j=m_\mathrm{next}(am_{j-1}),
  \qquad k_j=k(m_j).
  \label{eq:injection-scales}
\end{equation}
Then
\begin{equation}
  a\le\frac{m_j}{m_{j-1}}\le A_+.
  \label{eq:injection-scale-ratios}
\end{equation}

For $j\ge1$, define
\begin{equation}
  \Xi_j=(m_j+m_{j-1})^2M_B(m_{j-1})
  \label{eq:inj-Xi}
\end{equation}
and the integers
\begin{align}
  V_0&=k_0,
  &N_0^\mathrm{out}&=1,
  \label{eq:inj-base-volume}\\
  U_j&=wk_j\left\lceil
       \max\left\{1,\frac{\Xi_j}{K_0wk_jV_{j-1}}\right\}
       \right\rceil,
  \label{eq:inj-Uj}\\
  V_j&=K_0U_jV_{j-1},
  &N_j^\mathrm{out}&=\frac{V_j}{k_j}.
  \label{eq:inj-VC}
\end{align}
Thus $U_j$ is divisible by $wk_j$ and all batches below are integral.

\begin{theorem}[Recursive preparation of encoded $\ket{\CCZ}$ states]
\label{thm:protected-ccz-states}
For a sufficiently large usable choice of $m_0$, there are constants
$p_\mathrm{fac},\kappa_F,C_{\mathrm{fac},S},C_{\mathrm{fac},T}>0$ such that the
following holds for every $0<p<p_\mathrm{fac}$ and every $j\ge0$.  A complete
factory starting from empty workspace outputs $N_j^\mathrm{out}$ triples of
$G(m_j)$ code blocks, each containing $k_j$ logical $\ket{\CCZ}$ states.  Its
peak space $S_j$, time $T_j$, and a family $\mathcal F_j$ of bad supports
satisfy
\begin{equation}
  S_j\le C_{\mathrm{fac},S}V_j,
  \qquad T_j\le C_{\mathrm{fac},T}m_j,
  \qquad \mathcal W(\mathcal F_j;p)\le e^{-\kappa_Fm_j}.
  \label{eq:protected-ccz-service}
\end{equation}
If the fault support contains no member of $\mathcal F_j$, all outputs are
jointly exact at the corrected
boundary of \Cref{def:normalized-boundary}, including with arbitrary
reference systems and arbitrary supported CPTP fault values.
\end{theorem}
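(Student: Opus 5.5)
We argue by induction on $j$, proving the three bounds in \eqref{eq:protected-ccz-service} together with the recordwise correctness statement, with all constants fixed before $m_0$.

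\textbf{Base case and factory structure.} For $j=0$, a $G(m_0)$ triple carrying $k_0$ logical $\ket{\CCZ}$ states is a constant-size object once $m_0$ is fixed. We prepare it with the measured Clifford compiler of \Cref{lem:fixed-clifford-compiler}, using a non-Clifford base-level recipe, for instance a fixed distillation of physical magic states. We then enlarge $m_0$ and lower $p_{\mathrm{fac}}$ until $\mathcal W(\mathcal F_0;p)\le e^{-\kappa_F m_0}$ holds with $T_0,S_0$ constants.

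For $j\ge1$, the level-$j$ factory runs in three stages.
\begin{enumerate}
\item Run $U_j/(wk_j)\cdot w$ child factories at level $j-1$. This is a constant number $w$ of rounds, each running enough children in parallel. The divisibility in \eqref{eq:inj-Uj} makes the batches integral, and $K_0U_jV_{j-1}=V_j$ fixes the space.
\item Switch every child output from $G(m_{j-1})$ to $G(m_j)$ with \Cref{lem:arbitrary-input-code-switch}. The Choi resources come from \Cref{prop:batched-stabilizer-factory} at $x=m_{j-1}$. The term $\Xi_j$ in $U_j$ guarantees that the additive cost $M_B(m_{j-1})(m_j+m_{j-1})^2$ is dominated by $V_j$.
\item Apply $k_j$-fold transversal copies of the fixed instrument $\mathcal I_{q_0}$ from \Cref{prop:finite-qrs-instrument}. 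Each encoded binary input triple is assigned to one qRS input position, and the grouping is chosen so that each child feeds at most one position of each invocation.
\end{enumerate}

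\textbf{Space and time.} Peak space is $O(V_j)$ plus the resources of stage 2. By the choice of $U_j$ these are $O(V_j)$, because children are reused in rounds and children of a finished round are consumed.

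Time obeys $T_j\le w\,T_{j-1}+O(m_j)$, where the $O(m_j)$ term covers the switch, the Clifford distillation with addressable masks, memory, and classical decoding. The Choi factory adds $m_{j-1}^{\gamma_B}\operatorname{polylog}$, which is $O(m_j)$. Here I must be careful. The recursion $T_j\le wT_{j-1}+Cm_j$ with $m_j\ge a m_{j-1}=2w\,m_{j-1}$ gives
\[
T_j\le Cm_j\sum_{i\ge0}(w/a)^i\le 2Cm_j .
\]
This is exactly why $a=2w$ was chosen. The child rounds are sequential, but the $w$ rounds each take time $T_{j-1}\le 2Cm_j/(2w)$.

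\textbf{Failure bound.} This is the step I expect to be the main obstacle. Let the bad family $\mathcal F_j$ be the sum over invocations of the event that at least $B_0=t_0+1$ of that invocation's input children are bad, plus the switching, Choi, and window families. The children lie on disjoint spacetime domains. By \eqref{eq:enumerator-product} and \eqref{eq:enumerator-composition}, one invocation then contributes at most
\[
\binom{M_0}{B_0}e^{-\kappa_F B_0 m_{j-1}}\le \binom{M_0}{B_0}e^{-\kappa_F(B_0/A_+)m_j}.
\]
Since $B_0>2A_+$ by \eqref{eq:injection-margin}, this is at most $e^{-2\kappa_F m_j}$ times a constant. The number of invocations is $\operatorname{poly}(m_j)$, because $V_j$ is polynomial in $m_j$; this needs a check that $U_j$ stays polynomial, which holds since $\Xi_j/V_{j-1}$ is polynomial. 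The union bound therefore gives $\operatorname{poly}(m_j)\,e^{-2\kappa_F m_j}$, which is at most $\tfrac12 e^{-\kappa_F m_j}$ for large $m_0$. The remaining families are $e^{-\Omega(m_j)}$ by \Cref{lem:arbitrary-input-code-switch}, \Cref{prop:batched-stabilizer-factory} and \Cref{lem:ec-gadget}. Choosing $\kappa_F$ below those rates closes the induction uniformly in $j$.

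\textbf{Correctness.} Suppose the fault support avoids $\mathcal F_j$. Then each invocation has at most $t_0$ bad children. A bad child may output an arbitrary state, but \Cref{lem:arbitrary-input-code-switch} still places it in $\mathcal K_{R_+}$ and confines its contamination to one qRS input group. \Cref{prop:finite-qrs-instrument}, applied coordinatewise with the recordwise Pauli expansion of \Cref{lem:recordwise-Pauli}, then yields jointly exact outputs at the corrected boundary for arbitrary references.
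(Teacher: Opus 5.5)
Your time recurrence $T_j\le wT_{j-1}+Cm_j$ with $a=2w$ is the paper's. However, the failure bound and the correctness step do not fit together.

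Your per-invocation weight $\binom{M_0}{B_0}e^{-\kappa_FB_0m_{j-1}}$ shows that your $\mathcal F_j$ is a union, over invocations of $\mathcal I_{q_0}$, of products of bad sets of the $M_0$ children feeding that particular invocation. Avoiding it only says that no single invocation is fed by $B_0$ bad children. To conclude that every invocation has at most $t_0$ arbitrary inputs, you need a failed child to corrupt only the triples it feeds. Your correctness paragraph, like \Cref{lem:arbitrary-input-code-switch}, gives only confinement to the child's input group. That lemma yields the exact transpose only when the whole source bank of a switch satisfies the boundary condition, and a packet of $k_j$ old rows generally mixes several children.

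Here is the failure. Take bad children in $B_0$ distinct groups, no two of which co-feed an invocation; for example, children with different indices when row assignments are aligned across groups. They avoid every member of your family. Yet under group confinement, every invocation may have $B_0$ corrupted positions.

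The repair is to count groups, as the paper does. Call a group bad when any of its $U_j$ children is bad. Let $\mathcal F_j$ consist of $\mathcal R_j$ together with all products of one member of $\mathcal F_{j-1}$ from a child in each of $B_0$ distinct groups. Each invocation draws exactly one triple from each group, so avoidance gives at most $t_0$ arbitrary inputs in all invocations simultaneously, with no union over invocations. With $f_j=\mathcal W(\mathcal F_j;p)$,
\[
  f_j\le\binom{M_0}{B_0}\bigl[U_jf_{j-1}\bigr]^{B_0}+\mathcal W(\mathcal R_j;p),
\]
which closes using $U_j\le C_U(m_j+2)^P$ and $B_0>2A_+$. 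The alternative would be to prove a per-source-block confinement statement for the teleportation switch, which that lemma does not provide.

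Separately, $V_j$ is not polynomial in $m_j$. Since $V_j=K_0U_jV_{j-1}$ multiplies polynomial factors over $\Theta(\log m_j)$ levels, $\log V_j=\Theta(\log^2 m_j)$. A union over the $V_j/K_0$ invocations would still be $e^{o(m_j)}$, but not for the reason you give.

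The space bound also omits the step that keeps the constant uniform in $j$. There are $U_j$ children per group, with $M_0U_j/w$ of them active in each round; your count $U_j/(wk_j)\cdot w$ is not right. Under the inductive hypothesis the active children occupy
\[
  \frac{M_0U_j}{w}\,C_{\mathrm{fac},S}V_{j-1}=\frac{\beta_0}{w}\,C_{\mathrm{fac},S}V_j,
  \qquad \beta_0=\frac{M_0}{K_0}.
\]
Hence $S_j\le(\beta_0/w)C_{\mathrm{fac},S}V_j+C_{\mathrm{fac},\mathrm{loc}}V_j$. This closes with $C_{\mathrm{fac},S}\ge C_{\mathrm{fac},\mathrm{loc}}/(1-\beta_0/w)$ only because $w>\beta_0$, which is the reason for $w=\lfloor2\beta_0\rfloor+2$. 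Writing $O(V_j)$ without this contraction hides a constant that can grow geometrically with $j$.
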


\begin{proof}
The level zero factory prepares one raw physical $\ket{\CCZ}$ for each
canonical logical coordinate of $G(m_0)$ by a noisy physical $\CCZ$ and the
canonical encoding circuit.  Because $m_0$ is chosen before the simulated
computation, the complete location count of this factory is a constant
$L_0$.  All bare non-Clifford gates of this route occur at this level.

A factory at level $j$ has $M_0$ input groups, and $U_j$ complete
level-$(j-1)$ child factories are assigned to each group.  A predetermined schedule has
exactly $w$ parallel rounds, with $M_0U_j/w$ children active in each round and
$U_j/w$ children from every input group.  Every child outputs all
$N_{j-1}^\mathrm{out}$ encoded $\CCZ$ rows.  The child output rows occupy their assigned
buffers at the old code length, with memory error correction applied until all $w$ rounds have finished;
neither the number of rounds nor the retained rows depend on measurement
outcomes.

Within each input group, the old rows are partitioned into consecutive
packets of $k_j$.  If an old row has packet label $b$ and position
$c\in[k_j]$, while $d\in[k_{j-1}]$ is its old logical coordinate, the
canonical transpose sends
\begin{equation}
  (b,c;d)\longmapsto (b,d;c).
  \label{eq:inj-three-bank-transpose}
\end{equation}
The same permutation of tensor factors is used in all three $\CCZ$ banks.  Thus
the three qubits of each old logical resource retain the same row and
coordinate labels even for a
failed child entangled with other factory systems.  Each input group
produces
\begin{equation}
  Q_j=\frac{U_jV_{j-1}}{k_j}
  \label{eq:inj-new-rows}
\end{equation}
new rows.  For each row label and canonical coordinate, the instrument
$\mathcal I_{q_0}$ acts on the $M_0$ resource triples in the corresponding
positions.  All circuit components of the finite instrument, including its
preparation, decoder, feedback, and prescribed discards, are executed.  The
factory output consists of
\begin{equation}
  N_j^\mathrm{out}=K_0Q_j,
  \qquad V_j=k_jN_j^\mathrm{out}
  \label{eq:inj-tile-output}
\end{equation}
encoded binary $\CCZ$ triples.

The change of scale uses
\Cref{lem:arbitrary-input-code-switch}.  A failed child may create an
arbitrary joint state on many new rows and coordinates, but all affected
tensor factors remain in the assigned input group.  A group without a
failed child contains the exact logical $\ket{\CCZ}$ state in the common
logical subsystem; purity decouples that state from all bad groups, syndrome
systems, and references.  If at most $t_0$ groups contain failed children,
each invocation of $\mathcal I_{q_0}$ consequently has at most $t_0$ arbitrary
inputs.  Proposition~\ref{prop:finite-qrs-instrument}, applied to one
invocation with all other rows, coordinates, and records included in its
reference system, gives exact joint $\ket{\CCZ}$ outputs.  Repeating this
argument over the invocations preserves joint exactness,
without an additional failure factor depending on $Q_j$, $k_j$, or
$N_j^\mathrm{out}$.
The proof permits arbitrary correlations among the failed children.

The number of Choi copies used by the three banks is
\begin{equation}
  v_{\mathrm{sw},j}=\frac{3M_0U_jN_{j-1}^\mathrm{out}}{k_j},
  \label{eq:inj-switch-copies}
\end{equation}
and one target has size
\begin{equation}
  s_j=k_jm_{j-1}+k_{j-1}m_j.
  \label{eq:inj-switch-target-size}
\end{equation}
Using $V_{j-1}=k_{j-1}N_{j-1}^\mathrm{out}$ and
$V_j=K_0U_jV_{j-1}$ gives
\begin{equation}
  v_{\mathrm{sw},j}s_j
  =3M_0U_jV_{j-1}
   \left(\frac{m_{j-1}}{k_{j-1}}+\frac{m_j}{k_j}\right)
  =O(V_j).
  \label{eq:inj-switch-volume}
\end{equation}
The padding in \eqref{eq:inj-Uj} also gives
\begin{equation}
  M_B(m_{j-1})s_j
  \le(m_j+m_{j-1})^2M_B(m_{j-1})
  =\Xi_j\le V_j.
  \label{eq:inj-choi-batch-volume}
\end{equation}
All $v_{\mathrm{sw},j}$ Choi states for one parent are prepared in one batch.
The Choi states may be prepared jointly, but no completed batch
is shared by distinct parent factories.  Failure of a complete Choi batch is
therefore assigned to that parent; no independence among its outputs is used.

Let $S_j,T_j$ be the peak space and the time of one factory starting from
empty workspace.
At level zero, positive rate gives $V_0=k_0=\Theta(m_0)$.  The three
canonical encoders and the physical $\CCZ$ layer use
\begin{equation}
  S_0\le C_{\mathrm{base},S}V_0,
  \qquad
  T_0\le C_{\mathrm{base},T}m_0
  \label{eq:inj-base-resource-bounds}
\end{equation}
after increasing constants to include syndrome extraction, correction, and
all idling locations of the level zero circuit.  The constants
$C_{\mathrm{base},S}$ and $C_{\mathrm{base},T}$ are uniform over all usable base
lengths above the minimum required by the code family.  The later choice of $m_0$ therefore does
not alter the recurrence constants.
Assuming inductively that a child uses at most
$C_{\mathrm{fac},S}V_{j-1}$ space, one parallel round occupies exactly
\begin{equation}
  \frac{M_0U_j}{w}C_{\mathrm{fac},S}V_{j-1}
  =\frac{\beta_0}{w}C_{\mathrm{fac},S}V_j.
  \label{eq:inj-active-child-space}
\end{equation}
The accumulated buffers at the old code length use $O(\beta_0 V_j)$ space.  Equations
\eqref{eq:inj-switch-volume} and \eqref{eq:inj-choi-batch-volume} include
the complete batch of Choi state preparations, while the switch banks, work registers for the
finite circuit, output buffers, and the $O(m_j^2)$ zero blocks used for
routing fit in another
$C_{\mathrm{fac},\mathrm{loc}}V_j$ qubits.  Since $\beta_0/w<1$,
\begin{equation}
  S_j\le\frac{\beta_0}{w}C_{\mathrm{fac},S}V_j
       +C_{\mathrm{fac},\mathrm{loc}}V_j,
  \qquad
  S_j=O(V_j).
  \label{eq:inj-space-recurrence}
\end{equation}
The choice
\begin{equation}
  C_{\mathrm{fac},S}\ge
  \max\left\{C_{\mathrm{base},S},
  \frac{C_{\mathrm{fac},\mathrm{loc}}}{1-\beta_0/w}\right\}
  \label{eq:inj-space-induction-constant}
\end{equation}
closes the induction.

After all child outputs are available, switching and the finite qRS circuit take
time $O(m_j)$, including their feedback and final correction.  Here
$m_{j-1}^{\gamma_B}\log^{c_B}(m_{j-1}+2)=o(m_{j-1})$ because
$\gamma_B<1$, and the growth estimate \eqref{eq:inj-growth-bounds} below
makes the depth of the classical control circuit with bounded fan-in
$O(\log V_j+\log^c(m_j+2))=o(m_j)$.  Therefore
\begin{equation}
  T_j\le wT_{j-1}+C_\mathrm{step}m_j.
  \label{eq:inj-time-recurrence}
\end{equation}
Because $m_j\ge am_{j-1}$ and $a=2w$,
\begin{equation}
  T_j\le
  \left(C_{\mathrm{base},T}+\frac{C_\mathrm{step}}{1-w/a}\right)m_j
  =(C_{\mathrm{base},T}+2C_\mathrm{step})m_j.
  \label{eq:inj-linear-latency}
\end{equation}
Indeed, \eqref{eq:inj-base-resource-bounds} bounds the first term in the
unrolled recurrence, while the remaining terms form a geometric series with
ratio at most $w/a=1/2$.
Set $C_{\mathrm{fac},T}=C_{\mathrm{base},T}+2C_\mathrm{step}$.
An old block waits only for the $O(m_j)=O(m_{j-1})$ work of its immediate
parent.  After switching to the code of length $m_j$, memory error correction
at that length protects the block during the next waiting interval.  Thus no
factor equal to the number of scales appears in
\eqref{eq:inj-linear-latency}.

For the failure bound, the domain of each child contains all locations from
the beginning of its preparation through the production of its output.  The
parent domain contains the subsequent storage, Choi state preparation and
use, qRS circuit, feedback, and terminal error correction.  The child and
parent domains are disjoint.  Let $\bar p>0$ be a common threshold for the finite collection of
qLTC, switching, and stabilizer batch operations performed by the parent.
Let $\mathcal R_j$ be the corresponding family of bad fault sets.  The
complete batch of Choi state preparations is bounded by
\eqref{eq:stabilizer-batch-failure}, with
$v=v_{\mathrm{sw},j}$ and $x=m_{j-1}$.  All other parent operations occupy at
most a polynomial number of error correction, memory, readout, and feedback
windows at lengths $m_{j-1}$ and $m_j$.  The bounds for those windows and a
union over their original supports therefore give constants
$C_R,d_R,e_R,c_R>0$ such that, for every $p<\bar p$,
\begin{equation}
  \mathcal W(\mathcal R_j;p)
  \le C_RV_j^{d_R}(m_j+2)^{e_R}e^{-c_Rm_{j-1}}.
  \label{eq:inj-parent-catastrophe}
\end{equation}
An input group is bad when at least one of its $U_j$ children is bad.  Let
$\mathcal F_0$ contain the singleton fault sets in the level zero circuit.
For $j\ge1$, let $\mathcal F_j$ be the union of $\mathcal R_j$ with the
products obtained by choosing one member of $\mathcal F_{j-1}$ from a child
in each of $B_0=t_0+1$ distinct input groups.  Thus $\mathcal F_j$ is a
family of bad sets on the original physical locations.  Put
$f_j=\mathcal W(\mathcal F_j;p)$.  Then
\begin{equation}
  f_0\le L_0p,
  \qquad
  f_j\le \binom{M_0}{B_0}[U_jf_{j-1}]^{B_0}
     +C_RV_j^{d_R}(m_j+2)^{e_R}e^{-c_Rm_{j-1}}.
  \label{eq:inj-failure-recurrence}
\end{equation}
The power $B_0$ follows from disjoint original spacetime domains of the chosen
children in distinct input groups.

The inequalities $a\le m_j/m_{j-1}\le A_+$ and the definitions give
constants $C_U,C_V$ with
\begin{equation}
  U_j\le C_U(m_j+2)^P,
  \qquad P=2+\log_2 31,
  \qquad
  \log V_j\le C_V\log^2(m_j+2).
  \label{eq:inj-growth-bounds}
\end{equation}
Indeed, $M_B(m_{j-1})=O(m_{j-1}^{\log_2 31})$ and
$m_{j-1}\le m_j\le A_+m_{j-1}$,
so $\Xi_j=O((m_j+2)^P)$.  Expanding the ceiling in
\eqref{eq:inj-Uj} then gives the first inequality.  Iterating
$\log V_j=\log V_{j-1}+\log K_0+\log U_j$ over
$j=O(\log m_j)$ levels gives the second.

Set $\kappa_F=c_R/(8A_+)$.  Choose one cutoff $X_0$ such that, for every
$x\ge X_0$, the local prefactor in
\eqref{eq:inj-parent-catastrophe} is at most $e^{c_Rx/2}$ when
$m_j\le A_+x$, and
\begin{equation}
  \binom{M_0}{B_0}C_U^{B_0}(A_+x+2)^{PB_0}
  \le \frac12 e^{\kappa_F(B_0-A_+)x}.
  \label{eq:inj-induction-cutoff}
\end{equation}
Such a common cutoff exists because all displayed prefactors have logarithm
$O(\log^2(x+2))$.  Increase $X_0$ beyond the minimum usable lengths required
for the code family, the raw base preparation, the factory for stabilizer
states, and the linear bounds on the classical control depth.  Choose usable
$m_0\ge X_0$ and
\begin{equation}
  0<p_\mathrm{fac}\le
  \min\left\{\bar p,\frac{e^{-\kappa_Fm_0}}{L_0}\right\}.
  \label{eq:inj-factory-threshold}
\end{equation}
The base case is immediate.  If
$f_{j-1}\le e^{-\kappa_Fm_{j-1}}$, set $x=m_{j-1}$.  The first term of
\eqref{eq:inj-failure-recurrence} is, by
\eqref{eq:inj-induction-cutoff},
\begin{equation}
  \binom{M_0}{B_0}[U_jf_{j-1}]^{B_0}
  \le\frac12 e^{-\kappa_FA_+x}
  \le\frac12 e^{-\kappa_Fm_j}.
  \label{eq:inj-child-induction}
\end{equation}
The term from operations performed by the parent is at most
$e^{-c_Rx/2}\le\frac12e^{-\kappa_Fm_j}$ after enlarging the same cutoff.
Here $m_j\le A_+x$ and $\kappa_F=c_R/(8A_+)$ leave a constant exponential
margin.  Thus induction gives
\begin{equation}
  f_j\le e^{-\kappa_Fm_j}
  \label{eq:inj-exponential-tile}
\end{equation}
for every level.

Equations~\eqref{eq:inj-space-recurrence} and
\eqref{eq:inj-linear-latency} give the resource bounds, while
\eqref{eq:inj-exponential-tile} gives the weight enumerator bound for the bad
sets of a complete factory.
If the fault support contains no member of $\mathcal F_j$, fewer than $B_0$
input groups are bad.  The switching
maps are defined on arbitrary source states.  Their outputs, including those
arising from failed child factories, lie in the tensor product of the ranges
$\mathcal K_{R_+}$ in \eqref{eq:common-correctable-subsystem}, on which the
finite qRS instrument identity applies.  Proposition
\Cref{prop:finite-qrs-instrument} then
gives the joint exact logical outputs, and terminal error correction
establishes the corrected boundary.
\end{proof}

For later use, there are constants $C_{\mathrm{req},S},C_{\mathrm{req},T}>0$,
independent of $q_\mathrm{row}$ and $j$, with the following bounds.  A request
for $q_\mathrm{row}$ complete rows is served by the required number of complete
factories on disjoint locations, with predetermined surplus outputs discarded.  If
$\mathcal F_\mathrm{req}$ is the union of their translated bad set families, then
\begin{align}
  S_\mathrm{req}(q_\mathrm{row},j)&\le C_{\mathrm{req},S}(q_\mathrm{row}m_j+V_j),
  \label{eq:protected-request-space}\\
  T_\mathrm{req}(q_\mathrm{row},j)&\le C_{\mathrm{req},T}m_j,
  \label{eq:protected-request-time}\\
  \mathcal W(\mathcal F_\mathrm{req};p)&\le
  \left\lceil\frac {q_\mathrm{row}}{N_j^\mathrm{out}}\right\rceil
    e^{-\kappa_Fm_j}.
  \label{eq:protected-request-failure}
\end{align}
The last bound is the sum rule for weight enumerators.  The case
$q_\mathrm{row}=0$ is the empty instrument.

\subsection{Clifford gates and routing}
\label{subsec:injection-clifford-routing}

State injection still requires Clifford gates on prescribed logical
coordinates.  The prescribed Clifford operations are constructed on the complete logical
space of the primary code and its physical $H$ dual.  For
$q_\mathrm{dat}$ full $G(m)$ blocks, pad the bank to
\begin{equation}
  q'_\mathrm{dat}=k\left\lceil\frac {q_\mathrm{dat}}k\right\rceil<q_\mathrm{dat}+k.
  \label{eq:injection-clifford-padding}
\end{equation}
Groups of $k$ blocks are transposed using the raw switching segment in
\Cref{lem:canonical-code-switching}; the decoder computations of consecutive
segments are deferred as in \Cref{lem:deferred-clifford-corrections}.  After
this transpose, a linear transformation involving a constant number of
logical coordinates is implemented by a constant length circuit of whole block CNOT and
SWAP gates, followed by the inverse transpose.  One such operation uses
$O(q_\mathrm{dat}m+m^2)$ qubits.

\begin{proposition}
\label{prop:injection-clifford-routing}
For every sufficiently large usable $m$, on $q_\mathrm{dat}$ full primary blocks, a
prescribed dense register permutation or one prescribed layer of disjoint
logical Clifford gates and Pauli measurements, with compiled classical enables, can
be implemented with
\begin{align}
  S_\mathrm{Cliff}(q_\mathrm{dat},m)&\le C\bigl[q_\mathrm{dat}m+m^2+mM(m)\bigr],
  \label{eq:injection-clifford-space}\\
  T_\mathrm{Cliff}(q_\mathrm{dat},m)&\le C\bigl[m+\log(q_\mathrm{dat}+1)\bigr].
  \label{eq:injection-clifford-time}
\end{align}
If the fault support contains no member of a bad family of weight
$\operatorname{poly}(q_\mathrm{dat},m)e^{-cm}$, the physical instrument is
the stated logical operation, including with arbitrary reference systems.
\end{proposition}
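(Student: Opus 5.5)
The construction works on the complete logical space of $G(m)$, with no active bands, and reuses the dense routing skeleton of \Cref{sec:main-proof}. The transpose is the raw switching segment of \Cref{lem:canonical-code-switching}. All Clifford Pauli frames are deferred by \Cref{lem:deferred-clifford-corrections}.

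\textbf{Clifford layers.} Pad the bank to $q'_\mathrm{dat}$ blocks as in \eqref{eq:injection-clifford-padding} and transpose each group of $k$ blocks. After the transpose, a logical coordinate $(b,j)$ becomes coordinate $b$ of block $j$. Route the operands of each gate type into aligned positions by a permutation; this is treated below. Then every two-qubit Clifford is a whole-block transversal CNOT, CZ, or SWAP between two transposed blocks, each acting on all $k$ coordinates. Classical enables are realized by choosing the routed positions so that disabled slots carry independent zero or plus padding, exactly as in the main construction.

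\textbf{Single-qubit gates and measurements.} Logical $H$ is physical Hadamard into the dual code. We return to the primary code by a Bell-type transfer using encoded zero, plus, and Bell resources from \Cref{cor:encoded-stabilizer-resources}. Logical $S$ consumes canonical $Y$ rows from \Cref{thm:primary-y-factory}, through the CNOT-and-readout gadget of \Cref{lem:selected-active-gates} acting on full blocks. This factory is the source of the $mM(m)$ term in \eqref{eq:injection-clifford-space}. Pauli measurements are complete destructive readouts (\Cref{lem:destructive-logical-readout}) of banks routed into measurement roles, followed by the linear decoding map.

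\textbf{Register permutations.} A dense permutation is handled as in \Cref{lem:register-permutation}. First decompose the bipartite route multigraph into $k$ perfect matchings using Hall's theorem. Then compose two within-block permutations with one coordinate-preserving permutation. The difference from the main construction is that a coordinate-preserving permutation of blocks, after transposition, is just a permutation of whole blocks. It can therefore be done by physical relabelling, or by a constant number of transversal SWAP layers with arbitrary connectivity, with no $k_{\CCZ}$ matching rounds. A within-block permutation is obtained by transposing, permuting whole blocks, and transposing back. Each step is a constant number of switching segments and whole-block layers.

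\textbf{Cost accounting.} Quantum cores cost $O(1)$ each, switching costs $O(\log^2 m)$, and the deferred frame computation costs $O(N_\mathrm{seg}+\log^c m+\log V)$. Enable distribution costs $\log(q_\mathrm{dat}+1)$, and resource preparation costs $O(g(m))$. Together these give \eqref{eq:injection-clifford-time}, with the polylog terms absorbed into $m$. The width counts the padded banks, $O(q_\mathrm{dat}m+km)$. It also counts the $O(m^2)$ switching grids, reused across groups, and the $Y$ factory of size $O(mM(m))$ plus $O(q_\mathrm{dat}m)$ per demanded row. Correctness follows by taking the union of bad families of all windows. There are polynomially many, each of weight $e^{-\Omega(m)}$, which gives $\operatorname{poly}(q_\mathrm{dat},m)e^{-cm}$. \Cref{prop:gadget-composition} and recordwise Pauli expansion then extend the result to arbitrary references.

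\textbf{Main obstacle.} The hard part is keeping the depth linear in $m$ rather than $O(m\log m)$ while frames are deferred across routing and gate segments. I will check that the number of segments $N_\mathrm{seg}$ is constant per layer. I will also check that the logical frame propagates through the enabled Clifford templates via the XOR recurrence \eqref{eq:enabled-frame-recurrence}, and not by serial dense symplectic products. A second point to confirm is that $S$ via $Y$ rows contains no non-Clifford step, so every pending Pauli can be applied at the end of the layer.
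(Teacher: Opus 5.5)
\textbf{There is a genuine gap: your plan contains no operation that acts on a single logical coordinate of a block, and both the enables and the routing need one.}

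\emph{Enables.} The enables are runtime bits of $\vec x_t$. They generally differ among the $k$ coordinates that share a block. A whole-block transversal gate, a full-row $H$, or a full-row $S$ from a $Y$ row carries only one enable per block. A compiled route can send compile-time dummy slots to padding, but it cannot know which gates will be disabled at runtime. Grouping gates by their controlling bit instead can leave one real slot per block, which costs a factor $k$ in space. The main construction does not realize enables by routing either. They are runtime physical masks $\operatorname{lift}_{j_3}(i_{j_3}(\zeta))$ of the transversal slice \eqref{eq:program-free-CNOT}. That is a coordinate-addressed primitive, and the full logical space of $G(m)$ does not supply one. Separately, a whole-block transversal $\CZ$ between two $G(m)$ blocks does not preserve the codespace in general, since it maps $X^a\otimes I$ to $X^a\otimes Z^a$. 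So $\CZ$ has to come from Hadamard conjugation of a CNOT target.

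\emph{Routing.} Your routing step is false as stated. A coordinate-preserving permutation $(b,j)\mapsto(\pi_j(b),j)$ with $j$-dependent $\pi_j$ is not a whole-block permutation in either orientation. Take two blocks, $k=2$, $\pi_1=\mathrm{id}$, and $\pi_2$ the transposition. The transposed map fixes transposed block $1$ and swaps the two coordinates of transposed block $2$. Transposition turns within-block permutations into coordinate-preserving ones (\Cref{lem:dense-local-permutation}). The example shows it does not turn coordinate-preserving ones into relabellings. So your two reductions only point at each other.

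\textbf{The missing idea is a single-coordinate CNOT with a per-block runtime enable.}

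\emph{The gadget.} For distinct coordinates $v,v'$, put $A=\begin{pmatrix}0&1\\1&0\end{pmatrix}$ and $B=\begin{pmatrix}1&1\\1&0\end{pmatrix}$ on their span, extended by the identity. Both are invertible and $A+B=E_v$. A linear map on constantly many coordinates, applied uniformly to every block, costs one transpose, a constant-length whole-block CNOT/SWAP circuit, and the inverse transpose. Conjugate an enabled whole-block CNOT along a compiled matching $b\mapsto\pi(b)$ by $A$ on the sources, then repeat with $B$. This restores the sources and adds $e_bE_vx_b$ to the targets. Three such CNOTs give a coordinate SWAP. Running $v=1,\ldots,k$ gives arbitrary CNOT and SWAP masks.

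\emph{What the gadget gives.} It restricts full-row $H$ and $S$ to the enabled coordinates by swapping them into a zero work row and back. It gives masked $\CZ$ by Hadamard conjugation of the target. It implements coordinate-preserving routing: copy into zero targets with coordinate-$v$ CNOTs along $\pi_v$, read the sources in the logical $X$ basis, and apply the $Z$ label $Pz$ to the targets.

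\emph{Consequence for your cost accounting.} The segment count is $N_\mathrm{seg}=\Theta(k)=\Theta(m)$, not a constant. This is harmless, because \Cref{lem:deferred-clifford-corrections} charges $N_\mathrm{seg}$ additively. It is exactly the source of the linear term in \eqref{eq:injection-clifford-time}. What you actually need to check is that the decoding of each segment is deferred rather than serialized, not that the number of segments is constant.
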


\begin{proof}
A CNOT on one prescribed coordinate is obtained as follows.  Choose distinct
coordinates $v,v'$.  On their span, let
\begin{equation}
  A=\begin{pmatrix}0&1\\1&0\end{pmatrix},
  \qquad
  B=\begin{pmatrix}1&1\\1&0\end{pmatrix},
  \qquad A+B=E_v,
  \label{eq:injection-single-coordinate-matrices}
\end{equation}
where $E_v$ has a single nonzero diagonal entry at $v$.  Extend $A$ and
$B$ by the identity on the other coordinates.  Both
matrices are invertible.  For a matching from source blocks $b$ to target
blocks $\pi(b)$, conjugate an enabled whole block CNOT by $A$ on each source
and then repeat with $B$.  If the enable for block $b$ is $e_b$, the source is
restored and the target changes by
\begin{equation}
  y_{\pi(b)}\longmapsto y_{\pi(b)}+e_b(A+B)x_b
  =y_{\pi(b)}+e_bE_vx_b.
  \label{eq:injection-single-coordinate-cnot}
\end{equation}
Thus the construction applies CNOT only to coordinate $v$.  Three such
CNOTs give a SWAP on that coordinate.  Repeating the construction for
$v=1,\ldots,k$ implements arbitrary CNOT and SWAP masks in $O(k)=O(m)$
constant depth Clifford segments.

The remaining Clifford gates use the same sequence of coordinate rounds.
Physical Hadamard maps a complete $G(m)$ block to its physical $H$
dual and applies logical Hadamard to every canonical coordinate.  The relevant
dual rows are grouped in batches of $k$, with independent dual zero rows in
the last incomplete batch.  A switch to $G(m)$ followed by the self switch of
$G(m)$ cancels the two transposes.  The output is therefore encoded by
$G(m)$ with the same coordinate labels, while the padding is again an
independent zero state and is discarded.

A complete canonical $Y$ row realizes logical $S$ on a full row.
CNOT from the data to $\ket{+i}$ followed by logical $Z$ readout gives
$S$ for outcome zero and $iS^\dagger=iZS$ for
outcome one.  For either $H$ or $S$, conjugating the full row
operation by the coordinate SWAP circuit restricts it to the enabled data
coordinates.  The zero work bank is subsequently measured or reset at all
prescribed locations, including coordinates on which the full row gate
changed its initial state, while disabled data remain unmeasured.

A masked $\CZ$ is obtained by conjugating the target of the masked CNOT by
Hadamard; the two Hadamards cancel when the enable bit vanishes.  The chosen
Clifford gate set is compiled from these operations.  Pauli measurement uses
the corresponding Clifford basis change followed by complete logical $X$ or
$Z$ readout.  A measurement bank contains no unmeasured logical state that
must survive, and unused coordinates are independent dummy states.  Reset
traces out the old row and replaces it by a freshly prepared zero row.

Dense routing also uses only these Clifford operations.  A prescribed
permutation $(b,v)\mapsto(\pi_v(b),v)$ that preserves the coordinate label is
implemented with zero target blocks and the CNOT copying circuit.  If
\begin{equation}
  \sum_x\alpha_x\ket{x}_S\ket{0}_T
  \longmapsto
  \sum_x\alpha_x\ket{x}_S\ket{Px}_T,
  \label{eq:injection-aligned-copy}
\end{equation}
then complete logical $X$ readout of the source gives an outcome $z$ and a
phase $(-1)^{z\cdot x}$.  Applying the canonical $Z$ label $Pz$ to the
target cancels this phase because $(Pz)\cdot(Px)=z\cdot x$.  The resulting
map is the required permutation instrument, including when the data are
entangled with a reference.  Conjugation by the groupwise transpose gives a
permutation within each block.

For a general permutation of the $q'_\mathrm{dat}k$ slots, consider the
bipartite multigraph whose vertices are source and destination blocks and
whose edges are the logical slots.  This graph is $k$ regular, so successive
perfect matchings give $k$ colors.  A permutation within the source blocks
assigns the colors, the routing that preserves coordinate labels implements the
permutation for each color, and a permutation within the destination blocks
places the coordinates in their final positions.  The complete routing circuit
uses $O(m)$ constant depth Clifford segments and
$O(q_\mathrm{dat}m+m^2)$ space.

The decoder computations attached to these segments are evaluated in parallel
after the relevant records become available.  Every raw switch or Clifford measurement segment is followed by a
fresh complete extraction.  Applying
\Cref{lem:deferred-clifford-corrections} to a sequence of $N_\mathrm{seg}$ segments with
record volume $V$ gives time
\begin{equation}
  O\bigl(N_\mathrm{seg}+\log^c(m+2)+\log(V+2)\bigr).
  \label{eq:injection-deferred-time}
\end{equation}
The lemma includes the repair of noisy syndrome records, the propagation of
deferred physical Pauli corrections through later measured Clifford segments, and
the enabled frame recurrence \eqref{eq:enabled-frame-recurrence}.  Memory
correction relative to the fresh raw syndrome is applied to every live block
until the final physical Pauli and terminal error correction.

The $k$ coordinate rounds and the routing construction use
$N_\mathrm{seg}=O(k)=O(m)$ constant
depth Clifford segments.  Equation~\eqref{eq:injection-deferred-time} gives the time
bound because the complete record volume is polynomial in $q_\mathrm{dat}m$.  The data,
transpose grids, target banks, and work banks use $O(q_\mathrm{dat}m+m^2)$ qubits.  Only
a layer of $S$ gates needs canonical $Y$ rows.  By
\Cref{lem:primary-y-resource-interface}, one execution of all $k$ coordinate rounds uses
$q'_\mathrm{dat}$ such rows and adds
$O(q_\mathrm{dat}m+m^2+mM(m))$ space and polylogarithmic time.
The bad support family is the union of the code switching, preparation,
readout, memory, and final correction families for these concrete segments.
For each raw switch, use the family supplied by
\Cref{lem:canonical-code-switching}, together with the families for the
appended syndrome extraction, memory during the deferred computation, and
terminal correction.  They are defined on the original physical locations,
uniformly over runtime enables, actual records, and supported CPTP fault
values.
Their number is polynomial in $q_\mathrm{dat}$ and $m$, while each has weight
$e^{-\Omega(m)}$.  The branch identities above prove correctness on the
complement of that family.
\end{proof}

\subsection{State injection and overhead}
\label{subsec:injected-computation}

Before a protected resource row is used, all pending syndrome calculations
are completed and the resulting Pauli corrections are applied physically to
the data.  A fresh extraction and error correction cycle then precedes the
coordinate SWAP between the enabled operands and zero work rows.  Any Pauli
corrections produced by this selection circuit are also applied physically
before injection.

For one logical coordinate, the injection circuit applies CNOT from each data
qubit to the corresponding qubit of the resource triple and reads the three
resource qubits in the logical $Z$ basis.  On outcome
$z=(z_1,z_2,z_3)$, the correction is
\eqref{eq:ccz-consumption-correction}.  The branch Kraus operator is a scalar
times the desired $\CCZ$ on the data, for arbitrary data--reference inputs.
All injections across a row run in parallel, and the row is followed by
terminal error correction.  The correction in
\eqref{eq:ccz-consumption-correction} contains three logical $\CZ$ masks and
three logical Pauli masks.  The operations in
\Cref{prop:injection-clifford-routing} implement these masks.  The three
quadratic outcome bits are evaluated by constant size Boolean gates and
copied by bounded fan-out trees before the corresponding $\CZ$ layer
starts.  Memory correction is applied to the data throughout this wait, and these non-Pauli
controls are completed before any subsequent non-Clifford gate.

Fix a complete branch record, including the factory record and all subsequent
measurement and correction data, and condition on a support outside the bad
events of the resource factory and the surrounding circuit windows.  The protected
resource theorem gives a pure encoded $\ket{\CCZ}$ state on the logical
factor, decoupled from its factory environment.  The branch
identity for \eqref{eq:ccz-consumption-correction} then gives the desired
logical gate on the data, while the surrounding circuit has bounded spread and leaves
only a correctable residual.  After final error correction, the data satisfy
the corrected boundary condition.  Applying this argument to Pauli matrix units on both sides
and then summing the actual physical instrument proves the statement for
arbitrary supported CPTP fault values; no measurement record or expansion
coefficient is assigned an additional probability.

At the top scale $m=m_j$, put $k=k(m)$.  The dense routing in
\Cref{prop:injection-clifford-routing} requires
\begin{equation}
  q_\mathrm{row}=O\!\left(\left\lceil\frac Wk\right\rceil+k\right)
  \label{eq:inj-requested-rows}
\end{equation}
complete resource rows for each of a constant number of roles.  Since
$k=\Theta(m)$,
\begin{equation}
  q_\mathrm{row}m=O(W+m^2).
  \label{eq:inj-row-volume}
\end{equation}
Encoded $\CCZ$ preparation occupies $O(V_j)$ qubits by
\eqref{eq:protected-request-space}.  Proposition
\ref{prop:injection-clifford-routing} bounds the data, routing, logical $Y$
resources, direct zero and plus preparations, and reusable correction
workspace by $O(W+m^2+mM(m))$.  Hence
\begin{equation}
  S_\mathrm{layer}^\mathrm{inj}
  \le C\bigl[W+V_j+m^2+mM(m)\bigr].
  \label{eq:inj-layer-space}
\end{equation}

All resource factories for one gate type start together.  The factory time is
$O(m)$.  The routing and Clifford operations are covered by
\Cref{prop:injection-clifford-routing}; their local decoder computations run in
parallel with later raw Clifford segments.  If $N_\mathrm{rec}$ is the number
of classical record bits and Pauli labels produced in the layer, then
\begin{equation}
  N_\mathrm{rec}\le
  [W+V_j+m^2+mM(m)+2]^{c_1}(m+2)^{c_2}.
  \label{eq:inj-record-count}
\end{equation}
The nonlinear local decoder computations have depth $\log^c(m+2)$.  The
remaining global calculations are linear: the compiled Clifford circuits
determine the record shifts, and \eqref{eq:enabled-frame-recurrence} gives the
enabled Pauli frames.  Constant size Boolean gates evaluate the injection
correction.  Balanced XOR and copy trees evaluate the linear maps
in $O(\log(N_\mathrm{rec}+2))$ depth.  Original enable bits are copied once per
ideal layer, and a measurement outcome used for a non-Pauli correction is
copied once before the corresponding gate type is executed.  Therefore
\begin{equation}
  T_\mathrm{layer}^\mathrm{inj}
  =O\!\left(m+\log^c(m+2)+\log(N_\mathrm{rec}+2)\right)
  =O(m+\log(W+1)).
  \label{eq:inj-layer-time}
\end{equation}

\begin{proposition}[One layer with injected $\CCZ$ gates]
\label{prop:injection-layer}
One adaptive ideal layer is implemented by the preceding schedule with the
space and time bounds \eqref{eq:inj-layer-space} and
\eqref{eq:inj-layer-time}.  If the fault support contains no member of the
resource factory's bad family or of the bad families for the remaining
circuit windows, the physical instrument is exactly the prescribed
ideal layer on inputs entangled with arbitrary reference systems, and all
surviving output blocks satisfy the corrected boundary condition.
\end{proposition}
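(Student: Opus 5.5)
The proof follows the pattern of \Cref{prop:one-layer}, with the transversal active calls replaced by the injection circuit and the recursive factory. The plan is to assemble one ideal layer from three kinds of calls.

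\textbf{Routing into the role layout.} Route into the role layout with \Cref{prop:injection-clifford-routing}. Operands of each $\CCZ$ gate are placed at common block and coordinate indices in three banks, and all unused slots are independent zeros.

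\textbf{Clifford and measurement roles.} For the Clifford and Pauli-measurement roles, invoke the same proposition once per gate type. It supplies the compiled enables, the canonical $Y$ rows, complete readouts, and resets by fresh zero rows.

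\textbf{$\CCZ$ roles.} For the $\CCZ$ roles, request $q_\mathrm{row}$ resource rows from complete factories at the top scale $m=m_j$. Their cost is bounded by \eqref{eq:protected-request-space}--\eqref{eq:protected-request-failure}. Then:
\begin{enumerate}
\item apply physically all pending Pauli corrections and a fresh error correction cycle;
\item select the enabled operands by the masked SWAP into zero work rows;
\item inject rowwise with CNOT and logical $Z$ readout;
\item implement the correction \eqref{eq:ccz-consumption-correction} through the $\CZ$ and Pauli masks of \Cref{prop:injection-clifford-routing}, after evaluating the quadratic outcome bits and distributing them with bounded fan-out.
\end{enumerate}
Finally, evaluate the ideal classical update \eqref{eq:classical-update} exactly once and route to the next layout.

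\textbf{Resource bounds.} Summing over the constant number of roles gives the space bound \eqref{eq:inj-layer-space}: \eqref{eq:inj-row-volume} controls the data and row volume, the factories contribute $O(V_j)$, and the $Y$ factory and grids contribute $m^2+mM(m)$. All factories for one gate type run in parallel, so factory time contributes $O(m)$ once per gate type. Routing and Clifford segments contribute $O(m+\log(W+1))$. The deferred decoder computations of \Cref{lem:deferred-clifford-corrections} contribute $\log^c(m+2)$ plus $O(\log(N_\mathrm{rec}+2))$ for the linear frame maps. By \eqref{eq:inj-record-count} this last term is $O(\log(W+V_j+mM(m))+\log m)$. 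Using \eqref{eq:inj-growth-bounds} and $\log M(m)=O(\log m\log\log m)$, it becomes $O(\log(W+1)+\log^2 m)$, which is absorbed in $O(m+\log(W+1))$. This yields \eqref{eq:inj-layer-time}.

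\textbf{Correctness.} Condition on the complement of the union of the factory families $\mathcal F_\mathrm{req}$ and all window families. Work recordwise via \Cref{lem:recordwise-Pauli} on left and right Pauli matrix units.
\begin{itemize}
\item \Cref{thm:protected-ccz-states} gives resource rows that are jointly exact pure logical $\ket{\CCZ}$ states at the corrected boundary. By purity they decouple from the factory environment, so the resources can be placed in the reference system of the data.
\item The injection branch identity then gives a scalar times the logical $\CCZ$ on the enabled coordinates, for arbitrary data--reference input. Masked selection leaves disabled coordinates untouched, since a zero third operand yields the identity.
\item Bounded spread together with \Cref{lem:ec-gadget} returns every block to radius $t_*n$.
\item \Cref{lem:instrument-normalization} fixes the scalars and outcome probabilities.
\item Sequential and parallel composition (\Cref{prop:gadget-composition}) through the layer gives the exact ideal instrument.
\end{itemize}

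\textbf{Main obstacle.} The delicate step is the non-Pauli feedback. The $\CZ$ part of \eqref{eq:ccz-consumption-correction} depends on readout outcomes and cannot be commuted as a Pauli frame. Every pending deferred Pauli correction from earlier raw segments must therefore be applied physically before the injection CNOTs, and before these $\CZ$ masks, while the data wait under stored-baseline memory (\Cref{lem:sector-storage}). The step is to check that this serialization adds only $O(\log^c m+\log(N_\mathrm{rec}+2))$ per gate type, and that it keeps the frame recurrence \eqref{eq:enabled-frame-recurrence} valid across the Clifford correction.
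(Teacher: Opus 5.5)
Your proposal is correct and follows essentially the same route as the paper. Both arguments assemble the layer from the same pieces:
\begin{itemize}
\item role-layout routing and the Clifford and measurement layers via \Cref{prop:injection-clifford-routing};
\item factory rows consumed by the selection-SWAP and injection schedule;
\item recordwise Pauli expansion, purity decoupling, instrument normalization and composition;
\item a single evaluation of the classical update;
\item the space and time accounting from \eqref{eq:inj-row-volume}, \eqref{eq:protected-request-space}, \eqref{eq:inj-record-count} and \eqref{eq:inj-growth-bounds}.
\end{itemize}
The step you flag as the main obstacle is already settled by the schedule the paper sets up before the proposition. All pending corrections, including those produced by the selection circuit itself, are applied physically before injection, and the quadratic outcome bits are computed and fanned out under memory before the $\CZ$ layer. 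This costs only a constant number of $O(\log^c(m+2)+\log(N_\mathrm{rec}+2))$ waits per layer, which your time bound already absorbs.
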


\begin{proof}
At the beginning of the layer, the logical slots are routed into the banks
specified by their roles in the ideal circuit.  These banks contain gate
operands, measurement and reset positions, newly initialized wires, and idle
data; complete zero blocks fill the unused slots.  The permutation in
\Cref{prop:injection-clifford-routing} aligns the operands of every
multi-qubit gate.  A compiled Clifford decomposition uses the same original
enable bit on each of its factors.  Measurement and reset locations are
executed unconditionally.  A measurement bank contains no logical state that
must survive the layer, while a reset traces out the old row and prepares a
fresh zero row.  Memory correction is applied to all other blocks while the Clifford gates,
measurements, and routes are executed.

The non-Clifford operation uses the injection schedule for one row and the recordwise
identity established above.  The selection SWAP maps each enabled triple to
a zero work row, and an error correction cycle precedes its coupling to an
encoded $\ket{\CCZ}$ triple.
Every disabled work triple remains in
$\ket{000}$, so full row injection acts as the identity at those positions.
After the inverse selection SWAPs, the data are encoded in their assigned
blocks, and the final dense permutation gives the layout for the next layer.  The branch
identity above and the readout and memory lemmas preserve the complete record
and arbitrary reference systems.
After every required logical outcome has been decoded, the original
classical update $C_t$ is evaluated exactly once.  The resulting instrument
is therefore the prescribed adaptive ideal layer, rather than only its
unitary part.

Equation~\eqref{eq:inj-row-volume} counts the data and resource output banks,
while \eqref{eq:protected-request-space} accounts for the factory used by the
current gate type.
Successive gate types reuse the same workspace.  The time bound is
\eqref{eq:inj-layer-time}.  All descendant locations are included in the
factory's bad set family; memory windows after the factory output is produced are counted
separately.  The two
families can therefore be composed without counting the same physical
locations twice.
\end{proof}

For a constant $A_L$ chosen in the theorem below, let
\begin{equation}
  L=\log\frac{WD}{\varepsilon},
  \qquad
  j(L)=\min\{j:m_j\ge A_L L\},
  \qquad
  m=m_{j(L)},
  \label{eq:inj-top-scale}
\end{equation}
and define the additive space term for this route
\begin{equation}
  F_\mathrm{inj}(L)=V_{j(L)}+m^2+mM(m).
  \label{eq:injection-padding}
\end{equation}

\begin{theorem}
\label{thm:injection-route}
There are positive constants $p_\mathrm{inj},A_L,C,C_T,C_F,C_S$.  Every
adaptive Clifford+$\CCZ$ circuit of width $W$, depth $D$, and target error
$0<\varepsilon\le1/2$ has, for physical noise rate $p<p_\mathrm{inj}$, a
fault-tolerant simulation satisfying
\begin{align}
  S_\mathrm{FT}^\mathrm{inj}&\le C[W+F_\mathrm{inj}(L)],
  \label{eq:inj-final-space}\\
  D_\mathrm{FT}^\mathrm{inj}&\le C_T D\log\frac{WD}{\varepsilon},
  \label{eq:inj-final-time}\\
  \|P_\mathrm{FT}^\mathrm{inj}-P_\mathrm{ideal}\|_\mathrm{TV}&\le\varepsilon.
  \label{eq:inj-final-error}
\end{align}
If $W\ge C_F\,F_\mathrm{inj}(L)$, then
\begin{equation}
  S_\mathrm{FT}^\mathrm{inj}\le C_S W.
  \label{eq:inj-constant-space}
\end{equation}
\end{theorem}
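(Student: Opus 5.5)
The plan is to follow the proof of \Cref{thm:main-result}, with \Cref{prop:one-layer} replaced by \Cref{prop:injection-layer}. First, fix all gadget constants before the circuit parameters. These are the code family, the decoder contraction, $t_*$, the finite qRS member $q_0$ from \eqref{eq:qrs-fixed-member}, the scale constants \eqref{eq:injection-scale-constants}, and the base length $m_0$ from \Cref{thm:protected-ccz-states}. Then take $p_\mathrm{inj}$ to be the minimum of $p_\mathrm{fac}$, the common qLTC threshold of \Cref{lem:ec-gadget}, the switching and stabilizer-factory thresholds, and the threshold of \Cref{prop:injection-clifford-routing}. Since the list of segment types is finite, this minimum is positive.

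Next, select the top scale $m=m_{j(L)}$ as in \eqref{eq:inj-top-scale}. By \eqref{eq:injection-scale-ratios} we have $A_LL\le m\le A_+A_LL$, once $A_L L$ exceeds $m_0$. For small $L$, use the first index that is at least the base length and absorb the difference into constants. Apply \Cref{prop:injection-layer} to each of the $D$ layers, plus an initial encoding of $\ket0^{\otimes W}$ and a final readout of the same order. Factories and workspace are reused across layers, so \eqref{eq:inj-layer-space} gives \eqref{eq:inj-final-space} directly. Summing \eqref{eq:inj-layer-time} gives total depth $O((D+1)(m+\log(W+1)))$. Then $\log(W+1)\le L+1$ and $m=O(L)$ give \eqref{eq:inj-final-time}, using $D\ge1$ and $L\ge\log2$.

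For the error bound, define $\mathsf{Bad}$ as the union over all calls of the translated bad families. These are the factory families $\mathcal F_\mathrm{req}$ of \eqref{eq:protected-request-failure}, the routing and Clifford families of \Cref{prop:injection-clifford-routing}, and the memory, readout and injection windows. The union bound \eqref{eq:enumerator-probability} then gives
\[
  \Pr[\mathsf{Bad}]\le C(D+1)\,\mathrm{poly}(W+F_\mathrm{inj}(L),m)\,e^{-c m}.
\]
The delicate point is the prefactor. By \eqref{eq:inj-growth-bounds}, $\log V_j=O(\log^2 m)$, so $F_\mathrm{inj}(L)$ is only quasipolynomial in $m$. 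However, $W$ and $D$ enter polynomially, and so do the record counts \eqref{eq:inj-record-count}. The plan is to write the prefactor as $(WD)^{c_1}e^{O(\log^2 m)}$ and use $WD=\varepsilon e^{L}\le e^L$.

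Choosing $A_L$ large makes $e^{-cA_LL}$ beat $e^{c_1L}$ and give an extra factor $\varepsilon$. Concretely, take $cA_L\ge 2c_1+2$ and use $e^{-L}\le\varepsilon$. The $e^{O(\log^2 m)}$ term is absorbed by $e^{cm/2}$ once $m$ exceeds a cutoff, which we enforce by enlarging $m_0$ or $A_L$. This yields $\Pr[\mathsf{Bad}]\le\varepsilon$. I expect this bookkeeping to be the main obstacle: checking that the number of windows is genuinely polynomial in $W$, $D$ and $m$, which requires the per-layer window counts from \Cref{prop:injection-clifford-routing}. Everything else is composition.

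Outside $\mathsf{Bad}$, combine the recordwise identities of \Cref{prop:injection-layer} by \Cref{prop:gadget-composition} and \Cref{lem:instrument-normalization}, inducting over layers with arbitrary reference systems. This gives exactly the ideal adaptive instrument, so the total variation distance is at most $\Pr[\mathsf{Bad}]$, which is \eqref{eq:inj-final-error}. Finally, if $W\ge C_FF_\mathrm{inj}(L)$, the additive term is absorbed and \eqref{eq:inj-constant-space} follows with $C_S=C(1+C_F^{-1})$.
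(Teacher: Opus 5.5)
Your proposal is correct and follows essentially the same route as the paper. You fix all gadget constants and $m_0$ first, take $m=m_{j(L)}=\Theta(L)$, compose \Cref{prop:injection-layer} over the $D$ layers with reused workspace, and absorb the quasipolynomial prefactor into $e^{-cm}$ via $\log F_\mathrm{inj}(L)=O(\log^2(L+2))$, using $WD=\varepsilon e^{L}$ and $m\ge A_LL$ to reach $\Pr[\mathsf{Bad}]\le\varepsilon$.

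The differences are minor. The paper counts windows linearly in the spacetime volume, whereas you allow a polynomial dependence on $WD$ and compensate with a larger $A_L$. The paper also imposes $A_L\log2\ge m_0$ and $C_Ee^{-c_EA_L\log2/2}\le1$. This rules out the small-$L$ case, absorbs the leading constant (which your condition $cA_L\ge2c_1+2$ does not handle by itself), and enforces every cutoff through $A_L$. Your alternative of enlarging $m_0$ instead would feed back into $p_\mathrm{fac}$, so you should use $A_L$ for this.
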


\begin{proof}
Take $p_\mathrm{inj}$ below $p_\mathrm{fac}$ and the thresholds of all
preparation, routing, readout, feedback, and memory operations outside the
factory.
Choose $A_L$ large enough that $A_L\log2\ge m_0$ and that the exponential
absorption below holds.  Since $L\ge\log2$, the minimality of $j(L)$ and
\eqref{eq:injection-scale-ratios} give
\begin{equation}
  A_L L\le m\le A_+A_L L.
  \label{eq:inj-selected-scale}
\end{equation}
Evaluate the prescribed initial classical map and prepare the $W$ input logical
zeros in full primary code rows, padding unused coordinates by independent
zeros.  Apply \Cref{prop:injection-layer} to every ideal layer and reuse the
ancillary banks.  After the last layer, perform corrected destructive
readout and evaluate $C_D$ once.  These initial
and final operations have the same space and time order as one layer and are
included in the bad set families outside the factory.  The inequalities in
\eqref{eq:inj-selected-scale} give $m=\Theta(L)$, which proves
\eqref{eq:inj-final-time}.  From \eqref{eq:inj-growth-bounds},
\begin{equation}
  \log F_\mathrm{inj}(L)=O(\log^2(L+2)).
  \label{eq:inj-padding-growth}
\end{equation}

At the top level, let $\mathcal F_\mathrm{inj}$ be the union of the translated
bad support families for the complete factory calls and all scale-$m$
operations outside the factory.  The child families are already included in
the complete factory families.  Define $\mathsf{Bad}_\mathrm{inj}$ as the event
that the actual fault support is not $\mathcal F_\mathrm{inj}$-avoiding.  For
constants $C_\mathrm{bad},b_\mathrm{bad},c_\mathrm{bad}>0$, the physical spacetime
count gives
\begin{equation}
  \Pr[\mathsf{Bad}_\mathrm{inj}]
  \le C_\mathrm{bad}(D+1)[W+F_\mathrm{inj}(L)]
       (m+2)^{b_\mathrm{bad}}e^{-c_\mathrm{bad}m}.
  \label{eq:inj-global-bad}
\end{equation}
Equation~\eqref{eq:inj-padding-growth} absorbs the additive padding and
polynomial factors into a smaller exponential.  Hence there are
$C_E,c_E>0$ such that
\begin{equation}
  \Pr[\mathsf{Bad}_\mathrm{inj}]
  \le C_EWD e^{-c_Em}.
  \label{eq:inj-global-bad-absorbed}
\end{equation}
Choose $A_L$ so that $c_EA_L\ge2$ and
$C_Ee^{-c_EA_L\log2/2}\le1$.  Since
$WD=\varepsilon e^L$ and $m\ge A_L L$, the last expression is then at most
$\varepsilon$.  Outside
this event, recordwise composition of the layer instruments gives the ideal
adaptive output distribution.  Equation~\eqref{eq:inj-final-space} follows
from \eqref{eq:inj-layer-space}, and the width condition absorbs
$F_\mathrm{inj}(L)$ to give \eqref{eq:inj-constant-space}.
\end{proof}

For constants $c,c'>0$, the regime $D\le W^c$ and
$\varepsilon\ge W^{-c'}$ has $L=\Theta(\log W)$.  Equation
\eqref{eq:inj-padding-growth} then gives
\begin{equation}
  \log F_\mathrm{inj}(L)=O((\log\log W)^2)=o(\log W),
  \label{eq:inj-wide-regime}
\end{equation}
so the condition on the width holds automatically for all sufficiently large $W$ in
that regime.

%% file: sections/conclusion.tex
\section{Discussion and outlook}
\label{sec:conclusion}

In this work, we introduced two constructions that provably achieve quantum fault tolerance with purely logarithmic time overhead and constant space overhead. The two constructions identify two different ways in which a good qLTC can support logarithmic-time fault tolerance.  In our main construction, non-Clifford gates are implemented directly by the computational code's native transversal $\CCZ$ gates,   while the second construction implements non-Clifford gates through a separate state factory.  

A natural question is whether this achieves the optimal space and time overhead possible for quantum fault tolerance. Relevant lower
bounds have been discussed in~\cite{NguyenPattison2025,
FawziFawziRouze2022,BhartiHaugTanggara2026}, but further work is needed
to determine their precise implications for our setting. More generally, it would be interesting to further characterize the optimal space-time tradeoff, which would
clarify how much quantum space is needed to achieve a given time overhead, particularly below the logarithmic scale.
A related question is whether the width condition in our constructions is unavoidable, or whether more efficient resource-state preparation can extend the constant-space-overhead guarantee to narrower circuits.

Another valuable direction is to relax the assumptions on classical processing
and physical connectivity.
Determining the additional overhead required for noisy classical control
or geometrically restricted interactions would help relate these asymptotic constructions to more practical computational architectures.

%% file: references.bib
@inproceedings{AharonovBenOr1999,
  title={Fault-tolerant quantum computation with constant error},
  author={Aharonov, Dorit and Ben-Or, Michael},
  booktitle={Proceedings of the twenty-ninth annual ACM symposium on Theory of computing},
  pages={176--188},
  year={1997}
}

@article{Kitaev1997,
  title={Quantum computations: algorithms and error correction},
  author={Kitaev, A Yu},
  journal={Russian Mathematical Surveys},
  volume={52},
  number={6},
  pages={1191--1249},
  year={1997}
}

@article{KnillLaflammeZurek1998,
  title={Resilient quantum computation},
  author={Knill, Emanuel and Laflamme, Raymond and Zurek, Wojciech H},
  journal={Science},
  volume={279},
  number={5349},
  pages={342--345},
  year={1998},
  publisher={American Association for the Advancement of Science}
}

@article{TamiyaKoashiYamasaki2024,
  title={Fault-tolerant quantum computation with polylogarithmic time and constant space overheads},
  author={Tamiya, Shiro and Koashi, Masato and Yamasaki, Hayata},
  journal={Nature Physics},
  volume={22},
  number={1},
  pages={27--32},
  year={2026},
  publisher={Nature Publishing Group UK London}
}

@inproceedings{NguyenPattison2025,
  title={Quantum fault tolerance with constant-space and logarithmic-time overheads},
  author={Nguyen, Quynh T and Pattison, Christopher A},
  booktitle={Proceedings of the 57th Annual ACM Symposium on Theory of Computing},
  pages={730--737},
  year={2025}
}

@misc{GoodQLTCTransversal2026,
  author = {Yiming Li and Zimu Li and Zi-Wen Liu},
  title = {Transversal non-{Clifford} gates on good quantum locally testable codes},
  year = {2026},
  eprint = {2609.26691},
  archivePrefix = {arXiv},
  primaryClass = {quant-ph},
  note = {arXiv:2609.26691},
  url = {https://arxiv.org/abs/2609.26691}
}

@article{Gottesman2013,
author = {Gottesman, Daniel},
title = {Fault-tolerant quantum computation with constant overhead},
year = {2014},
issue_date = {November 2014},
publisher = {Rinton Press, Incorporated},
address = {Paramus, NJ},
volume = {14},
number = {15–16},
issn = {1533-7146},
journal = {Quantum Info. Comput.},
month = nov,
pages = {1338–1372},
numpages = {35}
}

@article{Fawzi2018PolyTime,
  title={Constant overhead quantum fault tolerance with quantum expander codes},
  author={Fawzi, Omar and Grospellier, Antoine and Leverrier, Anthony},
  journal={Communications of the ACM},
  volume={64},
  number={1},
  pages={106--114},
  year={2020},
}

@article{YamasakiKoashi2024,
  title={Time-efficient constant-space-overhead fault-tolerant quantum computation},
  author={Yamasaki, Hayata and Koashi, Masato},
  journal={Nature Physics},
  volume={20},
  number={2},
  pages={247--253},
  year={2024},
  publisher={Nature Publishing Group UK London}
}

@article{Prasad1977,
  title={Strong approximation for semi-simple groups over function fields},
  author={Prasad, Gopal},
  journal={Annals of Mathematics},
  volume={105},
  number={3},
  pages={553--572},
  year={1977},
  publisher={JSTOR}
}

@article{Berkowitz1984,
  title={On computing the determinant in small parallel time using a small number of processors},
  author={Berkowitz, Stuart J},
  journal={Information processing letters},
  volume={18},
  number={3},
  pages={147--150},
  year={1984},
  publisher={Elsevier}
}

@InProceedings{FawziFawziRouze2022,
  author =	{Fawzi, Omar and M\"{u}ller-Hermes, Alexander and Shayeghi, Ala},
  title =	{{A Lower Bound on the Space Overhead of Fault-Tolerant Quantum Computation}},
  booktitle =	{13th Innovations in Theoretical Computer Science Conference (ITCS 2022)},
  pages =	{68:1--68:20},
  series =	{Leibniz International Proceedings in Informatics (LIPIcs)},
  ISBN =	{978-3-95977-217-4},
  ISSN =	{1868-8969},
  year =	{2022},
  volume =	{215},
  URL =		{https://drops.dagstuhl.de/entities/document/10.4230/LIPIcs.ITCS.2022.68},
  URN =		{urn:nbn:de:0030-drops-156649},
  doi =		{10.4230/LIPIcs.ITCS.2022.68}
}

@misc{BhartiHaugTanggara2026,
      title={Fault-tolerant quantum computation cannot be achieved with constant spacetime overhead}, 
      author={Kishor Bharti and Tobias Haug and Andrew Tanggara},
      year={2026},
      eprint={2608.26272},
      archivePrefix={arXiv},
      primaryClass={quant-ph},
      note = {arXiv:2608.26272},
      url={https://arxiv.org/abs/2608.26272}, 
}
